\newcommand\blankpage{%
    \null
    \thispagestyle{empty}%
    \addtocounter{page}{-1}%
    \newpage}
\DeclareRobustCommand{\sqcdot}{\mathbin{\mathpalette\morphic@sqcdot\relax}}
\newcommand{\morphic@sqcdot}[2]{%
  \sbox\z@{$\m@th#1\centerdot$}%
  \ht\z@=.33333\ht\z@
  \vcenter{\box\z@}%
}
\theoremstyle{remark}
\newtheorem{notation}{Notation}
\theoremstyle{definition}
\newtheorem{mydefinition}{Definition}
\newtheorem{myexample}{Example}
\newtheorem{mylemma}{Lemma}
\theoremstyle{theorem}
\newtheorem{theorem}{Theorem}
\titleformat{\chapter}[display]
  {\bfseries\Large}
  {\filright\MakeUppercase{\chaptertitlename} \Huge\thechapter}
  {1ex}
  {\titlerule\vspace{1ex}\filleft}
  [\vspace{1ex}\titlerule]
\newcommand{\mi}[1]{\mathit{#1}}
\newcommand{\ms}[1]{\mathsf{#1}}
\newcommand{\mbb}[1]{\mathbb{#1}}
\newcommand{\mbf}[1]{\mathbf{#1}}
\newcommand{\bs}[1]{\boldsymbol{#1}}
\newcommand{\ap}{\ms{ap}}
\newcommand{\apd}{\ms{apd}}
\newcommand{\tr}{\ms{tr}}
\newcommand{\happly}{\ms{happly}}
\newcommand{\funext}{\ms{funext}}
\newcommand{\toind}{\to^{\ms{int}}}
\newcommand{\Tys}{\ms{Ty_{sig}}}
\newcommand{\Tms}{\ms{Tm_{sig}}}
\newcommand{\Ix}{\mi{Ix}}
\newcommand{\zero}{\ms{zero}}
\newcommand{\suc}{\ms{suc}}
\newcommand{\J}{\ms{J}}
\newcommand{\UIP}{\ms{UIP}}
\newcommand{\refl}{\mathsf{refl}}
\newcommand{\reflect}{\mathsf{reflect}}
\newcommand{\id}{\mathsf{id}}
\newcommand{\Con}{\mathsf{Con}}
\newcommand{\Sub}{\mathsf{Sub}}
\newcommand{\Tm}{\mathsf{Tm}}
\newcommand{\Ty}{\mathsf{Ty}}
\newcommand{\U}{\mathsf{U}}
\newcommand{\El}{\mathsf{El}}
\newcommand{\Id}{\mathsf{Id}}
\newcommand{\ID}{\mathsf{ID}}
\newcommand{\proj}{\mathsf{proj}}
\renewcommand{\tt}{\mathsf{tt}}
\newcommand{\blank}{\mathord{\hspace{1pt}\text{--}\hspace{1pt}}}
\newcommand{\ra}{\rightarrow}
\newcommand{\Set}{\mathsf{Set}}
\newcommand{\Lift}{\Uparrow}
\newcommand{\ToS}{\mathsf{ToS}}
\newcommand{\ext}{\triangleright}
\newcommand{\emptycon}{\scaleobj{.75}\bullet}
\newcommand{\Pii}{\Pi}
\newcommand{\funi}{\Rightarrow}
\newcommand{\appi}{\mathsf{app}}
\newcommand{\lami}{\mathsf{lam}}
\newcommand{\fune}{\Rightarrow^{\ms{Ext}}}
\newcommand{\Pie}{\Pi^{\mathsf{Ext}}}
\newcommand{\appe}{\mathsf{app^{Ext}}}
\newcommand{\lame}{\mathsf{lam^{Ext}}}
\newcommand{\toe}{\to^{\ms{Ext}}}
\newcommand{\lambdae}{\lambda^{\ms{Ext}}}
\newcommand{\Piinf}{\Pi^{\mathsf{ext}}}
\newcommand{\appinf}{\mathsf{app^{ext}}}
\newcommand{\laminf}{\mathsf{lam^{ext}}}
\newcommand{\laminfprime}{\mathsf{lam^{ext'}}}
\newcommand{\toinf}{\to^{\ms{ext}}}
\newcommand{\lambdainf}{\lambda^{\ms{ext}}}
\newcommand{\bPiinf}{\bs{\Piinf}}
\newcommand{\Sig}{\mathsf{Sig}}
\newcommand{\ToSSig}{\mathsf{ToSSig}}
\newcommand{\NatSig}{\mathsf{NatSig}}
\newcommand{\flcwf}{\mathsf{flcwf}}
\newcommand{\SigTy}{\mathsf{SigTy}}
\newcommand{\SigTm}{\mathsf{SigTm}}
\newcommand{\lamK}{\mathsf{lam}_{\K}}
\newcommand{\appK}{\mathsf{app}_{\K}}
\newcommand{\p}{\mathsf{p}}
\newcommand{\q}{\mathsf{q}}
\newcommand{\K}{\mathsf{K}}
\newcommand{\A}{\mathsf{A}}
\newcommand{\arri}{\Rightarrow}
\newcommand{\syn}{\mathsf{syn}}
\newcommand{\bCon}{\bs{\Con}}
\newcommand{\bTy}{\bs{\Ty}}
\newcommand{\bSub}{\bs{\Sub}}
\newcommand{\bTm}{\bs{\Tm}}
\newcommand{\bGamma}{\bs{\Gamma}}
\newcommand{\bDelta}{\bs{\Delta}}
\newcommand{\bsigma}{\bs{\sigma}}
\newcommand{\bdelta}{\bs{\delta}}
\newcommand{\bt}{\bs{t}}
\newcommand{\bu}{\bs{u}}
\newcommand{\bA}{\bs{A}}
\newcommand{\ba}{\bs{a}}
\newcommand{\bb}{\bs{b}}
\newcommand{\bB}{\bs{B}}
\newcommand{\bid}{\bs{\id}}
\newcommand{\bemptycon}{\scaleobj{.75}{\bs{\bullet}}}
\newcommand{\bU}{\bs{\U}}
\newcommand{\bEl}{\bs{\El}}
\newcommand{\bPie}{\bs{\Pie}}
\newcommand{\bId}{\bs{\Id}}
\newcommand{\bM}{\bs{\mathsf{M}}}
\newcommand{\bK}{\bs{\mathsf{K}}}
\newcommand{\ul}[1]{\underline{#1}}
\newcommand{\ulGamma}{\ul{\Gamma}}
\newcommand{\ulDelta}{\ul{\Delta}}
\newcommand{\uldelta}{\ul{\delta}}
\newcommand{\ulsigma}{\ul{\sigma}}
\newcommand{\ulemptycon}{\ul{\emptycon}}
\newcommand{\ult}{\ul{t}}
\newcommand{\ulu}{\ul{u}}
\newcommand{\ulA}{\ul{A}}
\newcommand{\ulB}{\ul{B}}
\newcommand{\coe}{\mathsf{coe}}
\newcommand{\coh}{\mathsf{coh}}
\newcommand{\llb}{\llbracket}
\newcommand{\rrb}{\rrbracket}
\newcommand{\sem}[1]{\llb#1\rrb}
\newcommand{\Var}{\ms{Var}}
\newcommand{\var}{\ms{var}}
\newcommand{\app}{\ms{app}}
\newcommand{\vz}{\ms{vz}}
\newcommand{\vs}{\ms{vs}}
\newcommand{\Alg}{\ms{Alg}}
\newcommand{\Mor}{\ms{Mor}}
\newcommand{\DispAlg}{\ms{DispAlg}}
\newcommand{\Section}{\ms{Section}}
\newcommand{\Initial}{\ms{Initial}}
\newcommand{\Inductive}{\ms{Inductive}}
\newcommand{\TmAlg}{\ms{TmAlg}}
\newcommand{\Rec}{\ms{Rec}}
\newcommand{\Ind}{\ms{Ind}}
\newcommand{\Obj}{\ms{Obj}}
\newcommand{\Nat}{\ms{Nat}}
\newcommand{\Bool}{\ms{Bool}}
\newcommand{\mbbC}{\mbb{C}}
\newcommand{\hmbbC}{\hat{\mbb{C}}}
\newcommand{\lam}{\ms{lam}}
\newcommand{\true}{\ms{true}}
\newcommand{\false}{\ms{false}}
\newcommand{\up}{\uparrow}
\newcommand{\down}{\downarrow}
\newcommand{\lab}{\langle}
\newcommand{\rab}{\rangle}
\newcommand{\defn}{:\equiv}
\newcommand{\yon}{\ms{y}}
\newcommand{\lub}{\,\sqcup\,}
\begin{document}
\clearpage

\begin{titlepage}
    \begin{center}
        \vspace*{1cm}

        {\LARGE \textbf{Type-Theoretic Signatures for Algebraic Theories and Inductive Types}}

        \vspace{2em}

        {\Large \textsc{ András Kovács}}\\

        \vfill

        {\large \textsc{Submitted in fulfillment of the requirements
        for the degree of Doctor of Philosophy}}\\

        \vspace{0.8cm}

        \includegraphics[width=0.35\textwidth]{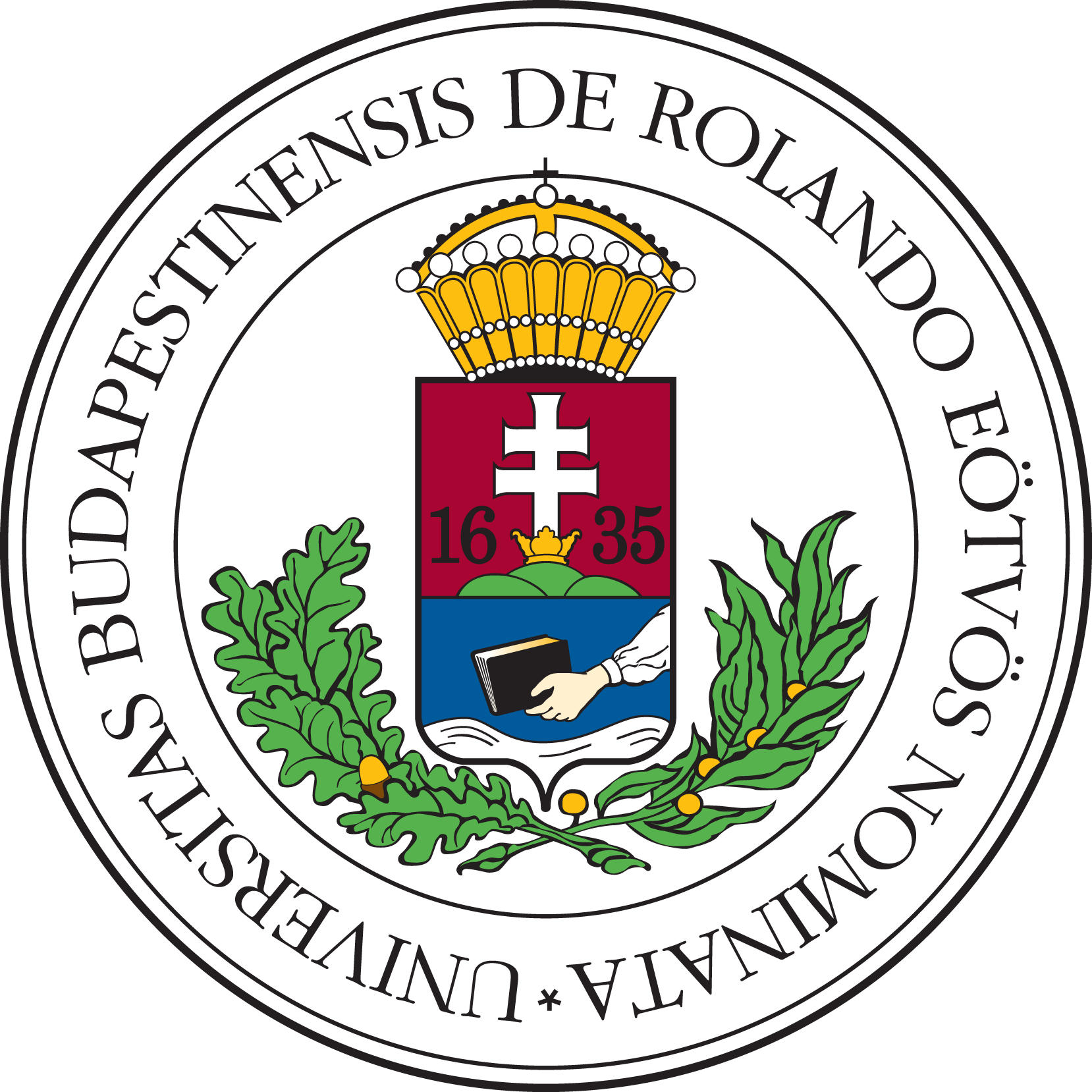}

        \large \textsc{
        Eötvös Loránd University\\
        Doctoral School of Informatics\\
        March 2022}
        \\
        \vspace{2em}
        \small{\textsc{DOI: 10.15476/ELTE.2022.070}}

    \end{center}
\end{titlepage}

\afterpage{\blankpage}

\chapter*{Acknowledgements}
\thispagestyle{empty}

I would like to thank my supervisor Ambrus Kaposi for his support and the
remarkably low-stress experience throughout my studies, and also for the
research collaboration which provided ample content for this thesis. Many thanks
to Tamás Kozsik for being a likewise supportive supervisor in my first year of
studies, and to Zoltán Horváth for doing a great job providing financial and
administrative support, and I also want to thank the secretaries at department
and at the faculty.

I would like to thank my opponents Fredrik Forsberg and Christian Sattler for
their careful reading and comments on my thesis.

I would like to thank attendees of the Budapest type theory seminar for the many
discussions over the years, in particular Balázs Kőműves, Péter Diviánszky,
Gábor Lehel, Gergő Érdi and my fellow PhD students Rafaël Bocquet and István
Donkó.

I would like to thank everyone in the research community who educated and
inspired me through various discussions and interactions. I would like to
express special thanks to my coauthors Thorsten Altenkirch, Nicolai Kraus and
Ambroise Lafont.

Finally, I would like to thank my parents for their love and support.

\frontmatter
\tableofcontents{}

\mainmatter


\chapter{Introduction}

This thesis develops the usage of certain type theories as specification
languages for algebraic theories and inductive types.

\emph{Type theories} have emerged as popular metatheoretic settings for
mechanized mathematics. One reason is that the field of type theory is generally
aware of the issue of overheads in representation, and it is a common endeavor
to search for concise ``synthetic'' ways to talk about different areas of
mathematics. In type theory, it is a virtue to be able to directly say what we
mean, and in a way such that simple-minded computers are able to verify it.

\emph{Algebraic theories} are certain mathematical structures which are
especially well-behaved, and which are ubiquitous in mathematics, such as groups
or categories. In type theories, \emph{inductive types} are certain freely
generated (initial) models of algebraic theories. Inductive types are a core
feature in implementations of type theories, widely used in mathematical
formalization, but also as the primary way to define the data structures which
are used in programming.

This thesis observes that if we are to specify more complicated algebraic
theories, dependent type theories provide the natural tool to manage
complexities. The expressive power of type theory which makes it suitable as a
foundation for mechanized mathematics, also proves useful for the more
specialized task of specifying algebraic signatures.

There is a trade-off between the complexity of a mathematical language and the
ease of usage of the language. Minimal languages are convenient to reason about
and develop metatheory for, but they often require an excessive amount of
boilerplate to work in. However, it is a worthwhile effort to try to move
towards the Pareto frontier of this trade-off. We believe that the current
thesis makes progress in this respect.

Our signatures are useful in broader mathematical contexts, but we are also
concerned with potential implementation in proof assistants. Although it is
unlikely that our signatures can be deployed in practice exactly as they are, they
should be still helpful as formal bases of practical implementations.

\section{Overview}

In \textbf{Chapter \ref{chap:simple-inductive-signatures}}, we present a minimal
example of a type theory of signatures. This allows specifying single-sorted
signatures without equations. The purpose of the chapter is didactic. We develop
just enough semantics to get notions of initiality and induction for
algebras. We also present a \emph{term algebra construction}: this shows that
the initial algebra for each signature can be constructed from the syntax of
signatures itself.

In \textbf{Chapter \ref{chap:2ltt}} we describe a metatheoretic setting which is
often used in the thesis. This is \emph{two-level type theory}
\cite{twolevel}. It allows us to develop general semantics for signatures, while
still working inside a convenient type theory. As a demonstration, we generalize
the semantics from Chapter \ref{chap:simple-inductive-signatures} so that it is
given internally to arbitrary categories-with-families. As a special case,
signatures can be interpreted in arbitrary categories with finite products.

In \textbf{Chapter \ref{chap:fqiit}} we describe finitary quotient
inductive-inductive signatures. These are close to generalized algebraic
theories \cite{gat} in expressive power. In particular, most type theories
themselves can be specified with finitary quotient inductive-inductive
signatures. We significantly expand the semantics of signatures, now for each
signature we provide a category of algebras with certain extra structure, which
is equivalent to having finite limits. This allows us to prove for each
signature the equivalence of initiality and induction. Also, owing to two-level
type theory, signatures can be interpreted internally to any category with
finite limits. Additionally
\begin{itemize}
  \item We present a term algebra construction.
  \item We show that morphisms of signatures are interpreted as right adjoint functors
        in the semantics.
  \item We present how self-description of signatures can be exploited to minimize metatheoretic
        assumptions.
  \item For certain fragments of the theory of signatures, we describe ways to construct
        initial algebras from simpler type formers.
\end{itemize}

In \textbf{Chapter \ref{chap:iqiit}}, we describe infinitary quotient
inductive-inductive signatures. These allow specification of infinitely
branching trees (as initial algebras). We adapt the semantics from the previous
chapter. We also revisit term models, left adjoints of signature morphisms and
self-description of signatures. Self-description in particular is significantly
strengthened, since the full theories of signatures in Chapters
\ref{chap:fqiit}-\ref{chap:iqiit} can be now described using infinitary quotient
inductive-inductive signatures. We also describe how to build semantics of
signatures internally to the theory of signatures itself. For example, this
means that for each signature, algebra morphisms are also specified with a
signature. The full semantics can be internalized in the theory of signatures
in this manner; this is useful for building new signatures in a generic way.

In \textbf{Chapter \ref{chap:hiit}}, we describe higher inductive-inductive
signatures. These differ from the previous signatures mostly in their intended
semantics, whose context is now homotopy type theory \cite{hottbook}, and which
allows specified equalities to be proof-relevant. The higher-dimensional
generalization of types and equalities makes semantics more complicated, so we
only present enough semantics to specify notions of initiality and induction for
each signature. Additionally, we consider two different notions of algebra
morphisms: one preserves structure strictly (up to definitional equality), while
the other preserves structure up to paths.

\section{How to Read This Thesis}

We list several general references which could be helpful for readers.
\begin{itemize}
\item It is useful to have some user experience with a type-theory-based proof
      assistant or programming language, such as Agda, Coq, Lean or Idris. In the
      author's view, mechanized formalization is the most effective way to build
      intuition about working in type theories.
\item We often use categories-with-families \cite{cwfs,Hofmann97syntaxand,Dybjer96internaltype}
      throughout the thesis.
\item We use a modest amount of category theory, for which
      \cite{awodey2010category} should be a sufficient reference.
\item For Chapter \ref{chap:hiit}, the Homotopy Type Theory book \cite{hottbook}
      provides context and motivation.
\end{itemize}
This thesis is mostly written in a linear fashion, as later chapters often
revisit or generalize earlier concepts. There are some breaks from linearity
though, so we summarize dependencies between chapters as follows:
\begin{itemize}
  \item Chapter \ref{chap:2ltt} depends on Chapter \ref{chap:simple-inductive-signatures}.
  \item All chapters after Chapter \ref{chap:2ltt} depend on it.
  \item Chapter \ref{chap:iqiit} depends on Chapter \ref{chap:fqiit} as it
        revisits most constructions from there.
  \item Chapter \ref{chap:hiit} only depends on Chapter \ref{chap:2ltt}.
\end{itemize}

\section{Formalization}

Chapter \ref{chap:simple-inductive-signatures} is fully formalized in Agda, and
the semantics of weak signatures in Chapter \ref{chap:hiit} is mostly formalized,
with some omissions and shortcuts. The formalization can be found in \cite{ak-thesis-agda}.

\section{Notation and Conventions}

Throughout this thesis, we always work in some sort of type theory, although the
exact flavor of the type theory will vary. We summarize here the notations and
conventions that will stay consistent. Our style of notation is a mostly a mix of
the homotopy type theory book \cite{hottbook} and the syntax of the proof
assistant Agda.

\subsubsection{$\Sigma$-types}

We write a dependent pair type as $(a : A) \times B$, where $B$ may refer to
$a$.  Pairing is $(t,\,u)$, and projections are $\proj_1$ and
$\proj_2$. Iterated $\Sigma$-types can written as $(a : A) \times (b : B) \times
C$, for example. We often silently re-associate left-nested $\Sigma$-types to
the right, e.g.\ write $(a : A) \times (b : B) \times C$ instead of
$(\mi{ab} : (a : A) \times B) \times C$.

Field projection notation: we reuse binder names in $\Sigma$-types as field
projections. For example, if we have $t : (\ms{foo} : A) \times B$, then
$\ms{foo}_t$ projects the first component from $t$. To make this a bit more
convenient, we also allow to name the last components, for example if $t :
(\ms{foo} : A) \times (\ms{bar} : B)$, then we have $\ms{foo}_t : A$, and
$\ms{bar}_t : B[\ms{foo} \mapsto \ms{foo}_t]$. This notation is useful when we
handle components of more complicated algebraic structures.

\subsubsection{Unit type}

Whenever the unit type is available, we name it $\top$, and its inhabitant $\tt$.

\subsubsection{$\Pi$-types}
Dependent function types are written as $(a : A) \to B$, where $B$ may depend on
the $a$ variable. It is possible to group multiple binders with the same type,
as in $(x\,y : A) \to B$. For non-dependent function types, we write plainly $A
\to B$.  Functions are defined as $\lambda\,x.\,t$. We may group multiple
binders, as in $\lambda\,x\,y\,z.\,t$, and optionally add type annotation to binders,
as in $\lambda\,(x : A).\,t$.

We also use Agda-like implicit arguments: a function type $\{a : A\} \to B$
signals that we usually omit the argument in function applications. For example,
if $\id : \{A : \Set\} \to A \to A$, we write $\id\,\true : \Bool$.  We can
still make these arguments explicit, by using bracketed application, as in
$\id\,\{\Bool\}\,\true$. Similarly, we may use bracketed $\lambda$, as in
$\lambda\,\{A : \Set\}\,(x : A).\,x$, to bind implicit arguments.

Sometimes we also write pattern matching abstraction, as in $\lambda\,(x,\,y).\,t$
for a function with a $\Sigma$ domain.

We may use implicit quantification as well: argument binders and types may be
entirely omitted when it is clear where they are quantified. This resembles the
implicit generalization in the Haskell or Idris programming languages. For example,
the $A$ and $B$ types are implicitly quantified below:
\begin{alignat*}{3}
  &\ms{map} : (A \to B) \to \ms{List}\,A \to \ms{List}\,B\\
  &\ms{map} \defn ...
\end{alignat*}

\subsubsection{Identity types}

We use $\blank\!\equiv\!\blank$ and $\blank\!=\!\blank$ to denote identity
types. We always use $\blank\!\equiv\!\blank$ as a ``strict'' equality which
satisfies uniqueness of identity proofs. Reflexivity of identity is always
written as $\refl$. We use $\blank\!=\!\blank$ as intensional identity in
Chapter \ref{chap:simple-inductive-signatures}.  In later chapters,
$\blank\!=\!\blank$ denotes the identity type in the inner layer of a two-level
type theory, and $\blank\!\equiv\!\blank$ denotes the outer identity type.

\subsubsection{Definitions}

We give definitions using $\defn$, for example as in
\begin{alignat*}{3}
  &\ms{id} : \{A : \Set\} \to A \to A\\
  &\ms{id}\,a \defn a
\end{alignat*}
Note that we write the function argument on the left of $\defn$, instead of
writing a $\lambda$ on the right. We may switch between the two styles. The
type signature can be omitted in a definition. We may also use pattern matching,
like in $\ms{foo}\,(x,\,y) \defn ...$.

\chapter{Simple Signatures}
\label{chap:simple-inductive-signatures}

In this chapter, we take a look at a very simple notion of algebraic
signature. The motivation for doing so is to present the basic ideas of this
thesis in the easiest possible setting, with explicit definitions. The later
chapters are greatly generalized and expanded compared to the current one, and
are not feasible (and probably not that useful) to present in full formal
detail. We also include a complete Agda formalization of the contents of this
chapter, in around 250 lines.

The mantra throughout this dissertation is the following: algebraic theories are
specified by typing contexts in certain \emph{theories of signatures}. For each
class of algebraic theories, there is a corresponding theory of signatures,
which is viewed as a proper type theory and comes equipped with a model
theory. \emph{Semantics} of signatures is given by interpreting them in certain
models of the theory of signatures. Semantics should at least provide a notion
of induction principle for each signature; in this chapter we provide a bit more
than that, and we will do substantially more in Chapters \ref{chap:fqiit} and
\ref{chap:iqiit}.

\subsubsection{Metatheory}

We work in an intensional type theory which supports $\Pi$, $\Sigma$, $\top$,
intensional identity $\blank\!=\!\blank$, inductive families, and two universes
$\Set$ and $\Set_1$ closed under the mentioned type formers, with $\Set :
\Set_1$.  Since the contents of this chapter are formalized in Agda, and our
notation is reminiscent of Agda too, we can think of the metatheory as a
subset of Agda.

\section{Theory of Signatures}
\label{sec:simple-signatures}

Generally, more expressive theories of signatures can describe larger classes of
theories. As we are aiming for minimalism right now, the current theory of
signatures is as follows:

\begin{mydefinition}
The \textbf{theory of signatures}, or ToS for short, is a simple type theory
equipped with the following features:
  \begin{itemize}
    \item An empty base type $\iota$.
    \item A \emph{first-order function type} $\iota\!\to\!\blank$; this is a
      function whose domain is fixed to be $\iota$. Moreover, first-order functions only
      have neutral terms: there is application, but no $\lambda$-abstraction.
  \end{itemize}
\end{mydefinition}

We can specify the full syntax using the following Agda-like inductive definitions.
\begin{alignat*}{4}
  & \Ty              &&: \Set           && \Var &&: \Con \to \Ty \to \Set \\
  & \iota            &&: \Ty            && \vz  &&: \Var\,(\Gamma \ext A)\,A \\
  & \iota\!\to\blank &&: \Ty \to \Ty    && \vs  &&: \Var\,\Gamma\,A \to \Var\,(\Gamma \ext B)\,A\\
  & && && &&\\
  & \Con             &&: \Set           && \Tm  &&: \Con \to \Ty \to \Set \\
  & \emptycon        &&: \Con           && \var &&: \Var\,\Gamma\,A \to \Tm\,\Gamma\,A \\
  & \blank\ext\blank &&: \Con \to \Ty \to \Con \hspace{2em} && \app &&: \Tm\,\Gamma\,(\iota\to A) \to \Tm\,\Gamma\,\iota
                                                           \to \Tm\,\Gamma\,A
\end{alignat*}
Here, $\Con$ contexts are lists of types, and $\Var$ specifies well-typed De Bruijn indices, where
$\vz$ represents the zero index, and $\vs$ takes the successor of an index.

\begin{notation} We use capital Greek letters starting from $\Gamma$ to refer to contexts, $A$, $B$, $C$ to
refer to types, and $t$, $u$, $v$ to refer to terms. In examples, we may use a
nameful notation instead of De Bruijn indices. For example, we may write $x :
\Tm\,(\emptycon \ext (x : \iota) \ext (y : \iota))\,\iota$ instead of $\var\,(\vs\,\vz)
: \Tm\,(\emptycon \ext \iota \ext \iota)\,\iota$. Additionally, we may write
$t\,u$ instead of $\app\,t\,u$ for $t$ and $u$ terms.
\end{notation}

\begin{mydefinition} \textbf{Parallel substitutions} map variables to terms.
\begin{alignat*}{3}
&\Sub : \Con \to \Con \to \Set\\
&\Sub\,\Gamma\,\Delta \equiv \{A : \Ty\} \to \Var\,\Delta\,A \to \Tm\,\Gamma\,A
\end{alignat*}
We use $\sigma$ and $\delta$ to refer to substitutions. We also recursively
define the action of substitution on terms:
\begin{alignat*}{3}
  &\rlap{$\blank[\blank] : \Tm\,\Delta\,A \to \Sub\,\Gamma\,\Delta \to \Tm\,\Gamma\,A$}\\
  &(\var\, x)   &&[ \sigma ] \defn \sigma\,x\\
  &(\app\,t\,u) &&[ \sigma ] \defn \app\,(t[\sigma])\,(u[\sigma])
\end{alignat*}
The identity substitution $\id$ is defined simply as $\var$. It is easy to see that
$t[\id] = t$ for all $t$. Substitution composition is as follows.
\begin{alignat*}{3}
  &\blank\!\circ\!\blank : \Sub\,\Delta\,\Xi \to \Sub\,\Gamma\,\Delta \to \Sub\,\Gamma\,\Xi\\
  &(\sigma \circ \delta)\,x \defn (\sigma\,x)[\delta]
\end{alignat*}
\end{mydefinition}

\begin{myexample} We may write signatures for natural numbers and binary trees respectively as follows.
\begin{alignat*}{3}
  & \ms{NatSig}  &&\defn \emptycon \ext (\mi{zero} : \iota) \ext (\mi{suc} : \iota \to \iota)\\
  & \ms{TreeSig} &&\defn \emptycon \ext (\mi{leaf} : \iota) \ext (\mi{node} : \iota \to \iota \to \iota)
\end{alignat*}
\end{myexample}
In short, the current ToS allows signatures which are
\begin{itemize}
\item \emph{Single-sorted}: this means that we have a single type constructor, corresponding to $\iota$.
\item \emph{Closed}: signatures cannot refer to any externally existing type. For example, we cannot write a signature for lists of natural numbers in a direct fashion, since there is no way to refer to the type of natural numbers.
\item \emph{Finitary}: inductive types corresponding to signatures are always
  finitely branching trees.
\end{itemize}

\emph{Remark.} We omit $\lambda$-expressions from the ToS for the sake of
simplicity: this causes terms to be always in normal form (neutral, to be
precise), and thus we can skip talking about conversion rules. Later, starting
from Chapter \ref{chap:fqiit} we include proper $\beta\eta$-rules in theories of
signatures.

\section{Semantics}
\label{sec:simple-semantics}

For each signature, we need to know what it means for a type theory to support
the corresponding inductive type. For this, we need at least a notion of
\emph{algebras}, which can be viewed as a bundle of all type and
value constructors, and what it means for an algebra to support an
\emph{induction principle}.  Additionally, we may want to know what it means to
support a \emph{recursion principle}, which can be viewed as a non-dependent
variant of induction. In the following, we define these notions by induction on
ToS syntax.

\emph{Remark.} We use ``algebra'' and ``model'' synonymously throughout
this thesis.

\subsection{Algebras}

First, we calculate types of algebras. This is simply a standard interpretation
into the $\Set$ universe. We define the following operations by induction; the
$\blank^A$ name is overloaded for $\Con$, $\Ty$ and $\Tm$.
\begin{alignat*}{3}
& \hspace{-4em} \rlap{$\blank^A : \Ty \to \Set \to \Set$} \\
& \hspace{-4em} \iota^A\,&&X \defn X \\
& \hspace{-4em} (\iota\to A)^A\,&&X \defn X \to A^A\,X\\
& \hspace{-4em} && \\
& \hspace{-4em} \rlap{$\blank^A : \Con \to \Set \to \Set$}\\
& \hspace{-4em} \rlap{$\Gamma^A\,X \defn \{A : \Ty\} \to \Var\,\Gamma\,A \to A^A\,X$}\\
& \hspace{-4em} && \\
& \hspace{-4em} \rlap{$\blank^A : \Tm\,\Gamma\,A \to \{X : \Set\} \to \Gamma^A\,X \to A^A\,X$}\\
& \hspace{-4em} (\var\,x)^A\,&&\gamma \defn \gamma\,x\\
& \hspace{-4em} (\app\,t\,u)^A\,&&\gamma \defn t^A\,\gamma\,(u^A\,\gamma)\\
& \hspace{-4em} && \\
& \hspace{-4em} \rlap{$\blank^A : \Sub\,\Gamma\,\Delta \to \{X : \Set\} \to \Gamma^A\,X \to \Delta^A\,X$}\\
& \hspace{-4em} \rlap{$\sigma^A\,\gamma\,x \defn (\sigma\,x)^A\,\gamma$}
\end{alignat*}
Here, types and contexts depend on some $X : \Set$, which serves as the
interpretation of $\iota$. We define $\Gamma^A$ as a product: for each variable
in the context, we get a semantic type. This trick, along with the definition of
$\Sub$, makes formalization a bit more compact. Terms and substitutions are
interpreted as natural maps. Substitutions are interpreted by pointwise interpreting
the contained terms.

\begin{notation}
We may write values of $\Gamma^A$ using notation for $\Sigma$-types. For
example, we may write $(\mi{zero} : X) \times (\mi{suc} : X \to X)$ for the
result of computing $\ms{NatSig}^A\,X$.
\end{notation}

\begin{mydefinition} We define \textbf{algebras} as follows.
\begin{alignat*}{3}
  & \Alg : \Con \to \Set_1 \\
  & \Alg\,\Gamma \defn (X : \Set) \times \Gamma^A\,X
\end{alignat*}
\end{mydefinition}

\begin{myexample} $\Alg\,\ms{NatSig}$ is computed to $(X : \Set)\times(\mi{zero} :
X)\times(\mi{suc} : X \to X)$.
\end{myexample}

\subsection{Morphisms}

Now, we compute notions of morphisms of algebras. In this case, morphisms are
functions between underlying sets which preserve all specified structure. The
interpretation for calculating morphisms is a \emph{logical relation
interpretation} \cite{udayReynolds} over the $\blank^A$ interpretation. The key
part is the interpretation of types:
\begin{alignat*}{3}
  & \hspace{-4em}\rlap{$\blank^M : (A : \Ty)\{X_0\,X_1 : \Set\}(X^M : X_0 \to X_1) \to A^A\,X_0 \to A^A\,X_1 \to \Set$}\\
  & \hspace{-4em}\iota^M\,&&X^M\,\alpha_0\,\,\alpha_1 \defn X^M\,\alpha_0 = \alpha_1 \\
  & \hspace{-4em}(\iota\to A)^M\,&&X^M\,\alpha_0\,\,\alpha_1 \defn
       (x : X_0) \to A^M\,X^M\,(\alpha_0\,x)\,(\alpha_1\,(X^M\,x))
\end{alignat*}
We again assume an interpretation for the base type $\iota$, as $X_0$, $X_1$ and
$X^M : X_0 \to X_1$. $X^M$ is function between underlying sets of algebras, and
$A^M$ computes what it means that $X^M$ preserves an operation with type $A$. At
the base type, preservation is simply equality. At the first-order function
type, preservation is a quantified statement over $X_0$. We define morphisms for
$\Con$ pointwise:
\begin{alignat*}{3}
  &\blank^M : (\Gamma : \Con)\{X_0\,X_1 : \Set\} \to (X_0 \to X_1) \to \Gamma^A\,X_0 \to \Gamma^A\,X_1 \to \Set\\
  &\Gamma^M\,X^M\,\gamma_0\,\gamma_1 \defn
    \{A : \Ty\}(x : \Var\,\Gamma\,A) \to A^M\,X^M\,(\gamma_0\,x)\,(\gamma_1\,x)
\end{alignat*}
For terms and substitutions, we get preservation statements, which are sometimes
called \emph{fundamental lemmas} in discussions of logical relations \cite{udayReynolds}.
\begin{alignat*}{3}
  & \hspace{-10em}\rlap{$\blank^M : (t : \Tm\,\Gamma\,A) \to \Gamma^M\,X^M\,\gamma_0\,\gamma_1 \to A^M\,X^M\,(t^A\,\gamma_0)\,(t^A\,\gamma_1)$}\\
  & \hspace{-10em}(\var\,x)^M    &&\gamma^M \defn \gamma^M\,x \\
  & \hspace{-10em}(\app\,t\,u)^M &&\gamma^M \defn t^M\,\gamma^M\,(u^A\,\gamma_0)
\end{alignat*}
\begin{alignat*}{3}
  & \blank^M : (\sigma : \Sub\,\Gamma\,\Delta) \to \Gamma^M\,X^M\,\gamma_0\,\gamma_1 \to \Delta^M\,X^M\,(\sigma^A\,\gamma_0)\,(\sigma^A\,\gamma_1)\\
  & \sigma^M\, \gamma^M\,x \defn (\sigma\,x)^M\,\gamma^M
\end{alignat*}
The definition of $(\app\,t\,u)^M$ is well-typed by the induction hypothesis
$u^M\,\gamma^M : X^M\,(u^A\,\gamma_0) = u^A\,\gamma_1$.

\begin{mydefinition}
\label{def:simple-morphism}
To get notions of \textbf{algebra morphisms}, we again pack up $\Gamma^M$ with
the interpretation of $\iota$.
\begin{alignat*}{3}
  & \Mor : \{\Gamma : \Con\} \to \Alg\,\Gamma \to \Alg\,\Gamma \to \Set \\
  & \Mor\,\{\Gamma\}\,(X_0,\,\gamma_0)\,(X_1,\,\gamma_1) \defn (X^M : X_0 \to X_1) \times \Gamma^M\,X^M\,\gamma_0\,\gamma_1
\end{alignat*}
\end{mydefinition}
\begin{myexample} We have the following computation:
\begin{alignat*}{3}
  & \hspace{-5em}\rlap{$\Mor\,\{\NatSig\}\,(X_0,\,\mi{zero_0},\,\mi{suc_0})\,(X_1,\,\mi{zero_1},\,\mi{suc_1}) \defn$} \\
           &(X^M : X_0 \to X_1) \\
   \times\,&(X^M\,\mi{zero_0} = \mi{zero_1}) \\
   \times\,&((x : X_0) \to X^M\,(\mi{suc_0}\,x) = \mi{suc_1}\,(X^M\,x))
\end{alignat*}
\end{myexample}

\begin{mydefinition} We state \textbf{initiality} as a predicate on algebras:
\begin{alignat*}{3}
  & \Initial : \{\Gamma : \Con\} \to \Alg\,\Gamma \to \Set\\
  & \Initial\,\{\Gamma\}\,\gamma \defn
    (\gamma' : \Alg\,\Gamma) \to \ms{isContr}\,(\Mor\,\gamma\,\gamma')
\end{alignat*}
Here $\ms{isContr}$ refers to unique existence \cite[Section 3.11]{hottbook}. If we drop
$\ms{isContr}$ from the definition, we get the notion of weak initiality, which
corresponds to the recursion principle for $\Gamma$. Although we call this
predicate $\Initial$, in this chapter we do not yet show that algebras form a
category. We will show this in a more general setting in Chapter \ref{chap:fqiit}.

\end{mydefinition}

\paragraph{Morphisms vs.\ logical relations.}
The $\blank^M$ interpretation can be viewed as a special case of logical
relations over the $\blank^A$ model: every morphism is a \emph{functional}
logical relation, where the chosen relation between the underlying sets happens
to be a function. Consider now a more general relational interpretation for
types:
\begin{alignat*}{3}
  & \hspace{-0.5em}\rlap{$\blank^R : (A : \Ty)\{X_0\,X_1 : \Set\}(X^R : X_0 \to X_1 \to \Set) \to A^A\,X_0 \to A^A\,X_1 \to \Set$}\\
  & \hspace{-0.5em}\iota^R\,&&X^R\,\alpha_0\,\,\alpha_1 \defn X^R\,\alpha_0\,\alpha_1 \\
  & \hspace{-0.5em}(\iota\to A)^R\,&&X^R\,\alpha_0\,\,\alpha_1 \defn
       (x_0 : X_0)(x_1 : X_1) \to X^R\,x_0\,x_1 \to A^R\,X^R\,(\alpha_0\,x_0)\,(\alpha_1\,x_1)
\end{alignat*}
Here, functions are related if they map related inputs to related outputs. If we
know that $X^M\,\alpha_0\,\alpha_1 \equiv (f\,\alpha_0 = \alpha_1)$ for some
function $f$, we get
\[
  (x_0 : X_0)(x_1 : X_1) \to f\,x_0 = x_1 \to A^R\,X^R\,(\alpha_0\,x_0)\,(\alpha_1\,x_1)
\]
Now, we can simply substitute along the input equality proof in the above type,
to get the previous definition for $(\iota \to A)^M$:
\[
  (x_0 : X_0) \to A^R\,X^R\,(\alpha_0\,x_0)\,(\alpha_1\,(f\,x_0))
\]
This substitution along the equation is called ``singleton contraction'' in the
jargon of homotopy type theory \cite{hottbook}. The ability to perform contraction
here is at the heart of the \emph{strict positivity restriction} for inductive
signatures. Strict positivity in our setting corresponds to only having
first-order function types in signatures. If we allowed function domains to be
arbitrary types, in the definition of $(A \to B)^M$ we would only have a
black-box $A^M\,X^M : A^A\,X_0 \to A^A\,X_1 \to \Set$ relation, which is not
known to be given as an equality.

In Chapter \ref{chap:fqiit} we expand on this. As a preliminary summary:
although higher-order functions have relational interpretation, such relations
do not generally compose. What we eventually aim to have is a category of
algebras and algebra morphisms, where morphisms do compose. We need a
\emph{directed} model of the theory of signatures, where every signature becomes
a category of algebras. The way to achieve this is to prohibit higher-order
functions, thereby avoiding the polarity issues that prevent a directed
interpretation for general function types.

\subsection{Displayed Algebras}

At this point we do not yet have specification for induction principles. We use
the term \emph{displayed algebra} to refer to ``dependent'' algebras, where
every displayed algebra component lies over corresponding components in the base
algebra. For the purpose of specifying induction, displayed algebras can be
viewed as bundles of induction motives and methods.

Displayed algebras over some $\gamma : \Alg\,\Gamma$ are equivalent to slices
over $\gamma$ in the category of $\Gamma$-algebras; we will show this in Chapter
\ref{chap:fqiit}. A slice $f : \Gamma^M\,\gamma'\,\gamma$ maps elements of
$\gamma'$'s underlying set to elements in the base algebra. Why do we need
displayed algebras, then? The main reason is that if we are to eventually
implement inductive types in a programming language or proof assistant, we need
to compute induction principles exactly, not merely up to isomorphisms.

For more illustration of using displayed algebras in a type-theoretic
setting, see \cite{displayedcats}. We adapt the term ``displayed algebra'' from
ibid.\ as a generalization of displayed categories (and functors, natural
transformations) to other algebraic structures.

The displayed algebra interpretation is a \emph{logical predicate}
interpretation, defined as follows.
\begin{alignat*}{3}
  & \rlap{$\blank^D : (A : \Ty)\{X\} \to (X \to \Set) \to A^A\,X \to \Set$}\\
  & \iota^D\,       && X^D\,\alpha \defn X^D\,\alpha \\
  & (\iota\to A)^D\,&& X^D\,\alpha \defn (x : X)(x^D : X^D\,x) \to A^D\,X^D\,(\alpha\,x)\\
  & &&\\
  & \rlap{$\blank^D : (\Gamma : \Con)\{X\} \to (X \to \Set) \to \Gamma^A\,X \to \Set$}\\
  & \rlap{$\Gamma^D\,X^D\,\gamma \defn
       \{A : \Ty\}(x : \Var\,\Gamma\,A) \to A^D\,X^D\,(\gamma\,x)$}\\
  & &&\\
  & \rlap{$\blank^D : (t : \Tm\,\Gamma\,A)
      \to \Gamma^D\,X^D\,\gamma \to A^D\,X^D\,(t^A\,\gamma)$}\\
  & (\var\,x)^D\,&&\gamma^D \defn \gamma^D\,x\\
  & (\app\,t\,u)^D\,&&\gamma^D \defn t^D\,\gamma^D\,(u^A\,\gamma)\,(u^D\,\gamma^D)\\
  & &&\\
  & \rlap{$\blank^D : (\sigma : \Sub\,\Gamma\,\Delta)
      \to \Gamma^D\,X^D\,\gamma \to \Delta^D\,X^D\,(\sigma^A\,\gamma)$}\\
  & \rlap{$\sigma^D\,\gamma^D\,x \defn (\sigma\,x)^D\,\gamma^D$}
\end{alignat*}
Analogously to before, everything depends on a predicate interpretation $X^D : X
\to \Set$ for $\iota$. For types, a predicate holds for a function if the
function preserves predicates. The interpretation of terms is again a
fundamental lemma, and we again have pointwise definitions for contexts and
substitutions.
\begin{mydefinition}[\textbf{displayed algebras}]\label{def:simple-section}
\begin{alignat*}{3}
  & \DispAlg : \{\Gamma : \Con\} \to \Alg\,\Gamma \to \Set_1\\
  & \DispAlg\,\{\Gamma\}\,(X,\,\gamma) \defn (X^D : X \to \Set) \times \Gamma^D\,X^D\,\gamma
\end{alignat*}
\end{mydefinition}
\begin{myexample} We have the following computation.
\begin{alignat*}{3}
  & \hspace{-5em}\rlap{$\DispAlg\,\{\ms{NatSig}\}\,(X,\,\mi{zero},\,\mi{suc}) \equiv$}\\
              & (X^D &&: X \to \Set)\\
      \times\,& (\mi{zero^D} &&: X^D\,\mi{zero})\\
      \times\,& (\mi{suc^D} &&: (n : X) \to X^D\,n \to X^D\,(\mi{suc}\,n))
\end{alignat*}
\end{myexample}

\subsection{Sections}

Sections of displayed algebras are ``dependent'' analogues of algebra morphisms,
where the codomain is displayed over the domain.

\begin{alignat*}{3}
  & \hspace{-6em}\rlap{$\blank^S : (A : \Ty)\{X\,X^D\}(X^S : (x : X) \to X^D\,x) \to (\alpha : A^A\,X) \to A^D\,X^D\,\alpha \to \Set$}\\
  & \hspace{-6em}\iota^S\,&&X^S\,\alpha\,\,\alpha^D \defn X^S\,\alpha = \alpha^D \\
  & \hspace{-6em}(\iota\to A)^S\,&&X^S\,\alpha\,\,\alpha^D \defn
  (x : X) \to A^S\,X^S\,(\alpha\,x)\,(\alpha^D\,(X^S\,x))\\
  & \hspace{-6em}&&\\
  &\hspace{-6em}\rlap{$\Con^S : (\Gamma : \Con)\{X\,X^D\}(X^S : (x : X) \to X^D\,x) \to (\gamma : \Gamma^A\,X) \to \Gamma^D\,X^D\,\gamma \to \Set$}\\
  &\hspace{-6em}\rlap{$\Gamma^S\,X^S\,\gamma_0\,\gamma_1 \defn
    \{A : \Ty\}(x : \Var\,\Gamma\,A) \to A^S\,X^S\,(\gamma_0\,x)\,(\gamma_1\,x)$}\\
  & \hspace{-6em} && \\
  & \hspace{-6em}\rlap{$\blank^S : (t : \Tm\,\Gamma\,A) \to \Gamma^S\,X^S\,\gamma\,\gamma^D \to A^S\,X^S\,(t^A\,\gamma)\,(t^D\,\gamma^D)$}\\
  & \hspace{-6em}(\var\,x)^S    &&\gamma^S \defn \gamma^S\,x \\
  & \hspace{-6em}(\app\,t\,u)^S &&\gamma^S \defn t^S\,\gamma^S\,(u^A\,\gamma)\\
  & \hspace{-6em}&& \\
  & \hspace{-6em}\rlap{$\blank^S : (\sigma : \Sub\,\Gamma\,\Delta) \to \Gamma^S\,X^S\,\gamma\,\gamma^D \to \Delta^S\,X^S\,(\sigma^A\,\gamma)\,(\sigma^A\,\gamma^D)$}\\
  & \hspace{-6em} \rlap{$\sigma^S\, \gamma^S\,x = (\sigma\,x)^S\,\gamma^S$}
\end{alignat*}

\begin{mydefinition}[\textbf{Displayed algebra sections} (``sections'' in short)]
\begin{alignat*}{3}
  & \Section : \{\Gamma : \Con\} \to (\gamma : \Alg\,\Gamma) \to \DispAlg\,\gamma \to \Set\\
  & \Section\,(X,\,\gamma)\,(X^D\,\gamma^D) \defn (X^S : (x : X) \to X^D\,x) \times \Gamma^S\,X^S\,\gamma\,\gamma^D
\end{alignat*}
\end{mydefinition}
\begin{myexample} We have the following computation.
\begin{alignat*}{3}
  & \hspace{-5em}\rlap{$\Section\,\{\ms{NatSig}\}\,(X,\,\mi{zero},\,\mi{suc})\,(X^D,\,\mi{zero^D},\,\mi{suc^D}) \equiv$}\\
              & (X^S &&: (x : X) \to X^D\,x)\\
      \times\,& (\mi{zero^S} &&: X^S\,\mi{zero} = \mi{zero^D})\\
      \times\,& (\mi{suc^S} &&: (n : X) \to X^S\,(\mi{suc\,n}) = \mi{suc^D}\,n\,(X^S\,n))
\end{alignat*}
\end{myexample}

\begin{mydefinition}[\textbf{Induction}]
We define a predicate which holds if an algebra supports induction.
\begin{alignat*}{3}
  & \Inductive : \{\Gamma : \Con\} \to \Alg\,\Gamma \to \Set_1\\
  & \Inductive\,\{\Gamma\}\,\gamma \defn
     (\gamma^D : \DispAlg\,\gamma) \to \Section\,\gamma\,\gamma^D
\end{alignat*}
\end{mydefinition}

We can observe that $\Inductive\,\{\ms{NatSig}\}\,(X,\,\ms{zero},\,\ms{suc})$
computes to the usual induction principle for natural numbers, but with
$\beta$-rules given as propositional equalities. The input $\DispAlg$ is a
bundle of the induction motive and the methods, and the output $\Section$
contains the $X^S$ eliminator function together with its $\beta$-rules.

\section{Term Algebras}

In this section we show that if a type theory supports the inductive types comprising
the theory of signatures, it also supports every inductive type which is described
by the signatures.

Note that we specified $\Tm$ and $\Sub$, but did not need either of them when
specifying signatures, or when computing induction principles. That signatures
do not depend on terms is a property specific to simple signatures; this will
not be the case in Chapter \ref{chap:fqiit} when we move to more general
signatures. However, terms and substitutions are already required in the
construction of term algebras.

The idea is that terms in contexts comprise initial algebras. For example,
$\Tm\,\ms{NatSig}\,\iota$ is the set of natural numbers (up to
isomorphism). Informally, this is because the only way to construct terms is by
applying the $\ms{suc}$ variable (given by $\var\,\vz$) finitely many times to
the $\ms{zero}$ variable (given by $\var\,(\vs\,\vz)$).

\begingroup
\allowdisplaybreaks
\begin{mydefinition}[\textbf{Term algebras}]
Fix an $\Omega : \Con$. We abbreviate $\Tm\,\Omega\,\iota$ as $\ms{T}$; this will serve
as the carrier set of the term algebra. We additionally define the following.
\begin{alignat*}{3}
  & \hspace{-5em}\rlap{$\blank^T : (A : \Ty) \to \Tm\,\Omega\,A \to A^A\,\ms{T}$} \\
  & \hspace{-5em}\iota^T\,&&t \defn t \\
  & \hspace{-5em}(\iota\to A)^T\,&&t \defn \lambda\,u.\,A^T\,(\app\,t\,u)\\
  & \hspace{-5em}&& \\
  & \hspace{-5em}\rlap{$\blank^T : (\Gamma : \Con) \to \Sub\,\Omega\,\Gamma \to \Gamma^A\,\ms{T}$}\\
  & \hspace{-5em}\rlap{$\Gamma^T\,\nu\,\{A\}\,x \defn A^T\,(\nu\,x)$}\\
  & \hspace{-5em}&& \\
  & \hspace{-5em}\rlap{$\blank^T : (t : \Tm\,\Gamma\,A)(\nu : \Sub\,\Omega\,\Gamma) \to A^T\,(t[\nu]) = t^A\,(\Gamma^T\,\nu)$}\\
  & \hspace{-5em}(\var\,x)^T\,   &&\nu    \text{   holds by   } \refl\\
  & \hspace{-5em}(\app\,t\,u)^T\,&&\nu \text {   holds by   } t^T\,\nu \text{   and   } u^T\,\nu\\
  & \hspace{-5em}&& \\
  & \hspace{-5em}\rlap{$\blank^T : (\sigma : \Sub\,\Gamma\,\Delta)(\nu : \Sub\,\Omega\,\Gamma)\{A\}(x : \Var\,\Delta\,A)$}\\
  & \hspace{-3em}\rlap{$\to \Delta^T\,(\sigma \circ \nu)\,x = \sigma^A\,(\Gamma^T\,\nu)\,x$}\\
  & \hspace{-5em}\rlap{$ \sigma^T\,\nu\,x \defn (\sigma\,x)^T\,\nu$}
\end{alignat*}
Now we can define the term algebra for $\Omega$ itself:
\begin{alignat*}{3}
  & \TmAlg_{\Omega} : \Alg\,\Omega \\
  & \TmAlg_{\Omega} \defn \Omega^T\,\Omega\,\id
\end{alignat*}
\end{mydefinition}
\endgroup

In the interpretation for contexts, it is important that $\Omega$ is
fixed, and we do induction on all $\Gamma$ contexts such that there is a
$\Sub\,\Omega\,\Gamma$. It would not work to try to compute term algebras by
direct induction on contexts because we need to refer to the same $\ms{T}$ set
in the interpretation of every type in a signature.

The interpretation of types embeds terms as $A$-algebras. At the base type
$\iota$, this embedding is simply the identity function, since $\iota^A\,\ms{T}
\equiv \ms{T} \equiv \Tm\,\Omega\,\iota$. At function types we recursively proceed
under a semantic $\lambda$. The interpretation of contexts is pointwise.

The interpretations of terms and substitutions are coherence properties, which
relate the term algebra construction to term evaluation in the $\blank^A$ model.
For terms, if we pick $\nu \equiv \id$, we get $A^T\,t =
t^A\,\TmAlg_{\Omega}$. The left side embeds $t$ in the term model via
$\blank^T$, while the right hand side evaluates $t$ in the term model.

One way to view the term algebra construction, is that we are working in a
\emph{slice model} over the fixed $\Omega$, and every $\nu :
\Sub\,\Omega\,\Gamma$ can be viewed as an internal $\Gamma$-algebra in this
model. The term algebra construction demonstrates that every such internal
algebra yields an external element of $\Gamma^A$.


\subsection{Recursor Construction}
\label{sec:simple-weak-initiality}
We show that $\TmAlg_{\Omega}$ supports a recursion principle, i.e.\ it is weakly
initial.

\begin{mydefinition}[\textbf{Recursor construction}]\label{def:simple-recursor} We assume $(X,\,\omega) : \Alg\,\Omega$;
recall that $X : \Set$ and $\omega : \Omega^A\,X$. We define $\ms{R} : \ms{T} \to X$
as $\ms{R}\,t \defn t^A\,\omega$. We additionally define the following.
\begin{alignat*}{3}
& \hspace{-6em}\rlap{$\blank^R : (A : \Ty)(t : \Tm\,\Omega\,A) \to A^M\,\ms{R}\,(A^T\,t)\,(t^A\,\omega)$}\\
& \hspace{-6em}\iota^R\,&&t \defn (\refl : t^A\,\omega = t^A\,\omega)\\
& \hspace{-6em}(\iota\to A)^R\,&&t \defn \lambda\,u.\,A^R\,(\app\,t\,u)
\end{alignat*}
\begin{alignat*}{3}
& \hspace{-10em}\rlap{$\blank^R : (\Gamma : \Con)(\nu : \Sub\,\Omega\,\Gamma) \to \Gamma^M\,\ms{R}\,(\Gamma^T\,\nu)\,(\nu^A\,\omega)$}\\
& \hspace{-10em}\rlap{$\Gamma^R\,\nu\,x \defn A^R\,(\nu\,x)$}
\end{alignat*}
We define the recursor for $\Omega$ as
\begin{alignat*}{3}
  & \Rec_{\Omega} : (\mi{alg} : \Alg\,\Omega) \to \Mor\,\TmAlg_{\Omega}\,\mi{alg}\\
  & \Rec_{\Omega}\,(X,\,\omega) \defn (\ms{R},\,\Omega^R\,\Omega\,\id)
\end{alignat*}
\end{mydefinition}

In short, the way we get recursion is by evaluating terms in arbitrary
$(X,\,\omega)$ algebras using $\blank^A$. The $\blank^R$ operation for types and
contexts confirms that $\ms{R}$ preserves structure appropriately, so that
$\ms{R}$ indeed yields algebra morphisms.

We skip interpreting terms and substitutions by $\blank^R$. It is necessary to
do so with more general signatures, but not in the current chapter.

\subsection{Eliminator Construction}

We take the idea of the previous section a bit further. We have seen that
recursion for term algebras is given by evaluation in the ``standard'' $\Set$
model. Now, we show that induction for term algebras is obtained from the
$\blank^D$ interpretation into the logical predicate model over the $\Set$
model.

\begin{mydefinition}[\textbf{Eliminator construction}]\label{def:simple-eliminator-construction}
We assume $(X^D,\,\omega^D) : \DispAlg\,\TmAlg_\Omega$. Recall that $X^D :
\ms{T} \to \Set$ and $\omega^D : \Omega^D\,X^D\,(\Omega^T\,\Omega\,\id)$. Like
before, we first interpret the underlying set:
\begin{alignat*}{3}
  & \ms{E} : (t : \ms{T}) \to X^D\,t \\
  & \ms{E}\,t \defn t^D\,\omega^D
\end{alignat*}
However, this definition is not immediately well-typed, since $t^D\,\omega^D$
has type $X^D\,(t^A\,(\Omega^T\,\Omega\,\id))$, so we have to show that
$t^A\,(\Omega^T\,\Omega\,\id) = t$. This equation says that nothing happens if
we evaluate a term with type $\iota$ in the term model. We get it from the
$\blank^T$ interpretation of terms: $t^T\,\id : t[\id] =
t^A\,(\Omega^T\,\Omega\,\id)$, and we also know that $t[\id] = t$. We interpret types
and contexts as well:
\begin{alignat*}{3}
  & \hspace{-6em}\rlap{$\blank^E : (A : \Ty)(t : \Tm\,\Omega\,A) \to A^S\,\ms{E}\,(t^A\,(\Omega^T\,\Omega\,\id))\,(t^D\,\omega^D)$}\\
  & \hspace{-6em}\iota^E\,&&t : (t^A\,(\Omega^T\,\Omega\,\id))^D\,\omega^D = t^D\,\omega^D\\
  & \hspace{-6em}(\iota\to A)^E\,&&t \defn \lambda\,u.\, A^E\,(\app\,t\,u)
\end{alignat*}
\begin{alignat*}{3}
  & \blank^E : (\Gamma : \Con)(\nu : \Sub\,\Omega\,\Gamma) \to \Gamma^S\,\ms{E}\,(\nu^A\,(\Omega^T\,\Omega\,\id))\, (\nu^D\,\omega^D)\\
  & \Gamma^E\,\nu\,x \defn A^E\,(\nu\,x)
\end{alignat*}
In $\iota^E$ we use the same equation as in the definition of $\ms{E}$. In
$(\iota\to A)^E$ the definition is well-typed because of the same equation, but
instantiated for the abstracted $u$ term this time. All of this amounts to some
additional path induction and transport fiddling in the (intensional) Agda
formalization. We get induction for $\Omega$ as below.
\begin{alignat*}{3}
  &\Ind_{\Omega} : (\mi{alg} : \DispAlg\,\TmAlg_\Omega) \to \Section\,\TmAlg_\Omega\,\mi{alg}\\
  &\Ind_{\Omega}\,(X^D,\,\omega^D) \defn (E,\, \Omega^E\,\Omega\,\id)
\end{alignat*}
\end{mydefinition}


\section{Comparison to Endofunctors as Signatures}

A well-known alternative definition of algebraic signatures is to view certain
cocontinuous endofunctors as such. For example, single-sorted signatures can be
defined to be endofunctors which preserve colimits of some ordinal-indexed
chains. For instance, if we have a $\kappa$-cocontinuous $F : \mbb{C} \to
\mbb{C}$, then algebras are given as $(X : |\mbb{C}|) \times
(\mbb{C}(F\,X,\,X))$, called \emph{F-algebras}, morphisms as commuting squares,
and Adámek's theorem \cite{adamek} establishes the existence of initial
algebras.

An advantage of this approach is that we can describe different classes of
signatures by choosing different $\mbb{C}$ categories:
\begin{itemize}
  \item If $\mbb{C}$ is $\mbf{Set}$, we get simple inductive theories.
  \item If $\mbb{C}$ is $\mbf{Set}^I$ for some set $I$, we get indexed inductive signatures.
  \item If $\mbb{C}$ is $\mbf{Set}/I$, we get inductive-recursive signatures.
\end{itemize}

Another advantage is that signatures are fairly semantic in nature: they make
sense even if we have no syntactic presentation at hand. That said, often we do
need syntactic signatures, for use in proof assistants, or just to have a
convenient notation for a class of cocontinuous functors.

An elegant way of carving out a large class of such functors is to
consider polynomials as signatures. For example, when working in \textbf{Set}, a
signature is an element of $(S : \Set) \times (P : S \to \Set)$, and $(S,\,P)$
is interpreted as a functor as $X \mapsto (s : S) \times (P\,s \to X)$. The
initial algebra is the W-type specified by $S$ shapes and $P$ positions. This
yields infinitary inductive types as well.

However, it is not known how to get \emph{inductive-inductive} signatures by
picking the right $\mbb{C}$ category and a functor. In an inductive-inductive
signature, there may be multiple sorts, which can be indexed over previously
declared sorts. For example, in the signature for categories we have $\Obj :
\Set$ and $\Mor : \Obj \to \Obj \to \Set$, indexed twice over $\Obj$. Some
extensions are required to the idea of $F$-algebras:
\begin{itemize}
\item
  For inductive-inductive definitions with two sorts, Forsberg gives a
  specification with two functors, and a considerably more complex notion of
  algebras, involving dialgebras \cite{forsberg-phd}.\footnote{However, the
  dialgebra specification only covers restricted signatures, where $B : A \to
  \Set$ constructor types may refer to $A : \Set$ constructors, but no other
  dependency is allowed.  There is a more general and yet more complicated
  notion of signature in \cite{forsberg-phd}, which is not anymore represented
  with functors.}
\item
  For an arbitrary number of sorts, Altenkirch et
  al.\ \cite{altenkirch18qiit} use a ``list'' of functors, specified mutually
  with categories of algebras: each functor has as domain the semantic category
  of all previous sorts.
\end{itemize}

The functors-as-signatures approach gets significantly less convenient as we
consider more general specifications. The approach of this thesis is to skip the
middle ground between syntactic signatures and semantic categories of algebras:
we treat syntactic signatures as a key component, and give direct semantic
interpretation for them. Although we lose the semantic nature of signatures,
our approach scales extremely well, all the way up to infinitary
quotient-inductive-inductive types in Chapter \ref{chap:iqiit}, and to some
extent to higher inductive-inductive types as well in Chapter \ref{chap:hiit}.

If we look back at $\blank^A : \Con \to \Set \to \Set$, we may note that
$\Gamma^A$ yields a functor, in fact the same functor (up to isomorphism) that
we would get from an endofunctor presentation. However, this is a coincidence in
the single-sorted case. We can view $(X : |\mbb{C}|) \times (\mbb{C}(F\,X,\,X))$
as specifying the category of algebras as the total category of a displayed
category (by viewing the $\Sigma$-type here as taking total categories; a
$\Sigma$ in $\mbf{Cat}$). In our approach, we aim to get the displayed
categories directly, without talking about functors.

\chapter{Semantics in Two-Level Type Theory}
\label{chap:2ltt}

In this chapter we describe how two-level type theory is used as a metatheoretic
setting in the rest of this thesis. First, we provide motivation and
overview. Second, we describe models of type theories in general, and models of
two-level type theories as extensions. Third, we describe presheaf models of
two-level type theories. Finally, we generalize the semantics and the term
algebra construction from Chapter \ref{chap:simple-inductive-signatures} in
two-level type theory, as a way to illustrate the applications.

\section{Motivation}
\label{sec:2ltt-motivation}
We note two shortcomings of the semantics presented in the
previous chapter.

First, the semantics that we provided was not as general as it could be. We
used the internal $\Set$ universe to specify algebras, but algebras make sense
in many different categories. A crude way to generalize semantics is to simply
say that our formalization, which was carried out in the syntax (i.e.\ initial
model) of some intensional type theory, can be interpreted in any model of the
type theory. But this is wasteful: for simple inductive signatures, it is
enough to assume a category with finite products as semantic setting. We do not
need all the extra baggage that comes with a model of a type theory.

Second, we were not able to reason about definitional equalities, only
propositional ones. We have a formalization of signatures and semantics in
intensional Agda, where the two notions differ\footnote{As opposed to in
extensional type theory, where they are the same.}, but only propositional
equality is subject to internal reasoning. For instance, we would like to show
that term algebras support recursion with strict $\beta$-rules, and for this we
need to reason about strict equality.

\begin{notation}
We use $\emptycon$ for the terminal object in a $\mbb{C}$ category, with
$\epsilon : \mbb{C}(A,\,\emptycon)$ for the unique morphism. For products, we
use $\blank\!\otimes\!\blank$ with $(\blank\!,\!\blank) : \mbb{C}(A,\,B) \to
\mbb{C}(A,\,C) \to \mbb{C}(A,\,B\otimes C)$ and $\p$ and $\q$ for
first and second projections respectively.
\end{notation}

\begin{myexample}
Assuming a category $\mbbC$ with finite products, we specify natural number
algebras and binary tree algebras as follows. Below, $\Alg_{\ms{NatSig}}$ and
$\Alg_{\ms{TreeSig}}$ are both sets in some metatheory, and the $\times$ in the
definitions refer to the metatheoretic $\Sigma$.
\begin{alignat*}{3}
  &\Alg_{\ms{NatSig}} &&\defn (X : |\mbbC|) \times \mbbC(\emptycon,\,X) \times \mbbC(X,\,X)\\
  &\Alg_{\ms{TreeSig}}&& \defn (X : |\mbbC|) \times \mbbC(\emptycon,\,X) \times \mbbC(X \otimes X,\,X)
\end{alignat*}
\end{myexample}
How should we adjust $\Alg$ from the previous chapter to compute algebras in
$\mbbC$, and $\Mor$ to compute their morphisms? While it is possible to do this
in a direct fashion, working directly with objects and morphisms of $\mbbC$ is
rather unwieldy. $\mbbC$ is missing many convenience features of type theories.
\begin{itemize}
\item
  There are no variables or binders. We are forced to work in a point-free style
  or chase diagrams; both become difficult to handle above a certain level of
  complexity.
\item
  There are no functions, universes or inductive types.
\item
  Substitution (with weakening as a special case) has to be handled explicitly
  and manually. Substitutions are certain morphisms, while ``terms'' are also
  morphisms, and we have to use composition to substitute terms. In contrast, if
  we are working internally in a type theory, terms and substitutions are
  distinct, and we only have to explicitly deal with terms, and substitutions
  are automated and implicit.
\end{itemize}

The above overlaps with motivations for working in \emph{internal languages}
\cite{internallogic} of structured categories: they aid calculation and compact
formalization by hiding bureaucratic structural details.

A finite product category $\mbbC$ does not have much of an internal language, it
is too bare-bones. But we can work instead in the internal language of
$\hat\mbbC$, the category of presheaves over $\mbbC$. This allows faithful
reasoning about $\mbbC$, while also including all convenience features of
extensional type theory.

\emph{Two-level type theories} \cite{twolevel}, or 2LTT in short, are type
theories such that they have ``standard'' interpretations in presheaf
categories. A 2LTT has an inner layer, where types and terms arise by embedding
$\mbbC$ in $\hat{\mbbC}$, and an outer layer, where constructions are inherited
from $\hat{\mbbC}$. The exact details of the syntax may vary depending on what
structures $\mbbC$ supports, and which type formers we assume in the outer
layer. Although it is possible to add assumptions to a 2LTT which preclude
standard presheaf semantics \cite[Section 2.4.]{twolevel}, we stick to basic
2LTT in this thesis. By using 2LTT, we are able to use a type-theoretic syntax
which differs only modestly from the style of definitions that we have seen so
far.

From a programming perspective, basic 2LTT provides a convenient syntax for
writing metaprograms. This can be viewed as \emph{two-stage compilation}: if we
have a 2LTT program with an inner type, we can run it, and it returns another
program, which lives purely in the inner theory.

\section{Models of Type Theories}
\label{sec:models-of-tts}

Before explaining 2LTT-specific features, we review models of type theories
in general. Variants of 2LTT will be obtained by adding extra features on the
top of more conventional type theories.

It is also worth to take a more general look at models at this point, because
the notions presented in this subsection (categories with families, type
formers) will be reused several times in this thesis, when specifying theories
of signatures.

\subsection{The Algebraic View}

We take an algebraic view of models and syntaxes of type theories throughout
this thesis. Models of type theories are algebraic structures: they are
categories with certain extra structure. The syntax of a type theory is
understood to be its initial model. In initial models, the underlying category
is the category of typing contexts and parallel substitutions, while the extra
structure corresponds to type and term formers, and equations quotient the
syntax by definitional equality.

Type theories can be described with quotient inductive-inductive (QII)
signatures, and their initial models are quotient inductive-inductive types
(QIITs). Hence, 2LTT is also a QII theory. We will first talk about QIITs in
Chapter \ref{chap:fqiit}. Until then, we shall make do with an informal
understanding of categorical semantics for type theories, without using anything
in particular from the metatheory of QIITs. There is some circularity here, that
we talk about QIITs in this thesis, but we employ QIITs when talking about
them. However, this is only an annoyance in exposition and not a fundamental
issue: Sections \ref{sec:closed-levitation} and \ref{sec:iqii-levitation}
describe how to eliminate circularity by a form of bootstrapping.

The algebraic view lets us dispense with all kinds of ``raw'' syntactic objects.
We only ever talk about well-typed and well-formed objects, moreover, every
construction must respect definitional equalities. For terms in the algebraic
syntax, definitional equality coincides with metatheoretic equality. This
mirrors equality of morphisms in 1-category theory, where we usually reuse
metatheoretic equality in the same way.

In the following we specify notions of models for type theories. We split this
in two parts: categories with families and type formers.

\subsection{Categories With Families}

\begin{mydefinition}
A \textbf{category with families} (cwf) \cite{Dybjer96internaltype} is a way to
specify the basic structural rules for contexts, substitutions, types and
terms. It yields a dependently typed explicit substitution calculus.  A cwf
consists of the following.
\begin{itemize}
\item
  A category with a terminal object. We denote the set of objects as $\Con :
  \Set$ and use capital Greek letters starting from $\Gamma$ to refer to
  objects. The set of morphisms is $\Sub : \Con \to \Con \to \Set$, and we use
  $\sigma$, $\delta$ and so on to refer to morphisms. We write $\id$ for the
  identity morphism and $\blank\circ\blank$ for composition. The terminal
  object is $\emptycon$ with unique morphism $\epsilon :
  \Sub\,\Gamma\,\emptycon$. In initial models (that is, syntaxes) of type
  theories, objects correspond to typing contexts, morphisms to parallel
  substitutions and the terminal object to the empty context; this informs the
  naming scheme.
\item A \emph{family structure}, containing $\Ty : \Con \to \Set$ and $\Tm :
  (\Gamma : \Con) \to \Ty\,\Gamma \to \Set$. We use $A$, $B$, $C$ to refer to
  types and $t$, $u$, $v$ to refer to terms. $\Ty$ is a presheaf over the
  category of contexts and $\Tm$ is a displayed presheaf over $\Ty$. This means
  that types and terms can be substituted:
  \begin{alignat*}{3}
    &\blank[\blank] : \Ty\,\Delta \to \Sub\,\Gamma\,\Delta \to \Ty\,\Gamma\\
    &\blank[\blank] : \Tm\,\Delta\,A \to (\sigma : \Sub\,\Gamma\,\Delta) \to \Tm\,\Gamma\,(A[\sigma])
  \end{alignat*}
  Substitution is functorial: we have $A[\id] \equiv A$ and
  $A[\sigma\circ\delta] \equiv A[\sigma][\delta]$, and likewise for terms.

  A family structure is additionally equipped with \emph{context comprehension}
  which consists of a context extension operation $\blank\ext\blank : (\Gamma :
  \Con) \to \Ty\,\Gamma \to \Con$ together with a natural isomorphism
  $\Sub\,\Gamma\,(\Delta\ext A) \simeq ((\sigma : \Sub\,\Gamma\,\Delta) \times
  \Tm\,\Gamma\,(A[\sigma]))$.
\end{itemize}
\end{mydefinition}

The following notions are derivable from the comprehension structure:
\begin{itemize}
\item
  By going right-to-left along the isomorphism, we recover \emph{substitution
  extension} $\blank,\blank : (\sigma : \Sub\,\Gamma\,\Delta) \to
  \Tm\,\Gamma\,(A[\sigma]) \to \Sub\,\Gamma\,(\Delta\ext A)$. This means that
  starting from $\epsilon$ or the identity substitution $\id$, we can iterate
  $\blank,\blank$ to build substitutions as lists of terms.
\item
  By going left-to-right, and starting from $\id : \Sub\,(\Gamma\ext
  A)\,(\Gamma\ext A)$, we recover the \emph{weakening substitution} $\p :
  \Sub\,(\Gamma\ext A)\,\Gamma$ and the \emph{zero variable} $\q :
  \Tm\,(\Gamma\ext A)\,(A[\p])$.
\item
  By weakening $\q$, we recover a notion of variables as De Bruijn indices. In
  general, the $n$-th De Bruijn index is defined as $\q[\p^{n}]$, where $\p^{n}$
  denotes $n$-fold composition.
\end{itemize}

Comprehension can be characterized either by taking $\blank,\blank$, $\p$ and
$\q$ as primitive, or the natural isomorphism. The two are equivalent, and we
may switch between them, depending on which is more convenient.

There are other ways for presenting the basic categorical structure of models,
which are nonetheless equivalent to cwfs, including natural models
\cite{awodey18natural} and categories with attributes \cite{cartmellthesis}. We
use the cwf presentation for its immediately algebraic character and closeness
to conventional explicit substitution syntax.

\begin{notation}As De Bruijn indices are hard to read, we will mostly use
nameful notation for binders. For example, assuming $\Nat : \{\Gamma : \Con\}
\to \Ty\,\Gamma$ and $\Id : \{\Gamma : \Con\}(A : \Ty\,\Gamma) \to
\Tm\,\Gamma\,A \to \Tm\,\Gamma\,A \to \Ty\,\Gamma$, we may write $\emptycon\,\ext\,
n : \Nat\,\ext\,p : \Id\,\Nat\,n\,n$ for a typing context, instead of using
numbered variables or cwf combinators as in $\emptycon \ext \Nat \ext
\Id\,\Nat\,\q\,\q$.
\end{notation}

\begin{notation}
In the following, we will denote family structures by ($\Ty$,$\Tm$) pairs and overload context
extension $\blank\ext\blank$ for different families.
\end{notation}

\begin{mydefinition} The following derivable operations are commonly used.
\label{def:cwfops}
  \begin{itemize}
    \item \emph{Single substitution} can be derived from parallel substitution
      as follows. Assume $t : \Tm\,(\Gamma\ext A)\,B$, and $u :
      \Tm\,\Gamma\,A$. $t$ is a term which may depend on the last variable in
      the context, which has $A$ type. We can substitute that variable with the
      $u$ term as $t[\id,\,u] : \Tm\,\Gamma\,(\A[\id,\,u])$. Note that term
      substitution causes the type to be substituted as well. $(\id,\,u) :
      \Sub\,\Gamma\,(\Gamma\ext A)$ is well-typed because $u : \Tm\,\Gamma\,A$
      hence also $u : \Tm\,\Gamma\,(A[\id])$.

    \item We can \emph{lift substitutions} over binders as follows. Assuming
      $\sigma : \Sub\,\Gamma\,\Delta$ and $A : \Ty\,\Delta$, we construct a
      lifting of $\sigma$ which maps an additional $A$-variable to itself:
      $(\sigma\circ\p,\,\q) : \Sub\,(\Gamma\ext A[\sigma])\,(\Delta \ext A)$.
      Let us see why this is well-typed. We have $\p : \Sub\,(\Gamma\ext
      A[\sigma])\,\Gamma$ and $\sigma : \Sub\,\Gamma\,\Delta$, so $\sigma \circ
      \p : \Sub\,(\Gamma\ext A[\sigma])\,\Delta$. Also, $\q : \Tm\,(\Gamma\ext
      A[\sigma])\,(A[\sigma][\p])$, hence $\q : \Tm\,(\Gamma\ext
      A[\sigma])\,(A[\sigma \circ \p])$, thus $(\sigma\circ \p,\,\q)$
      typechecks.
  \end{itemize}
\end{mydefinition}

\begin{notation}

As a nameful notation for substitutions, we may write $t[x \mapsto u]$, for
a single substitution, or $t[x \mapsto u_1, y \mapsto u_2]$ and so on.

In nameful notation we leave all weakening implicit, including substitution
lifting. Formally, if we have $t : \Tm\,\Gamma\,A$, we can only mention $t$ in
$\Gamma$. If we need to mention it in $\Gamma \ext B$, we need to use $t[\p]$
instead. In the nameful notation, $t : \Tm\,(\Gamma\ext x : B)\,A$ may be
used.\footnote{Moreover, when working in the internal syntax of a theory, we
just write Agda-like type-theoretic notation, without noting contexts and
substitutions in any way.}
\end{notation}

\subsection{Type formers}
A family structure in a cwf may be closed under certain type formers, such as
functions, $\Sigma$-types, universes or inductive types. We give some examples
here for their specification. First, we look at common negative type formers;
these are the type formers which can be specified using isomorphisms. Then, we
consider positive type formers, and finally universes.

\subsubsection{Negative types}

\begin{mydefinition}
A $(\Ty,\,\Tm)$ family supports \textbf{$\Pi$-types} if it supports the following.
\begingroup
\allowdisplaybreaks
\begin{alignat*}{3}
  &\Pi           &&: (A : \Ty\,\Gamma) \to \Ty\,(\Gamma\ext A) \to \Ty\,\Gamma\\
  &\Pi[]         &&: (\Pi\,A\,B)[\sigma] \equiv \Pi\,(A[\sigma])\,(B[\sigma\circ\p,\,\q])\\
  &\app          &&: \Tm\,\Gamma\,(\Pi\,A\,B) \to \Tm\,(\Gamma \ext A)\,B\\
  &\lam          &&: \Tm\,(\Gamma \ext A)\,B \to \Tm\,\Gamma\,(\Pi\,A\,B)\\
  &\Pi\beta      &&: \app\,(\lam\,t) \equiv t\\
  &\Pi\eta       &&: \lam\,(\app\,t) \equiv t\\
  &\lam[]        &&: (\lam\,t)[\sigma] \equiv \lam\,(t[\sigma\circ\p,\,\q])
\end{alignat*}
\endgroup
Here, $\Pi$ is the type formation rule. $\ms{\Pi[]}$ is the type substitution
rule, expressing that substituting $\Pi$ proceeds structurally on constituent
types.  Note $B[\sigma\circ\p,\,\q]$, where we lift $\sigma$ over the additional
binder.

The rest of the rules specify a natural isomorphism $\Tm\,\Gamma\,(\Pi\,A\,B)
\simeq \Tm\,(\Gamma \ext A)\,B$. We only need a substitution rule (i.e.\ a
naturality rule) for one direction of the isomorphism, since the naturality of
the other map is derivable.

This way of specifying $\Pi$-types is very convenient if we have explicit
substitutions. The usual ``pointful'' specification is equivalent to this. For
example, we have the following derivation of pointful application:
\begin{alignat*}{3}
  &\app' : \Tm\,\Gamma\,(\Pi\,A\,B) \to (u : \Tm\,\Gamma\,A) \to \Tm\,\Gamma\,(B[\id,\,u])\\
  &\app'\,t\,u \defn (\app\,t)[\id,\,u]
\end{alignat*}

\end{mydefinition}

\emph{Remark on naturality.} The above specification for $\Pi$ can be written
more compactly if we assume that everything is natural with respect to
substitution.
\begin{alignat*}{3}
  &\Pi            &&: (A : \Ty\,\Gamma) \to \Ty\,(\Gamma\ext A) \to \Ty\,\Gamma\\
  & (\app,\,\lam) &&: \Tm\,\Gamma\,(\Pi\,A\,B) \simeq \Tm\,(\Gamma \ext A)\,B
\end{alignat*}
This is a reasonable assumption; in the rest of the thesis we only ever define
structures on cwfs which are natural in this way.

\begin{notation} From now on, when specifying type formers in family structures,
we assume that everything is natural, and thus omit substitution equations.
\end{notation}

There are ways to make this idea more precise, and take it a step further by
working in languages where only natural constructions are possible. The term
\emph{higher-order abstract syntax} (HOAS) is sometimes used for this style. It lets us
also omit contexts, so we would only need to write
\begin{alignat*}{3}
  &\Pi            &&: (A : \Ty) \to (\Tm\,A \to \Ty) \to \Ty\\
  & (\app,\,\lam) &&: \Tm\,(\Pi\,A\,B) \simeq ((a : \Tm\,A) \to \Tm\,(B\,a))
\end{alignat*}
Recently several promising works emerged in this area
\cite{uemura,ctt-normalization,bocquet2021induction}. Although this technology
is likely to be the preferred future direction in the metatheory of type
theories, this thesis does not make use of it. The field is rather fresh, with
several different approaches and limited amount of pedagogical exposition, and
the new techniques would also raise the level of abstraction in this thesis,
contributing to making it less accessible. It is also not obvious how exactly
HOAS-style could be employed to aid formalization here, and it would require
significant additional research.  Often, a setup with multiple modalities
(``multimodal'' \cite{gratzer20multimodal}) is required
\cite{bocquet2021induction} because we work with presheaves over different
cwfs. It seems that a synthetic notion of dependent modes would be also required
to formalize constructions in this thesis, since we often work with displayed
presheaves over displayed cwfs. This is however not yet developed in the
literature.

\begin{mydefinition}
\label{def:constant-families}
A family structure supports \textbf{constant families} if we have the following.
\begin{alignat*}{3}
  & \K &&: \Con \to \{\Gamma : \Con \} \to \Ty\,\Gamma \\
  & (\appK,\,\lamK) &&: \Tm\,\Gamma\,(\K\,\Delta) \simeq \Sub\,\Gamma\,\Delta
\end{alignat*}
Constant families express that every context can be viewed as a non-dependent
type in any context. Having constant families is equivalent to the
\emph{democracy} property for a cwf
\cite{clairambault2014biequivalence,forsberg-phd}. Constant families are
convenient when building models because they let us model non-dependent types
as semantic contexts, which are often simpler structures than semantic types.
From a programming perspective, constant families specify closed record types,
where $\K\,\Delta$ has $\Delta$-many fields.

If we have equalities of sets for the specification,
i.e.\ $\Tm\,\Gamma\,(\K\,\Delta) \equiv \Sub\,\Gamma\,\Delta$, we have \textbf{strict
  constant families}.

\end{mydefinition}

\begin{mydefinition}
A family structure supports \textbf{$\Sigma$-types} if we have
\begin{alignat*}{3}
  & \Sigma  &&: (A : \Ty\,\Gamma) \to \Ty\,(\Gamma\ext A) \to \Ty\,\Gamma\\
  & (\proj,\,(\blank,\blank)) &&: \Tm\,\Gamma\,(\Sigma\,A\,B) \simeq ((t : \Tm\,\Gamma\,A) \times \Tm\,\Gamma\,(B[\id,\,t]))
\end{alignat*}
We may write $\proj_1$ and $\proj_2$ for composing the metatheoretic first and
second projections with $\proj$.
\end{mydefinition}

\begin{mydefinition}
A family structure supports the \textbf{unit type} if we have $\top : \Ty\,\Gamma$ such
that $\Tm\,\Gamma\,\top \simeq \top$, where the $\top$ on the right is the
metatheoretic unit type, and we overload $\top$ for the internal unit type.
From this, we get the internal $\tt : \Tm\,\Gamma\,\top$, which is
definitionally unique.
\end{mydefinition}

\begin{mydefinition}
A family structure supports \textbf{extensional identity} types if there is $\Id
: \Tm\,\Gamma\,A \to \Tm\,\Gamma\,A \to \Ty\,\Gamma$ such that
$(\reflect,\,\refl) : \Tm\,\Gamma\,(\Id\,t\,u) \simeq (t \equiv u)$.
\end{mydefinition}

It is also possible to give a positive definition for identity types, in which
case we get intensional identity. Extensional identity corresponds to a
categorical equalizer of terms (a limit), while the Martin-Löf-style intensional
identity is a positive (inductive) type.

This choice between negative and positive specification generally exists for
type formers with a single term construction rule. For example, $\Sigma$ can be
defined as a positive type, with an elimination rule that behaves like pattern
matching. Positive $\Sigma$ is equivalent to negative $\Sigma$, although it only
supports propositional $\eta$-rules. In contrast, positive identity is usually
\emph{not} equivalent to negative identity.

$\refl : t \equiv u \to \Tm\,\Gamma\,(\Id\,t\,u)$ expresses reflexivity of
identity: definitionally equal terms are provably equal. $\reflect$, which goes
the other way around, is called \emph{equality reflection}: provably equal terms
are identified in the metatheory.

Uniqueness of identity proofs (UIP) is often ascribed to the extensional
identity type (see e.g.\ \cite{hofmann95extensional}). UIP means that
$\Tm\,\Gamma\,(\Id\,t\,u)$ has at most a single inhabitant up to $\Id$. However,
UIP is not something which is inherent in the negative specification, instead it
is inherited from the metatheory. If $\Tm$ forms a homotopy set in the
metatheory, then internal equality proofs inherit uniqueness through the
defining isomorphism.

\subsubsection{Positive types}

We do not dwell much on positive types here, as elsewhere in this thesis we talk
a lot about specifying such types anyway. We provide here some background and
a small example.

The motivation is to specify initial internal algebras in a cwf. However,
specifying the uniqueness of recursors using definitional equality is
problematic, if we are to have decidable and efficient conversion checking for a
type theory. Consider the specification of $\Bool$ together with its recursor.
\begingroup
\allowdisplaybreaks
\begin{alignat*}{3}
  & \Bool  &&: \Ty\,\Gamma \\
  & \true  &&: \Tm\,\Gamma\,\Bool \\
  & \false &&: \Tm\,\Gamma\,\Bool \\
  & \ms{BoolRec} &&: (B : \Ty\,\Gamma)\to \Tm\,\Gamma\,B \to \Tm\,\Gamma\,B \to \Tm\,\Gamma\,\Bool \to \Tm\,\Gamma\,B\\
  & \true\beta &&: \ms{BoolRec}\,B\,t\,f\,\true \equiv t\\
  & \false\beta &&: \ms{BoolRec}\,B\,t\,f\,\false \equiv f
\end{alignat*}
\endgroup
$\ms{BoolRec}$ together with the $\beta$-rules specifies an internal
$\Bool$-algebra morphism. A possible way to specify definitional uniqueness is
as follows. Assuming $B : \Ty\,\Gamma$, $t : \Tm\,\Gamma\,B$, $f :
\Tm\,\Gamma\,B$ and $m : \Tm\,(\Gamma\ext b : \Bool)\,B$, such that $m[b \mapsto
  \true] \equiv t$ and $m[b \mapsto \false] \equiv f$, it follows that
$\ms{BoolRec}\,B\,t\,f\,b : \Tm\,(\Gamma\ext b : \Bool)\,B$ is equal to $m$.

Unfortunately, deciding conversion with this rule entails deciding pointwise
equality of arbitrary $\Bool$ functions, which can be done in exponential time
in the number of $\Bool$ arguments. More generally, Scherer presented a decision
algorithm for conversion checking with strong finite sums and products in simple
type theory \cite{scherer17deciding}, which also takes exponential time. If we
move to natural numbers with definitionally unique recursion, conversion
checking becomes undecidable. To illustrate this, consider checking conversion
between any closed term $t : \Nat \to \Nat$ and the identity function. To cover
the $\eta$-rule, we would have to decide $t\,\zero \equiv \zero$ and $(n :
\Tm\,\emptycon\,\Nat) \to t\,(\suc\,n) \equiv \suc\,(t\,n)$. Assuming
canonicity, this is equivalent to deciding $(n : \mathbb{N}) \to
t\,(\suc^{n}\,\zero) \equiv (\suc^{n}\,\zero)$, where $\suc^{n}\,\zero$ denotes
a canonical numeral.

The standard solution is to have dependent elimination principles instead: this
allows inductive reasoning, canonicity and effectively decidable definitional
equality at the same time. For $\Bool$, we would have
\begin{alignat*}{3}
  & \ms{BoolInd} &&: (B : \Ty\,(\Gamma\ext b : \Bool)) \to \Tm\,\Gamma\,(B[b \mapsto \true])\\
  & &&\to \Tm\,\Gamma\,(B[b \mapsto \false]) \to (t : \Tm\,\Gamma\,\Bool) \to \Tm\,\Gamma\,(B[b \mapsto t])
\end{alignat*}
together with $\ms{BoolInd}\,B\,t\,f\,\true \equiv t$ and $\ms{BoolInd}\,B\,t\,f\,\false \equiv f$.

Of course, if we assume extensional identity types, we have undecidable
conversion anyway, and definitionally unique recursion is equivalent to
induction. But decidable conversion is a pivotal part of type theory, which
makes it possible to relegate a deluge of boilerplate to computers, so
decidable conversion should be kept in mind.

\subsubsection{Universes}

Universes are types which classify types. There are several different flavors of
universes.

\begin{mydefinition} A \textbf{Tarski-style} universe consists of
the following data:
\begin{alignat*}{3}
  & \U : \Ty\,\Gamma\hspace{2em}\El : \Tm\,\Gamma\,\U \to \Ty\,\Gamma
\end{alignat*}
\end{mydefinition}
This is a weak classifier, since not all elements of $\Ty\,\Gamma$ are
necessarily represented as terms of the universe. Like families, Tarski
universes can be closed under type formers. For instance, if $\U$ has $\Nat$, we
have the following:
\begin{alignat*}{3}
  &\Nat : \Tm\,\Gamma\,\U
    \hspace{1em}\zero : \Tm\,\Gamma\,(\El\,\Nat)
    \hspace{1em}\suc : \Tm\,\Gamma\,(\El\,\Nat) \to \Tm\,\Gamma\,(\El\,\Nat)
\end{alignat*}
\vspace{-2em}
\begin{alignat*}{3}
  \ms{NatElim} &:\,\,\,(P : \Ty\,(\Gamma\ext n : \El\,\Nat))\\
  &\to \Tm\,\Gamma\,(P[n \mapsto \zero])\\
  &\to \Tm\,(\Gamma\ext n : \El\,\Nat \ext \mi{np} : P[n \mapsto n])\,(P[n \mapsto \suc\,n]) \\
  &\to (n : \Tm\,\Gamma\,(\El\,\Nat)) \to \Tm\,\Gamma\,(P[n \mapsto n])
\end{alignat*}
If all type formers in $\U$ follow this scheme, $\U$ may be called a
\textbf{weakly Tarski} universe. If we assume that every type former in $\U$ is
also duplicated in $(\Ty,\,\Tm)$, moreover $\El$ preserves all type formers, so
that e.g.\ $\El\,\Nat$ is definitionally equal to the natural number type in
$\Ty$, then $\U$ is \textbf{strongly Tarski}.

It is often more convenient to have stronger classifiers as universes, so that
\emph{all} types in a given family structure are represented.

\begin{mydefinition}
Ignoring size issues for now, \textbf{Coquand universes}
\cite{coquand2018canonicity} are specified as follows:
\[
  \U : \Ty\,\Gamma\hspace{1em} (\El,\,\ms{c}) : \Tm\,\Gamma\,\U \simeq \Ty\,\Gamma
\]
$\ms{c}$ maps every type in $\Ty$ to a code in $\U$. Now we can ignore $\El$
when specifying type formers, as $\ms{c}$ can be always used to get a code in
$\U$ for a type.
\end{mydefinition}

Unfortunately, the exact specification above yields an inconsistent
``type-in-type'' system because $\U$ itself has a code in $\U$. The standard
solution is to have multiple family structures $(\Ty_i,\,\Tm_i)$, indexed by
universe levels, and have $\U_i : \Ty_{i + 1}\,\Gamma$ and
$\Tm_{i+1}\,\Gamma\,\U_i \simeq \Ty_i\,\Gamma$. For a general specification of
consistent universe hierarchies, see \cite{kovacs2021generalized}. We omit
universe indices in the following, and implicitly assume ``just enough''
universes for particular purposes.

\begin{mydefinition}
\textbf{Russell universes} are Coquand universes additionally satisfying
$\Tm\,\Gamma\,\U \equiv \Ty\,\Gamma$ as an equality of sets, and also $\ms{El}\,t \equiv
t$. This justifies omitting $\El$ and $\ms{c}$ from informal notation,
implicitly casting between $\Tm\,\Gamma\,\U$ and $\Ty\,\Gamma$.
\end{mydefinition}
Russell-style universes are commonly supported in set-theoretic models. They are
also often inherited from meta-type-theories which themselves have
Russell-universes. Major implementations of type theories (Coq, Lean, Agda,
Idris) are all such.

\section{Two-Level Type Theory}\label{sec:2ltt}

\subsection{Models}

We describe models of 2LTT in the following. This is not the only possible way
to present 2LTT; our approach differs from \cite{twolevel} in some ways. We will summarize
the differences at the end of this section.

\begin{mydefinition}
A model of a \textbf{two-level type theory} is a model of type theory such that
\begin{itemize}
  \item It supports a Tarski-style universe $\Ty_0 : \Ty\,\Gamma$ with decoding $\Tm_0 :
    \Tm\,\Gamma\,\Ty_0 \to \Ty\,\Gamma$.
  \item $\Ty_0$ may be closed under arbitrary type formers, however, it is only possible
    to eliminate from $\Ty_0$ type formers to types in $\Ty_0$.
\end{itemize}
Types in $\Ty_0$ are called \emph{inner} types, while other types are \emph{outer}. Alternatively,
we may talk about \emph{object-level} and \emph{meta-level} types.
\end{mydefinition}

For example, if we have inner functions, we have the following:
\begin{alignat*}{3}
  &\Pi_0 &&: (A : \Tm\,\Gamma\,\Ty_0) \to \Tm\,(\Gamma \ext \Tm_0\,A) \to \Tm\,\Gamma\,\Ty_0\\
  &(\app_0,\,\lam_0) &&: \Tm\,\Gamma\,(\Tm_0\,(\Pi_0\,A\,B)) \simeq \Tm\,(\Gamma \ext \Tm_0\,A)\,(\Tm_0\,B)
\end{alignat*}
If we have inner Booleans, we have the following (with $\beta$-rules omitted):
\begin{alignat*}{3}
  &\Bool_0 &&: \Tm\,\Gamma\,\Ty_0\\
  &\true_0 &&: \Tm\,\Gamma\,(\Tm_0\,\Bool_0)\\
  &\false_0 &&: \Tm\,\Gamma\,(\Tm_0\,\Bool_0)\\
  & \ms{BoolInd_0} &&: (B : \Tm\,(\Gamma\ext b : \Tm_0\,\Bool_0)\,\Ty_0)\\
  & && \to \Tm\,\Gamma\,(\Tm_0\,(B[b \mapsto \true_0]))\\
  & &&\to \Tm\,\Gamma\,(\Tm_0\,(B[b \mapsto \false_0]))\\
  & && \to (t : \Tm\,\Gamma\,(\Tm_0\,\Bool_0)) \to \Tm\,\Gamma\,(\Tm_0\,(B[b \mapsto t]))
\end{alignat*}

Intuitively, we can view outer types and terms as metatheoretical, while $\Ty_0$
represents the set of types in the object theory, and $\Tm_0$ witnesses that any
object type can be mapped to a metatheoretical set of object terms. The
restriction on elimination is crucial. If we have a Boolean term in the object
language, we can use the object-level elimination principle to construct new
object terms. But it makes no sense to eliminate into the metatheory. In fact,
an object-level Boolean term is not necessarily $\true$ or $\false$, it can also
be just a variable or neutral term in some context, or it can be an arbitrary
non-canonical value in a given model.

We review some properties of 2LTT. An important point is the action of $\Tm_0$
on type formers. In general, $\Tm_0$ preserves the negative type formers but not
others.

For example, we have the isomorphism $\Tm_0\,(\Pi_0\,A\,B) \simeq
\Pi_1\,(\Tm_0\,A)\,(\Tm_0\,B)$, where $\Pi_1$ denotes outer functions.  We move
left-to-right by mapping $t$ to $\lam_1\,(\app_1\,t)$, and the other way by
mapping $t$ to $\lam_0\,(\app_0\,t)$. The preservation of $\Sigma$, $\top$, $\K$
and extensional identity is analogous.

In contrast, we can map from outer positive types to inner ones, but not the
other way around. From $b : \Tm\,\Gamma\,\Bool_1$, we can use the outer
$\Bool_1$ recursor to return in $\Tm_0\,\Bool_0$. In the other direction, only
constant functions are definable since the $\Bool_0$ recursor only targets types
in $\Ty_0$.

It may be the case that there are universes in the inner layer. For example,
disregarding size issues (or just accepting an inconsistent inner theory), there
may be an $\U_0$ in $\Ty_0$ such that we have $\Tm\,\Gamma\,(\Tm_0\,\U_0) \equiv
\Tm\,\Gamma\,\Ty_0$. This amounts to having a Russell-style inner universe with
type-in-type. Assume that we have $\U_1$ as well, as a meta-level Russell
universe. Then we can map from $\Tm_0\,\U_0$ to $\U_1$, by taking $A$ to
$\Tm_0\,A$, but we cannot map in the other direction.

\subsection{Internal Syntax and Notation}
\label{sec:2ltt-internal-syntax}

In the rest of this thesis we will often work internally to a 2LTT, i.e.\ we use
2LTT as metatheory. We adapt the metatheoretical notations used so far. We list
used features and conventions below.

\begin{itemize}
  \item
    We keep previous notation for type formers. For instance, $\Pi$-types are
    written as $(x : A) \to B$ or as $A \to B$.
  \item
    We assume a Coquand-style universe in the outer layer, named $\Set$. As
    before, we leave the sizing levels implicit; if we were fully precise, we
    would write $\Set_i$ for a hierarchy of outer universes. Despite having a
    Coquand universe, we shall omit encoding and decoding in the internal
    syntax, and instead work in Russell-style. In practical implementations,
    elaborating Russell-style notation to Coquand-style is straightforward to
    do.
  \item
    If the same type formers are supported both in the inner and outer layers, we
    may distinguish them by $_0$ and $_1$ subscripts, e.g.\ by having $\Bool_0$ and
    $\Bool_1$. We omit some inferable subscripts, e.g.\ for $\Pi$ and
    $\Sigma$-types. In these cases, we usually know from the type parameters which
    type former is meant. For example, $\Tm_0\,\Bool_0 \to \Bool_1$ can only refer
    to outer functions.
  \item
    We have the convention that $\blank\!=\!\blank$ refers to the inner equality
    type, while $\blank\!\equiv\!\blank$ refers to the outer equality type. If
    the inner equality is extensional, the choice between
    $\blank\!=\!\blank$ and $\blank\!\equiv\!\blank$ is immaterial, but in
    Section \ref{sec:2ltt-simple-signatures} and Chapter \ref{chap:hiit} we do
    have intensional inner equality.
  \item
    By having $\Set$, we are able to have $\Ty_0 : \Set$ and $\Tm_0 : \Ty_0 \to
    \Set$. So we do not have to deal with proper meta-level types, and have a
    more uniform notation. Notation and specification for inner type formers
    changes accordingly. For example, for inner $\Pi$-types we may write $(x :
    A) \to B$ if $A : \Ty_0$ and $B$ depends on $x : \Tm_0\,A$. This also
    enables a higher-order specification: if $B : \Tm_0\,A \to \Ty_0$, then $(x
    : A) \to B\,x : \Ty_0$, and the specifying isomorphism for $\Pi$ can be
    written as $\Tm_0\,((x : A) \to B\,x) \simeq ((x : \Tm_0\,A) \to
    \Tm_0\,(B\,x))$. Note that the definition of $\blank\!\simeq\!\blank$
    requires a meta-level identity type.
    \begin{notation}
      An explicit notation for inner function abstraction would look like
      $\lam_0\,t$ for $t : (x : \Tm_0\,A) \to \Tm_0\,(B\,x)$. This results in
      ``double'' abstraction, e.g.\ in
      $\lam_0\,(\lambda\,x.\,\suc_0\,(\suc_0\,x)) : \Tm_0\,(\Nat_0 \to
      \Nat_0)$. Instead of this, we write $\lambda_0\,x.\,t$ as a notation, thus
      we write $\lambda_0\,x.\,\suc_0\,(\suc_0\,x)$ for the above example. We
      may also group multiple $\lambda_0$ binders together the same way as with
      $\lambda$.
    \end{notation}
  \item
    We may omit inferable $\Tm_0$ applications. For instance, $\Bool_1 \to
    \Bool_0$ can be ``elaborated'' to $\Bool_1 \to \Tm_0\,\Bool_0$ without
    ambiguity, since the function codomain must be on the same level as the
    domain, and the only thing we can do to make sense of this is to lift the
    codomain by $\Tm_0$. Sometimes there is some ambiguity: $(\Bool_0 \to
    \Bool_0) \to \Bool_1$ can be elaborated both to $\Tm_0\,(\Bool_0 \to
    \Bool_0) \to \Bool_1$ and to $(\Tm_0\,\Bool_0 \to \Tm_0\,\Bool_0) \to
    \Bool_1$. However, in this case the two output types are definitionally
    isomorphic because of the $\Pi$-preservation by $\Tm_0$. Hence, the
    elaboration choice does not make much difference, so we may still omit
    $\Tm_0$-s in situations like this.
\end{itemize}

\begin{myexample} Working in the internal syntax of 2LTT, the specification of $\Bool_0$
looks like the following (omitting $\beta$ again):
\begin{alignat*}{3}
  &\Bool_0  &&: \Ty_0\\
  &\true_0  &&: \Bool_0\\
  &\false_0 &&: \Bool_0\\
  & \ms{BoolInd_0} &&: (B : \Bool_0 \to \Ty_0) \to B\,\true_0 \to B\,\false_0 \to (t : \Bool_0) \to B\,t
\end{alignat*}
If we elaborate the type of $\ms{BoolInd_0}$, we get the following:
\begin{alignat*}{3}
  & \ms{BoolInd_0} &&: (B : \Tm_0\,\Bool_0 \to \Ty_0) \to \Tm_0\,(B\,\true_0) \to \Tm_0\,(B\,\false_0)\\
  & && \to (t : \Tm_0\,\Bool_0) \to \Tm_0\,(B\,t)
\end{alignat*}
Here, the type is forced to live in the outer level because of the dependency on
$\Ty_0$. Since $\Ty_0$ is an outer type, $\Bool_0 \to \Ty_0$ must be lifted, which
in turn requires all other types to be lifted as well.
\end{myexample}

\subsection{Alternative Presentation for 2LTT}

We digress a bit on a different way to present 2LTT. In the primary 2LTT
reference \cite{twolevel}, inner and outer layers are specified
as follows. We have two different \emph{family structures} on the base
cwf, $(\Ty_0,\,\Tm_0)$ and $(\Ty_1,\,\Tm_1)$, and a morphism between them. A
family morphism is natural transformation mapping types to types and terms to
terms, which is an isomorphism on terms. We might name the component maps
as follows:
\begin{alignat*}{3}
  &\Lift &&: \Ty_0\,\Gamma \to \Ty_1\,\Gamma \\
  &\up   &&: \Tm_0\,\Gamma\,A \to \Tm_1\,\Gamma\,(\Lift\!A)\\
  &\down &&: \Tm_1\,\Gamma\,(\Lift\!A) \to \Tm_0\,\Gamma\,A
\end{alignat*}
An advantage of this presentation is that we may close $(\Ty_0,\,\Tm_0)$ under
type formers without any encoding overhead, for example by having $\Bool_0 :
\Ty_0\,\Gamma$, $\true_0 : \Tm_0\,\Gamma\,\Bool_0$, etc., without the
Tarski-style decoding. On the other hand, we do not automatically get an outer
universe of inner types. We can recover that in two ways:
\begin{itemize}
\item
  We can assume an inner universe $\U_0 : \Ty_0\,\Gamma$, which can be lifted to the
  outer theory as $\Lift\!\U_0$. However, we may not want to make this
  assumption, in order to keep the inner theory as simple as possible.
\item
  We can assume an outer universe which classifies elements of
  $\Ty_0\,\Gamma$. This amounts to reproducing the $\Ty_0$ \emph{type} from our
  2LTT presentation, as an additional assumption. But in this case, we might as
  well skip the two family structures and the $\Lift$ morphism.
\end{itemize}
In this thesis we make ubiquitous use of the outer universe of inner types, so
we choose that to be the primitive notion, instead of having two family
structures.

Do we lose anything by this? For the purposes of this thesis, not
really. However, if we want to implement 2LTT as a system for two-stage
compilation, the $\Lift$ syntax appears to be closer to existing systems.
Staging is about computing all outer redexes but no inner ones, thereby
outputting syntax which is purely in the inner theory. This could be implemented
as a stage-aware variant of normalization-by-evaluation
\cite{abel2013normalization,decidableconv,coqnbe}. We can give an intuitive
staging interpretation for the operators in the $\Lift$ syntax:
\begin{itemize}
\item
  $\Lift\!A$ is the type of $A$-expressions. This corresponds to $a\,\ms{code}$
  in MetaOcaml \cite{kiselyov14metaocaml} and $\ms{TExp}\,a$ in typed Template
  Haskell \cite{typed-th}.
\item
  $\up$ is \emph{quoting}, which creates an expression from any inner term. This is
  $.\lab\blank\rab.$ in MetaOCaml and $[||\blank||]$ in typed Template Haskell.
\item $\down$ is \emph{splicing}, which inserts the result of a meta-level computation into
  an object-level expression. This is $\sim\!(\blank)$ in MetaOCaml and $\$\$(\blank)$ in typed Template Haskell.
\end{itemize}
For example, in the $\Lift$ syntax, we might write a polymorphic identity function
which acts on inner types in two different ways:
\begin{alignat*}{3}
  &\id : (A : \U_0) \to A \to A\hspace{2em} && \id' : (A :\,\Lift\!\U_0) \to\,\Lift\!(\down A) \to\,\Lift\!(\down A)\\
  &\id \defn \lambda_0\,A\,x.\,x && \id' \defn \lambda_1\,A\,x.\,x
\end{alignat*}
The first one lives in the inner family structure. The second one is the same
thing, but lifted to the outer theory. The choice between the two allows us to
control staging-time evaluation. If we write $\id\,\Bool_0\,\true_0$, that is an
inner expression which goes into the staging output as it is. On the other hand,
$\down(\id'\,(\up\,\Bool_0)\,(\up\,\true_0))$ reduces to $\down(\up\,\true_0)$
which in turn reduces to $\true_0$. The same choice can be expressed in our
syntax as well:
\begin{alignat*}{3}
  &\id : \Tm_0\,((A : \U_0) \to A \to A)\hspace{2em}&&\id' : (A : \Tm_0\,\U_0) \to\,\Tm_0\,A \to \Tm_0\,A\\
  &\id \defn \lambda_0\,A\,x.\,x &&\id' \defn \lambda\,A\,x.\,x
\end{alignat*}
It remains to be checked which style is preferable in a staging
implementation. In the $\Lift$ style, the quoting and splicing operations add
noise to core syntax, but they are also mostly inferable during elaboration, and
they pack stage-changing information into $\up$ and $\down$, thereby making it
feasible to omit stage annotations in other places in the core syntax. In the
$\Ty_0$ style, we do not have quote/splice, but we have to keep track of stages
in all type/term formers. It would be interesting to compare the two flavors
in prototype implementations of staged systems.

\section{Presheaf Semantics of 2LTT}

We review the standard semantics of 2LTT which we use in the rest of the
thesis. This justifies the metaprogramming view, that 2LTT allows meta-level
reasoning about an inner theory.

We present it in two steps, by assuming progressively more structure in the inner
theory. First, we only assume a category. This already lets us present a
presheaf semantics for the outer layer. Then, we assume a cwf as the inner
theory, which lets us interpret $\Ty_0$ and $\Tm_0$ and also consider inner type
formers.

\subsection{Presheaf Model of the Outer Layer}

In this subsection we present a presheaf model for the outer layer of 2LTT, that
is, the base category together with the terminal object, the $(\Ty,\,\Tm)$
family and some type formers. This presheaf semantics is well-known in the
literature \cite{Hofmann97syntaxand}. We give a specification which follows
\cite{huber-thesis} most closely.

In the following, we work outside 2LTT (since we are defining a model of 2LTT),
in a suitable metatheory; an extensional type theory with enough $\Set$
universes suffices.

We assume a $\mbbC$ category. We write $i,\,j,\,k : |\mbbC|$ for objects and
$f,\,g,\,h\,: \mbbC(i,\,j)$ for morphisms. We use a different notation than for
cwfs before, in order to disambiguate components in $\mbbC$ from components in
the presheaf model of 2LTT. We use $\hmbbC$ to refer to the model which is being
defined. We use the same component names for $\hmbbC$ as in Section
\ref{sec:models-of-tts}.

\subsubsection{Model of cwf}

\begin{mydefinition}
$\Gamma : \Con$ is a presheaf over $\mbbC$. Its components
are as follows.
\begin{alignat*}{3}
  & |\Gamma|             &&: |\mbbC| \to \Set \\
  & \blank\lab\blank\rab &&: |\Gamma|\,j \to \mbbC(i,\,j) \to |\Gamma|\,i\\
  & \gamma\lab\id\rab &&\equiv \gamma \\
  & \gamma\lab f\circ g\rab &&\equiv \gamma \lab f \rab \lab g \rab
\end{alignat*}
We flip around the order of arguments in the action of $\Gamma$ on
morphisms. This is more convenient because of the contravariance; we can observe
this in the statement of preservation laws already. The action on morphisms is
sometimes called \emph{restriction}.
\end{mydefinition}

\begin{mydefinition}
$\sigma : \Sub\,\Gamma\,\Delta$ is a natural transformation from $\Gamma$ to
$\Delta$. It has action $|\sigma| : |\Gamma|\,i \to |\Delta|\,i$, such that
$|\sigma|(\gamma\lab f \rab) \equiv (|\sigma|\gamma)\lab f \rab$.
\end{mydefinition}

\begin{mydefinition}
\label{def:presheaf-type}
$A : \Ty\,\Gamma$ is a displayed presheaf over $\Gamma$. The
``displayed'' here is used in exactly the same sense as in ``displayed
algebra'' before. As we will see in Chapter \ref{chap:fqiit}, presheaves can be
specified with a signature, in which case a presheaf is an algebra, and a
displayed presheaf is a displayed algebra. The definition here is equivalent
to saying that $A$ is a presheaf over the category of elements of $\Gamma$,
but it is more convenient to use in concrete definitions and calculations. The
components of $A$ are as follows.
\begin{alignat*}{3}
  &|A| &&: |\Gamma|\,i \to \Set\\
  &\blank\lab\blank\rab &&: |A|\,\gamma \to (f : \mbbC(i,\,j)) \to |A|\,(\gamma\lab f \rab)\\
  & \alpha\lab\id\rab &&\equiv \alpha \\
  & \alpha\lab f\circ g\rab &&\equiv \alpha \lab f \rab \lab g \rab
\end{alignat*}
\end{mydefinition}

\begin{mydefinition}
$t : \Tm\,\Gamma\,A$ is a section of the displayed presheaf $A$. This is
again the same notion of section that we have seen before, instantiated for
presheaves.
\begin{alignat*}{3}
  & |t| : (\gamma : |\Gamma|\,i) \to |A|\,\gamma \\
  & |t|(\gamma\lab f \rab) \equiv (|t|\gamma)\lab f \rab
\end{alignat*}
\end{mydefinition}

\begin{mydefinition}
$\Gamma \ext A : \Con$ is the total presheaf of the displayed presheaf $A$. Its action on objects and morphisms is the following.
\begin{alignat*}{3}
  &|\Gamma \ext A| &&\defn (\gamma : |\Gamma|\,i) \times |A\,\gamma|\\
  &(\gamma,\,\alpha)\lab f \rab &&\defn (\gamma\lab f \rab,\, \alpha\lab f \rab)
\end{alignat*}
The $\id$ and $\blank\!\circ\!\blank$ preservation laws follow immediately.
\end{mydefinition}

\begin{mydefinition}
$A[\sigma] : \Ty\,\Gamma$ is defined as follows, assuming
$A : \Ty\,\Delta$ and $\sigma : \Sub\,\Gamma\,\Delta$.
\begin{alignat*}{3}
  & |A[\sigma]|\,\gamma &&\defn |A|\,(|\sigma|\,\gamma) \\
  & \alpha \lab f \rab &&\defn \alpha \lab f \rab
\end{alignat*}
In the second component, we use $\blank\!\lab\!\blank\!\rab$ for $A$ on the right hand
side.  The definition is well-typed since $|A|\,(|\sigma|\,(\gamma\lab f \rab))
\equiv |A|\,((|\sigma|\,\gamma)\lab f \rab)$ by the naturality of
$\sigma$. Functoriality follows from functoriality of $A$.
\end{mydefinition}

It is easy to check that the above definitions can be extended to a cwf.
\begin{itemize}
  \item For the base category, we take the category of presheaves.
  \item The empty context $\emptycon$ is the terminal presheaf, i.e.\ the
        constantly $\top$ functor.
  \item Type substitution is functorial, as it is defined as simple function
    composition of actions on objects.
  \item Term substitution is defined as composition of a section and
    a natural transformation; and also functorial for the same reason.
  \item Context comprehension structure follows from the $\Sigma$-based definition for
    context extension.
\end{itemize}

\subsubsection{Yoneda embedding}

Before continuing with interpreting type formers in $\hmbbC$, we review the
Yoneda embedding, as it is useful in subsequent definitions.

\begin{mydefinition}
The \textbf{Yoneda embedding}, denoted $\ms{y}$, is a functor from $\mbbC$ to
the underlying category of $\hmbbC$, defined as follows.
\begin{alignat*}{3}
  & \yon : |\mbbC| \to \Con \hspace{3em}&& \yon : \mbbC(i,\,j) \to \Sub\,(\yon\,i)\,(\yon\,j)\\
  & \yon\,i \defn \mbbC(\blank,\,i) && |\yon\,f|\,g \defn f \circ g
\end{alignat*}
\end{mydefinition}

\begin{mylemma}[\textbf{Yoneda lemma}] We have $\Sub\,(\yon\,i)\,\Gamma \simeq |\Gamma|\,i$ as an isomorphism of sets, natural in $i$ \cite[Section~III.2]{maclane98categories}.
\end{mylemma}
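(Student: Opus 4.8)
The plan is to write down the two maps of the classical Yoneda lemma directly in the presheaf model and check, by bare appeals to the functoriality laws, that they are mutually inverse and natural. Given $i : |\mbbC|$ and a presheaf $\Gamma : \Con$, the forward map sends $\sigma : \Sub\,(\yon\,i)\,\Gamma$ to the element $|\sigma|\,\id : |\Gamma|\,i$, obtained by applying the component of $\sigma$ at the object $i$ to $\id \in \mbbC(i,\,i) = |\yon\,i|\,i$. The backward map sends $\gamma : |\Gamma|\,i$ to the natural transformation $\sigma_\gamma$ whose component sends $f \in \mbbC(k,\,i) = |\yon\,i|\,k$ to $\gamma\lab f \rab : |\Gamma|\,k$; recall that restriction in $\yon\,i$ is just composition in $\mbbC$, so $f\lab g\rab = f \circ g$.

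First I would check that $\sigma_\gamma$ really is a natural transformation, i.e.\ that $|\sigma_\gamma|(f\lab g\rab) \equiv (|\sigma_\gamma|\,f)\lab g\rab$. Unfolding, the left side is $\gamma\lab f \circ g\rab$ and the right side is $(\gamma\lab f\rab)\lab g\rab$, and these agree by the $\blank\circ\blank$-functoriality law for $\Gamma$. Then for one round trip, starting from $\gamma$ and going backward then forward gives $|\sigma_\gamma|\,\id = \gamma\lab\id\rab \equiv \gamma$ by the $\id$-functoriality law. For the other round trip, starting from $\sigma$, setting $\gamma \defn |\sigma|\,\id$, and going backward, the component of $\sigma_\gamma$ at $f \in \mbbC(k,\,i)$ is $\gamma\lab f\rab = (|\sigma|\,\id)\lab f\rab$, which by the naturality of $\sigma$ (applied to $\id$ and $f$, using $\id\lab f\rab = \id \circ f = f$ in $\yon\,i$) equals $|\sigma|\,f$; hence $\sigma_\gamma$ and $\sigma$ have the same components and are equal.

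Finally, for naturality in $i$: given $h \in \mbbC(i',\,i)$, precomposition with $\yon\,h$ is a map $\Sub\,(\yon\,i)\,\Gamma \to \Sub\,(\yon\,i')\,\Gamma$, and it must be matched by restriction $\blank\lab h\rab : |\Gamma|\,i \to |\Gamma|\,i'$. Chasing through the forward map: $|\sigma \circ \yon\,h|\,\id = |\sigma|(|\yon\,h|\,\id) = |\sigma|(h \circ \id) = |\sigma|\,h$, whereas the other path gives $(|\sigma|\,\id)\lab h\rab$, which is again $|\sigma|\,h$ by the naturality of $\sigma$; so the naturality square commutes.

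I do not expect any genuine obstacle here: every equation used is definitional in the presheaf model set up above. The only thing requiring care is keeping the variances straight — in particular, remembering that the restriction action of $\yon\,i$ is composition in $\mbbC$, and that the naturality equation for $\sigma$ is exactly what converts a restriction in $\yon\,i$ into the corresponding restriction in $\Gamma$.
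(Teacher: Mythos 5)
Your proof is correct: it is the standard Yoneda argument (evaluate at $\id$ one way, restrict along $f$ the other way, with both round trips and naturality following from the functoriality of $\Gamma$ and the naturality of $\sigma$), accurately transcribed into the paper's restriction notation. The paper itself does not spell out a proof — it only cites Mac Lane — and your argument is precisely the one found there, so there is nothing to add.
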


\noindent\emph{Corollary.} If we choose $\Gamma$ to be $\yon j$, it follows that
$\Sub\,(\yon\,i)\,(\yon\,j) \simeq \mbbC(i,\,j)$, i.e.\ that $\yon$ is
bijective on morphisms; hence it is an embedding.

\begin{notation}
\label{not:yoneda}
For $\gamma : |\Gamma|\,i$, we use $\gamma\lab \blank \rab :
\Sub\,(\yon\,i)\,\Gamma $ to denote transporting right-to-left along the Yoneda
lemma. In the other direction we do not really need a notation, since from
$\sigma : \Sub\,(\yon\,i)\,\Gamma$ we get $|\sigma|\,\id : |\Gamma|\,i$.
\end{notation}

\subsubsection{Type formers}

\begin{mydefinition}
\label{def:k-psh}
\textbf{Constant families} are displayed presheaves which do not depend on their context.
\begin{alignat*}{3}
  & \K &&: \Con \to \{\Gamma : \Con \} \to \Ty\,\Gamma\\
  & |\K\,\Delta|\,\{i\}\,\gamma\,&&\defn |\Delta|\,i \\
  & \delta\lab f \rab &&\defn \delta \lab f \rab
\end{alignat*}
With this definition, we have $\Tm\,\Gamma\,(\K\,\Delta) \equiv \Sub\,\Gamma\,\Delta$
so we have strict constant families.
\end{mydefinition}

\begin{notation}
It is useful to consider any set as a constant presheaf, so
given $A : \Set$ we may write $A : \Con$ for the constant presheaf
as well.
\end{notation}

\begin{mydefinition}
From any $A : \Set$, we get $\K\,A : \Ty\,\Gamma$. This can be used to
model negative or positive \textbf{closed type formers}. For example, natural
numbers are modeled as $\K\,\mbb{N}$, Booleans as $\K\,\Bool$, the unit type as
$\K\,\top$, and so on.
\end{mydefinition}

\begin{mydefinition}
\label{def:presheaf-univ}
\textbf{Coquand universes} can be defined as follows. We write $\Set_{\hmbbC}$
for the outer universe in the model, to distinguish it from the external
$\Set$. Since the $\Set_{\hmbbC}$ is a non-dependent type, it is helpful to
define it as a $\Set_{\hmbbC} : \Con$ such that $\Sub\,\Gamma\,\Set_{\hmbbC}
\simeq \Ty\,\Gamma$.  The usual universe can be derived from this as
$\K\,\Set_{\hmbbC}$. Again, we ignore size issues; the fully formal definition
would involve indexing constructions in $\hmbbC$ by universe levels.

We can take a hint from the Yoneda lemma. We aim to define $|\Set_{\hmbbC}|\,i$,
but by the Yoneda lemma it is isomorphic to $\Sub\,(\yon
i)\,\Set_{\hmbbC}$. However, by specification this should be isomorphic to
$\Ty\,(\yon\,i)$, so we take this as definition:
\begin{alignat*}{3}
  & \Set_{\hmbbC} &&: \Con\\
  &|\Set_{\hmbbC}|\,i &&\defn \Ty\,(\yon\,i)\\
  &A \lab f \rab &&\defn A[\yon f]
\end{alignat*}
\end{mydefinition}
In the $A \lab f \rab$ definition, we substitute $A : \Ty\,(\yon\,i)$ with $\yon
f : \Sub\,(\yon j)\,(\yon i)$ to get an element of $\Ty\,(\yon j)$.  The
required $\Sub\,\Gamma\,\Set_{\hmbbC} \simeq \Ty\,\Gamma$ is straightforward, so
we omit the definition.

We note that Russell universes are not supported in the outer layer, as
$\Sub\,\Gamma\,\Set_{\hmbbC}$ and $\Ty\,\Gamma$ are not strictly the same, in
particular they have a different number of components as iterated
$\Sigma$-types. Nevertheless, as we mentioned in Section
\ref{sec:2ltt-internal-syntax}, we use Russell-style notation in the internal
2LTT syntax, and assume that encoding/decoding is inserted by elaboration.

\begin{mydefinition}
\textbf{$\Sigma$-types} are defined pointwise. The definitions for pairing and
projections follow straightforwardly.
\begin{alignat*}{3}
  & \Sigma  &&: (A : \Ty\,\Gamma) \to \Ty\,(\Gamma\ext A) \to \Ty\,\Gamma\\    & |\Sigma\,A\,B|\,\gamma && \defn (\alpha : |A|\,\gamma) \times |B|\,(\gamma,\,\alpha)\\
  & (\alpha,\,\beta) \lab f \rab && \defn (\alpha \lab f \rab,\, \beta \lab f \rab)
\end{alignat*}
\end{mydefinition}

\begin{mydefinition}
We define \textbf{$\Pi$-types} in the following. This is a bit more complicated,
so first we look at the simpler case of presheaf exponentials. We source this
example from \cite[Section~I.]{maclane2012sheaves}. The reader may refer to
ibid.\ for an overview of constructions in presheaf categories.

The exponential $\Delta^\Gamma : \Con$ is characterized by the isomorphism
$\Sub\,(\Xi \otimes \Gamma)\,\Delta\,\simeq\,\Sub\,\Xi\,(\Delta^\Gamma)$, where
we write $\otimes$ for the pointwise product of two presheaves. We can again use
the Yoneda lemma. We want to define $|\Delta^\Gamma|\,i$, but this is isomorphic
to $\Sub\,(\yon i)\,(\Delta^\Gamma)$, which should be isomorphic to $\Sub\,(\yon
i \otimes \Gamma)\,\Delta$ by the specification of exponentials. Hence:
\begin{alignat*}{3}
  &|\Delta^\Gamma|\,i &&\defn \Sub\,(\yon i \otimes \Gamma)\,\Delta \\
  & \sigma \lab f \rab &&\defn \sigma \circ (\yon f \circ \p,\,\q)
\end{alignat*}
In the definition of presheaf restriction, we use $\p$, $\q$ as projections and
$\blank\!,\!\blank$ as pairing for $\otimes$. In short, $(\yon f \circ \p,\,\q)$
is the same as the morphism lifting from Definition \ref{def:cwfops}: it weakens
$\yon f : \Sub\,(\yon j)\,(\yon i)$ to $\Sub\,(\yon j \otimes \Gamma)\,(\yon i
\otimes \Gamma)$.

The dependently typed case follows the same pattern, except that we use $\Tm$
and $\blank\!\ext\!\blank$ instead of $\Sub$ and
$\blank\!\otimes\!\blank$. Additionally, the action on objects depends on
$\gamma : |\Gamma|\,i$, and we make use of $\gamma \lab \blank \rab\,\,:
\Sub\,(\yon i)\,\Gamma$ (introduced in Notation \ref{not:yoneda}).
\begin{alignat*}{3}
  & \Pi &&: (A : \Ty\,\Gamma) \to \Ty\,(\Gamma \ext A) \to \Ty\,\Gamma\\
  & |\Pi\,A\,B|\,\{i\}\,\gamma &&\defn \Tm\,(\yon i \ext A[\gamma \lab \blank \rab])\,(B[\gamma\lab\blank\rab \circ\,\p,\,\q])\\
  & t\lab f \rab &&\defn t[\yon f \circ \p,\, \q]
\end{alignat*}
Let us unfold the above definition a bit. Assuming $t : |\Pi\,A\,B|\,\{i\}\,\gamma$, we have
\[
|t| : \{j : |\mbbC|\}\to((f,\,\alpha) : (f : \mbbC(j,\,i)) \times |A|\,(\gamma\lab f \rab)) \to
       |B|\,(\gamma\lab f\rab,\,\alpha)
\]
This is a bit clearer if we remove the $\Sigma$-type by currying.
\[
|t| : \{j : |\mbbC|\}(f : \mbbC(j,\,i))(\alpha : |A|\,(\gamma \lab f \rab)) \to
       |B|\,(\gamma\lab f \rab,\,\alpha)
\]

Restriction is functorial since it is defined as $\Tm$ substitution. The definitions
for $\lam$ and $\app$ are left to the reader.
\end{mydefinition}

\begin{mydefinition}
\textbf{Extensional identity} is defined as pointwise equality of sections:
\begin{alignat*}{3}
  & \Id : \Tm\,\Gamma\,A \to \Tm\,\Gamma\,A \to \Ty\,\Gamma\\
  & |\Id\,t\,u|\,\gamma \defn |t|\,\gamma \equiv |u|\,\gamma
\end{alignat*}
For the restriction operation, we have to show that $|t|\,\gamma \equiv
|u|\,\gamma$ implies $|t|\,(\gamma\lab f \rab) \equiv |u|\,(\gamma \lab f
\rab)$. This follows from congruence by $\blank \lab f \rab$ and naturality of
$t$ and $u$.  The defining $(\reflect,\,\refl) : \Tm\,\Gamma\,(\Id\,t\,u) \simeq
(t \equiv u)$ isomorphism is evident from UIP and function extensionality for
the metatheoretic $\blank\!\equiv\!\blank$ relation.
\end{mydefinition}

\subsection{Modeling the Inner Layer}

We assume now that $\mbbC$ is a cwf. We write types as $a,\,b,\,c :
\Ty_\mbbC\,i$ and terms as $t,\,u,\,v : \Tm_\mbbC\,i\,a$. We reuse $\emptycon$
for the terminal object and $\blank\ext\blank$ for context extension, and
likewise reuse notation for substitutions.

\begin{mydefinition}[\textbf{$\Ty_{0}$, $\Tm_{0}$}]
\label{def:psh-ty0-tm0}
First, note that $\Ty_{\mbbC}$ is a presheaf over $\mbbC$, and $\Tm_{\mbbC}$ is
a displayed presheaf over $\Ty_{\mbbC}$; this follows from the requirement that
they form a family structure over $\mbbC$. Hence, in the presheaf model
$\Ty_{\mbbC}$ is an element of $\Con$ and $\Tm_{\mbbC}$ is an element of
$\Ty\,\Ty_{\mbbC}$. Also recall from Definition \ref{def:k-psh} that
$\Tm\,\Gamma\,(\K\,\Delta) \equiv \Sub\,\Gamma\,\Delta$. With this is mind, we
give the following definitions:
\begin{alignat*}{3}
  & \Ty_0 : \Ty\,\Gamma                   && \Tm_0 : \Tm\,\Gamma\,\Ty_0 \to \Ty\,\Gamma\\
  & \Ty_0 \defn \K\,\Ty_{\mbbC}\hspace{2em} && \Tm_0\,A \defn \Tm_{\mbbC}[A]
\end{alignat*}
$\Tm_{\mbbC}[A]$ is well-typed since $A : \Tm\,\Gamma\,(\K\,\Ty_{\mbbC})$, thus
$A : \Sub\,\Gamma\,\Ty_{\mbbC}$. In other words, $A$ is a natural transformation
from $\Gamma$ to the presheaf of inner types.
\end{mydefinition}

\subsubsection{Inner type formers}

Can type formers in $(\Ty_{\mbbC},\,\Tm_{\mbbC})$ be transferred to
$(\Ty_0,\,\Tm_0)$ in the presheaf model of 2LTT? For example, if $\mbbC$
supports $\Bool$, we would like to model $\Bool_0$ in $\Ty_0$ as well. The
following explanation is adapted from Capriotti \cite[Section
  2.3]{capriotti2017models}.

Generally, a type former in $\mbbC$ transfers to $\hmbbC$ if it can be specified
in the internal language of $\hmbbC$. This is also a core idea of HOAS: when
working in $\hmbbC$, everything is natural, and we can omit boilerplate related
to contexts and substitutions. Recall from Section \ref{sec:2ltt-internal-syntax}
the higher-order specification of inner functions:
\begin{alignat*}{3}
  &\Pi_0             &&: (A : \Ty_0) \to (\Tm_0\,A \to \Ty_0) \to \Ty_0\\
  &(\app_0,\,\lam_0) &&: \Tm_0\,(\Pi_0\,A\,B) \simeq ((a :\,\Tm_0\,A) \to \,\Tm_0\,(B\,a))
\end{alignat*}
We can say that this \emph{defines} what it means for $\mbbC$ to support
$\Pi$. More precisely:
\begin{itemize}
 \item A type former in an object theory is specified with a closed
       type $A$ in a 2LTT.
 \item A model of the object theory $\mbbC$ supports a type former $A$ if there is a
       global section of the $\hmbbC$ interpretation of $A$.
 \item Thus, if $\mbbC$ supports a type former, it is immediate that
       the specifying type is inhabited in $\hmbbC$.
\end{itemize}
In this thesis we only mention type formers in type theories which
can be specified in such manner.

\subsection{Functions With Inner Domains}
\label{sec:functions-with-inner-domains}

There is a useful semantic simplification in the standard presheaf model, in
cases where we have functions of the form $\Pi\,(\Tm_0\,A)\,B$. This greatly
reduces encoding overhead when interpreting inductive signatures in 2LTT; we
look at examples in Section \ref{sec:2ltt-simple-signatures}. First we look at
the simply-typed case with presheaf exponentials.
\begin{mylemma}
$\yon$ preserves finite products up to isomorphism, i.e.\ $\yon \emptycon \simeq
  \emptycon$ and $\yon (i \otimes j) \simeq (\yon i \otimes \yon j)$.
\end{mylemma}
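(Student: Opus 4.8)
This is the familiar fact that the Yoneda embedding preserves limits; I will give the direct verification in the present notation. The plan is, in each of the two cases, to write down a fibrewise isomorphism and to check that it is natural with respect to presheaf restriction, so that it assembles into an isomorphism in $\hmbbC$.

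\textbf{Terminal object.} By definition $|\yon\emptycon|\,i = \mbbC(i,\,\emptycon)$, and since $\emptycon$ is terminal in $\mbbC$ this set has exactly one element, namely $\epsilon$. A presheaf all of whose fibres are singletons is isomorphic to the constantly-$\top$ presheaf, which is precisely the terminal object $\emptycon$ of $\hmbbC$; the isomorphism $\yon\emptycon \simeq \emptycon$ is the unique map into the terminal presheaf, with inverse sending $\tt$ to $\epsilon$ in each fibre. Naturality of both maps is automatic, since maps into a terminal presheaf are unique.

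\textbf{Binary products.} Assuming the product $i \otimes j$ exists in $\mbbC$, we have $|\yon(i\otimes j)|\,k = \mbbC(k,\,i\otimes j)$ and $|\yon i \otimes \yon j|\,k = \mbbC(k,\,i)\times\mbbC(k,\,j)$, with restriction along $h$ being $\sigma \mapsto \sigma\circ h$ on the left and $(f,\,g)\mapsto(f\circ h,\,g\circ h)$ componentwise on the right. Define the fibrewise maps
\begin{alignat*}{3}
  & |\varphi|\,\sigma &&\defn (\p\circ\sigma,\,\q\circ\sigma), \\
  & |\psi|\,(f,\,g) &&\defn (f,\,g),
\end{alignat*}
where $(f,\,g)$ on the right of the second line is the pairing morphism of $\mbbC$. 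The product $\beta$-laws $\p\circ(f,\,g)\equiv f$ and $\q\circ(f,\,g)\equiv g$ give $\varphi\circ\psi \equiv \id$, and the product $\eta$-law $(\p\circ\sigma,\,\q\circ\sigma)\equiv\sigma$ gives $\psi\circ\varphi \equiv \id$. Naturality of $\varphi$ is the equation $(\p\circ(\sigma\circ h),\,\q\circ(\sigma\circ h)) \equiv ((\p\circ\sigma)\circ h,\,(\q\circ\sigma)\circ h)$, which is just associativity of composition in $\mbbC$, and naturality of $\psi$ is the same. Hence $\varphi$ and $\psi$ are mutually inverse isomorphisms of presheaves, giving $\yon(i\otimes j) \simeq \yon i \otimes \yon j$.

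\textbf{Main difficulty.} There is essentially none: the whole argument is the universal property of finite products in $\mbbC$, together with the observation that restriction on either side is just precomposition, so that the naturality squares collapse to associativity. The only point worth keeping in mind is to match the contravariant restriction maps on the two sides before checking naturality; and if one additionally wants the isomorphisms to be natural in $i$ and $j$ — so that $\yon$ is a finite-product-preserving functor, not merely product-reflecting on these particular objects — the verification is identical in character, using the same equations.
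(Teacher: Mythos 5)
Your proof is correct and follows the same route as the paper's: the terminal case is the pointwise-singleton observation, and the product case is exactly the universal property of $\otimes$ applied fibrewise, which the paper invokes in one line as ``by the specification of products.'' You merely spell out the fibrewise maps and the naturality check (associativity of composition) that the paper leaves implicit.
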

\begin{proof}
$\yon \emptycon$ is $\mbbC(\blank,\,\emptycon)$ by definition, which is
pointwise isomorphic to $\top$, hence isomorphic to $\emptycon \equiv
\K\,\top$. $\yon (i \otimes j)$ is $\mbbC(\blank,\,i \otimes j)$, which is
isomorphic to $\yon i \otimes \yon j$ by the specification of products.
\end{proof}
\begin{mylemma} We have the following isomorphism.
\begin{alignat*}{3}
  & |\Gamma^{\yon i}|\,j \equiv \hspace{3em}&&\\
  & \Sub\,(\yon j \otimes \yon i)\,\Gamma \simeq \hspace{3em} &&\text{by product preservation}\\
  & \Sub\,(\yon (j \otimes i))\,\Gamma \simeq \hspace{3em} &&\text {by Yoneda lemma}\\
  & |\Gamma|\,(j \otimes i)&&
\end{alignat*}
\end{mylemma}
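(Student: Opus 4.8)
The plan is simply to verify the three-step chain displayed in the statement; each step is bookkeeping with facts already in hand. First I would unfold the definition of the exponential presheaf $\Delta^{\Gamma}$: setting $\Delta \defn \Gamma$ and taking $\yon i$ as the exponent, the defining equation $|\Delta^{\Gamma}|\,k \defn \Sub\,(\yon k \otimes \Gamma)\,\Delta$ reads $|\Gamma^{\yon i}|\,j \defn \Sub\,(\yon j \otimes \yon i)\,\Gamma$, so the first identification is a definitional equality and requires no argument.

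For the middle step I would invoke the preceding lemma, which supplies an isomorphism $\yon(j \otimes i) \simeq \yon j \otimes \yon i$ in the underlying category of $\hmbbC$. Since the contravariant hom-functor $\Sub\,(\blank)\,\Gamma$ sends isomorphisms to isomorphisms, precomposition with this iso and with its inverse yields mutually inverse maps between $\Sub\,(\yon j \otimes \yon i)\,\Gamma$ and $\Sub\,(\yon(j \otimes i))\,\Gamma$; that they are mutually inverse is just functoriality of composition. For the final step I would instantiate the Yoneda lemma at the object $j \otimes i$, obtaining $\Sub\,(\yon(j \otimes i))\,\Gamma \simeq |\Gamma|\,(j \otimes i)$. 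Composing the definitional equality with these two isomorphisms proves the lemma.

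There is essentially no real obstacle; the only point I would be careful to record is that all three identifications are natural in $j$ (and likewise in $i$): the exponential restriction map, the product-preservation isomorphism from the preceding lemma, and the Yoneda isomorphism are each natural, so the composite assembles into an isomorphism of presheaves, not merely a family of bijections of sets. This is worth stating because the dependently typed refinement that follows — handling functions of the form $\Pi\,(\Tm_0\,A)\,B$ in the presheaf model — reuses exactly this pattern with $\Sub$ and $\otimes$ replaced by $\Tm$ and context extension, and there the naturality is what lets the simplification propagate through the interpretation of signatures.
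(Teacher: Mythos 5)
Your proposal is correct and matches the paper's argument, which is given entirely by the displayed chain itself: the first line is the definition of the exponential, the second is the product-preservation lemma for $\yon$ applied under the contravariant $\Sub\,(\blank)\,\Gamma$, and the third is the Yoneda lemma at $j \otimes i$. Your added remark on naturality in $j$ is a reasonable strengthening that the paper leaves implicit but does rely on when it lifts the pattern to $\Pi\,(\Tm_0\,A)\,B$.
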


It is possible to rephrase the above derivation for $\Pi$-types. For that, we
would need to define the action of $\yon$ on types and terms, consider the
preservation of $\blank\ext\blank$ by $\yon$, and also specify a ``dependent''
Yoneda lemma for $\Tm$. For the sake of brevity, we omit this, and present the
result directly:
\begin{alignat*}{3}
  & |\Pi\,(\Tm_0\,A)\,B|\,\{i\}\,\gamma \simeq |B|\,\{i \ext |A|\,\gamma\}\,(\gamma \lab \p \rab,\,\q)
\end{alignat*}
In short, depending on an inner domain is the same as depending on an extended
context in $\mbbC$.  We expand a bit on the typing of the right hand side. We
have $\gamma : |\Gamma|\,i$, moreover
\begingroup
\allowdisplaybreaks
\begin{alignat*}{3}
  & |B| &&: \{j : \mbbC\} \to |\Gamma\,\,\ext \Tm_0\,A|\,j \to \Set\\
  & |B| &&: \{j : \mbbC\} \to ((\gamma' : |\Gamma|\,j)\times \Tm_{\mbbC}\,j\,(|A|\,\gamma')) \to \Set\\
  & |B|\,\{i \ext |A|\,\gamma\} &&: ((\gamma' : |\Gamma|\,(i \ext |A|\,\gamma))\times \Tm_{\mbbC}\,(i \ext |A|\,\gamma)\,(|A|\,\gamma')) \to \Set\\
  & \gamma \lab \p \rab &&: |\Gamma|\,(i \ext |A|\,\gamma)\\
  & \q &&: \Tm_{\mbbC}\,(i \ext |A|\,\gamma)\,((|A|\,\gamma)[\p])\\
  & \q &&: \Tm_{\mbbC}\,(i \ext |A|\,\gamma)\,(|A|\,(\gamma \lab \p \rab))\\
\end{alignat*}
\endgroup

\section{Simple Signatures in 2LTT}
\label{sec:2ltt-simple-signatures}

We revisit simple inductive signatures in this section, working internally to
2LTT. We review the concepts introduced in Chapter
\ref{chap:simple-inductive-signatures} in the same order.

\begin{notation}
In this section we shall be fairly explicit about writing $\Tm_0$-s and
transporting along definitional isomorphisms. The simple setting makes it
feasible to be explicit; in later chapters we are more terse, as signatures and
semantics get more complicated.
\end{notation}

\subsection{Theory of Signatures}
Signatures are defined exactly in the same way as before: we have $\Con : \Set$,
$\Ty : \Set$, $\Sub : \Con \to \Con \to \Set$, $\Var : \Con \to \Ty \to \Set$ and
$\Tm : \Con \to \Ty \to \Set$. However, now by $\Set$ we mean the outer universe
of 2LTT. Thus signatures are inductively defined in the outer layer.

\subsection{Algebras}
\label{sec:2ltt-simple-algebras}

Again we compute algebras by induction on signatures, but now we use inner
types for carriers of algebras. We interpret types as follows:
\begin{alignat*}{3}
& \hspace{-4em} \rlap{$\blank^A : \Ty \to \Ty_0 \to \Set$} \\
& \hspace{-4em} \iota^A\,&&X \defn \Tm_0\,X \\
& \hspace{-4em} (\iota\to A)^A\,&&X \defn \Tm_0\,X \to A^A\,X
\end{alignat*}
Elsewhere, we change the type of the $X$ parameters accordingly:
\begin{alignat*}{3}
& \blank^A &&: \Con \to \Ty_0 \to \Set\\
& \blank^A &&: \Var\,\Gamma\,A \to \{X : \Ty_0\} \to \Gamma^A\,X \to A^A\,X\\
& \blank^A &&: \Tm\,\Gamma\,A \to \{X : \Ty_0\} \to \Gamma^A\,X \to A^A\,X\\
& \blank^A &&: \Sub\,\Gamma\,\Delta \to \{X : \Ty_0\} \to \Gamma^A\,X \to \Delta^A\,X
\end{alignat*}
We also define $\ms{Alg}\,\Gamma$ as $(X : \Ty_0) \times \Gamma^A\,X$.

\begin{myexample}
Inside 2LTT we have the following:\footnote{Up to isomorphism, since we previously defined $\Gamma^A$ as a function type instead of an iterated product type.}
\[ \Alg\,\ms{NatSig} \equiv (X : \Ty_0)\times(\mi{zero} : \Tm_0\,X)\times(\mi{suc} : \Tm_0\,X \to \Tm_0\,X) \]
Then, we may assume any cwf $\mbbC$, and interpret the above closed type in the
presheaf model $\hmbbC$, and evaluate the result at $\emptycon$ and
the unique element of the terminal presheaf $\K\,\top$:
\[
  |\Alg\,\ms{NatSig}|\,\{\emptycon\}\,\tt : \Set
\]
We compute the definitions now. We use the simplified semantics for $\mi{suc} :
\Tm_0\,X \to \Tm_0\,X$, since the function domain is an inner type.
\[
|\Alg\,\ms{NatSig}|\,\{\emptycon\}\,\tt \equiv
(X : \Ty_{\mbbC}\,\emptycon) \times (\mi{zero} : \Tm_{\mbbC}\,\emptycon\,X) \times (\mi{suc} : \Tm_{\mbbC}\,(\emptycon \ext X)\,X)
\]
Using the same computation, we get the following for binary trees:
\[
|\Alg\,\ms{TreeSig}|\,\{\emptycon\}\,\tt \equiv
(X : \Ty_{\mbbC}\,\emptycon) \times (\mi{leaf} : \Tm_{\mbbC}\,\emptycon\,X) \times (\mi{node} : \Tm_{\mbbC}\,(\emptycon \ext X \ext X)\,X)
\]
\end{myexample}

We can also get internal algebras in any $\mbbC$ category with finite products
because we can build cwfs from all such $\mbbC$.

\begin{mydefinition} Assuming $\mbbC$ with finite products, we build a cwf by setting
$\Con \defn |\mbbC|$, $\Ty\,\Gamma \defn |\mbbC|$, $\Sub\,\Gamma\,\Delta \defn \mbbC(\Gamma,\,\Delta)$, $\Tm\,\Gamma\,A \defn \mbbC(\Gamma,\,A)$, $\Gamma \ext A \defn \Gamma \otimes A$ and $\emptycon \defn \emptycon_{\mbbC}$. In short, we build a non-dependent (simply-typed) cwf.
\end{mydefinition}

Now we can effectively interpret signatures in finite product categories. For
example:
\[
|\Alg\,\ms{NatSig}|\,\{\emptycon\}\,\tt \equiv
(X : |\mbbC|) \times (\mi{zero} : \mbbC(\emptycon,\,X)) \times (\mi{suc} : \mbbC(\emptycon \otimes X,\,X))
\]
This is almost the same as what we would write by hand for the specification of
natural number objects; the only difference is the extra $\emptycon
\otimes\blank$ in $\mi{suc}$.

\subsection{Morphisms}

We get an additional degree of freedom in the computation of morphisms:
preservation equations can be inner or outer. The former option is \emph{weak}
or \emph{propositional} preservation, while the latter is \emph{strict}
preservation. In the presheaf model of 2LTT, outer equality is definitional
equality of inner terms, while inner equality is propositional equality in the
inner theory. Of course, if the inner theory has extensional identity type,
weak and strict equations in 2LTT are equivalent for inner types. We compute
weak preservation for types as follows.
\begin{alignat*}{3}
  & \hspace{-5em}\rlap{$\blank^M : (A : \Ty)\{X_0\,X_1 : \Ty_0\}(X^M : \Tm_0\,X_0 \to \Tm_0\,X_1) \to A^A\,X_0 \to A^A\,X_1 \to \Set$}\\
  & \hspace{-5em}\iota^M\,&&X^M\,\alpha_0\,\,\alpha_1 \defn \Tm_0\,(X^M\,\alpha_0 = \alpha_1) \\
  & \hspace{-5em}(\iota\to A)^M\,&&X^M\,\alpha_0\,\,\alpha_1 \defn
       (x : \Tm_0\,X_0) \to A^M\,X^M\,(\alpha_0\,x)\,(\alpha_1\,(X^M\,x))
\end{alignat*}
For strict preservation, we simply change $\Tm_0\,(X^M\,\alpha_0 = \alpha_1)$ to
$X^M\,\alpha_0 \equiv \alpha_1$. The definition of morphisms is the same as
before:
\begin{alignat*}{3}
  &\blank^M : (\Gamma : \Con_1)\{X_0\,X_1 : \Ty_0\} \to (\Tm_0\,X_0 \to \Tm_0\,X_1) \to \Gamma^A\,X_0 \to \Gamma^A\,X_1 \to \Set\\
  &\Gamma^M\,X^M\,\gamma_0\,\gamma_1 \defn
  \{A\}(x : \Var_1\,\Gamma\,A) \to A^M\,X^M\,(\gamma_0\,x)\,(\gamma_1\,x)\\
  & \\
  &\Mor : \{\Gamma : \Con_1\} \to \Alg\,\Gamma \to \Alg\,\Gamma \to \Set \\
  &\Mor\,\{\Gamma\}\,(X_0,\,\gamma_0)\,(X_1,\,\gamma_1) \defn (X^M : \Tm_0\,X_0 \to \Tm_0\,X_1) \times \Gamma^M\,X^M\,\gamma_0\,\gamma_1
\end{alignat*}
We omit here the $\blank^M$ definitions for terms and substitutions.

\subsection{Displayed Algebras}

We present $\blank^D$ only for types below.
\begin{alignat*}{3}
  & \rlap{$\blank^D : (A : \Ty)\{X\} \to (\Tm_0\,X \to \Ty_0) \to A^A\,X \to \Set$}\\
  & \iota^D\,       && X^D\,\alpha \defn \Tm_0\,(X^D\,\alpha) \\
  & (\iota\to A)^D\,&& X^D\,\alpha \defn (x : \Tm_0\,X)(x^D : \Tm_0\,(X^D\,x)) \to A^D\,X^D\,(\alpha\,x)
\end{alignat*}
Note that in the presheaf model, inhabitants of $\Tm_0\,X \to \Ty_0$ are inner types
depending on contexts extended with the interpretation of $X$.
\begin{myexample} Assume a closed $(X,\,\mi{zero},\,\mi{suc})$ $\Nat$-algebra in 2LTT. We have the following computation:
\begin{alignat*}{3}
  & \hspace{-5em}\rlap{$\DispAlg\,\{\ms{NatSig}\}\,(X,\,\mi{zero},\,\mi{suc}) \equiv$}\\
              & (X^D &&: \Tm_0\,X \to \Ty_0)\\
      \times\,& (\mi{zero^D} &&: \Tm_0\,(X^D\,\mi{zero}))\\
      \times\,& (\mi{suc^D} &&: (n : \Tm_0\,X) \to \Tm_0\,(X^D\,n) \to \Tm_0\,(X^D\,(\mi{suc}\,n)))
\end{alignat*}
Let us look at the presheaf interpretation. We simplify functions with inner
domains everywhere. Also note that for $\ms{suc} : \Tm_0\,X \to \Tm_0\,X$, we
get $|\ms{suc}|\,\tt : \Tm_{\mbbC}\,(\emptycon\ext n : |X|\,\tt)\,(|X|\,\tt)$ in
the semantics, so a $\ms{suc}\,t$ application is translated as a substitution
$(|\ms{suc}|\,\tt)[n \mapsto |t|\,\tt]$.
\begin{alignat*}{3}
  & \rlap{$|\DispAlg\,\{\ms{NatSig}\}\,(X,\,\mi{zero},\,\mi{suc})|\,\{\emptycon\}\,\tt \equiv$}\\
              & (X^D &&: \Ty_{\mbbC}\,(\emptycon\ext n : |X|\,\tt))\\
      \times\,& (\mi{zero^D} &&: \Tm_{\mbbC}\,\emptycon\,(X^D[n \mapsto |\mi{zero}|\,\tt]))\\
      \times\,& (\mi{suc^D} &&: \Tm_{\mbbC}\,(\emptycon\ext n : |X|\,\tt \ext n^D : X^D[n \mapsto |\mi{zero}|\,\tt])\,(X^D[n \mapsto (|\mi{suc}|\,\tt)[n \mapsto n]))
\end{alignat*}
\end{myexample}
To explain $(|\ms{suc}|\,\tt)[n \mapsto n])$: we have $\ms{suc}\,n$ in 2LTT,
where $n$ is an inner variable, and in the presheaf model inner variables become
actual variables in the inner theory. Hence, we map the $n$ which $\ms{suc}$
depends on to the concrete $n$ in the context.

We can also interpret displayed algebras in finite product categories:
\begin{alignat*}{3}
  & \hspace{-2em}\rlap{$|\DispAlg\,\{\ms{NatSig}\}\,(X,\,\mi{zero},\,\mi{suc})|\,\{\emptycon\}\,\tt \equiv$}\\
              & (X^D &&: |\mbbC|)\\
      \times\,& (\mi{zero^D} &&: \mbbC(\emptycon,\,X^D))\\
      \times\,& (\mi{suc^D} &&: \mbbC(\emptycon \otimes |X|\,\tt \otimes X^D,\, X^D))
\end{alignat*}

While displayed algebras in cwfs can be used as bundles of induction motives and
methods, in finite product categories they are argument bundles to
\emph{primitive recursion}; this is sometimes also called a
\emph{paramorphism} \cite{bananas}. In an internal syntax, the type of primitive
recursion for natural numbers could be written more compactly as:
\begin{alignat*}{3}
  & \ms{primrec} : (X : \Set) \to X \to (\Nat \to X \to X) \to \Nat \to X
\end{alignat*}
This is not the same thing as the usual recursion principle (corresponding to
weak initiality) because of the extra dependency on $\Nat$ in the method for
successors.

\subsection{Sections}
Sections are analogous to morphisms. We again have a choice between weak and
strict preservation; below we have weak preservation.
\begin{alignat*}{3}
  & \rlap{$\blank^S : (A : \Ty)\{X\,X^D\}(X^S : (x : \Tm_0\,X)\to \Tm_0\,(X^D\,x))$}\\
  & \hspace{2em}\rlap{$\to (\alpha : A^A\,X) \to A^D\,X^D\,\alpha \to \Set$}\\
  & \iota^S\,&&X^S\,\alpha\,\,\alpha^D \defn \Tm_0\,(X^S\,\alpha = \alpha^D) \\
  & (\iota\to A)^S\,&&X^S\,\alpha\,\,\alpha^D \defn
  (x : \Tm_0\,X) \to A^S\,X^S\,(\alpha\,x)\,(\alpha^D\,(X^S\,x))
\end{alignat*}

\subsection{Term Algebras}
\label{sec:simple-2ltt-term-algebras}

For term algebras, we need to assume a bit more in the inner theory. For
starters, it has to support the theory of signatures. In order to avoid name
clashes down the line, we use $\SigTy_0$ to refer to signature types, and
$\SigTm_0$ for terms. That is, we have
\begin{alignat*}{3}
  & \SigTy_0  &&: \Ty_0\\
  & \Con_0   &&: \Ty_0\\
  & \Var_0   &&: \Tm_0\,\Con_0 \to \Tm_0\,\SigTy_0 \to \Ty_0\\
  & \SigTm_0 &&: \Tm_0\,\Con_0 \to \Tm_0\,\SigTy_0 \to \Ty_0\\
  & \Sub_0   &&: \Tm_0\,\Con_0 \to \Tm_0\,\Con_0 \to \Ty_0
\end{alignat*}
together with all constructors and induction principles. We also assume inner
$\Pi$-types because we previously defined $\Sub$ using functions.

\emph{Remark.} If we only want to construct term algebras, it is not necessary
to assume inner induction principles. In this section, our goal is to redo the
constructions of Chapter \ref{chap:simple-inductive-signatures} without making
essential changes, so we just assume everything that was available there.

We still have ToS in the outer layer. To make the naming
scheme consistent, we shall write outer ToS types as $\SigTy_1$,
$\SigTm_1$, $\Con_1$, $\Var_1$ and $\Sub_1$. We have conversion functions from
the outer ToS to the inner ToS:
\begin{mydefinition}
\label{def:simple-lowering}
We have the following \textbf{lowering} functions which
preserve all structure.
\begin{alignat*}{3}
  & \down\,: \SigTy_1 &&\to \Tm_0\,\SigTy_0\\
  & \down\,: \Con_1 &&\to \Tm_0\,\Con_0\\
  & \down\,: \Var_1\,\Gamma\,A &&\to \Tm_0\,(\Var_0\,(\down\!\Gamma)\,(\down\!A))\\
  & \down\,: \SigTm_1\,\Gamma\,A &&\to \Tm_0\,(\SigTm_0\,(\down\!\Gamma)\,(\down\!A))\\
  & \down\,: \Sub_1\,\Gamma\,\Delta &&\to \Tm_0\,(\Sub_0\,(\down\!\Gamma)\,(\down\!\delta))
\end{alignat*}
These are called ``lifting'' or ``serialization'' in the context of
multi-stage programming; see e.g.\ the $\ms{Lift}$ typeclass in Haskell
\cite{pickering-multistage}. There, like here, the point is to build
object-language terms from meta-level (``compile-time'') values.

Lowering is straightforward to define for types, contexts, variables and terms,
but there is a bit of a complication for $\Sub$. Unfolding the definitions, we
need to map from $\{A\} \to \Var_1\,\Delta\,A \to \SigTm_1\,\Gamma\,A$ to
$\Tm_0\,(\{A\} \to \Var_0\,(\down\Delta)\,A \to \SigTm_0\,(\down\Gamma)\,A)$. It
might appear problematic that we have types and variables in \emph{negative}
position because we cannot map inner types/variables to outer ones.
Fortunately, $\Sub_1\,\Gamma\,\Delta$ is isomorphic to a finite product type,
and we can lower a finite product component-wise.

Concretely, we define lowering by induction on $\Delta$, while making use of
a case splitting operation for $\Var_0$. We use an informal $\ms{case}$
operation below, which can be defined using inner induction. Note that since
$\Var_0\,\emptycon\,A$ is an empty type, case splitting on it behaves like
elimination for the empty type.
\begin{alignat*}{3}
  &\rlap{$\hspace{0.3em}\down_{\Delta} : \Sub_1\,\Gamma\,\Delta \to \Tm_0\,(\Sub_0\,(\down\!\Gamma)\,(\down\!\Delta))$}\\
  &\down_{\emptycon}\,&&\sigma \defn
      \lambda\,\{A\}\,(x : \Var_0\,\emptycon\,A).\,\ms{case}\,x\,\ms{of}\,()\\
  &\down_{\Delta\ext B}\,&&\sigma \defn
      \lambda\,\{A\}\,(x : \Var_1\,(\down\!\Delta\,\ext \down\!B)\,A).\,\ms{case}\,x\,\ms{of}\\
  & &&\vz\hspace{0.65em}\to\,\,\down\!(\sigma\,\vz)\\
  & &&\vs\,x \to\,\,\down_{\Delta}\!(\sigma \circ \vs)\,x
\end{alignat*}
In general, for finite $A$ type, functions of the form $A \to \Tm_0\,B$ can be
represented as inner types up to isomorphism; they can be viewed as finite
products of terms.

\emph{Remark.} For infinite $A$ this does not work anymore in our system. In
\cite{twolevel}, the assumption that this still works with $A \equiv \Nat_1$ is
an important axiom (``cofibrancy of $\Nat_1$'') which makes it possible to embed
higher categorical structures in 2LTT. From the metaprogramming perspective,
cofibrancy of $\Nat_1$ implies that the inner theory is \emph{infinitary}, since
we can form inner terms from infinite collections of inner terms. We do not
assume this axiom in 2LTT, although we will consider infinitary (object) type
theories in Chapters \ref{chap:fqiit} and \ref{chap:iqiit}.
\end{mydefinition}

\noindent
We proceed to the definition of term algebras. We fix an $\Omega : \Con_1$, and
define $\ms{T} : \Ty_0$ as $\SigTm_0\,(\down\!\Omega)\,\iota$.
\begingroup
\allowdisplaybreaks
\begin{alignat*}{3}
  & \hspace{-5em}\rlap{$\blank^T : (A : \SigTy_1) \to \Tm_0\,(\SigTm_0\,(\down\!\Omega)\,(\down\!A)) \to A^A\,\ms{T}$} \\
  & \hspace{-5em}\iota^T\,&&t \defn t \\
  & \hspace{-5em}(\iota\to A)^T\,&&t \defn \lambda\,u.\,A^T\,(\app\,t\,u)\\
  & \hspace{-5em}&&\\
  & \hspace{-5em}\rlap{$\blank^T : (\Gamma : \Con_1) \to \Sub_1\,\Omega\,\Gamma \to \Gamma^A\,\ms{T}$}\\
  & \hspace{-5em}\rlap{$\Gamma^T\,\nu\,\{A\}\,x \defn A^T\,(\down\!(\nu\,x))$}\\
  & \hspace{-5em} && \\
  & \hspace{-5em}\rlap{$\TmAlg_{\Omega} : \Alg\,\Omega$}\\
  & \hspace{-5em}\rlap{$\TmAlg_{\Omega} \defn \Omega^T\,\Omega\,\id$}
\end{alignat*}
\endgroup
We omit the $\blank^T$ interpretation for terms and substitutions for now, as
they require a bit more setup, and they are not needed just for term algebras.

\subsection{Recursor Construction}

Recall from Section \ref{sec:simple-weak-initiality} that recursion is
implemented using the $\blank^A$ interpretation of terms. Since terms are now in
the inner theory, we need to define an inner version of the same interpretation.
We need to compute types by inner induction, so we additionally assume a
Russell-style inner $\U_0$ universe. The Russell style means that we may freely
coerce between $\Tm_0\,\U_0$ and $\Ty_0$. The following are defined the same way
as $\blank^A$ before.
\begin{alignat*}{3}
  &\blank^A : \Tm_0\,(\SigTy_0 \to \U_0 \to \U_0)\\
  &\blank^A : \Tm_0\,(\Con_0 \to \U_0 \to \U_0)\\
  &\blank^A : \Tm_0\,(\SigTm_0\,\Gamma\,A \to \{X : \U_0\} \to \Gamma^A\,X \to A^A\,X)\\
  &\blank^A : \Tm_0\,(\Sub_0\,\Gamma\,\Delta \to \{X : \U_0\} \to \Gamma^A\,X \to \Delta^A\,X)
\end{alignat*}
Since lowering preserves all structure, and $\blank^A$ is defined in the same
way in both the inner and outer theories, lowering is compatible with
$\blank^A$ in the following way.
\begin{mylemma}\label{lem:down-compat-alg} Assume $A : \SigTy_1$, $\Gamma : \Con_1$, $X : \Ty_0$, $\gamma : \Gamma^A\,X$ and $t : \SigTm_1\,\Gamma\,A$. We have the following:
  \begin{itemize}
  \item $(A^A_{\to},\,A^A_{\leftarrow}) : \Tm_0\,((\down\!A)^A\,X) \simeq A^A\,X$
  \item $(\Gamma^A_{\to},\,\Gamma^A_{\leftarrow}) : \Tm_0\,((\down\!\Gamma)^A\,X) \simeq \Gamma^A\,X$
  \item $t^A\,\gamma \equiv A^A_{\to}\,((\down\!t)^A\,(\Gamma^A_{\leftarrow}\,\gamma))$
  \end{itemize}
\end{mylemma}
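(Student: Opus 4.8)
The plan is to prove all three items simultaneously by induction on the syntax of the outer theory of signatures, using that $\down$ preserves all the relevant structure (Definition~\ref{def:simple-lowering}) and that the inner and outer $\blank^A$ interpretations are given by literally the same recursion scheme. I would first construct the isomorphisms $A^A_{\to},A^A_{\leftarrow}$ by induction on $A : \SigTy_1$. In the $\iota$ case, $\down\iota = \iota$ and both $\Tm_0((\down\iota)^A X)$ and $\iota^A X$ compute to $\Tm_0 X$, so the isomorphism is the identity. In the $\iota\to B$ case, $\down(\iota\to B) = \iota\to\down B$, so $\Tm_0((\down(\iota\to B))^A X)$ computes to $\Tm_0(X\to_0(\down B)^A X)$; I apply the $\Pi$-preservation isomorphism $\Tm_0(\Pi_0\,A\,B)\simeq\Pi_1\,(\Tm_0\,A)\,(\Tm_0\,B)$ from Section~\ref{sec:2ltt} to reach $\Tm_0 X\to\Tm_0((\down B)^A X)$, then post-compose with the inductive isomorphism $B^A_{\to}$ on the codomain to land in $\Tm_0 X\to B^A X\equiv(\iota\to B)^A X$. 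Part~2 is analogous, by induction on $\Gamma : \Con_1$: the $\emptycon$ case uses that both sides compute to (a $\Tm_0$ of) the empty product, hence to $\top$; the $\Delta\ext B$ case uses that $\blank^A$ on an extended context computes to a product of $\Delta^A X$ and $B^A X$, so I apply $\Sigma$-preservation of $\Tm_0$ and combine the inductive isomorphism for $\Delta$ with $B^A_{\to}$ from part~1. Still by induction on $\Gamma$, it is convenient to record the compatibility of $\Gamma^A_{\to}$ with variable projection: for $x : \Var_1\,\Gamma\,A$ and $\gamma' : \Tm_0((\down\Gamma)^A X)$, applying $\Gamma^A_{\to}\,\gamma'$ to $x$ equals $A^A_{\to}$ applied to the $(\down x)$-component of $\gamma'$.

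With the isomorphisms in hand, I would prove part~3 by induction on $t : \SigTm_1\,\Gamma\,A$. For $t = \var\,x$, the left side $(\var\,x)^A\,\gamma$ is $\gamma\,x$, and the right side, using $\down(\var\,x) = \var\,(\down x)$, unfolds to $A^A_{\to}$ applied to the $(\down x)$-component of $\Gamma^A_{\leftarrow}\,\gamma$; this is exactly the variable-projection compatibility above together with $\Gamma^A_{\to}\circ\Gamma^A_{\leftarrow} = \id$. For $t = \app\,t'\,u$, I use $\down(\app\,t'\,u) = \app\,(\down t')\,(\down u)$ and the induction hypotheses for $t'$ and $u$: since $u$ has type $\iota$ and $\iota^A_{\to}$ is the identity, the hypothesis for $u$ gives $u^A\,\gamma\equiv(\down u)^A\,(\Gamma^A_{\leftarrow}\,\gamma)$, while the hypothesis for $t'$ rewrites $t'^A\,\gamma$ as $(\iota\to A)^A_{\to}$ applied to $(\down t')^A\,(\Gamma^A_{\leftarrow}\,\gamma)$. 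Unfolding the definition of $(\iota\to A)^A_{\to}$ — which applies the inner function through the $\Pi$-preservation isomorphism and then $A^A_{\to}$ — and using the inner $\beta$-rule for application, both sides reduce to $A^A_{\to}\big((\app\,(\down t')\,(\down u))^A\,(\Gamma^A_{\leftarrow}\,\gamma)\big)$, which is the claim.

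The main obstacle I anticipate is not any single step but the bookkeeping of definitional isomorphisms: ensuring that the composite $A^A_{\to}$ are set up so that the equation in part~3 holds \emph{strictly}, as the stated $\equiv$ rather than merely up to the inner identity type (note parts~1 and 2 only assert $\simeq$). This rests on the $\eta$-laws that make $\Tm_0$'s preservation of $\Pi$ and $\Sigma$ genuine (definitional) isomorphisms, on the inner $\beta$-law for application holding as an outer equality, and on $\down$ commuting strictly with $\var$, $\app$, $\iota$, $\iota\to\blank$, $\emptycon$ and $\blank\ext\blank$; all of these are available, so the argument goes through, but each rewriting step silently transports along one of them, and that is where the real work lies.
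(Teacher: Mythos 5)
Your proposal is correct and matches the paper's proof, which is simply stated as ``by induction on $\Gamma$, $A$ and $t$''; you have filled in exactly the induction the paper intends, including the use of $\Pi$- and $\Sigma$-preservation by $\Tm_0$ and the variable/application cases for the term equation.
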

\begin{proof}
By induction on $\Gamma$, $A$ and $t$.
\end{proof}
We construct recursors now, yielding strict algebra morphisms.  We
assume $(X,\,\omega) : \Alg\,\Omega$. Recall that $\omega : \Omega^A\,X$, thus
$\Omega^A_{\leftarrow}\,\omega : \Tm_0\,((\down\!\Omega)^A\,X)$. We define $\ms{R} :
\Tm_0\,\ms{T} \to \Tm_0\,X$ as $\ms{R}\,t \defn t^A\,(\Omega^A_{\leftarrow}\,\omega)$.
\begin{alignat*}{3}
& \hspace{-10em}\rlap{$\blank^R : (A : \SigTy_1)(t : \Tm_0\,(\SigTm_0\,(\down\!\Omega)\,(\down\!A))) \to A^M\,\ms{R}\,(A^T\,t)\,(\A^A_{\rightarrow}\,(t^A\,(\Omega^A_{\leftarrow}\,\omega)))$}\\
& \hspace{-10em}\iota^R\,&&t : t^A\,(\Omega^A_{\leftarrow}\,\omega) \equiv \iota^A_{\to}\,(t^A\,(\Omega^A_{\leftarrow}\,\omega))\\
& \hspace{-10em}(\iota\to A)^R\,&&t \defn \lambda\,u.\,A^R\,(\app\,t\,u)
\end{alignat*}
\begin{alignat*}{3}
& \hspace{-9em}\rlap{$\blank^R : (\Gamma : \Con_1)(\nu : \Sub_1\,\Omega\,\Gamma) \to \Gamma^M\,\ms{R}\,(\Gamma^T\,\nu)\,(\nu^A\,\omega)$}\\
& \hspace{-9em}\rlap{$\Gamma^R\,\nu\,\{A\}\,x \defn A^R\,(\down\!(\nu\,x))$}
\end{alignat*}
In the proof obligation for $t^A\,(\Omega^A_{\leftarrow}\,\omega) \equiv
\iota^A_{\to}\,(t^A\,(\Omega^A_{\leftarrow}\,\omega))$, $\iota^A_{\to}$ computes
to the identity function; note that $\iota^A_{\to} : \Tm_0\,X \to \Tm_0\,X$. Hence
the equality becomes reflexive.

In $\Gamma^R\,\nu\,\{A\}\,x \defn A^R\,(\down\!(\nu\,x))$, we have that
\[
  A^R\,(\down\!(\nu\,x)) : A^M\,\ms{R}\,(A^T\,(\down\!(\nu\,x)))\,(A^A_{\to}\,(\down\!(\nu\,x)^A\,(\Omega^A_{\leftarrow}\,\omega)))
\]
Hence by Lemma \ref{lem:down-compat-alg}, we have
\[
  A^R\,(\down\!(\nu\,x)) : A^M\,\ms{R}\,(A^T\,(\down\!(\nu\,x)))\,((\nu\,x)^A\,\omega)
\]
Hence, by the definition of $\blank^A$ for substitutions:
\[
  A^R\,(\down\!(\nu\,x)) : A^M\,\ms{R}\,(A^T\,(\down\!(\nu\,x)))\,(\nu^A\,\omega\,x)
\]
Which is exactly what is required when we unfold the expected return type:
\begin{alignat*}{3}
  & \blank^R : (\Gamma : \Con_1)(\nu : \Sub_1\,\Omega\,\Gamma) \to \Gamma^M\,\ms{R}\,(\Gamma^T\,\nu)\,(\nu^A\,\omega)\\
  & \blank^R : (\Gamma : \Con_1)(\nu : \Sub_1\,\Omega\,\Gamma) \to \{A\}(x : \Var_1\,\Gamma\,A) \to
    A^M\,\ms{R}\,(A^T\,(\down\!(\nu\,x)))\,(\nu^A\,\omega\,x)
\end{alignat*}
The recursor is defined the same way as in Definition \ref{def:simple-recursor}:
\begin{alignat*}{3}
  & \Rec_{\Omega} : (\mi{alg} : \Alg\,\Omega) \to \Mor\,\TmAlg_{\Omega}\,\mi{alg}\\
  & \Rec_{\Omega}\,(X,\,\omega) \defn (\ms{R},\,\Omega^R\,\Omega\,\id)
\end{alignat*}

\subsection{Eliminator Construction}

For induction, we need to additionally define $\blank^D$ in the inner layer.
\begin{alignat*}{3}
  &\blank^D : \Tm_0\,((A : \SigTy_0)\{X\} \to (\Tm_0\,X \to \U_0) \to A^A\,X \to \U_0)\\
  &\blank^D : \Tm_0\,((\Gamma : \Con_0)\{X\} \to (\Tm_0\,X \to \U_0) \to \Gamma^A\,X \to \U_0)\\
  &\blank^D : \Tm_0\,((t : \SigTm_0\,\Gamma\,A) \to \Gamma^D\,X^D\,\gamma \to A^D\,X^D\,(t^A\,\gamma))\\
  &\blank^D : \Tm_0\,((\sigma : \Sub_0\,\Gamma\,\Delta) \to \Gamma^D\,X^D\,\gamma \to \Delta^D\,X^D\,(\sigma^A\,\gamma))
\end{alignat*}

\begin{mylemma}
We have again compatibility of lowering with $\blank^D$. Assuming
$(X,\,\gamma) : \Alg\,\Gamma$, $(X^D,\,\gamma^D) : \DispAlg\,(X,\,\gamma)$,
$t : \SigTm_1\,\Gamma\,A$, and $\alpha : A^A\,X$, we have
\begin{itemize}
  \item $(\A^D_{\to},\,\A^D_{\leftarrow}) :
    \Tm_0\,((\down\!\A)^D\,X^D\,(\A^A_{\leftarrow}\,\alpha)) \simeq \A^D\,X^D\,\alpha$
  \item $(\Gamma^D_{\to},\,\Gamma^D_{\leftarrow}) :
    \Tm_0\,((\down\!\Gamma)^D\,X^D\,(\Gamma^A_{\leftarrow}\,\gamma)) \simeq \Gamma^D\,X^D\,\gamma$
  \item $t^D\,\gamma^D \equiv A^D_{\to}\,((\down\!t)^D\,(\Gamma^D_{\leftarrow}\,\gamma^D))$
\end{itemize}
The equation for $t^D\,\gamma^D$ is well-typed because of the term equation in Lemma
\ref{lem:down-compat-alg}.
\end{mylemma}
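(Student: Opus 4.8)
The plan is to prove the three statements simultaneously by induction on the theory-of-signatures syntax, in the order: first on types $A$ (yielding the isomorphism $A^D_{\to},\,A^D_{\leftarrow}$), then on contexts $\Gamma$ (yielding $\Gamma^D_{\to},\,\Gamma^D_{\leftarrow}$ as the pointwise product over variables), and finally on terms $t$ and substitutions (yielding the strict equation $t^D\,\gamma^D \equiv A^D_{\to}((\down t)^D(\Gamma^D_{\leftarrow}\,\gamma^D))$). This is the exact analogue of the proof of Lemma \ref{lem:down-compat-alg}; the only novelty is that each $\blank^D$-statement is displayed over the corresponding $\blank^A$-statement, so the $\blank^A$-level isomorphisms of that lemma have to be threaded through the goals to keep everything well-typed — for instance, $(\down A)^D\,X^D\,(A^A_{\leftarrow}\,\alpha)$ is only well-formed because $A^A_{\leftarrow}\,\alpha : \Tm_0((\down A)^A\,X)$, which is what the first clause of Lemma \ref{lem:down-compat-alg} provides.

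For the type cases: at $\iota$, we have $\iota^D\,X^D\,\alpha \defn \Tm_0(X^D\,\alpha)$, and $\iota^A_{\leftarrow}$ is the identity (just as $\iota^A_{\to}$ was in Lemma \ref{lem:down-compat-alg}), so $(\down\iota)^D\,X^D\,(\iota^A_{\leftarrow}\,\alpha)$ is definitionally the same $\Tm_0(X^D\,\alpha)$; we take $\iota^D_{\to}$ and $\iota^D_{\leftarrow}$ to be the identity. At $\iota \to A$, lowering commutes with the first-order function former, so $(\down(\iota\to A))^D\,X^D\,\alpha'$ unfolds to $(x : \Tm_0\,X)(x^D : \Tm_0(X^D\,x)) \to (\down A)^D\,X^D\,(\alpha'\,x)$, while the target $(\iota\to A)^D\,X^D\,\alpha$ with $\alpha = (\iota\to A)^A_{\to}\,\alpha'$ is $(x : \Tm_0\,X)(x^D : \Tm_0(X^D\,x)) \to A^D\,X^D\,(\alpha\,x)$; the definition of the function-level $\blank^A$-isomorphism gives $\alpha\,x \equiv A^A_{\to}(\alpha'\,x)$, so applying the induction hypotheses $A^D_{\to},\,A^D_{\leftarrow}$ under the two $\lambda$-binders produces the wanted isomorphism. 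The context case is then the pointwise product of the type result over $\Var\,\Gamma\,A$, with $\Gamma^D_{\to}\,\gamma^D$ and $\Gamma^D_{\leftarrow}\,\gamma^D$ applying $A^D_{\to}$ resp.\ $A^D_{\leftarrow}$ at each variable.

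For the term and substitution cases: on $\var\,x$, both sides reduce to looking up $x$ (in $\gamma^D$ resp.\ in $\Gamma^D_{\leftarrow}\,\gamma^D$), and the equation follows from the definition of $\Gamma^D_{\leftarrow}$ together with $A^D_{\to},\,A^D_{\leftarrow}$ being mutually inverse; on $\app\,t\,u$ we unfold $(\app\,t\,u)^D\,\gamma^D \defn t^D\,\gamma^D\,(u^A\,\gamma)\,(u^D\,\gamma^D)$ on both sides and chain the induction hypotheses for $t$ and $u$, using the companion equation $u^A\,\gamma \equiv A^A_{\to}((\down u)^A(\Gamma^A_{\leftarrow}\,\gamma))$ of Lemma \ref{lem:down-compat-alg} to align the inner and outer arguments. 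Substitutions carry no extra content: since $\Sub_1\,\Gamma\,\Delta$ is a family of term-valued functions on variables, $\sigma^D$ is pointwise $(\sigma\,x)^D$, so the substitution clause is immediate from the term clause.

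The main obstacle is purely the bookkeeping of coercions in the intensional inner setting: because the $\blank^D$-statements are displayed over the $\blank^A$-statements, each clause of the induction needs transports along the (inner-propositional) equations supplied by the induction hypotheses and by Lemma \ref{lem:down-compat-alg}, and one has to check that the transports introduced in the $\iota\to A$ and $\app\,t\,u$ cases cancel so that the strict outer equation in the third item actually holds on the nose. As at the analogous point in Chapter \ref{chap:simple-inductive-signatures}, this is routine path-induction and transport manipulation with no conceptual difficulty, and it trivializes if the inner identity type is taken to be extensional.
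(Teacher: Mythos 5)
Your proposal is correct and matches the paper's approach exactly: the paper's entire proof of this lemma is ``Again by induction on $\Gamma$, $A$ and $t$,'' and your write-up simply fills in the cases of that induction (identity at $\iota$, threading the $\blank^A$-level isomorphisms of Lemma~\ref{lem:down-compat-alg} through the $\iota\to A$ and $\app$ cases, pointwise products for contexts). The transport bookkeeping you flag at the end is indeed the only delicate point, and the paper likewise leaves it implicit.
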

\begin{proof} Again by induction on $\Gamma$, $A$ and $t$.
\end{proof}

We also need to extend $\blank^T$ with action on terms. Note that we return
an inner equality, since we can only compute such equality by induction on the
inner term input:
\[
\blank^T : (t : \SigTm_0\,(\down\!\Gamma)\,(\down\!A))(\nu : \Sub_1\,\Omega\,\Gamma) \to
  \Tm_0\,(A^A_{\leftarrow}\,(A^T\,(t[\down\!\nu])) = t^A\,(\Gamma^A_{\leftarrow}\,\nu))
\]
We assume $(X^D,\,\omega^D) : \DispAlg\,\TmAlg_\Omega$, and define elimination
as follows:
\begin{alignat*}{3}
  & \ms{E} : (t : \Tm_0\,\ms{T}) \to \Tm_0\,(X^D\,t) \\
  & \ms{E}\,t \defn t^D\,(\Omega^D_{\leftarrow}\,\omega^D)
\end{alignat*}
This definition is well-typed only up to $t^T\,\id : \Tm_0\,(t =
t^A\,(\Omega^A_{\leftarrow}\,(\Omega^T\,\Omega\,\id)))$. Since $t^T\,\id$ is an
inner equality, in a fully formal intensional presentation we would have to
write an explicit transport in the definition.

We shall skip the remainder of the eliminator construction; it goes the same way
as in Definition \ref{def:simple-eliminator-construction}. Intuitively, this is
possible since the inner theory has all necessary features to reproduce the
eliminator construction, and lowering preserves all structure.

Since $t^T$ yields inner equations, this implies that the displayed algebra
sections returned by the eliminator are \emph{weak sections}, i.e.\ they contain
$\beta$-rules expressed in inner equalities.

\subsection{Strict Elimination}

If we want to use term algebras in generic programming, having only weak
$\beta$-rules in elimination is inconvenient. We make a brief digression here,
to define an alternative eliminator which computes strictly. The idea is to
specialize the notion of displayed algebras to the term algebra, and likewise
give a specialized definition for the eliminator function. We fix $\Omega :
\Con_1$ and $X^D : \Tm_0\,(\SigTm_0\,(\down\!\Omega)\,\iota) \to \Ty_0$.
\begin{alignat*}{3}
  & \rlap{$\blank^D : (A : \SigTy_1) \to \Tm_0\,(\SigTm_0\,(\down\!\Omega)\,(\down\!A)) \to \Ty_0$}\\
  & \iota^D\,       && X^D\,\alpha \defn X^D\,\alpha \\
  & (\iota\to A)^D\,&& X^D\,\alpha \defn (u : \SigTm_0\,\Omega\,\iota) \to X^D\,u \to A^D\,X^D\,(\alpha\,u)
\end{alignat*}
\begin{alignat*}{3}
  &\Omega^D : \Set\\
  &\Omega^D \defn \{A : \SigTy_1\}(x : \Tm_0\,(\Var_0\,(\down\!\Omega)\,(\down\!A))) \to \Tm_0\,(A^D\,(\var\,x))
\end{alignat*}
\begin{alignat*}{3}
  & \hspace{-10em}\rlap{$\ms{Elim} : \{A : \SigTy_1\} \to \Omega^D \to  (t : \Tm_0\,(\SigTm_0\,(\down\!\Omega)\,(\down\!A))) \to \Tm_0\,(A^D\,t)$}\\
  & \hspace{-10em}\ms{Elim}\,\omega^D\,(\var\,x) &&\defn \omega^D\,x\\
  & \hspace{-10em}\ms{Elim}\,\omega^D\,(\app\,t\,u) &&\defn \ms{Elim}\,\omega^D\,t\,u\,(\ms{Elim}\,\omega^D\,u)
\end{alignat*}
Now, $\ms{Elim}\,\{\iota\}$ has type $\Omega^D \to \Tm_0\,((t :
\SigTm_0\,(\down\!\Omega)\,\iota) \to X^D\,t)$. Since $\Omega^D$ is a finite
product of inner types, it's isomorphic to an inner type, so we can extract a
purely inner eliminator:
\[ \ms{Elim} : (X^D : \SigTm_0\,(\down\!\Omega)\,\iota \to \U_0) \to \Omega^D \to (t : \SigTm_0\,(\down\!\Omega)\,\iota) \to X^D\,t.\]
Here, $\Omega^D$ specifies induction methods, and the eliminator is defined by
inner induction on terms. Compare this to the previously constructed weak
eliminator, where we had to transport the result over $t^T\,\id$. The extra
transport precluded strict $\beta$-rules in that case, since transports do not
definitionally compute on inductive constructors in the inner theory.

However, the weak eliminator construction is overall more regular and scales
better to more complicated theories of signatures, as we will see in Sections
\ref{sec:fqiit-term-algebras} and \ref{sec:inf-term-algebras}. Also, in these
Sections we will assume equality reflection everywhere so weak and strict
$\beta$-rules will coincide. Another advantage of the weak eliminator
construction is that it builds on definitions that are already available from
the semantics of signatures. In contrast, strict eliminators should be connected
back to the semantics in a separate step: we should show that strict elimination
yields a displayed algebra section, and that the two definitions of displayed
algebras are equivalent. We do not detail these here.

\chapter[Finitary QII Signatures]{Finitary Quotient Inductive-Inductive Signatures}
\label{chap:fqiit}

In this chapter we bump the expressive power of signatures by a large margin,
and also substantially extend the semantics. However, we keep the basic approach
the same; indeed its advantages become apparent with the more sophisticated
signatures.

We use two different setups for semantics in this chapter.
\begin{itemize}
  \item In Sections \ref{sec:fqiit-tos}-\ref{sec:fqii-left-adjoint} we work in 2LTT, thereby
        getting a generalized semantics for signatures. Here we keep details about universe
        levels to the minimum.
  \item In Section \ref{sec:fqiit-term-algebras}, we work in an
        extensional type theory with cumulative universes. This is more suited for the term
        algebra construction, where (as we will see) 2LTT does not bring any advantage, but
        we do need to be more precise about universes.
\end{itemize}

\section{Theory of Signatures}
\label{sec:fqiit-tos}

Signatures are once again given by contexts of a type theory, but now it is a
dependent type theory, given as a cwf with certain type formers, in the style
of Section \ref{sec:models-of-tts}.

\subsubsection{Metatheory and terminology}
We work in 2LTT with $\Ty_0$ and $\Tm_0$, and make the following assumptions:
\begin{itemize}
  \item $\Ty_0$ is closed under $\top$, $\Sigma$ and extensional identity $\blank\!=\!\blank$.
        The inner identity reflects the outer one.
  \item The outer identity $\blank\!\equiv\!\blank$ is also extensional; it
    reflects strict equality in some unspecified metatheory outside 2LTT. This
    justifies omitting transports along $\blank\!\equiv\!\blank$ in our
    notation.
\end{itemize}

In the following we specify models of the theory of \emph{finitary quotient
inductive-inductive signatures}. The names involved are a bit of a mouthful, so
we abbreviate ``finitary quotient inductive-inductive'' as FQII, and like
before, we abbreviate ``theory of signatures'' as ToS. In this chapter, by
signature we mean an FQII signature unless otherwise specified.

Additionally, we abbreviate ``quotient inductive-inductive types'' as QIIT, and
we may qualify it to FQIIT if it is finitary. A \emph{type} in this sense is
simply the initial algebra for a given FQII signature. We shall use this naming
in the rest of the thesis; an \emph{inductive type} is an initial algebra for a
signature. Also, we use \emph{syntax} as a synonym for initial algebra.

\begin{mydefinition}
\label{def:fqiit-tos}
A \textbf{model of the theory of signatures} consists of the following.
  \begin{itemize}
    \item A \textbf{cwf} with underlying sets $\Con$, $\Sub$, $\Ty$ and $\Tm$, all returning in
      the outer $\Set$ universe of 2LTT.
    \item A \textbf{Tarski-style universe} $\U$ with decoding $\El$.
    \item An \textbf{extensional identity type} $\Id : \Tm\,\Gamma\,A \to
      \Tm\,\Gamma\,A \to \Ty\,\Gamma$, specified by $(\reflect,\,\refl) :
      \Tm\,\Gamma\,(\Id\,t\,u) \simeq (t \equiv u)$.
    \item An \textbf{internal product type} $\Pi : (a : \Tm\,\Gamma\,\U) \to
      \Ty\,(\Gamma\ext\El\,a) \to \Ty\,\Gamma$, specified by
      $(\app,\,\lam) : \Tm\,\Gamma\,(\Pi\,a\,B) \simeq \Tm\,(\Gamma \ext \El\,a)\,B$.
    \item An \textbf{external product type} $\Pie : (\mi{Ix} : \Ty_0) \to (\mi{Ix} \to \Ty\,\Gamma) \to \Ty\,\Gamma$, specified by
      $(\appe,\,\lame) : \Tm\,\Gamma\,(\Pie\,\mi{Ix}\,B) \simeq ((i : \mi{Ix}) \to \Tm\,\Gamma\,(B\,i))$.
  \end{itemize}
\end{mydefinition}
At this point we only have a notion of model for ToS, but as we will see in
Chapter \ref{chap:iqiit}, ToS is also an algebraic theory, more specifically an
infinitary QII one. It is infinitary because $\Pie$ and $\lame$ allow branching
which is indexed over elements of arbitrary $\mi{Ix} : \Ty_0$ types.

Because of the algebraic character of ToS, there is a category of ToS models
where morphisms strictly preserve all structure, and the initial model
corresponds to the syntax. We will make this precise in Chapter
\ref{chap:iqiit}. We also assume that the ToS syntax exists.

\begin{mydefinition} An FQII \textbf{signature} is an element of $\Con$ in the syntax of ToS.
\end{mydefinition}
We review several example signatures in the following, using progressively
more ToS type formers.  We also introduce progressively more compact notation
for signatures. As a rule of thumb, we shall use compact notation for larger and
more complex signatures, but we shall be more explicit when we specify models of
ToS later in this chapter.
\begin{myexample}
  Simple inductive signatures can be evidently expressed using $\U$ and
  $\Pi$. By adding a single $\U$ to the signature, we introduce the inductive
  sort, while $\Pi$ adds an inductive parameter to an entry.
  \begin{alignat*}{3}
    & \ms{NatSig} &&\defn \emptycon \ext (N : \U) \ext (\mi{zero} : \El\,N)
                        \ext (\mi{suc} : \Pi (n : N) (\El\,N))\\
    & \ms{TreeSig} &&\defn \emptycon \ext (T : \U) \ext (\mi{leaf} : \El\,T)
                         \ext (\mi{node} : \Pi (t_1 : T) (\Pi (t_2 : T) (\El\,T)))
  \end{alignat*}
  Observe that the domains in $\Pi$ are terms with type $\U$, while the codomains are proper types.
\end{myexample}

\begin{notation} We write non-dependent product types in ToS as follows.
  \begin{itemize}
  \item $a \funi B$ for $\Pi\,(\_ : a)\,B$.
  \item $\mi{Ix} \fune B$ for $\Pie\,\mi{Ix}\,(\lambda\,\_.\,B)$.
  \end{itemize}
\end{notation}
\noindent
Using this notation, we may write $\mi{suc} : N \funi \El\,N$ and $\mi{node} : T
\funi T \funi \El\,T$.

\begin{notation}
The ``categorical'' application $\app$ with explicit substitutions is a bit
inconvenient. Instead, we simply write whitespace for $\Pii$ and $\Pie$
application:
\begin{alignat*}{3}
  & t\,u \defn (\appi\,t)[\id,\,u]\\
  & t\,u \defn (\appe\,t)\,u
\end{alignat*}
\end{notation}

\begin{myexample}
We may have any number of sorts by adding more $\U$ to the signatures. Moreover,
sorts can be indexed over previous sorts. Hence, using only $\U$, $\El$ and
$\Pi$, we can express any closed inductive-inductive type
\cite{forsberg-phd}. The following fragment of the the signature for categories
is such:
\begin{alignat*}{3}
  & \emptycon \ext (\mi{Obj} : \U) \ext (\mi{Hom} : \mi{Obj} \funi \mi{Obj} \funi \U)
      \ext (\mi{id} : \Pi(i : \mi{Obj})\,(\El\,(\mi{Hom}\,i\,i)))
\end{alignat*}
These inductive-inductive signatures are more flexible than those in prior
literature \cite{forsberg-phd}, since we allow type constructors (sorts) and
point constructors to be arbitrarily mixed, as opposed to mandating that sorts
are declared first. For example:
\begin{alignat*}{3}
  & \emptycon \ext (A : \U) \ext (a : \El\,A) \ext (B : A \funi \U) \ext (C : B\,a \funi \U)
\end{alignat*}
Here $C$ is indexed over $B\,a$, where $a$ is a point constructor of $a$, so a
sort specification mentions a point constructor.
\end{myexample}

\begin{myexample} $\Id$ lets us add equations to signatures. With this, we can write down the full
signature for categories:
\begin{alignat*}{3}
  & \emptycon &&\ext (\mi{Obj}   &&: \U)\\
  &           &&\ext (\mi{Hom}   &&: \mi{Obj} \funi \mi{Obj} \funi \U)\\
  &           &&\ext (\mi{id}    &&: \Pi(i : \mi{Obj})\,(\El\,(\mi{Hom}\,i\,i)))\\
  &           &&\ext (\mi{comp}  &&: \Pi\,(i\,j\,k : \mi{Obj})\,(\mi{Hom}\,j\,k \funi \mi{Hom}\,i\,j \funi \El\,(\mi{Hom}\,i\,k)))\\
  &           &&\ext (\mi{idr}   &&: \Pi\,(i\,j : \mi{Obj})(f : \mi{Hom}\,i\,j)\,(\Id\,(\mi{comp}\,i\,i\,j\,f\,(\mi{id}\,i))\,f))\\
  &           &&\ext (\mi{idl}   &&: \Pi\,(i\,j : \mi{Obj})(f : \mi{Hom}\,i\,j)\,(\Id\,(\mi{comp}\,i\,j\,j\,(\mi{id}\,j)\,f)\,f))\\
  &           &&\ext (\mi{assoc} &&: \Pi\,(i\,j\,k\,l : \mi{Obj})(f : \mi{Hom}\,j\,l)(g : \mi{Hom}\,j\,k)(h : \mi{Hom}\,i\,j)\\
  &           && &&\hspace{1.7em} (\Id
                 \,(\mi{comp}\,i\,j\,l\,(\mi{comp}\,j\,k\,l\,f\,g)\,h)
                 \,(\mi{comp}\,i\,k\,l\,f\,(\mi{comp}\,i\,j\,k\,g\,h))
\end{alignat*}
Now, this is already rather hard to read, even together with a compressed
notation for multiple $\Pi$ binders.
\begin{notation}
For more complex signatures, we may entirely switch to an internal notation,
where we mostly reuse the conventions in the metatheories, including implicit
arguments and implicit quantification. We use $(x : a) \to B$ for internal
products, $(x : A) \toe B$ for external products, but we still write $\Id$ for
the identity type and make $\U$ and $\El$ explicit. In this notation, a
signature is just a listing of binders. The category signature becomes the
following:
\begin{alignat*}{3}
  & \ms{Obj} &&: \U\\
  & \ms{Hom} &&: \ms{Obj} \to \ms{Obj} \to \U\\
  & \ms{id}  &&: \El\,(\ms{Hom}\,i\,i)\\
  & \ms{\blank\!\circ\!\blank} &&: \ms{Hom}\,j\,k \to \ms{Hom}\,i\,j \to \El\,(\ms{Hom}\,i\,k)\\
  & \ms{idr} &&: \Id\,(f \circ \ms{id})\,f\\
  & \ms{idl} &&: \Id\,(\ms{id} \circ f)\,f\\
  & \ms{assoc} &&: \Id\,(f \circ (g \circ h))\,((f \circ g) \circ h)
\end{alignat*}
\end{notation}
\end{myexample}

\begin{myexample}
The external product type makes it possible to reference inner types (in 2LTT)
in signatures. Here ``external'' is meant relative to a given signature, and
refers to types and inhabitants which are not introduced inside a signature.
For example, we give a signature for lists by assuming $A : \Ty_0$ for the
(external) type of list elements:
\begin{alignat*}{3}
  &\ms{List} &&: \U\\
  &\ms{nil}  &&: \El\,\ms{List}\\
  &\ms{cons} &&: A \toe \ms{List} \to \El\,\ms{List}
\end{alignat*}
Hence, ``parameters'' are always assumptions made in the metatheory. We can
also \emph{index} sorts by external values. Let us specify length-indexed vectors
now; we keep the $A : \Ty_0$ assumption, but also assume that $\Ty_0$ has
natural numbers, with $\Nat_0 : \Ty_0$, $\zero_0$ and $\suc_0$.
\begin{alignat*}{3}
  &\ms{Vec}  &&: \Nat_0 \toe \U \\
  &\ms{nil}  &&: \El\,(\ms{Vec}\,\zero_0)\\
  &\ms{cons} &&: (n : \Nat_0) \toe A \toe \ms{Vec}\,n \to \El\,(\ms{Vec}\,(\suc_0\,n))
\end{alignat*}
\end{myexample}

\begin{myexample}
We can also introduce \emph{sort equations} using $\Id$: this means equating
terms of $\U$, i.e.\ inductively specified sets. This is useful for specifying
certain strict type formers. For example, a signature for cwfs can be extended with
a specification for strict constant families.
\begin{alignat*}{3}
  & \ms{Con}     &&: \U\\
  & \ms{Sub}     &&: \ms{Con} \to \ms{Con} \to \U \\
  & \ms{Ty}      &&: \ms{Con} \to \U\\
  & \ms{Tm}      &&: (\Gamma : \ms{Con}) \to \ms{Ty}\,\Gamma \to \U\\
  & ...          &&\\
  & \ms{K}       &&: \ms{Con} \to \{\Gamma : \ms{Con}\} \to \El\,(\ms{Ty}\,\Gamma)\\
  & \ms{K_{spec}} &&: \Id\,(\ms{Tm}\,\Gamma\,(\ms{K}\,\Delta))\,(\ms{Sub}\,\Gamma\,\Delta)
\end{alignat*}
The equation for Russell-style universes is likewise a sort equation:
\begin{alignat*}{3}
  &\ms{Univ}    &&: \El\,(\ms{Ty}\,\Gamma)\\
  &\ms{Russell} &&: \Id\,(\ms{Tm}\,\Gamma\,\ms{Univ})\,(\ms{Ty}\,\Gamma)
\end{alignat*}
\end{myexample}

\begin{myexample}
\label{ex:presheaf-sig}
As we mentioned in Definition \ref{def:presheaf-type}, there is a signature for
presheaves, so let us look at that now. Assume a category $\mbbC$ in the inner
theory; this means that objects and morphisms of $\mbbC$ are in $\Ty_0$.
\begin{alignat*}{3}
  & \ms{Obj}         &&: |\mbbC| \toe \U\\
  & \ms{Hom}         &&: \mbbC(i,\,j) \toe \ms{Obj}\,j \to \El\,(\ms{Obj}\,i)\\
  & \ms{Hom_{\ms{id}}} &&: \Id\,(\ms{Hom}\,\id\,x)\,x\\
  & \ms{Hom_{\circ}}  &&: \Id\,(\ms{Hom}\,(f \circ g)\,x)\,(\ms{Hom}\,f\,(\ms{Hom}\,g\,x))
\end{alignat*}
We depart from the sugary naming scheme in Definition \ref{def:presheaf-type},
and name the action on objects $\ms{Obj}$ and the action on morphisms
$\ms{Hom}$. When we give semantics to this signature in Section \ref{sec:fqiit-semantics}, we
will get as algebras functors from $\mbbC^{\ms{op}}$ to the category of inner
types. That category has elements of $\Ty_0$ as objects and $\Tm_0\,A \to
\Tm_0\,B$ functions as morphisms.
\end{myexample}

\subsubsection{Strict positivity}

Only strictly positive signatures are expressible. Similarly to the case with
simple signatures, there is no way to abstract over internal products, since
internal products are indexed over $\U$-small types, and $\U$ has no type
formers at all. With $\Pie$, we can abstract over functions, but only those
which are external to a signature and do not depend on internally specified
constructions.

\subsubsection{Limitation: nested induction}

Nested induction means that external type functions may be applied to
expressions internal to the theory of signatures. This is not possible in any of
the signatures in this thesis. A common example is rose trees, assuming
external $\ms{List} : \Set \to \Set$:
\begin{alignat*}{3}
  &\ms{Tree} &&: \Set\\
  &\ms{node} &&: \ms{List}\,\ms{Tree} \to \ms{Tree}
\end{alignat*}
The $\ms{List}\,\ms{Tree}$ expression is not representable in a signature; the
$\ms{List}$ function is external, while $\ms{Tree}$ would be an internal sort.
This style of inductive definition requires reasoning about the polarity of all
external type functions: only the strictly positive $\Set \to \Set$ functions
should be allowed. With general type functions we would also need to track
polarity of multiple parameters, or even higher-order polarity.

Many use cases of nested induction can be removed by ``including'' the external
type constructor into the signature. In the case of rose trees, this means
defining lists and trees mutually:
\begin{alignat*}{3}
  &\ms{List} &&: \U\\
  &\ms{Tree} &&: \U\\
  &\ms{nil}  &&: \El\,\ms{List}\\
  &\ms{cons} &&: \ms{Tree} \to \ms{List} \to \El\,\ms{List}\\
  &\ms{node} &&: \ms{List} \to \El\,\ms{Tree}
\end{alignat*}
Of course, nested induction would be still desirable because of the code reuse
that it enables.

\section{Semantics}
\label{sec:fqiit-semantics}

\subsection{Overview}

For simple signatures, we only gave semantics in enough detail so that notions
of recursion and induction could be recovered. We aim to do more now. For each
signature, we would like to have
\begin{enumerate}
  \item A category of algebras, with homomorphisms as morphisms.
  \item A notion of induction, which requires a notion of dependent algebras.
  \item A proof that for algebras, initiality is equivalent to supporting induction.
\end{enumerate}

We do this by creating a model of ToS where contexts (signatures) are categories
with certain extra structure and substitutions are structure-preserving
functors. Then, ToS signatures can be interpreted in this model, using the
initiality of ToS syntax (i.e.\ the recursor).

Our semantics has a type-theoretic flavor, which is inspired by the cubical set
model of Martin-Löf type theory by Bezem et al. \cite{cubical}. The idea is
to avoid strictness issues by starting from basic ingredients which are already
strict enough. Hence, instead of modeling ToS types as certain morphisms and
substitution by pullback, we model types as displayed categories with extra
structure, which naturally support strict reindexing/substitution.

We make a similar choice in the interpretation of signatures themselves: we use
structured cwfs of algebras, where types correspond to displayed algebras. This
choice is in contrast to having finitely complete categories of algebras.
Preliminarily, the reason is that ``native'' displayed algebras and sections
allow us to compute induction principles strictly as one would write in a type
theory. In fact, in this chapter we recover exactly the same semantics for
simple signatures that we already specified. In contrast, in finitely complete
categories there is no primitive notion of displayed objects, and we can only
specify induction principles up to equivalences.

This issue is perhaps not relevant from a purely categorical perspective, but we
are concerned with eventually implementing QIITs in proof assistants. If we do
not compute induction principles here in an exact way, we do not get them from
anywhere else.

\subsection{Separate vs.\ Bundled Models}

Previously, we defined $\blank^A$, $\blank^M$, $\blank^D$ and $\blank^S$
interpretations of signatures separately, by doing induction anew for each
one. Formally, this amounts to giving a plain model of ToS in order to define
$\blank^A$, but then giving three \emph{displayed} models of ToS to specify the
other interpretations because they sometimes need to refer to the recursors or
eliminators of other interpretations.

For example, $\blank^A : \Con \to \Set$ while $\blank^D : (\Gamma : \Con) \to
\Gamma^A \to \Set$, so displayed algebras already refer to $\blank^A$, which is
part of the recursor for the corresponding model.

However, this piecewise style can be avoided: we can give a single non-displayed
model which packs everything in a $\Sigma$-type, yielding just one
interpretation function for signatures. Let us call that function $\blank^M$ now:
\begin{alignat*}{5}
  & \blank^M : \Con &&\to \,&&(A &&: \Set)\\
  & &&\hspace{0.3em}\times &&(M  &&: A \to A \to \Set)\\
  & &&\hspace{0.3em}\times &&(D  &&: A \to \Set)\\
  & &&\hspace{0.3em}\times &&(S  &&: (a : A) \to D\,a \to \Set)
\end{alignat*}
Note that it is often not possible to merge multiple recursors/eliminators by
packing models together. For example, addition on natural numbers is defined by
recursion, and so is multiplication; but since multiplication calls addition in
an iterated fashion, it is not possible to define both operations by a single
algebra. Nevertheless, merging does work in our case. We will, in fact, get a
formal vocabulary for merging models (and manipulating them in other ways) from
the semantics of ToS itself.

In simple cases, and in Agda, the piecewise style is convenient, since we do not
have to deal with $\Sigma$-s. However, for larger models, important organizing
principles may become more apparent if we bundle things together.

In the following, we shall define a model $\bM : \ToS$ such that its $\Con$
component is a bundle containing all $A$, $M$, $D$, $S$ components, plus a
number of additional components. We present the components of $\bM$ in the same
order as in Definition \ref{def:fqiit-tos}. There is significant overlap in
names and notations, so we use \textbf{bold} font to disambiguate components of
$\bM$ from components of other structures. For example, we use $\bs{\sigma :
  \Sub\,\Gamma\,\Delta}$ to denote a substitution in $\bM$, while there could be
$\Sub$-named components in other structures under consideration.

\subsection{Finite Limit Cwfs}

We define $\bCon : \Set$ as the type of finite limit cwfs (flcwfs). Recall that this
specifies the objects of the underlying cwf of $\bM$. In the following we
specify flcwfs and describe some internal constructions.

\begin{mydefinition}\label{def:flcwf}
We define $\flcwf : \Set$ as an iterated $\Sigma$-type with the following components:
\begin{enumerate}
  \item A cwf with $\Con$, $\Sub$, $\Ty$, $\Tm$ all returning in $\Set$. \emph{Remark:}
        this implies that $\flcwf : \Set$ is in a larger universe than all of these
        internal components. We continue to elide universe sizing details.
  \item $\Sigma$-types.
  \item Extensional identity type $\Id$ with $\refl$ and $\reflect$.
  \item Strict constant families $\K$.
\end{enumerate}
\end{mydefinition}
\begin{mydefinition}
We abbreviate the additional structure on cwfs consisting of $\Sigma$, $\Id$ and
$\K$ as \textbf{fl-structure}.
\end{mydefinition}

We recover previous concepts as follows. Assuming $\Gamma$ signature, we get an
flcwf by interpreting $\Gamma$ in $\bM$. In that flcwf we have
\begin{itemize}
  \item $\Con$ as the type of algebras.
  \item $\Sub$ as the type of algebra morphisms.
  \item $\Ty$ as the type of displayed algebras.
  \item $\Tm$ as the type of displayed algebra sections.
\end{itemize}
From this, notions of initiality and induction are apparent as well. Initiality
is the usual categorical notion. Also note that the unit type can be derived as
$\K\,\emptycon$.

\begin{mydefinition}
\label{def:induction-predicate}
Assuming $\bGamma : \bCon$ in a cwf, we define the \textbf{induction predicate} on
objects:
\begin{alignat*}{3}
  & \Inductive : \Con_{\bGamma} \to \Set\\
  & \Inductive\,\Gamma \defn (A : \Ty_{\bGamma}\,\Gamma)\ra \Tm_{\bGamma}\,\Gamma\,A
\end{alignat*}
In our semantics this will say that an algebra is inductive if every displayed
algebra over it has a section. We also know that induction and initiality are
equivalent.
\end{mydefinition}

\begin{theorem}\label{thm:initiality-induction}
We assume a cwf $\bGamma$ with $\Id$ and weak $\K$.
An object $\Gamma : \Con_{\bGamma}$ supports induction if and only if it is
initial. Moreover, induction and initiality are both mere properties.
\end{theorem}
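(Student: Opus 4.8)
The plan is to organize everything around one small lemma, which carries the only real content: \emph{if $\Gamma : \Con_{\bGamma}$ supports induction, then $\Tm_{\bGamma}\,\Gamma\,A$ is contractible for every $A : \Ty_{\bGamma}\,\Gamma$}. Inhabitedness is just the induction hypothesis applied to $A$. For propositionality, given $t,\,u : \Tm_{\bGamma}\,\Gamma\,A$ I form $\Id\,t\,u : \Ty_{\bGamma}\,\Gamma$, obtain an inhabitant of it by induction, and apply $\reflect$ to conclude $t \equiv u$. This is where the $\Id$ in $\bGamma$ is used essentially, and it is essentially the whole argument.

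From this lemma the implication induction $\Rightarrow$ initiality is immediate: for any $\Delta : \Con_{\bGamma}$, weak $\K$ gives an isomorphism $\Sub_{\bGamma}\,\Gamma\,\Delta \simeq \Tm_{\bGamma}\,\Gamma\,(\K\,\Delta)$; the right-hand side is contractible by the lemma, and contractibility transports along equivalences, so $\Sub_{\bGamma}\,\Gamma\,\Delta$ is contractible, i.e.\ $\Gamma$ is initial. (Only the \emph{weak} form of $\K$ is needed, since an isomorphism of sets suffices here.) The converse, initiality $\Rightarrow$ induction, uses only the plain cwf structure. Given $A : \Ty_{\bGamma}\,\Gamma$, consider $\Gamma \ext A$ with its weakening $\p : \Sub_{\bGamma}\,(\Gamma \ext A)\,\Gamma$; initiality yields a (unique) $\nu : \Sub_{\bGamma}\,\Gamma\,(\Gamma \ext A)$, and since $\p \circ \nu$ and $\id$ both lie in the contractible set $\Sub_{\bGamma}\,\Gamma\,\Gamma$ we get $\p \circ \nu \equiv \id$. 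Decomposing $\nu$ along the comprehension isomorphism $\Sub_{\bGamma}\,\Gamma\,(\Gamma \ext A) \simeq (\sigma : \Sub_{\bGamma}\,\Gamma\,\Gamma) \times \Tm_{\bGamma}\,\Gamma\,(A[\sigma])$, whose first projection is precomposition with $\p$, gives $\sigma \equiv \p \circ \nu \equiv \id$ together with a term $t : \Tm_{\bGamma}\,\Gamma\,(A[\sigma])$, which, since $A[\sigma] \equiv A[\id] \equiv A$, is the desired inhabitant of $\Tm_{\bGamma}\,\Gamma\,A$.

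For the mere-property claims: initiality is $(\Delta : \Con_{\bGamma}) \to \ms{isContr}\,(\Sub_{\bGamma}\,\Gamma\,\Delta)$, which is a mere property because $\ms{isContr}$ is one and $\Pi$ preserves mere properties (using function extensionality). Induction, i.e.\ $\Inductive\,\Gamma \equiv (A : \Ty_{\bGamma}\,\Gamma) \to \Tm_{\bGamma}\,\Gamma\,A$, is \emph{self-propositional}: assuming some $p : \Inductive\,\Gamma$, the lemma makes each $\Tm_{\bGamma}\,\Gamma\,A$ a mere property, hence so is the $\Pi$-type; and any $P$ with $P \to \ms{isProp}\,P$ is itself a mere property.

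I expect the only delicate point to be the cwf bookkeeping in the initiality $\Rightarrow$ induction step — keeping the comprehension isomorphism, the derived $\p$/$\q$ operations, and the strict equations $A[\id] \equiv A$ and $\p \circ (\sigma,\,t) \equiv \sigma$ straight, so that the term $t$ obtained really sits in $\Tm_{\bGamma}\,\Gamma\,A$ on the nose — together with the minor observation that weak $\K$ already suffices because contractibility is invariant under equivalence. The load-bearing idea, by contrast, is the one-line use of $\Id$ and $\reflect$ in the lemma.
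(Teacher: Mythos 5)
Your proposal is correct and takes essentially the same route as the paper: the paper likewise uses induction applied to $\Id$-types plus $\reflect$ to get uniqueness, $\K$ with $\appK/\lamK$ to convert the contractible term set $\Tm\,\Gamma\,(\K\,\Delta)$ into contractibility of $\Sub\,\Gamma\,\Delta$, and the term $\q[\sigma]$ for the unique $\sigma : \Sub\,\Gamma\,(\Gamma\ext A)$ (your comprehension-decomposition of $\nu$ is the same construction) for the converse. Your factoring through an explicit contractibility lemma and the ``self-propositional implies propositional'' step are only cosmetic differences from the paper's direct argument.
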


\begin{proof}
First, we show that induction implies initiality. We assume $\Gamma : \Con$,
$\ms{ind} : \ms{Inductive}\,\Gamma$ and $\Delta : \Con$. We aim to show that
there is a unique inhabitant of $\Sub\,\Gamma\,\Delta$. We have
$\ms{ind}\,(\K\,\Delta) : \Tm\,\Gamma\,(\K\,\Delta)$, hence
$\appK\,(\ms{ind}\,(\K\,\Delta)) : \Sub\,\Gamma\,\Delta$. We only need to show that this
is unique.  Assume $\delta : \Sub\,\Gamma\,\Delta$. Now,
$\ms{ind}\,(\Id\,(\lamK\,\delta)\,(\ms{ind}\,(\K\,\Delta))) :
\Tm\,\Gamma\,(\Id\,(\lamK\,\delta)\,(\ms{ind}\,(\K\,\Delta)))$, and it follows by
equality reflection that $\lamK\,\delta \equiv \ms{ind}\,(\K\,\Delta)$, thus
$\delta \equiv \appK\,(\ms{ind}\,(\K\,\Delta))$.

Second, the other direction. We assume that $\Gamma$ is initial, and also $A :
\Ty\,\Gamma$, and aim to inhabit $\Tm\,\Gamma\,A$. By initiality we get a unique
$\sigma : \Sub\,\Gamma\,(\Gamma \ext A)$. Now, $\q[\sigma] : \Tm\,\Gamma\,(A[\p \circ \sigma])$,
but since $\p \circ \sigma : \Sub\,\Gamma\,\Gamma$, it must be equal to $\id$ by the initiality
of $\Gamma$. Hence, $\q[\sigma] : \Tm\,\Gamma\,A$.

Lastly: it is well-known that initiality is a mere property, so let us show the
same for induction.  We assume $\ms{ind},\,\ms{ind'} : \ms{Inductive}\,\Gamma$
and $A : \Ty\,\Gamma$. We have
$\reflect\,(\ms{ind}\,(\Id\,(\ms{ind}\,A)\,(\ms{ind'}\,A))) : \ms{ind}\,A \equiv
\ms{ind'}\,A$. Since $A$ is arbitrary, by function extensionality we also have $\ms{ind} \equiv \ms{ind'}$.
\end{proof}

\begin{theorem}
\label{thm:flcwf-term-uniqueness}
$\Tm\,\Gamma\,A$ in an cwf with $\Id$ and weak $\K$ is propositional
when $\Gamma$ is initial.
\end{theorem}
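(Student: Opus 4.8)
The plan is to obtain this as a short corollary of Theorem \ref{thm:initiality-induction}. Since $\Gamma$ is initial in a cwf equipped with $\Id$ and weak $\K$, that theorem tells us $\Gamma$ supports induction, so we get $\ms{ind} : \Inductive\,\Gamma$, i.e.\ a map $(B : \Ty\,\Gamma) \to \Tm\,\Gamma\,B$. This is the only nontrivial ingredient, and it is already available.

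Now fix $A : \Ty\,\Gamma$ and take arbitrary $t,\,u : \Tm\,\Gamma\,A$. Since $t$ and $u$ are terms of the same type in the same context, $\Id\,t\,u : \Ty\,\Gamma$ is well-formed, so applying induction yields $\ms{ind}\,(\Id\,t\,u) : \Tm\,\Gamma\,(\Id\,t\,u)$. Equality reflection $\reflect$ then gives $t \equiv u$. As $t$ and $u$ were arbitrary, $\Tm\,\Gamma\,A$ has at most one inhabitant up to $\equiv$, hence is propositional.

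I do not expect any genuine obstacle: all the real content sits in Theorem \ref{thm:initiality-induction}, and what remains is bookkeeping — checking that $\Id\,t\,u$ lands in $\Ty\,\Gamma$ (rather than in an extended context) and that $\ms{ind}$ produces an identity proof directly in context $\Gamma$, both immediate from the definitions of $\Id$ and $\Inductive$. As an aside, one can also argue without $\Id$ at all: by initiality $\Sub\,\Gamma\,(\Gamma\ext A)$ is a singleton, say with element $\sigma$; then for any $t : \Tm\,\Gamma\,A$ we have $(\id,\,t) \equiv \sigma$ by uniqueness, whence $t \equiv \q[\id,\,t] \equiv \q[\sigma]$, so $\Tm\,\Gamma\,A$ is in fact contractible. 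I would still lead with the induction-based route, since it directly reuses the theorem just proved and fits the surrounding development.
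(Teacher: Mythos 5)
Your main argument is exactly the paper's proof: given $t,\,u : \Tm\,\Gamma\,A$, apply the induction structure (from Theorem \ref{thm:initiality-induction}) to $\Id\,t\,u$ and use $\reflect$ to conclude $t \equiv u$. The $\Id$-free aside via uniqueness of $\Sub\,\Gamma\,(\Gamma\ext A)$ is also sound, but the route you lead with is the one the paper takes.
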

\begin{proof} Assuming $t,\,u : \Tm\,\Gamma\,A$, we have
$\reflect\,(\ms{ind}\,(\Id\,t\,u)) : t \equiv u$.
\end{proof}

Note that the above proofs do not rely on $\Sigma$-types, so why do we include
them in the semantics? One reason is the prior result by Clairmabault and Dybjer
\cite{clairambault2014biequivalence}, that a slightly different formulation of
flcwfs is biequivalent to finitely complete categories. More concretely, in
ibid.\ there is a 2-category of cwfs with $\Sigma$, $\Id$ and ``democracy'', the
last of which is equivalent to the weak formulation of constant families. Then,
it is shown that this 2-category is biequivalent to the 2-category of finitely
complete categories. Thus, including $\Sigma$ is a good deal, as this allows us
to connect our semantics back to finitely complete categories, which are more
common in categorical settings.

We recover finite limits in an flcwf as follows. The product of $\Gamma$ and
$\Delta$ is given by $\Gamma \ext \K\,\Delta$, and we get projection and pairing
from context comprehension. The equalizer of $\sigma,\,\delta :
\Sub\,\Gamma\,\Delta$ is given by $\Gamma \ext \Id\,\sigma\,\delta$, which is
well-typed because morphisms can be viewed as terms, e.g.\ $\sigma :
\Tm\,\Gamma\,(\K\,\Delta)$. The unique morphism out of the equalizer is $\p :
\Sub\,(\Gamma \ext \Id\,\sigma\,\delta)\,\Gamma$.

Our Definition \ref{def:flcwf} for flcwfs is not exactly the same as in
\cite{clairambault2014biequivalence} because our constant families are
strict. However, this only strengthens our semantics in this section, since weak
constant families can be trivially recovered from strict ones. We present some
results from the existing literature in the following.

\begin{mydefinition}[Type categories, c.f.\ {\cite[Section 2.2]{clairambault2014biequivalence}}]
\label{def:type_categories}
We work in a cwf. For each $\Gamma : \Con$, there is a category whose objects
are types $A : \Ty\,\Gamma$, and morphisms from $A$ to $B$ are terms $t :
\Tm\,(\Gamma\,\ext\,A)\,(B[\p])$. Identity morphisms are given by $\q :
\Tm\,(\Gamma\,\ext\,A)\,(A[\p])$, and composition $t \circ u$ by $t[\p, u]$. The
assignment of type categories to contexts extends to a split indexed
category. For each $\sigma : \Sub\,\Gamma\,\Delta$, there is a functor from
$\Ty\,\Delta$ to $\Ty\,\Gamma$, which sends $A$ to $A[\sigma]$ and $t :
\Tm\,(\Gamma\,\ext\,A)\,(B[\p])$ to $t[\sigma\circ \p, \q]$.
\end{mydefinition}

\begin{notation}
  ~\\
  \begin{itemize}
  \item  \vspace{-1.7em}
         In any cwf, we use $\sigma : \Gamma \simeq \Delta$ to indicate
         that $\sigma : \Sub\,\Gamma\,\Delta$ is an isomorphism with inverse $\sigma^{-1}$.
    \item A \emph{type isomorphism}, written as $t : A \simeq B$ is an isomorphism in a
         type category, with inverse as $t^{-1}$.
  \end{itemize}
\end{notation}

\begin{theorem}[Equivalence of type and slice categories, c.f.\ {\cite[Section 2.2]{clairambault2014biequivalence}}]\label{thm:ty-slice}
Assume that we work in an cwf $\bGamma$ with $\Sigma$, $\Id$ and weak
$\K$. For each $\Gamma : \Con$, the type category $\Ty\,\Gamma$ is equivalent to
the slice category $\bGamma/\Gamma$.
\end{theorem}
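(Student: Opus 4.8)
The plan is to build an explicit equivalence from a single functor $F : \Ty\,\Gamma \to \bGamma/\Gamma$, show that $F$ is fully faithful and essentially surjective, and read off a pseudo-inverse $G$ from the essential-surjectivity witnesses. The functor $F$ sends a type $A : \Ty\,\Gamma$ to the slice object $(\Gamma\ext A,\,\p)$, and a type-category morphism $t : \Tm\,(\Gamma\ext A)\,(B[\p])$ to the substitution $(\p,\,t) : \Sub\,(\Gamma\ext A)\,(\Gamma\ext B)$, which is a slice morphism because $\p\circ(\p,\,t) \equiv \p$. Functoriality is immediate from the cwf laws: $(\p,\,\q) \equiv \id$ handles identities and $(\p,\,t)\circ(\p,\,u) \equiv (\p,\,t[\p,\,u])$ matches the type-category composition $t\circ u \defn t[\p,\,u]$.

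First I would check that $F$ is fully faithful. Any slice morphism $f : \Sub\,(\Gamma\ext A)\,(\Gamma\ext B)$ with $\p\circ f \equiv \p$ equals $(\p,\,\q[f])$ by the comprehension isomorphism, and $\q[f]$ has type $\Tm\,(\Gamma\ext A)\,(B[\p][f])$, which coincides with $\Tm\,(\Gamma\ext A)\,(B[\p])$ since $B[\p][f] \equiv B[\p\circ f] \equiv B[\p]$. The assignments $t \mapsto (\p,\,t)$ and $f \mapsto \q[f]$ are then mutually inverse, so $F$ is bijective on hom-sets.

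The substantive step is essential surjectivity, where all three ingredients ($\Sigma$, $\Id$, weak $\K$) are needed. Given a slice object $\sigma : \Sub\,\Delta\,\Gamma$, I would form the pushforward type
\[ A_\sigma \defn \Sigma\,(\K\,\Delta)\,\bigl(\Id\,(\lamK\,\p)\,(\lamK\,(\sigma\circ\appK\,\q))\bigr) : \Ty\,\Gamma, \]
and argue that $\Gamma\ext A_\sigma$ is the ``graph of $\sigma$''. Using the standard isomorphism $\Gamma\ext\Sigma\,C\,D \simeq (\Gamma\ext C)\ext D$, the identification of $\Gamma\ext\K\,\Delta$ with the product of $\Gamma$ and $\Delta$ (recalled above), and the description of equalizers via $\Id$ from the preceding discussion, an element of $\Gamma\ext A_\sigma$ amounts to a point of $\Gamma$, a point of $\Delta$, and a proof identifying the former with the image of the latter under $\sigma$. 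Equality reflection collapses this proof, forcing the $\Gamma$-component to be $\sigma$ of the $\Delta$-component, so $(\sigma,\,\lamK\,\id,\,\refl) : \Sub\,\Delta\,(\Gamma\ext A_\sigma)$ together with the projection $\appK\,\q\circ\p$ back to $\Delta$ form an isomorphism $\Delta \simeq \Gamma\ext A_\sigma$, and it lies over $\Gamma$ because $\p \equiv \sigma\circ\appK\,\q$ holds on $\Gamma\ext A_\sigma$. Hence $F\,A_\sigma \simeq (\Delta,\,\sigma)$ in $\bGamma/\Gamma$, and setting $G\,(\Delta,\,\sigma) \defn A_\sigma$ — with its action on morphisms obtained by transporting slice morphisms along these isomorphisms and invoking full faithfulness — gives a functor which is a pseudo-inverse of $F$: the family of isomorphisms just produced is a natural isomorphism $F\,G \cong {}$ identity, and $G\,F \cong {}$ identity follows because $F$ is fully faithful hence conservative and $F\,G\,F\,A \cong F\,A$.

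I expect the pushforward step to be the main obstacle: pinning down the definition of $A_\sigma$ so that every $\lamK/\appK$ and every explicit substitution typechecks, and then verifying — purely with the substitution calculus and equality reflection — that the resulting object is genuinely isomorphic to $(\Delta,\,\sigma)$ in the slice, triangle included. Conceptually this is just the singleton-contraction observation that a fibre cut out by an $\Id$-type is contractible onto its base, but the bookkeeping is where care is required.
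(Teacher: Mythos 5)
The paper does not actually prove this theorem: it defers entirely to Clairambault--Dybjer, so there is no in-text proof to compare against. Your argument is correct and is exactly the standard one from the cited source, built from the same ingredients the surrounding text sets up ($\Gamma\ext\K\,\Delta$ as the product, $\Id$ as the equalizer): $F\,A \defn (\Gamma\ext A,\,\p)$ with $t \mapsto (\p,\,t)$, full faithfulness via the comprehension isomorphism $f \equiv (\p\circ f,\,\q[f])$, and essential surjectivity via the ``graph'' type $\Sigma\,(\K\,\Delta)\,(\Id\,\ldots)$.

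The only step worth making explicit beyond what you wrote is the roundtrip $\Gamma\ext A_\sigma \to \Delta \to \Gamma\ext A_\sigma \equiv \id$: after equality reflection forces the $\Gamma$-component and the $\K$-isomorphism handles the $\Delta$-component, you still need the two $\Id$-witnesses to agree, which holds because $\Tm\,\Gamma\,(\Id\,t\,u) \simeq (t\equiv u)$ and the metatheoretic $\equiv$ satisfies UIP throughout the thesis. With that observation the triangle identities and naturality are routine, and your construction of $G$ from full faithfulness plus the explicit surjectivity witnesses is the standard way to conclude.
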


\emph{Remark.} In the flcwf of sets where types are $A \to \Set$ families, the
above theorem yields the equivalence of $A \to \Set$ and $(B : \Set) \times (B
\to A)$. This is sometimes called the ``family-fibration'' equivalence. It is
also a notable motivating example for univalence in type theory: it is not an
isomorphism of sets, but only an equivalence up to isomorphism of sets. So this
is an example for an equivalence which quite naturally arises even if we only
care about sets, but one which is not covered by set-level univalence, and
additionally requires univalence for groupoids, if we want to prove it as a
propositional equality.

\subsection{The Cwf of Finite Limit Cwfs}

The next task is to define the cwf part of $\bM$. We already know that objects
are flcwfs.

\subsubsection{Category}

A \textbf{morphism} $\bsigma : \bSub\,\bGamma\,\bDelta$ is an algebra
homomorphism, viewing flcwfs as algebraic structures. Hence, $\bsigma$ includes
a functor between underlying categories, but it also maps types to types and
terms to terms, and strictly preserves all structure.

\begin{notation}
We may implicitly project out the underlying maps from $\bsigma$. Hence, we
have the following four maps:
\begin{alignat*}{3}
  & \bsigma &&: \Con_{\bGamma} \to \Con_{\bDelta} \\
  & \bsigma &&: \Sub_{\bGamma}\,\Gamma\,\Delta \to \Sub_{\bDelta}\,(\bsigma\,\Gamma)\,(\bsigma\,\Delta)\\
  & \bsigma &&: \Ty_{\bGamma}\,\Gamma \to \Ty_{\bDelta}\,(\bsigma\,\Gamma)\\
  & \bsigma &&: \Tm_{\bGamma}\,\Gamma\,A \to \Tm_{\bDelta}\,(\bsigma\,\Gamma)\,(\bsigma\,A)
\end{alignat*}
We list some of the preservation equations as examples of usage:
\begin{alignat*}{3}
  \bsigma\,\emptycon &\equiv \emptycon \\
  \bsigma\,(\Gamma \ext A) &\equiv \bsigma\,\Gamma \ext \bsigma\,A\\
  \bsigma\,(A[\sigma]) &\equiv (\bsigma\,A)[\bsigma\,\sigma]\\
  \bsigma\,(t[\sigma]) &\equiv (\bsigma\,t)[\bsigma\,\sigma]\\
  \bsigma\,(\Sigma\,A\,B) &\equiv \Sigma\,(\bsigma\,A)\,(\bsigma\,B)\\
  \bsigma\,(\proj_1\,t) &\equiv \proj_1\,(\bsigma\,t)
\end{alignat*}
Above, we could have also included subscripts indicating the $\bGamma$ or
$\bDelta$ flcwf, as in $\bsigma\,\emptycon_{\bGamma} \equiv
\emptycon_{\bDelta}$; but these are quite easily inferable, so we omit them.
\end{notation}

\textbf{Identity morphisms} and \textbf{composition} are defined in the evident
way using identity functions and function composition in underlying maps, and
they satisfy the category laws.

The terminal object $\bemptycon:\bCon$ is given by having $\Con_{\bemptycon} \defn
\top$, $\Sub_{\bemptycon}\,\Gamma\,\Delta \defn \top$, $\Ty_{\bemptycon}\,\Gamma \defn \top$ and
$\Tm_{\bemptycon}\,\Gamma\,A \defn \top$, and all structure and equations are defined trivially.

\subsubsection{Family structure}
\label{sec:fqiit-family}

A type $\bA : \bTy\,\bGamma$ is a displayed flcwf over $\bGamma$.  As we have
seen before, displayed algebras can be computed as logical predicate
interpretations of algebraic signatures. Every $\bA$ component lies over the
corresponding $\bGamma$ component. Also note that a displayed flcwf includes a
displayed category, for which some results have been worked out in
\cite{displayedcats}.

\begin{notation} In situations where we need to refer to both ``base'' and
displayed things, we give \ul{underlined} names to contexts, substitutions,
types and terms in a base flcwf. For example, we may have $\ulGamma :
\Con_{\bGamma}$ living in $\bs{\Gamma : \Con}$, and $\Gamma :
\Con_{\bA}\,\ulGamma$ living in a displayed flcwf over $\bGamma$. We only use
underlining on 2LTT variable names, and overload flcwf component names for
displayed counterparts. For example, a $\Con$ component is named the same in
a base flcwf and a displayed one.
\end{notation}

Concretely, a displayed flcwf $\bA$ over $\bGamma$ has the following underlying
sets, which we call displayed contexts, substitutions, types and terms
respectively.
\begin{alignat*}{3}
  & \Con_{\bA} && : \Con_{\bGamma}\ra \Set\\
  & \Sub_{\bA} && : \Con_{\bA}\,\ulGamma \ra \Con_{\bA}\,\ulDelta \ra \Sub_{\bGamma}\,\ulGamma\,\ulDelta \ra \Set \\
  & \Ty_{\bA}  && : \Con_{\bA}\,\ulGamma \ra \Ty_{\bGamma}\,\ulGamma \ra \Set\\
  & \Tm_{\bA}  && : (\Gamma : \Con_{\bA}\,\ulGamma)\ra \Ty_{\bA}\,\Gamma\,\ulA \ra \Tm_{\bGamma}\,\ulGamma\,\ulA \ra \Set
\end{alignat*}
We list several components of $\bA$ below; note how every $\bA$ operation lies
over the corresponding $\bGamma$ operation. In our notation with implicit
arguments, equations in $\bA$ can be written the same way as in $\bGamma$, but
of course there is extra indexing involved, and the displayed equations are
well-typed because of their counterparts in the base.
\begingroup
\allowdisplaybreaks
\begin{alignat*}{3}
  & \id_{\bA} &&: \Sub_{\bA}\,\Gamma\,\Gamma\,\ul{\id_{\bGamma}}\\
  & \blank\circ_{\bA}\blank &&: \Sub_{\bA}\,\Delta\,\Xi\,\ulsigma \to \Sub_{\bA}\,\Gamma\,\Delta\,\uldelta
    \to \Sub_{\bA}\,\Gamma\,\Xi\,(\ulsigma \circ_{\bGamma} \uldelta)\\
  & \ms{idl}_{\bA} &&:  \id_{\bA} \circ_{\bA} \sigma \equiv \sigma \\
  & \ms{idr}_{\bA} &&:  \sigma \circ_{\bA} \id_{\bA} \equiv \sigma \\
  & \emptycon_{\bA} && : \Con_{\bA}\,\emptycon_{\bGamma}\\
  & \blank\ext_{\bA}\blank && : (\Gamma : \Con_{\bA}\,\ulGamma)\ra \Ty_{\bA}\,\Gamma\,\ulA \ra
                     \Con_{\bA}\,\Gamma\,(\ulGamma \ext_{\bGamma} \ulA)\\
  & \blank[\blank]_{\bA} && : \Ty_{\bA}\,\Delta\,\ulA \ra \Sub_{\bA}\,\Gamma\,\Delta\,\ulsigma
                     \ra \Ty_{\bA}\,\Gamma\, (\ulA[\ulsigma]_{\bGamma})\\
  & \blank[\blank]_{\bA} && : \Tm_{\bA}\,\Delta\,A\,\ult \ra (\sigma : \Sub_{\bA}\,\Gamma\,\Delta\,\ulsigma)
                        \ra \Tm_{\bA}\,\Gamma\, (A[\sigma]_{\bA})\,(\ult[\ulsigma]_{\bGamma})\\
  &\Id_{\bA} &&: \Tm_{\bA}\,\Gamma\,A\,\ult \to \Tm_{\bA}\,\Gamma\,A\,\ulu \to \Ty_{\bA}\,\Gamma\,(\Id_{\bGamma}\,\ult\,\ulu)\\
  &\K_{\bA} &&: \Con_{\bA}\,\ulDelta \to \{\Gamma : \Con_{\bA}\,\ulGamma\} \to \Ty_{\bA}\,\Gamma\,(\K_{\bGamma}\,\ulDelta)\\
  &\Sigma_{\bA} &&: (A : \Ty_{\bA}\,\Gamma\,\ulA) \to \Ty_{\bA}\,(\Gamma \ext_{\bA} A)\,\ulB \to
                   \Ty_{\bA}\,\Gamma\,(\Sigma_{\bGamma}\,\ulA\,\ulB)
\end{alignat*}
\endgroup
In the following we will often omit $_{\bGamma}$ and $_{\bA}$ subscripts on
components; for example, in the type $\Con_{\bA}\,\emptycon$, the $\emptycon$ is
clearly a base component in $\bGamma$.

A substituted type $\bs{A[\sigma] : \Ty\,\Gamma}$ is defined as follows, for
$\bs{A : \Ty\,\Delta}$ and $\bs{\sigma : \Sub\,\Gamma\,\Delta}$. We simply compose
underlying functions in $\bsigma$ with the underlying predicates in $\bA$:
\begin{alignat*}{3}
  & \Con_{\bs{A[\sigma]}}\,\ulGamma && \defn \Con_{\bA}\,(\bsigma\,\ulGamma)\\
  & \Sub_{\bs{A[\sigma]}}\,\Gamma\,\Delta\,\ulsigma && \defn
    \Sub_{\bA}\,\Gamma\,\Delta\,(\bsigma\,\ulsigma)\\
  & \Ty_{\bs{A[\sigma]}}\,\Gamma\,\ulA && \defn
      \Ty_{\bA}\,\Gamma\,(\bsigma\,\ulA)\\
  & \Tm_{\bs{A[\sigma]}}\,\Gamma\,A\,\ult && \defn
      \Tm_{\bA}\,\Gamma\,A\,(\bsigma\,\ult)
\end{alignat*}
It should be clear that $\bs{A[\sigma]}$ thus defined still supports all
displayed flcwf structure. For example, the displayed contexts in
$\bs{A[\sigma]}$ are elements of $\Con_{\bA}\,(\bsigma\,\ulGamma)$, but since
$\bsigma$ preserves all $\bGamma$-structure, we can also recover all displayed
structure. For example, if $\ulGamma$ is $\ul{\emptycon}$, we have
$\bsigma\,\ul{\emptycon} \equiv \ul{\emptycon}$, and we can reuse
$\emptycon_{\bA} : \Con_{\bA}\,\ul{\emptycon}$ to define the displayed empty
context in $\bs{A[\sigma]}$, and we can proceed analogously for all other
structure in $\bs{A[\sigma]}$.

Additionally, type substitution is functorial, i.e.\ $\bs{A[\id]} \equiv \bA$
and $\bs{A[\sigma \circ \delta]} \equiv \bs{A[\sigma][\delta]}$. This holds
because the underlying set families are defined by function composition.

\emph{Remark.} Types could be equivalently defined as objects in
$\bs{\ms{flcwf}}/\bGamma$, and type substitution could be given as pullback, but
in that case we would run into the well-known strictness issue, that type
substitution is functorial only up to isomorphism. This is not a critical issue,
as there are standard solutions for recovering strict substitutions from weak
ones \cite{kapulkin12simplicial,local-univ,clairambault2014biequivalence}. But
if we ever need to look inside the definitions in the model, using displayed
algebras yields less encoding overhead than strictifying pullbacks.

A term $\bs{t : \Tm\,\Gamma\,A}$ is a displayed flcwf section, which again
strictly preserves all structure. We use the same notation for the action of
$\bt$ that we use for $\bSub$. We have the following underlying maps:
\begin{alignat*}{3}
  & \bt &&: (\ulGamma : \Con_{\bGamma}) \to \Con_{\bA}\,\ulGamma \\
  & \bt &&: (\ulsigma : \Sub_{\bGamma}\,\Gamma\,\Delta) \to
             \Sub_{\bA}\,(\bt\,\Gamma)\,(\bt\,\Delta)\,\ulsigma\\
  & \bt &&: (\ulA : \Ty_{\bGamma}\,\Gamma) \to \Ty_{\bA}\,(\bt\,\Gamma)\,\ulA\\
  & \bt &&: (\ult : \Tm_{\bGamma}\,\Gamma\,A) \to \Tm_{\bA}\,(\bt\,\Gamma)\,(\bt\,A)\,\ult
\end{alignat*}

A substituted term $\bs{t[\sigma]}$ for $\bs{t : \Tm\,\Delta\,A}$ and
$\bs{\sigma : \Sub\,\Gamma\,\Delta}$ is again given by component-wise function
composition.

An extended context $\bs{\Gamma \ext A}$ is the \emph{total flcwf} of
$\bA$. This is defined by combining corresponding underlying sets with
$\Sigma$-types:
\begin{alignat*}{3}
  & \Con_{\bs{\Gamma \ext A}} &&\defn (\ulGamma : \Con_{\bGamma}) \times \Con_{\bA}\,\ulGamma\\
  & \Sub_{\bs{\Gamma \ext A}}\,(\ulGamma,\,\Gamma)\,(\ulDelta,\,\Delta) &&\defn (\ulsigma : \Sub_{\bGamma}\,\ulGamma\,\ulDelta) \times \Sub_{\bA}\,\Gamma\,\Delta\,\ulsigma\\
  & \Ty_{\bs{\Gamma \ext A}}\,(\ulGamma,\,\Gamma) &&\defn (\ulA : \Ty_{\bGamma}\,\ulGamma) \times \Ty_{\bA}\,\Gamma\,\ulA\\
  & \Tm_{\bs{\Gamma \ext A}}\,(\ulGamma,\,\Gamma)\,(\ulA,\,A) &&\defn (\ult : \Tm_{\bGamma}\,\ulGamma\,\ulA) \times \Tm_{\bA}\,\Gamma\,A\,\ult
\end{alignat*}
All structure is defined pointwise, using $\bGamma$-structure for first
projections and $\bA$-structure for second projections. $\bs{\Gamma \ext A}$ may
be viewed as a dependent generalization of products of flcwfs.

\textbf{Comprehension structure} follows from the above definition: $\bs{\p}$ is
component-wise first projection, $\bs{\q}$ is second projection and substitution
extension $\bs{\blank,\blank}$ is pairing.

With this, we have a cwf of flcws. \emph{Remark:} the theory of flcwfs is itself
algebraic and has a finitary QII signature. Hence, if we succeed building
semantics for finitary QII signatures, we get ``for free'' an flcwf of
flcwfs. Of course, we cannot rely on this when we are in the process of defining
the $\bM$ model in the first place. Checking that the $\bM$ model indeed works,
is the somewhat tedious task that we have to perform \emph{once}, in order to
get semantics for any other finitary QII theory.

\subsection{Type Formers}

\subsubsection{Strict constant families}

This was not included in the ToS specification, but it is quite useful, so we
shall define it. $\bs{\K\,\Delta : \Ty\,\Gamma}$ is defined by ignoring
$\bGamma$ inhabitants in all underlying sets:
\begin{alignat*}{3}
  & \Con_{\bs{\K\,\Delta}}\,\ulGamma &&\defn \Con_{\bDelta}\\
  & \Sub_{\bs{\K\,\Delta}}\,\Gamma\,\Delta\,\ulsigma &&\defn \Sub_{\bDelta}\,\Gamma\,\Delta\\
  &  \Ty_{\bs{\K\,\Delta}}\,\Gamma\,\ulA &&\defn \Ty_{\bDelta}\,\Gamma\\
  &  \Tm_{\bs{\K\,\Delta}}\,\Gamma\,A\,\ult &&\defn \Tm_{\bDelta}\,\Gamma\,A
\end{alignat*}
All structure is inherited from $\bDelta$. There is also a type substitution
rule, expressing that for $\bs{\sigma : \Sub\,\Gamma\,\Xi}$, we have
$\bs{(\K\,\{\Xi\}\,\Delta)[\sigma]} \equiv \bs{\K\,\{\Gamma\}\,\Delta}$. This
follows immediately from the above definition and the definition of type
substitution, since the base inhabitants are ignored the same way on both sides
of the equation. We also need to show $\bs{\Tm\,\Gamma\,(\K\,\Delta)} \equiv
\bs{\Sub\,\Gamma\,\Delta}$. This again follows directly from the $\bK$
definition. From $\bK$, we get
\begin{itemize}
\item The unit type, defined as $\bK\,\bemptycon : \bTy\,\bGamma$.
\item Categorical products of $\bGamma$ and $\bDelta$, defined as $\bs{\Gamma \ext \K\,\Delta}$.
\item The ability to define closed type formers as elements of $\bCon$.
\end{itemize}

\subsubsection{Universe}

Similarly to what we did in Definition \ref{def:presheaf-univ}, we define $\bU$
as a context, and use $\bK$ later to get the universe as a type. $\bU :
\bCon$ is defined to be the flcwf where objects are inner types, and morphisms
are outer functions between them:
\begin{alignat*}{3}
  &\Con_{\bU} &&\defn \Ty_0 \\
  &\Sub_{\bU}\,\Gamma\,\Delta &&\defn \Tm_0\,\Gamma \to \Tm_0\,\Delta \\
  &\Ty_{\bU}\,\Gamma    &&\defn \Tm_0\,\Gamma \to \Ty_0\\
  &\Tm_{\bU}\,\Gamma\,A &&\defn (\gamma : \Tm_0\,\Gamma) \to \Tm_0\,(A\,\gamma)
\end{alignat*}
Substitution for types and terms is defined by function composition. The empty
context is defined as the inner unit type $\top_0$, and context extension
$\Gamma \ext_{\bU} A$ is defined as $(\gamma : \Gamma) \times A\,\gamma$ using
inner $\Sigma$. We can also define $\Sigma_{\bU}$ and $\Id_{\bU}$ using inner
$\Sigma$ and identity.

For constant families, we do not need any additional assumption in the inner
theory, since it can be defined as $\K_{\bU}\,\{\Gamma\}\,\Delta \defn \Delta$,
and $\Sub_{\bU}\,\Gamma\,\Delta \equiv \Tm_{\bU}\,\Gamma\,(\K_{\bU}\,\Delta)$
follows immediately.
\\\\
\indent For $\bs{a : \Sub\,\Gamma\,\U}$, we have to define $\bs{\El\,a :
  \Ty\,\Gamma}$. This is given as the \emph{displayed flcwf of elements} of
$\ba$.

Background: from any functor $F : \mbbC \to \bs{\Set}$ we can construct the
category of elements $\int\!F$, where objects are in $(i : |\mbbC|) \times F\,i$
and morphisms between $(i,\,x)$ and $(j,\,y)$ are in $(f : \mbbC(i,\,j)) \times
(F\,f\,x \equiv y)$. If we take the second projections of components in $\int\!F$,
we get the displayed category of elements, which lies over $\mbbC$. We may also call
this a \emph{discrete displayed category}, in analogy to discrete categories whose
morphisms are trivial.

We extend this to flcwfs in the definition of $\bs{\El\,a}$. With this
definition, $\bs{\Gamma \ext \El\,a}$ will yield the flcwf of elements of $\ba$.
\begin{alignat*}{3}
  &\Con_{\bs{\El\,a}}\,\ulGamma &&\defn \Tm_0\,(\ba\,\ulGamma) \\
  &\Sub_{\bs{\El\,a}}\,\Gamma\,\Delta\,\ulsigma &&\defn \ba\,\ulsigma\,\Gamma \equiv \Delta \\
  &\Ty_{\bs{\El\,a}}\,\Gamma\,\ulA &&\defn \Tm_0\,(\ba\,\ulA\,\Gamma)\\
  &\Tm_{\bs{\El\,a}}\,\Gamma\,A\,\ult &&\defn \ba\,\ult\,\Gamma \equiv A
\end{alignat*}
Let us check that we have all other structure as well.
\begin{itemize}
  \item For contexts and types, the task is to exhibit elements of $\ba$ lying over
        specific base contexts and types.
  \item For terms and substitutions, the task is to exhibit equations which
    specify the action of $\ba$.
  \item Equations between terms and substitutions are trivial because of UIP (we need
    to show equations between equality proofs).
\end{itemize}
We summarize below the additional structure on top of the displayed category
part of $\bs{\El\,a}$.
\begin{itemize}
  \item
  For $\emptycon_{\bs{\El\,a}} : \Con_{\bs{\El\,a}}\,\ulemptycon$, the type can
  be simplified along the definition of $\Con_{\bs{\El\,a}}$ and
  structure-preservation by $\ba$ to $\Tm_0\,\top_0$. Hence,
  $\emptycon_{\bs{\El\,a}} \defn \tt_0$ is the unique definition. For $\epsilon
  : \Sub_{\bs{\El\,a}}\,\Gamma\,\emptycon_{\bs{\El\,a}}\,\ul{\epsilon}$, we have to show
  $\ba\,\ul{\epsilon}\,\Gamma \equiv \tt_0$, which holds by the uniqueness of $\tt_0$.

  \item
  For $\Gamma \ext_{\bs{\El\,a}} A : \Con_{\bs{\El\,a}}\,(\ulGamma \ext \ulA)$,
  the target type unfolds to $\Tm_0\,(\ba\,(\ulGamma \ext \ulA))$, which in
  turn simplifies to $\Tm_0\,((\gamma : \ba\,\ulGamma) \times \ba\,\ulA\,\gamma)$.
  Since $\Gamma : \Tm_0\,(\ba\,\ulGamma)$ and $A : \Tm_0\,(\ba\,\ulA\,\Gamma)$,
  we define $\Gamma \ext_{\bs{\El\,a}} A$ as $(\Gamma,\,A)$.

  \item
  For comprehension, we have to show the following, after simplifying types:
  \begin{alignat*}{3}
    & \p &&: \ba\,\ul{\p}\,(\Gamma,\,A) &&\equiv \Gamma \\
    & \q &&: \ba\,\ul{\q}\,(\Gamma,\,A) &&\equiv A \\
    & (\sigma,\,t) &&: \ba\,(\ulsigma,\,\ult)\,\Gamma &&\equiv (\Delta,\,A)
  \end{alignat*}
  For $\p$ and $\q$, equations follow from preservation by $\ba$. For pairing,
  the goal further simplifies to $(\ba\,\ulsigma\,\Gamma,\,\ba\,\ult\,\Gamma)
  \equiv (\Delta,\,A)$. Then, the first and second components are equal by
  the $\sigma$ and $t$ hypotheses.

  \item
  Assuming $A : \Ty_{\bs{\El\,a}}\,\Delta\,\ulA$ and $\sigma :
  \Sub_{\bs{\El\,a}}\,\Gamma\,\Delta\,\ulsigma$, we aim to define
  $A[\sigma]_{\bs{\El\,a}} :
  \Ty_{\bs{\El\,a}}\,\Gamma\,(\ulA[\ulsigma])$. Simplifying types, $A :
  \Tm_0\,(\ba\,\ulA\,\Delta)$, $\sigma : \ba\,\ulsigma\,\Gamma \equiv \Delta$
  and the target type is $\Tm_0\,(\ba\,(\ulA[\ulsigma])\,\Gamma)$, which is the
  same as $\Tm_0\,(\ba\,\ulA\,(\ba\,\ulsigma\,\Gamma))$, by the preservation of
  $\blank[\blank]$ by $\ba$. Hence, by the $\sigma$ assumption, the target
  type is $\Tm_0\,(\ba\,\ulA\,\Delta)$, so we give the following definition:
  \[
    A[\sigma]_{\bs{\El\,a}} \defn A
  \]
  This is clearly functorial; moreover, substitution rules for the other type
  formers hold trivially.
  \item Term substitution is given by transitivity of equality.

  \item For $\Id_{\bs{\El\,a}}\,t\,u : \Ty_{\bs{\El\,a}}\,\Gamma\,(\Id\,\ult\,\ulu)$,
    the goal type is $\Tm_0\,(\ba\,(\Id\,\ult\,\ulu)\,\Gamma)$, hence
    $\Tm_0\,(\ba\,\ult\,\Gamma = \ba\,\ulu\,\Gamma)$. This holds by
    $t : \ba\,\ult\,\Gamma \equiv A$ and $u : \ba\,\ult\,\Gamma \equiv A$.
    Reflexivity and equality reflection are trivial by UIP.

  \item
    For $A : \Ty_{\bs{\El\,a}}\,\Gamma\,\ulA$ and $B :
    \Ty_{\bs{\El\,a}}\,(\Gamma \ext A)\,\ulB$, we aim to define
    $\Sigma_{\bs{\El\,a}}\,A\,B :
    \Ty_{\bs{\El\,a}}\,\Gamma\,(\Sigma\,\ulA\,\ulB)$, hence
    \begin{alignat*}{3}
      &\Sigma_{\bs{\El\,a}}\,A\,B : \Tm_0\,(\ba\,(\Sigma\,\ulA\,\ulB)\,\Gamma)\\
      &\Sigma_{\bs{\El\,a}}\,A\,B : \Tm_0\,((A : \ba\,\ulA\,\Gamma) \times \ba\,\ulB\,(\Gamma,\,A))\\
      &\Sigma_{\bs{\El\,a}}\,A\,B \defn (A,\,B)
    \end{alignat*}
    Projections and pairing proceed analogously to what we did for comprehension.
  \item
  For $\K_{\bs{\El\,a}}\,\Delta : \Ty_{\bs{\El\,a}}\,\Gamma\,(\K\,\ulDelta)$,
  the target type simplifies to $\Tm_0\,(\ba\,\ulDelta)$, hence we have
  $\K_{\bs{\El\,a}}\,\Delta \defn \Delta$. For the specifying sort equation of $\K$,
  we have to show
  \[
  \Sub_{\bs{\El\,a}}\,\Gamma\,\Delta\,\ulsigma \equiv
  \Tm_{\bs{\El\,a}}\,\Gamma\,(\K_{\bs{\El\,a}}\,\Delta)\,\ulsigma
  \]
  where $\ulsigma : \Sub\,\ulGamma\,\ulDelta$ but at the same time $\ulsigma :
  \Tm\,\ulGamma\,(\K\,\ulDelta)$ because of the $\K$ sort equation in the base.
  Fortunately, both sides simplify to $\ba\,\ulsigma\,\Gamma \equiv \Delta$.
\end{itemize}

\noindent We still have to check $\bs{(\El\,a)[\sigma]} \equiv \bs{\El\,(a \circ
  \sigma)}$, the naturality rule for $\bEl$. We only have to check equality of
underlying sets, $\Con$ and $\Ty$ formers, since terms and substitutions are
equal by UIP. For underlying sets, both sides compute to the following:
\begin{alignat*}{3}
  &\Con\,\ulGamma &&\defn \Tm_0\,(\ba\,(\bsigma\,\ulGamma)) \\
  &\Sub\,\Gamma\,\Delta\,\ulsigma &&\defn \ba\,(\bsigma\,\ulsigma)\,\Gamma \equiv \Delta \\
  &\Ty\,\Gamma\,\ulA &&\defn \Tm_0\,(\ba\,(\bsigma\,\ulA)\,\Gamma)\\
  &\Tm\,\Gamma\,A\,\ult &&\defn \ba\,(\bsigma\,\ult)\,\Gamma \equiv A
\end{alignat*}
Since $\bsigma$ also strictly preserves all structure, and we simply replace $\ba$ action
by the composite $\bs{a \circ \sigma}$ action, it is straightforward to check that $\Con$
and $\Ty$ formers are also the same on both sides.

At this point, we have $\bs{\U : \Con}$ and $\bs{\El : \Sub\,\Gamma\,\U}$. Let us
rename them to $\bU'$ and $\bEl'$ respectively, and define the usual ``open''
versions:
\begin{alignat*}{3}
  &\bs{\U} : \bs{\Ty\,\Gamma} &&\bs{\El} : \bs{\Tm\,\Gamma\,\U} \to \bs{\Ty\,\Gamma}\\
  &\bs{\U} \defn \bs{\K\,\U'}\hspace{2em}&&\bs{\El}\,\ba \defn \bEl'\,\ba
\end{alignat*}

\subsubsection{Identity}

Assuming $\bs{t,\,u : \Tm\,\Gamma\,A}$, extensional identity $\bs{\Id\,t\,u}$ is
defined as component-wise equality:
\begin{alignat*}{3}
  & \Con_{\bs{\Id\,t\,u}}\,\ulGamma &&\defn \bt\,\ulGamma \equiv \bu\,\ulGamma\\
  & \Sub_{\bs{\Id\,t\,u}}\,\Gamma\,\Delta\,\ulsigma &&\defn \bt\,\ulsigma \equiv \bu\,\ulsigma\\
  & \Ty_{\bs{\Id\,t\,u}}\,\Gamma\,\ulA &&\defn \bt\,\ulA \equiv \bu\,\ulA\\
  & \Tm_{\bs{\Id\,t\,u}}\,\Gamma\,A\,\ult &&\defn \bt\,\ult \equiv \bu\,\ult
\end{alignat*}
All other structure follows from structure-preservation of $\bt$ and $\bu$. For
the simplest example, $\emptycon_{\bs{\Id\,t\,u}} : \bt\,\ulemptycon \equiv
\bu\,\ulemptycon$ holds because $\bt$ and $\bu$ both preserve $\ulemptycon$. The
rule $\bs{(\Id\,t\,u)[\sigma]} \equiv \bs{\Id\,(t[\sigma])\,(u[\sigma])}$ is
straightforward to check: we only have to look at the underlying sets, where
e.g. both sides have $\Con\,\ulGamma \equiv (\bt\,(\bsigma\,\ulGamma) \equiv
\bu\,(\bsigma\,\ulGamma))$. It is also evident that
$\bs{\Tm\,\Gamma\,(\Id\,t\,u)}$ is equivalent to $\bt \equiv \bu$, that is, we
have reflexivity and equality reflection.

\subsubsection{Internal product type}

For $\bs{a : \Tm\,\Gamma\,\U}$ and $\bs{B : \Ty\,(\Gamma \ext \El\,a)}$, we aim
to define $\bs{\Pi\,a\,B : \Ty\,\Gamma}$. This is a dependent product of displayed
flcwfs, indexed over a \emph{discrete} domain. Discreteness is critical: since
morphisms in $\bs{\El\,a}$ are proof-irrelevant and invertible (because they are
equations), we avoid the variance issues that preclude general $\Pi$-types in
the cwf of categories \cite[Secion~A1.5]{johnstone2002sketches}.

The direct definition would be to define underlying sets as products, indexed
over corresponding components in $\bs{\El\,a}$:
\begin{alignat*}{3}
  & \Con_{\bs{\Pi\,a\,B}}\,\ulGamma &&\defn (\gamma : \ba\,\ulGamma) \to \Con_{\bB}\,(\ulGamma,\,\gamma)\\
  & \Sub_{\bs{\Pi\,a\,B}}\,\Gamma\,\Delta\,\ulsigma &&\defn
    \{\gamma : \ba\,\ulGamma\}\{\delta : \ba\,\ulDelta\}(\sigma : \Sub_{\bs{\El\,a}}\,\gamma\,\delta\,\ulsigma) \to \Sub_{\bB}\,(\Gamma\,\gamma)\,(\Delta\,\delta)\,(\ulsigma,\,\sigma)\\
  & \Ty_{\bs{\Pi\,a\,B}}\,\Gamma\,\ulA &&\defn
    \{\gamma : \ba\,\ulGamma\}(\alpha : \ba\,\ulA\,\gamma) \to \Ty_{\bB}\,(\Gamma\,\gamma)\,(\ulA,\,\alpha)\\
  & \Tm_{\bs{\Pi\,a\,B}}\,\Gamma\,A\,\ult &&\defn
    \{\gamma : \ba\,\ulGamma\}\{\alpha : \ba\,\ulA\,\gamma\}(t : \Tm_{\bs{\El\,a}}\,\gamma\,\delta\,\ult)
      \to
    \Tm_{\bB}\,(\Gamma\,\gamma)\,(\A\,\alpha)\,(\ult,\,t)
\end{alignat*}
But just like in Definitions \ref{def:simple-morphism} and
\ref{def:simple-section}, we can contract the $\Sub$ and $\Tm$ definitions,
since $\Sub_{\bs{\El\,a}}\,\gamma\,\delta\,\ulsigma \equiv
(\ba\,\ulsigma\,\gamma \equiv \delta)$ and
$\Tm_{\bs{\El\,a}}\,\gamma\,\alpha\,\ult \equiv (\ba\,\ult\,\gamma \equiv
\alpha)$.
\begin{alignat*}{3}
  & \Con_{\bs{\Pi\,a\,B}}\,\ulGamma &&\defn (\gamma : \ba\,\ulGamma) \to \Con_{\bB}\,(\ulGamma,\,\gamma)\\
  & \Sub_{\bs{\Pi\,a\,B}}\,\Gamma\,\Delta\,\ulsigma &&\defn
    (\gamma : \ba\,\ulGamma)\to \Sub_{\bB}\,(\Gamma\,\gamma)\,(\Delta\,(\ba\,\ulsigma\,\gamma))\,(\ulsigma,\,\refl)\\
  & \Ty_{\bs{\Pi\,a\,B}}\,\Gamma\,\ulA &&\defn
  \{\gamma : \ba\,\ulGamma\}(\alpha : \ba\,\ulA\,\gamma)
  \to \Ty_{\bB}\,(\Gamma\,\gamma)\,(\ulA,\,\alpha)\\
  & \Tm_{\bs{\Pi\,a\,B}}\,\Gamma\,A\,\ult &&\defn
    (\gamma : \ba\,\ulGamma) \to \Tm_{\bB}\,(\Gamma\,\gamma)\,(\A\,(\ba\,\ult\,\gamma))\,(\ult,\,\refl)
\end{alignat*}
With the contracted definition, $\Sub$ and $\Tm$ are only indexed over displayed
objects and types, but not over displayed morphisms or terms anymore. So it is
apparent that we cannot have issues with indexing variance. All structure in
$\bs{\Pi\,a\,B}$ is pointwise inherited from $\bB$. We list some examples below
for definitions.
\begingroup
\allowdisplaybreaks
\begin{alignat*}{3}
  &\emptycon_{\bs{\Pi\,a\,B}}\,\gamma &&\defn \emptycon_{\bB}\\
  &(\Gamma \ext_{\bs{\Pi\,a\,B}} A)\,(\gamma,\,\alpha) &&\defn (\Gamma\,\gamma \ext_{\bB} A\,\alpha)\\
  &\id_{\bs{\Pi\,a\,B}}\,\gamma &&\defn \id_{\bB}\\
  &(\sigma \circ_{\bs{\Pi\,a\,B}} \delta)\,\gamma &&\defn \sigma\,\gamma \circ_{\bB} \delta\,\gamma\\
  &A[\sigma]_{\bs{\Pi\,a\,B}}\,\{\gamma\}\,\alpha &&\defn (A\,\alpha)[\sigma\,\gamma]_{\bB}\\
  &\K_{\bs{\Pi\,a\,B}}\,\Delta\,\alpha &&\defn \K_{\bB}\,(\Delta\,\alpha)
\end{alignat*}
\endgroup
For the specifying isomorphism $\bs{(\app,\,\lam) : \Tm\,\Gamma\,(\Pi\,a\,B)
  \simeq \Tm\,(\Gamma \ext \El\,a)\,B}$, note that the difference in
presentation is exactly component-wise currying and uncurrying. For instance, in
$\bs{t : \Tm\,\Gamma\,(\Pi\,a\,B)}$, the underlying action on contexts has the following
type:
\[
  (\ulGamma : \Con_{\bGamma})(\gamma : \ba\,\ulGamma) \to \Con_{\bB}\,(\ulGamma,\,\gamma)
\]
While in $\bs{t : \Tm\,(\Gamma \ext \El\,a)\,B}$, we have
\[
  ((\ulGamma,\,\gamma) : (\ulGamma : \Con_{\bGamma}) \times \ba\,\ulGamma) \to \Con_{\bB}\,(\ulGamma,\,\gamma)
\]
So $\bs{\app}$ and $\bs{\lam}$ are defined as component-wise uncurrying and
currying respectively.  Naturality of $\bs{\Pi}$ and $\bs{\app}$ again follows
from the fact that flcwf morphisms strictly preserve all structure, and
substitution is component-wise function composition.

\subsubsection{External product type}

For $\Ix : \Ty_0$ and $\bB : \Tm_0\,\Ix \to \bs{\Ty\,\Gamma}$, we define
$\bs{\Pie}\,\Ix\,\bB : \bs{\Ty\,\Gamma}$ as the $\Ix$-indexed product of
a family of displayed flcwfs.
\begin{alignat*}{3}
  & \Con_{\bs{\Pie}\,\Ix\,\bB}\,\ulGamma &&\defn (i : \Tm_0\,\Ix) \to \Con_{\bB\,i}\,\ulGamma\\
  & \Sub_{\bs{\Pie}\,\Ix\,\bB}\,\Gamma\,\Delta\,\ulsigma &&\defn (i : \Tm_0\,\Ix) \to \Sub_{\bB\,i}\,(\Gamma\,i)\,(\Delta\,i)\,\ulsigma\\
  &  \Ty_{\bs{\Pie}\,\Ix\,\bB}\,\Gamma\,\ulA &&\defn (i : \Tm_0\,\Ix) \to \Ty_{\bB\,i}\,(\Gamma\,i)\,\ulA\\
  &  \Tm_{\bs{\Pie}\,\Ix\,\bB}\,\Gamma\,A\,\ult &&\defn (i : \Tm_0\,\Ix) \to \Tm_{\bB\,i}\,(\Gamma\,i)\,(A\,i)\,\ult
\end{alignat*}
All structure is defined in the evident pointwise way. $\bs{\appe}$ and
$\bs{\lame}$ are defined by component-wise flipping of function arguments. This
concludes the definition of the $\bM$ model.

\begin{myexample}
We look at the computation of a semantic flcwf, in the simple case of the flcwf
of $\Nat$-algebras. Recall that the signature is
\[
  \ms{NatSig} \defn \emptycon \ext (N : \U) \ext (\zero : \El\,N) \ext (\suc : N \funi \El\,N)
\]
We evaluate $\ms{NatSig}$ in $\bM$ entry-wise. We start from $\bemptycon$, the
terminal flcwf where algebras are elements of $\top$. Then, moving left to
right, we take the total flcwf of each type in the signature. From $\U$, we get
the product of $\top$ and the flcwf of sets, which is equivalent to simply the
flcwf of sets. Second, we extend this with the semantic $\El\,N$, which is the
\emph{displayed flcwf of points of sets}, to get the flcwf of pointed
sets. Finally, by extension with $N \funi \El\,N$, we get the flcwf of
$\Nat$-algebras.

Let us also look at some components of the resulting flcwf. Algebras, displayed
algebras, morphisms and sections have been already discussed before, so we look
at other components. We omit the leading $\top$ components everywhere in the following.

$\emptycon$ is the terminal $\Nat$-algebra, i.e.\ $\emptycon \equiv
(\top,\,\tt,\,\lambda\,\_.\,\tt)$. Context extension $\blank\ext\blank : (\Gamma
: \Con) \to \Ty\,\Gamma \to \Con$ constructs the total algebra of a displayed
algebra.
\begin{alignat*}{3}
&(N,\,z,\,s) \ext (N^D,\,z^D,\,s^D) \equiv\\
&\hspace{1em}(((n : N) \times N^D\,n),\,(z,\,z^D),\,(\lambda\,(n,\,n^D).\,(s\,n,\,s^D\,n\,n^D)))
\end{alignat*}
$\p$ and $\q$ respectively project first and second components from a total
algebra. For $t,\,u : \Tm\,(N,\,z,\,s)\,(N^D,\,z^D,\,s^D)$, $\Id\,t\,u$ is the
displayed $\Nat$-algebra which expresses equality of $\Nat$-algebra sections.
Let us review the definition of sections:
\begin{alignat*}{3}
&\rlap{$\Tm\,(N,\,z,\,s)\,(N^D,\,z^D,\,s^D) \equiv$}\\
  &\hspace{1em}        \,&&(N^S &&: (n : N) \to N^D\,n)\\
  &\hspace{1em} \times \,&&(z^S &&: N^S\,z \equiv z^D)\\
  &\hspace{1em} \times \,&&(s^S &&: (n : N) \to N^S\,(s\,n) \equiv s^D\,n\,(N^S\,n))
\end{alignat*}
We have that
\begin{alignat*}{3}
  & \Id\,(N^S_0,\,z^S_0,\,s^S_0)\,(N^S_1,\,z^S_1,\,s^S_1) \equiv\\
  & \hspace{1em}((\lambda\,n.\,N^S_0\,n \equiv N^S_1\,n),\,(\_ : N^S_0\,z \equiv N^S_1\,z),\,
  (\lambda\,n.\,(\_ : N^S_0\,(s\,n) \equiv N^S_1\,(s\,n))))
\end{alignat*}
The underscores denote omitted equality proofs; they follow from the $z^S$ and
$s^S$ components. It should be apparent that $\Tm\,\Gamma\,(\Id\,t\,u)$ is isomorphic to $t
\equiv u$; this follows from function extensionality and decomposition of
equalities of pairs. Thus, equality reflection holds in the flcwf of
$\Nat$-algebras. Note that we do not need to use equality reflection for
$\blank\equiv\blank$ to show this; it is simply a reshuffling of components along $\funext$.

$\K : \Con \to \{\Delta : \Con\} \to \Ty\,\Delta$ yields a non-dependent displayed algebra:
\[
\K\,(N,\,z,\,s)\,\{N',\,z',\,s'\} \equiv (\lambda\,\_.\,N,\,z,\,\lambda\,n\,\_.\,s\,n)
\]
With this definition, we indeed have that $\Tm\,\Gamma\,(\K\,\Delta) \equiv
\Sub\,\Gamma\,\Delta$.

$\Sigma : (A : \Ty\,\Gamma) \to \Ty\,(\Gamma \ext A) \to \Ty\,\Gamma$ is the
evident parameterized variant of $\blank\ext\blank$:
\begin{alignat*}{3}
  &\Sigma\,(N^D,\,z^D,\,s^D)\,(N^{D'},\,z^{D'},\,s^{D'}) \defn\\
  & \hspace{1em}((\lambda\,n.\,(n^D : N^D\,n) \times N^{D'}\,(n,\,n^D)),\,\\
  & \hspace{1em}(z^D,\,z^{D'}),\,\\
  & \hspace{1em}(\lambda\,n\,(n^D,\,n^{D'}).\,(s^D\,n\,n^D,\,s^{D'}\,(n,\,s^D\,n\,n^D)\,n^{D'})))
\end{alignat*}
\end{myexample}

\subsection{Recovering AMDS Interpretations}

We have defined the $\bM$ model in a ``bundled'' fashion, but sometimes we will
also need to refer to pieces of it. In Figure \ref{fig:fqiit-model} we have a
summary of the model. On the left, the rows are labeled with components of ToS,
while on the top we have components of flcwf. The individual rows can be further
unfolded, as each of them contains multiple components. Likewise the $\Sigma$,
$\Id$ and $\K$ columns can be unfolded. We get the whole model by filling every
cell of the unfolded table with a definition. Of course, many of these cells are
equations between equations, hence trivial by UIP.

This setup is very regular and convenient because we can extract a displayed
ToS model from any column, which may depend on columns to the left. The whole
model is the total model of all columns. For example, the $\Con$ column does not
depend on anything, so it is a plain model. The $\Ty$ column is displayed over
$\Con$. The $\Tm$ column depends on $\Con$ and $\Ty$, but it does not depend on $\Sub$.

See also Appendix \ref{app:fqii-amds} for a tabular specification of the AMDS
interpretations.

From each displayed model, we get an eliminator, i.e.\ a family of
interpretation functions. We note $\blank^A$, $\blank^M$, $\blank^D$ and
$\blank^S$ in the table, but in principle we could refer to the eliminators of
other columns as well. The interpretation functions can be in multiple ways.

One extreme choice is to separately take the eliminator for each column, and
refer to previous eliminators in each displayed model; e.g.\ referring to the
eliminator functions $\blank^A$ in the definition of the $\Ty$ column.  Another
extreme choice is to take the recursor for the entire model, then project out
components from the result.E.g.\ we get $\blank^A$ by projecting out the first
components of the interpretations of ToS objects.

Generally, we can bundle columns as we like. However, all variations coincide
because of the initiality of ToS syntax.

\begin{figure}
\begin{center}
\begin{tabular}{ |c|c|c|c|c|c|c|c|c|  }
 \hline
   & \multicolumn{5}{|c|}{cwf} & \multicolumn{3}{c|}{fl}\\
 \cline{2-9}
   & $\Con$ & $\Sub$ & $\Ty$ & $\Tm$ & $...$ & $\Sigma$ & $\Id$ & $\K$ \\
 \hline
   cwf    & \multirow{5}{2em}{$\blank^A$}&\multirow{5}{2em}{$\blank^M$}&\multirow{5}{2em}{$\blank^D$}&\multirow{5}{2em}{$\blank^S$}& & & & \\
   $\U$   &                              &              &                  &    & & & & \\
   $\Id$  &                              &              &                  &    & & & & \\
   $\Pi$  &                              &              &                  &    & & & & \\
   $\Pie$ &                              &              &                  &    & & & & \\
 \hline
\end{tabular}
\end{center}
\caption{The flcwf model of the theory of signatures}
\label{fig:fqiit-model}
\end{figure}

\subsection{Left Adjoints of Substitutions}
\label{sec:fqii-left-adjoint}

In this section we show that if all signatures have initial algebras, then the
semantic interpretation of each $\nu : \Sub\,\Omega\,\Delta$ has a left adjoint
functor. We have the following setup.
\begin{itemize}
  \item
  We write $\llb\blank\rrb$ for the interpretation into the flcwf model $\bM$.
  \item
  We close types in ToS under $\top$ and $\Sigma$, that is, we have $\top :
  \Ty\,\Gamma$ and $\Sigma : (A : \Ty\,\Gamma) \to \Ty\,(\Gamma \ext A) \to
  \Ty\,\Gamma$. The flcwf semantics can be immediately extended with these type
  formers: since flcwfs are given by an FQII signature, they form an flcwf
  themselves and support $\top$ (as $\K\,\emptycon$) and $\Sigma$. In the
  following we will need to talk about signatures depending on signatures, and
  $\top$ and $\Sigma$ are more convenient for this purpose than telescopes.
\end{itemize}

Given $\nu : \Sub\,\Omega\,\Delta$ in the ToS syntax, we get $\sem{\nu} :
\sem{\Omega} \to \sem{\Delta}$ as a functor between $\sem{\Omega}$ and
$\sem{\Delta}$ categories of algebras. We seek to construct some $\ms{L} :
\sem{\Delta} \to \sem{\Omega}$ such that $\ms{L} \dashv \sem{\nu}$.

The basic idea is the following: the existence of left adjoints is equivalently
characterized by having an initial object in the comma category
$\delta/\sem{\nu}$ for each $\delta : \Delta^A$
\cite[Section~IV]{maclane98categories}. Thus, it is enough to find some
signature $\Psi$ such that $\sem{\Psi}$ is equivalent to
$\delta/\sem{\nu}$, and by assumption we get an initial object. The objects
of $\delta/\sem{\nu}$ consist of the following:
\[
  (\omega : \Omega^A) \times (\eta : \Delta^M\,\delta\,(\nu^A\,\omega))
\]
Of the two components, $\omega : \Omega^A$ can be clearly represented as the
$\Omega$ signature. The $\eta$ component is a bit
more complicated. We need to represent a $\Delta$-morphism, but whose domain is
an external algebra, and whose codomain is an algebra internal to the ToS
syntax. In other words, we need a notion of ``heterogeneous'' morphism, where
the domain lives in the usual flcwf semantics, but the codomain lives in the
syntactic slice model $\ToS/\Omega$.

\begin{mydefinition}[\textbf{Heterogeneous morphisms}]
Fixing $\Omega : \Con$, we define $\blank^{HM}$ by induction on the ToS.
\begin{alignat*}{3}
  &\blank^{HM} &&: (\Gamma : \Con) \to \Gamma^A \to \Sub\,\Omega\,\Gamma \to \Ty\,\Omega\\
  &\blank^{HM} &&: (\sigma : \Sub\,\Gamma\,\Delta) \to \Tm\,\Omega\,(\Gamma^{HM}\,\gamma_0\,\gamma_1) \to \Tm\,\Omega\,(\Delta^{HM}\,(\sigma^A\,\gamma_0)\,(\sigma \circ\,\gamma_1))\\
  &\blank^{HM} &&: (A : \Ty\,\Gamma) \to A^A\,\gamma_0 \to \Tm\,\Omega\,(A[\gamma_1])
  \to \Tm\,\Omega\,(\Gamma^{HM}\,\gamma_0\,\gamma_1) \to \Ty\,\Omega\\
  &\blank^{HM} &&: (t : \Tm\,\Gamma\,A)(\gamma^{HM} : \Tm\,\Omega\,(\Gamma^{HM}\,\gamma_0\,\gamma_1))
   \to \Tm\,\Omega\,(A^{HM}\,(t^A\,\gamma_0)\,(t[\gamma_1])\,\gamma^{HM})
\end{alignat*}
The interpretation on contexts sums up the difference between the
``homogeneous'' $\blank^{M}$ and the current one. In the homogeneous
interpretation, we have $\Gamma^{M} : \Gamma^A \to \Gamma^A \to \Set$, in the
heterogeneous one the codomain of the relation is syntactic, and the return type
as well. We use $\top$ and $\Sigma$ in ToS to interpret contexts:
\begin{alignat*}{3}
  &\emptycon^{HM}\,\gamma_0\,\gamma_1 &&\defn \top\\
  &(\Gamma\ext A)^{HM}\,(\gamma_0,\,\alpha_0)\,(\gamma_1\,\alpha_1) &&\defn
    \Sigma\,(\gamma^{HM} : \Gamma^{HM}\,\gamma_0\,\gamma_1)\,(A^{HM}\,\alpha_0\,\alpha_1\,\gamma^{HM})
\end{alignat*}
We use a nameful notation for $\Sigma$-binding on the right hand side. In the
cwf interpretation we similarly reuse ToS type formers in a mechanical way,
following the definitions of the homogeneous $\blank^{HM}$.

\noindent $\U$ is interpreted using external function types:
\begin{alignat*}{3}
  & \U^{HM} : (a_0 : \Ty_0)(a_1 : \Tm\,\Omega\,\U) \to \Tm\,\Omega\,(\Gamma^{HM}\,\gamma_0\,\gamma_1)
  \to \Ty\,\Omega\\
  & \U^{HM}\,a_0\,a_1\,\gamma^{HM} \defn a_0 \toe \El\,a_1
\end{alignat*}
Note that this does not work if $a_0$ is syntactic and $a_1$ is external, as we
have no function type in ToS with external codomain; so $\blank^{HM}$ would not
work with an external second parameter. $\El^{HM}$ uses the $\Id$ type in ToS:
\begin{alignat*}{3}
  & (\El\,a)^{HM} : a^A\,\gamma_0 \to \Tm\,\Omega\,(\El\,(a[\gamma_1])) \to \Ty\,\Omega\\
  & (\El\,a)^{HM}\,\alpha_0\,\alpha_1\,\gamma^{HM} \defn \Id\,(a^{HM}\,\gamma^{HM}\,\alpha_0)\,\alpha_1
\end{alignat*}
In $\Pi$ we give the usual pointwise definition, using the external product type:
\[
  (\Pi\,a\,B)^{HM}\,t_0\,t_1\,\gamma^{HM} \defn (\alpha : a^A\,\gamma_0) \toe B^{HM}\,(t_0\,\alpha)\,(t_1\,(a^{HM}\,\gamma^{HM}\,\alpha))\,(\gamma^{HM},\,\refl)
\]
In $\Id$, we reuse the $\Id$ in ToS:
\begin{alignat*}{3}
  &(\Id\,t\,u)^{HM} : t^A\,\gamma_0 \equiv u^A\,\gamma_0 \to \Tm\,\Omega\,(\Id\,(t[\gamma_1])\,(u[\gamma_1]))
   \to \Tm\,\Omega\,(\Gamma^{HM}\,\gamma_0\,\gamma_1) \to \Ty\,\Omega\\
  &(\Id\,t\,u)^{HM}\,p_0\,p_1\,\gamma^{HM} \defn  \Id\,(t^{HM}\,\gamma^{HM})\,(u^{HM}\,\gamma^{HM})
\end{alignat*}
External products are again external products.
\[
  (\Pie\,\ms{Ix}\,B)^{HM}\,t_0\,t_1\,\gamma^{HM} \defn (i : \ms{Ix}) \toe (B\,i)^{HM}\,(t_0\,i)\,(t_1\,i)\,\gamma^{HM}
\]
The newly added $\top$ and $\Sigma$ type formers are evident:
\begin{alignat*}{3}
  &\top^{HM}\,\tt\,\tt\,\gamma^M \defn \top\\
  &(\Sigma\,A\,B)^{HM}\,(\alpha_0,\,\beta_0)\,(\alpha_1,\,\beta_1) \defn\\
  &\hspace{2em}\Sigma\,(\alpha^{HM} : A^{HM}\,\alpha_0\,\alpha_1\,\gamma^M)\,
          (B^{HM}\,\beta_0\,\beta_1\,(\gamma^{HM},\,\alpha^{HM}))
\end{alignat*}
\end{mydefinition}

\begin{mydefinition}[\textbf{Representing signature}]
\label{def:fqii-rep-signature}
\noindent For $\nu : \Sub\,\Omega\,\Delta$ and $\delta : \Delta^A$, we define the
signature which represents $\delta/\sem{\nu}$:
\[
  \ms{Sig}_{\delta/\sem{\nu}} \defn \Omega \ext \Delta^{HM}\,\delta\,\nu
\]
Now, we have that
\begin{alignat*}{3}
  &(\ms{Sig}_{\delta/\sem{\nu}})^A &&\equiv (\omega : \Omega^A) \times ((\Delta^{HM}\,\delta\,\nu)^A\,\omega)\\
  &(\ms{Sig}_{\delta/\sem{\nu}})^M\,(\omega_0,\,\eta_0)\,(\omega_1,\,\eta_1) &&\equiv (\omega^M : \Omega^M\,\omega_0\,\omega_1) \times ((\Delta^{HM}\,\delta\,\nu)^M\,\eta_0\,\eta_1\,\omega^M)
\end{alignat*}
It remains to show that $\sem{\ms{Sig}_{\delta/\sem{\nu}}}$ is indeed equivalent
to $\delta/\sem{\nu}$. It suffices to show that sets of objects and morphisms
are isomorphic. We need the following:
\begin{alignat*}{3}
  & A^{\simeq} &&: (\Delta^{HM}\,\delta\,\nu)^A\,\omega &&\simeq \Delta^M\,\delta\,(\nu^A\,\omega)\\
  & M^{\simeq} &&: (\Delta^{HM}\,\delta\,\nu)^M\,\eta_0\,\eta_1\,\omega^M &&\simeq (\nu^M\,\omega^M \circ A^{\simeq}\,\eta_0 \equiv A^{\simeq}\,\eta_1)
\end{alignat*}
These can be shown by induction on ToS again; we omit describing this here.
\end{mydefinition}

\begin{theorem}
If every FQII signature has an initial algebra, then for every $\nu :
\Sub\,\Omega\,\Delta$, there exists a left adjoint of $\sem{\nu} : \sem{\Omega} \to \sem{\Delta}$.
\end{theorem}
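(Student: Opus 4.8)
The plan is to invoke the standard pointwise characterization of adjoints: a functor $\ms{L} : \sem{\Delta} \to \sem{\Omega}$ with $\ms{L} \dashv \sem{\nu}$ exists if and only if, for every object $\delta : \Delta^A$ of $\sem{\Delta}$, the comma category $\delta/\sem{\nu}$ has an initial object \cite[Section~IV]{maclane98categories}. So I would fix an arbitrary $\delta : \Delta^A$ and produce an initial object of $\delta/\sem{\nu}$. The representing signature $\ms{Sig}_{\delta/\sem{\nu}} \defn \Omega \ext \Delta^{HM}\,\delta\,\nu$ and the isomorphisms $A^{\simeq}$ and $M^{\simeq}$ from Definition \ref{def:fqii-rep-signature} are exactly what is needed: they exhibit a bijection between algebras of $\ms{Sig}_{\delta/\sem{\nu}}$ and objects of $\delta/\sem{\nu}$, and between algebra morphisms and morphisms of the comma category. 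First I would check that these bijections are compatible with sources, targets, identities and composition, so that they assemble into an isomorphism (or at any rate an equivalence) of categories between $\sem{\ms{Sig}_{\delta/\sem{\nu}}}$, viewed via its underlying category of algebras, and $\delta/\sem{\nu}$. I would also note in passing that the extension of ToS by $\top$ and $\Sigma$ used to form $\ms{Sig}_{\delta/\sem{\nu}}$ keeps it an FQII signature, so the hypothesis applies to it.

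With that in hand, the rest is the textbook argument. By assumption $\ms{Sig}_{\delta/\sem{\nu}}$ has an initial algebra; transporting it along the equivalence above (equivalences of categories preserve initial objects) yields an initial object of $\delta/\sem{\nu}$. Since $\delta$ was arbitrary, the comma-category criterion produces the left adjoint $\ms{L}$. Concretely, $\ms{L}\,\delta$ is the $\Omega$-component of the initial object of $\delta/\sem{\nu}$ — obtained by applying $\proj_1$ of $A^{\simeq}$, equivalently by pushing the initial algebra of $\ms{Sig}_{\delta/\sem{\nu}}$ along the functor $\sem{\Omega} \leftarrow \sem{\ms{Sig}_{\delta/\sem{\nu}}}$ induced by $\ms{p}$ — while the action of $\ms{L}$ on morphisms, together with the unit and the triangle identities, is forced by the universal property, so functoriality of $\ms{L}$ and the adjunction $\ms{L} \dashv \sem{\nu}$ come for free.

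The genuinely laborious part is twofold. The inductive construction of $A^{\simeq}$ and $M^{\simeq}$ over the ToS is routine bookkeeping of the same shape as the definition of $\blank^{HM}$ itself — the only interesting clauses are $\U$, $\El$ and $\Pi$, where one uses singleton contraction exactly as in the discussion under ``Morphisms vs.\ logical relations'' — and is already deferred in the excerpt. The step I expect to be the real obstacle is verifying that $A^{\simeq}$ and $M^{\simeq}$ respect the categorical structure: one must confirm that $M^{\simeq}$ carries the identity $\ms{Sig}_{\delta/\sem{\nu}}$-algebra morphism to the identity of $\delta/\sem{\nu}$ and commutes with composition, since only then do the two bijections form a functor rather than an unrelated pair of bijections, and only then does the equivalence of categories — and hence the transfer of initiality — actually go through. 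Everything downstream of that is the standard adjoint-functor-theorem assembly.
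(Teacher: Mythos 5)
Your proposal follows essentially the same route as the paper: represent the comma category $\delta/\sem{\nu}$ by the signature $\ms{Sig}_{\delta/\sem{\nu}} \defn \Omega \ext \Delta^{HM}\,\delta\,\nu$, obtain an initial object from the hypothesis, and read off the left adjoint via the pointwise comma-category criterion. You are in fact somewhat more careful than the paper, which only claims isomorphisms of object- and morphism-sets and leaves implicit the compatibility with identities and composition that you correctly flag as necessary for transporting initiality.
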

\begin{proof}
For each $\delta : \Delta^A$, the comma category $\delta/\sem{\nu}$ can be
specified with $\ms{Sig}_{\delta/\sem{\nu}}$ by Definition
\ref{def:fqii-rep-signature}, hence it has an initial object. The left adjoint
$\ms{L} : \sem{\Delta} \to \sem{\Omega}$ sends each $\delta : \Delta^A$ to the
$\omega : \Omega^A$ component of the initial algebra of
$\ms{Sig}_{\delta/\sem{\nu}}$.
\end{proof}

\section{Discussion of Semantics}

\subsection{Flcwfs For Free}

We give a quick summary for using the semantics of FQII signatures. As input we
pick a) a signature $\Gamma$ b) a cwf $\mbbC$ with $\Sigma$, $\top$ and
extensional $\Id$. Then, we interpret the signature in $\bM$, thereby getting an
flcwf in 2LTT. Then, we interpret that in presheaves over $\mbbC$, and we get
the flcwf whose objects are internal $\Gamma$-algebras in $\mbbC$.

One use case is in building models of certain type theories. Usually, this
starts with constructing the base cwf. But if the objects can be specified using
an FQII signature, we get an flcwf for free. In some cases, we get exactly
what is needed. For example, the flcwf of presheaves can be used as it is in the
presheaf models of type theories.

In other cases, the flcwf that we get has to be extended in some
ways. This often happens if the objects in the model have an internal notion of
``equivalence'' which has to be respected by types.
\begin{itemize}
  \item In the setoid model, objects are setoids and types are displayed
    setoids with additional fibrancy structure \cite{setoidtt}.
  \item The groupoid model \cite{hofmann96groupoidmodel} is analogous; again types are displayed groupoids
    with fibrancy structure.
  \item Likewise, in the cubical set model \cite{cubical}, types are displayed
    presheaves together with fibrancy structure (Kan composition).
\end{itemize}

In all these cases, the semantic objects have FQII signatures. We can interpret
their flwcfs in $\bs{\Set}$ and add fibrancy conditions. The cubical set model
is presented exactly in this way in \cite{cubical}, using displayed
algebras. The groupoid model in \cite{hofmann96groupoidmodel} instead presents
types as $\bs{\Gamma} \to \bs{\ms{Gpd}}$ functors, i.e.\ uses an indexed style
instead of the displayed style.

In the indexed-style groupoid model, we get strictly functorial type
substitution, just like in the displayed style. However, the displayed style
appears to be a more general way to get strict substitution, as it works for
every FQII theory. Again, although finitely complete categories can be always
strictified to cwfs, if we ever need to perform calculations with the
internal definitions of a model, the displayed style is much more direct.

\subsection{Variations of the Semantics}
\label{sec:fqii-variations}

In Section \ref{sec:fqiit-tos}, we required that the inner theory has $\Sigma$,
$\top$ and extensional $\Id$, and then used the assumed type formers in the
definition of $\bU$. Hence, when we interpret the semantic flcwf of a signature
in the presheaf model, we again need to assume these type formers in the base
cwf $\mbbC$.

However, we can drop $\Id$ from the requirements on the inner theory, and
likewise drop the identity type from flcwfs, and the model still works. In this
case we have a somewhat more general semantics. In particular, like in Section
\ref{sec:2ltt-simple-algebras}, we can interpret signatures in finite product
categories because $\top$ and $\Sigma$ can be derived from finite products in
the constructed ``simply typed'' cwf. On the other hand, we get less out of the
semantics. For instance, we cannot show equivalence of initiality and induction
without $\Id$.

If we want to trim down the assumptions on the inner theory to the minimum, we
can make do with simply an inner cwf with no type formers at all. This
implies that for each signature we can build a category of algebras, plus extra
structure which does not require $\Sigma$ or $\top$ in the $\bU$ definition. So
we may have displayed algebras, sections, and also functorial substitution for
these, but we do not have terminal algebras and total algebras.

We could also add more type formers to the semantics. For instance, we may add
an external product $\Pie$ (specified the same way as in signatures). Extending
flcwfs with $\Pie$ requires $\Pi$-types in the inner theory of 2LTT, hence in
$\mbbC$ as well. The reason is that indexed products of algebras require
functions in the underlying sorts. More concretely, in the definition of $\bU$
we have to interpret
\[
  \Pie_{\bU} : (\Ix : \Ty_0) \to (\Tm_0\,\Ix \to \Ty_{\bU}\,\Gamma) \to \Ty_{\bU}\,\Gamma
\]
hence
\[
  \Pie_{\bU} : (\Ix : \Ty_0) \to (\Tm_0\,\Ix \to \Tm_0\,\Gamma \to \Ty_0) \to \Tm_0\,\Gamma \to \Ty_0
\]
This works if we can return an inner $\Pi$ type in the definition:
\[
  \Pie_{\bU}\,\Ix\,B\,\gamma \defn (i : \Ix) \to B\,i\,\gamma
  \]
In this case, the flcwf semantics can be completed. We omit checking the details
here. If we have both extensional $\Id$ and $\Pie$, that yields small limits of algebras. If
we want to have ``simply typed'' semantics for this configuration, it is enough
to assume a cartesian closed base category $\mbbC$.

\subsection{Substitutions}

Interpreting signatures is not the only potentially useful thing that we get out
of the semantics. Each $\sigma : \Sub\,\Gamma\,\Delta$ can be viewed as a free
interpretation of the $\Delta$ theory in $\Gamma$, and we get a strict flcwf
morphism from the semantics.

\subsubsection{Ornaments}

One use case of $\Sub$ is to specify \emph{ornaments} \cite{ornaments}, i.e.\ ways
to decorate structures with additional information, or dually, to erase parts of
some structure. Ornaments differ from the usual forgetful maps because they
forget structure in \emph{negative} position, i.e.\ in assumptions of
construction rules.

\begin{myexample}
We assume $A : \Ty_0$. We define the substitution which forgets elements of
$A$-lists.
\begin{alignat*}{3}
  & \sigma : \Sub\,&&
  (\emptycon \ext (\mi{Nat} : \U) \ext (\mi{zero} : \El\,\mi{Nat}) \ext (\mi{suc} : \mi{Nat}))\\
  & &&(\emptycon \ext (\mi{List} : \U) \ext (\mi{nil} : \El\,\mi{List}) \ext (\mi{cons} : A \toe \mi{List} \to \mi{List}))
\end{alignat*}
The map goes from numbers to lists because of the ``contravariant''
forgetfulness. We define $\sigma$ by listing its component definitions.
\begin{alignat*}{3}
  &\ms{List} &&\defn \ms{Nat}\\
  &\ms{nil} &&\defn \ms{zero}\\
  &\ms{cons} &&\defn \lambda^{ext}\,\_.\,\lambda\,n.\,\ms{suc}\,n
\end{alignat*}
\end{myexample}

\begin{myexample}
  We assume $\Nat_0 : \Ty_0$ with $\zero_0$ and $\suc_0$, and define $\sigma : \Sub\,\ms{NatSig}\,\ms{FinSig}$, where $\ms{FinSig}$ is as follows:
\begin{alignat*}{3}
  &\ms{Fin}  &&: \Nat_0 \toe \U\\
  &\ms{zero} &&: (n : \Nat_0) \toe \El\,(\ms{Fin}\,(\suc_0\,n))\\
  &\ms{suc}  &&: (n : \Nat_0) \toe \ms{Fin}\,n \to \El\,(\ms{Fin}\,(\suc_0\,n))
\end{alignat*}
$\sigma$ is defined as
\begin{alignat*}{3}
  &\ms{Fin}  &&\defn \lambda^{ext}\,\_.\,\ms{Nat}\\
  &\ms{zero} &&\defn \lambda^{ext}\,\_.\,\ms{zero}\\
  &\ms{suc}  &&\defn \lambda^{ext}\,\_.\,\lambda\,n.\,\ms{suc}\,n
\end{alignat*}
\end{myexample}
For a specific programming use case, if we have any recursive function defined
on an ``erased'' type, we can convert that to a recursive function which acts on
an ``ornamented'' type. For example, if we have some $\Nat$-algebra $\Gamma$,
the recursor yields a morphism from the initial algebra to $\Gamma$. We can map
$\Gamma$ to a list-algebra or a $\ms{Fin}$-algebra, and then we can also use
recursors for lists or $\ms{Fin}$. Equivalently, we can map the unique morphism
to $\Gamma$ directly to a morphism between ornamented algebras.

Note though that a number of features and concepts from prior work on ornaments
are not yet reproduced. For example, we do not yet have an analogue of
\emph{algebraic ornaments}, which would allow us produce an ornamented signature
as an \emph{output} of a generic operation, instead of assuming it to begin
with. Exploring ornaments with QII signatures could be part of future work.

\subsubsection{Model constructions}

In a broader context, ToS provides a synthetic language for specifying
\emph{model constructions}.

\begin{myexample}
For a simple example, we might want to show that constant families are
equivalent to democracy in cwfs. Democracy means that for each $\Gamma : \Con$
there is a $\overline{\Gamma} : \Ty\,\emptycon$ such that $\Gamma \simeq
(\emptycon\ext\overline{\Gamma})$ \cite[Section~3.1]{flccc-undecidability}.

We can define a $\sigma : \Sub\,\ms{cwf^K}\,\ms{cwf^{dem}}$ which interprets
democracy using constant families. It is the identity morphism on the cwf parts
and interprets democracy as $\overline{\Gamma} \defn \ms{K}\,\Gamma$. The
isomorphism $\Gamma \simeq (\emptycon \ext \K\,\Gamma)$ follows from the
specification of $\K$. We can also define a morphism $\sigma^{-1} :
\Sub\,\ms{cwf^K}\,\ms{cwf^{dem}}$, which interprets $\K\,\Delta$ as
$\overline{\Delta}[\epsilon]$. It is easy to check that $\sigma^{-1}$ is indeed
the inverse of $\sigma$. Thus we get an isomorphism of flcwfs of models
from the ToS semantics.

This construction is very simple, and would not be difficult to check without
the ToS semantics. But it is generally not obvious that a certain mapping from
models to models extends to an flcwf morphism, so it may be helpful
to work inside ToS.
\end{myexample}

\begin{myexample}
There is a simple way to show that if a type theory does not support $\eta$ for
$\Pi$, then function extensionality is not provable in the theory
\cite{next700}.\footnote{It is also possible to show unprovability of function
extensionality \emph{assuming} $\eta$ for functions, but in significantly more
complicated ways. To the author's best knowledge, the set-based polynomial model
is the easiest solution \cite{von2015polynomials}.} Assume some type theory with $\Sigma$,
$\Pi$, $\Id$ and $\Bool$, and abbreviate its signature as $\ms{TT}$. We define a
$\sigma : \Sub\,\ms{TT}\,\ms{TT}$ which has identity action everywhere except on
$\Pi$. There, we have
\begin{alignat*}{3}
  &\Pi    &&\defn \lambda\,A\,B.\,\Pi\,A\,B \times \Bool \\
  &\app   &&\defn \lambda\,t.\,\app\,(\proj_1\,t)\\
  &\lam   &&\defn \lambda\,t.\,(\lam\,t,\,\true)
\end{alignat*}
In short, we tag functions with a $\Bool$ value. This equips $\Pi$ with
``intensional'' information, contradicting extensionality. If we have two
functions which are pointwise equal, that only specifies that the function parts
are equal, but does not say anything about the $\Bool$ tags. Hence, if we take
any model of $\ms{TT}$, we get a new model by the semantic action of $\sigma$,
where function extensionality is false. Note though that the $\eta$ rule also
fails in the new model, so we had to drop $\eta$ from the $\ms{TT}$ signature
as well.

In \cite{next700}, this construction is presented for the special case where the
starting model is initial. While it is easy to generalize to arbitrary starting
models, it is less obvious to extend the construction to a functor of categories
of models - which we do get for free here.
\end{myexample}

\begin{myexample}
\label{ex:gluing}
The gluing construction by Kaposi, Huber and Sattler \cite{gluing} takes as
input two models of some type theory together with a weak cwf-morphism between
them, and produces as output a displayed model over the first model. Depending
on the choice of the inputs, the gluing construction can yield parametricity
translations and canonicity proofs as well.

Let us use $\ms{TT} : \Ty\,\emptycon$ for the signature of the type theory,
given as an iterated large $\Sigma$-type. Then, the notion of weak cwf-morphism
is also expressible in ToS as $\ms{morph} : \Ty\,(\emptycon\,\ext(M_0 :
\ms{TT})\ext(M_1 : \ms{TT}))$, and the notion of displayed model as well, as
$\ms{TT}^D : \Ty\,(\emptycon\ext(M : \ms{TT})).$\footnote{We will be also able to automatically derive $\ms{TT}^D$ from $\ms{TT}$, in Section \ref{sec:signature-semantics}.} Thus, we can give a ``type''
for the gluing construction, as follows:
\begin{alignat*}{3}
  &\Tm\,&&(\emptycon\,\ext (M_0 : \ms{TT}) \ext (M_1 : \ms{TT}) \ext (f : \ms{morph}[M_0 \mapsto M_0,\,M_1 \mapsto M_1]))\\
  & && (\ms{TT}^D[M \mapsto M_0])
\end{alignat*}
Moreover, the gluing construction itself can be given as an inhabitant of the
above type. This construction works in the ToS because it only reuses structure
from $M_1$ to define the displayed model over $M_0$.
\end{myexample}

\textbf{Limitations.}
In the finitary ToS syntax, when defining substitutions we can only ever use
assumed type constructors. If we assume $\Sigma$ and $\top$ type formers in
the domain signature of a construction, we might be able to work around the lack
of $\Sigma$ and $\top$ in $\U$ in the ToS itself. This does not always work
though; for example, take the substitution with type
$\Sub\,\ms{MonoidSig}\,\ms{CatSig}$ which maps a monoid to a single-object
category. Assuming $\ms{M} : \U$ is the carrier set in $\ms{MonoidSig}$, we
would need to have the following:
\begin{alignat*}{3}
  &\ms{Obj} &&\defn \top \\
  &\ms{Hom} &&\defn \lambda\,\_\,\_.\, \ms{M}
\end{alignat*}
But we have $\Obj : \U$ in $\ms{CatSig}$, so we would need to have $\top : \U$.
In Chapter \ref{chap:iqiit}, we present a more expressive ToS which does include
$\top : \U$.

\subsection{Using Signatures in Implementations}
\label{sec:implementation}

We may ask whether the current ToS is suitable for implementations of type
theories. The answer is not wholly straightforward.

Note that we must choose a concrete surface syntax in an implementation, and
there are many design choices. The surface syntax would be almost certainly
nameful, and may or may not leave $\El$-s implicit, since they are not difficult
to insert by bidirectional elaboration. Besides the elaboration of surface
syntax, we should have at least the computation of induction principles.

Equality reflection in the ToS is a complication. If we have ``silent''
transports along equality reflection, that makes elaboration of surface
signatures undecidable. We might make transports explicit, which restores
decidable checking, but that requires the ToS to be deeply embedded in some
ambient theory.\footnote{Equality reflection is simply an equality constructor in
the embedded syntax, and has no bearing on decidability of type checking in the
metalanguage.}

Alternatively, we may just drop equality reflection from the ToS, and use
transport and UIP as primitives. This recovers decidable surface syntax, but now
we have to cover transport and UIP in the semantics, to be able to compute
induction principles. This is not too difficult; in Chapter \ref{chap:hiit} we
do the same for path induction $\J$ in the ToS. In that case, we even have a
Haskell implementation of signature elaboration and computation of induction
principles \cite{hiit-sig-program}.

Hence, handling signatures and computing induction principles is not difficult.
Instead, the real gap between our ToS and practical implementations is that we
need to have computationally adequate treatment of quotients. In plain
Martin-Löf type theories, computation gets stuck on quotients. We need to use
more recent systems, such as a cubical type theories \cite{cubicalagda,xtt}, or
some flavor of observational \cite{altenkirch2007observational} or setoid
\cite{setoidtt} type theory. In each of these systems, the signatures and their
semantics would need to be adapted, and we would need to work out additional
details. For example, we would need to produce extra computation rules which
explain the behavior of coercion or transport on QIIT constructors.

\section{Term Algebras}
\label{sec:fqiit-term-algebras}

In this section we proceed with the construction of term algebras for FQII
signatures, together with their recursors and eliminators. We make two
significant modifications to the setup.
\\\\
\indent First, \textbf{we drop the outer theory}, and work exclusively inside an
extensional type theory. The reason is the following. The main purpose of 2LTT
is to generalize the semantics of signatures. In the previous section, we
presented semantics for signatures, where algebras are internal to arbitrary
cwfs with $\Sigma$, $\top$ and extensional $\Id$. This is quite general; in
particular we can interpret signatures in any finitely complete category. We
also described dropping assumptions in Section \ref{sec:fqii-variations},
thereby getting semantics in yet more general settings.

In contrast, we make a lot more assumptions in the inner theory when we develop
initial term algebras; we essentially have to replicate the outer features
verbatim.  Thus, we gain nothing by using 2LTT, compared to working in a model
of an extensional TT.

What about the term model construction for simple signatures in Section
\ref{sec:simple-2ltt-term-algebras}, why did we use 2LTT there? In that case,
the inner theory was intensional, i.e.\ lacked equality reflection. So
there remained an interesting distinction between the inner and outer layer,
which allowed us to prove definitional $\beta$-rules for recursors. In
contrast, here we assume inner equality reflection, so we have no distinction
between propositional and definitional inner equality.
\\\\
\indent Second, \textbf{we make universe levels explicit} in the semantics and
constructions. So far, we have been consistently ignoring universe levels. Now,
size questions are less obvious, and quite relevant to a) ensuring the
consistency of assumed induction principles b) laying groundwork for
bootstrapped semantics and self-describing signatures in Section \ref{sec:closed-levitation}.

Universe levels are a fairly bureaucratic detail in type theories. In the
following we try to be as informal as possible, while still representing the
essential sizing aspects. In the following, we describe the new universe setup,
and adapt the previously described signatures and semantics to it.

\subsection{Universes \& Metatheory}
\label{sec:cumulative-ett}

We have $\mbb{N}$-indexed Russell-style $\Set_i$ universes, which are \emph{cumulative},
meaning that any type in $\Set_i$ is also an element of $\Set_{i+1}$. We use a
surface syntax which is similar to Coq, where cumulativity is implicit. This
contrasts the formal (``algebraic'') specification of cumulativity
\cite{sterling2019algebraic,kovacs2021generalized}, which involves rather heavy
explicit annotation.

Also following Coq, we have implicit \emph{cumulative subtyping}
\cite{timany18cumulative}. In our case, this means that cumulativity distributes
through basic type formers. We have a $\blank\!\leq\!\blank$ subtyping relation
on types, specified in Figure \ref{fig:cumulativity}.
\begin{figure}
\begin{mathpar}
  \inferrule*{i \leq j}
             {\Gamma \vdash \Set_i \leq \Set_j}

  \inferrule*{\Gamma,\,x : A \vdash B \leq B'}
             {\Gamma \vdash (x : A)\ra B \leq (x : A)\ra B'}

  \inferrule*{\Gamma\vdash A \leq A' \\ \Gamma,\,x : A \vdash B \leq B'}
             {\Gamma \vdash (x : A)\times B \leq (x : A') \times B'}

  \inferrule*{\\}
             {\Gamma \vdash A \leq A}

  \inferrule*{\Gamma \vdash A \leq B \\ \Gamma\vdash B \leq C}
             {\Gamma \vdash A \leq C}

  \inferrule*{\Gamma\vdash A \leq A' \\ \Gamma\vdash t : A}
             {\Gamma \vdash t : A'}
\end{mathpar}
\caption{Rules for cumulative subtyping}
\label{fig:cumulativity}
\end{figure}
This is subtyping for \emph{surface syntax}; it is expected that
surface syntax can be elaborated to \emph{coercions} in a formal syntax with
algebraic cumulativity.

Note that we have an invariant rule for function domain types. This is to
match Coq and \cite{timany18cumulative}, and also because we will not need a
contravariant rule in any case.

We assume that $\Pi$ and $\Sigma$ types return in least upper bounds of levels. For instance,
assuming $A : \Set_i$ and $B : A \to \Set_j$, we have $(x : A) \to B : \Set_{i \lub j}$.

\subsection{Signatures \& Semantics}
\label{sec:ett-signatures}

First, we parameterize the notion of ToS-model with levels.
\begin{mydefinition}
\label{def:ftos-models}
For levels $i$ and $j$, $\ToS_{i,j} : \Set_{i+1\lub j+1}$ is the type of ToS
models, defined as before, but where $\Con$, $\Sub$, $\Ty$ and $\Tm$ all return
in $\Set_i$, and $\Pie$ abstracts over $\Set_j$.
\end{mydefinition}

We have that $\ToS_{i,j} \leq \ToS_{i+1,j}$. This follows from the rules
in Figure \ref{fig:cumulativity}. All underlying sets return in $\Set_i$, which
can be bumped to $\Set_{i + 1}$.  Th $j$ level does not change, which is as
expected, since $\Set_j$ appears in a negative position in the type of $\Pie$,
and has to be invariant.

\textbf{Assumption.}
We assume that for all $j$, there exists $\syn_j : \ToS_{j+1,j}$ which supports
induction. Note the level bump in the first index; this is to avoid
inconsistency from type-in-type:
\begin{alignat*}{3}
  &\Ty &&: \Con \to \Set_{j+1}\\ &\Pie &&: (A : \Set_j) \to (A \to \Ty\,\Gamma)
  \to \Ty\,\Gamma
\end{alignat*}
With $\Ty$ returning in $\Set_j$, $\Pie$ would ``contain'' a $\Set_j$, but at
the same time return in a type in $\Set_j$, and by induction we would be able to
derive a Russell-like paradox. Likewise, all other underlying sets
must be bumped to $\Set_{j+1}$ because of their mutual nature: contexts, terms
and substitutions all ``contain'' types through some of their constructors.

\begin{mydefinition}[\textbf{Signatures}]
We define $\Sig_j : \Set_{j+1}$ as the type of signatures where $\Pie$ may
abstract over $\Set_j$, so we have $\Sig_j \defn \Con_{\syn_j}$.
\end{mydefinition}

\begin{mydefinition}[\textbf{Flwcf model}]
For levels $i$ and $j$, we have $\bM_{i,j} : \ToS_{(i+1\lub j)+1,j}$ as the
model where contexts are flcwfs, and objects in the flcwf are algebras.  The
model is defined in essentially the same way as in Section
\ref{sec:fqiit-semantics}. The algebras have underlying sets in $\Set_i$ and
(semantic) external products are indexed over types in $\Set_j$. Hence, every
algebra in $\bM_{i,j}$ is in $\Set_{i+1\lub j}$.
\end{mydefinition}

\begin{myexample}
We may define $\ms{NatSig}$ as an element of $\Sig_0$. Then, by interpreting the
signature in $\bM_{i,0}$, we get $\ms{NatSig}^A \equiv (N : \Set_i) \times (N
\to N) \times N$, hence $\ms{NatSig}^A : \Set_{i + 1 \lub 0}$.
\end{myexample}

\begin{notation}
For a signature $\Gamma : \Sig_j$ and level $i$, we may write $\Gamma^A_i$ for
the type of $\Gamma$-algebras with underlying sets in $\Set_i$, which is
computed by interpreting $\Gamma$ in $\bM_{i,j}$. We may use similar notation
for $\blank^M$, $\blank^D$ and $\blank^S$.
\end{notation}

\noindent\textbf{Cumulativity of algebras.} In the following, we shall assume
that for $\Gamma : \Sig_j$ and $i \leq i'$, we have $\Gamma^A_i \leq
\Gamma^A_{i'}$. For any concrete signature $\Gamma$, this is clearly the case,
but $\blank\!\leq\!\blank$ is not subject to propositional reasoning, so we
cannot prove this by internal induction on signatures. We can prove by induction
on signatures that there exists a \emph{lifting}, a $\ms{Lift}\,\Gamma^A_i :
\Set_{i+1 \lub j}$ which is isomorphic to $\Gamma^A_{i}$. Instead, we take
liberties, and work as if we had actual cumulative subtyping. This seems
acceptable, since by using implicit cumulativity, we are already taking the same
liberty everywhere, by omitting formal lifts and isomorphisms.

\subsection{Term Algebra Construction}
\label{sec:fqii-term-algebra-construction}
We fix $\Omega : \Sig_j$ for some $j$ level. We define $\blank^T$ by induction
on $\syn_j$. In the following we write $\blank^A$ for $\blank^A_{j+1}$,
i.e.\ the algebra interpretation where underlying sets are in
$\Set_{j+1}$. Formally, we need a displayed model over $\syn_j$, but we instead
present the resulting eliminator, which is perhaps easier to read. The
underlying functions have the following types.
\begin{alignat*}{3}
  &\blank^T &&: (\Gamma : \Con)&&(\nu : \Sub\,\Omega\,\Gamma) \to \Gamma^A\\
  &\blank^T &&: (\sigma : \Sub\,\Gamma\,\Delta)&&(\nu : \Sub\,\Omega\,\Gamma) \to \Delta^T\,(\sigma \circ \nu) \equiv \sigma^A\,(\Gamma^T\,\nu)\\
  &\blank^T &&: (A : \Ty\,\Gamma)&&(\nu : \Sub\,\Omega\,\Gamma) \to \Tm\,\Omega\,(A[\nu])
  \to A^A\,(\Gamma^T\,\nu)\\
  &\blank^T &&: (t : \Tm\,\Gamma\,A)&&(\nu : \Sub\,\Omega\,\Gamma) \to A^T\,\nu\,(t[\nu]) \equiv t^A\,(\Gamma^T\,\nu)
\end{alignat*}
We review the idea of term algebras. In any model of ToS, we might think of a
$\Sub\,\emptycon\,\Gamma$ as a $\Gamma$-algebra internal to the model. In the
$\blank^T$ interpretation we can assume $\Omega \equiv \emptycon$; this means
that from any internal $\Gamma$-algebra we can extract an ``external''
$\Gamma$-algebra. This is possible because every sort $a : \Tm\,\Gamma\,\U$ in
ToS induces an external type of terms as $\Tm\,\Gamma\,(\El\,a)$.

We can view the generalization from $\emptycon$ to arbitrary $\Omega$ as
switching from working in the syntactic model $\syn_j$, to working in the
\emph{slice model} $\syn_j/\Omega$, where contexts are given as $\Omega$ extended with
zero or more entries. And in $\syn_j/\Omega$, we have an $\Omega$-algebra quite
trivially, by taking the identity morphism $\id :
\Sub\,\Omega\,\Omega$.\footnote{Writing $\blank^{\syn_j/\Omega}$ for the interpretation
of syntax in the slice model,
$\Sub_{\syn_j/\Omega}\,\emptycon\,(\Omega^{\syn_j/\Omega})$ is isomorphic to, but not
strictly the same as $\Sub_{\syn_j}\,\Omega\,\Omega$.} Hence, term algebras
arise by first taking the trivial internal algebra $\id$ in $\syn_j/\Omega$,
then converting it to an external algebra as $\Omega^T\,\id : \Omega^A$.

\emph{Remark.} We could have presented $\blank^T$ and slice models separately.
We instead chose to merge them into the current $\blank^T$, since we do not use
slice models elsewhere, and we can skip their definition this way. Slice models
would require the specification of \emph{telescopes}, used to extend the base
context, but this entails a fair amount of bureaucratic detail.

We explain the $\blank^T$ specification in the following. Term and substitution
equations are given by UIP. We omit cases for substitutions and terms.

For contexts, we simply recurse on the entries. We use a pattern matching
notation for $\Sub\,\Omega\,(\Gamma\ext A)$, since any $\nu$ with this type is
uniquely determined by its first and second projections $\p\circ\nu$ and
$\q[\nu]$.
\begin{alignat*}{3}
  &\emptycon^T\,\nu           &&\defn \tt\\
  &(\Gamma \ext A)^T(\nu,\,t) &&\defn (\Gamma^T\,\nu,\,A^T\,\nu\,t)
\end{alignat*}
Type substitution with $\sigma : \Sub\,\Gamma\,\Delta$ is as follows. This is well-typed by
$\sigma^T\,\nu : \Delta^T\,(\sigma \circ \nu) \equiv \sigma^A\,(\Gamma^T\,\nu)$.
\[ (A[\sigma])^T\,\nu\,t \defn A^T\,(\sigma\circ\nu)\,t \]

\subsubsection{Universe}
For the universe, note that $\U^A_{j+1}\,\gamma \equiv \Set_{j+1}$.  As we
mentioned before, this is the key part when we map from internal sorts to
external sets. The levels line up, since in $\syn_j$ we have $\Tm$ returning in $\Set_{j+1}$.
\begin{alignat*}{3}
  &\U^T : (\nu : \Sub\,\Omega\,\Gamma) \to \Tm\,\Omega\,\U \to
          \Set_{j+1}\\
  &\U^T\,\nu\,a \defn \Tm\,\Omega\,(\El\,a)
\end{alignat*}
For $\El$, we have to define
\[
  (\El\,a)^T : (\nu : \Sub\,\Omega\,\Gamma)
          \to \Tm\,\Omega\,(\El\,(a[\nu])) \to a^A\,(\Gamma^T\,\nu)
\]
but since $a^T\,\nu : \Tm\,\Omega\,(\El\,(a[\nu]))
      \equiv a^A\,(\Gamma^T\,\nu)$, we have
\begin{alignat*}{3}
  &(\El\,a)^T : (\nu : \Sub\,\Omega\,\Gamma)
          \to \Tm\,\Omega\,(\El\,(a[\nu])) \to \Tm\,\Omega\,(\El\,(a[\nu]))\\
  &(\El\,a)^T\,\nu\,t \defn t
\end{alignat*}
The $a^T\,\nu$ equation is worth noting. If we have $\nu \equiv \id$, the
equation is $a^T\,\id : \Tm\,\Omega\,(\El\,a) \equiv
a^A\,(\Omega^T\,\id)$, that is, if we evaluate a signature sort in the term
model $\Omega^T\,\id$, we get a type of inner terms.

\subsubsection{Identity}
We have to show that provably equal terms are evaluated to the same value in the
term model.
\begin{alignat*}{3}
  &(\Id\,t\,u)^T : (\nu : \Sub\,\Omega\,\Gamma)
    \to \Tm\,\Omega\,(\Id\,(t[\nu])\,(u[\nu])) \to t^A\,(\Gamma^T\,\nu) \equiv u^A\,(\Gamma^T\,\nu)
\end{alignat*}
We know by equality reflection that $t[\nu] \equiv u[\nu]$, and
we also get
\begin{alignat*}{3}
  &t^T\,\nu &&: A^T\,\nu\,(t[\nu]) &&\equiv t^A\,(\Gamma^T\,\nu)\\
  &u^T\,\nu &&: A^T\,\nu\,(u[\nu]) &&\equiv u^A\,(\Gamma^T\,\nu)
\end{alignat*}
from which the target equality follows. Equality reflection for inner $\Id$ is
crucial here. It is the reason why $\blank^T$ works for \emph{quotient
signatures}; equality reflection is in fact the ``quotient'' rule which
identifies provably equal terms. For a simple example, terms with type
\[
  \Tm\,
  (\emptycon \ext (I : \U) \ext (\mi{left} : \El\,I) \ext (\mi{right} : \El\,I) \ext (\mi{seg} : \Id\,l\,r))\,
  (\El\,I)
\]
are quotiented by $\mi{seg}$, which is a provable equation in the context.

\subsubsection{Internal product type}
Here we have to convert an inner term with $\Pi$ type to an outer function.
\begin{alignat*}{3}
  &(\Pi\,a\,B)^T : (\nu : \Sub\,\Omega\,\Gamma)
                 &&\to \Tm\,\Omega\,(\Pi\,(a[\nu])\,(B[\nu\circ\p,\,\q]))\\
  &              &&\to (\alpha : a^A\,(\Gamma^T\,\nu)) \to B^A\,(\Gamma^T\,\nu,\,\alpha)\\
  &\rlap{$(\Pi\,a\,B)^T\,\nu\,t \defn \lambda\,\alpha.\,B^T\,(\nu,\,\alpha)\,(t\,\alpha)$}
\end{alignat*}
This is well-typed by $a^T\,\nu :
\Tm\,\Omega\,(\El\,(a[\nu])) \equiv
a^A\,(\Gamma^T\,\nu)$, which allows us to consider $\alpha$ to be an inner term
in $\lambda\,\alpha.\,B^T\,(\nu,\,\alpha)\,(t\,\alpha)$.

\subsubsection{External product type}
In this case we just recurse through the specifying isomorphism:
\begin{alignat*}{3}
  &(\Pie\,A\,B)^T : (\nu : \Sub\,\Omega\,\Gamma)
                 &&\to \Tm\,\Omega\,(\Pie\,A\,(\lambda\,\alpha.\,(B\,\alpha)[\nu]))\\
  &              &&\to (\alpha : A) \to (B\,\alpha)^A\,(\Gamma^T\,\nu)\\
  &\rlap{$(\Pie\,A\,B)^T\,\nu\,t \defn \lambda\,\alpha.\,(B\,\alpha)^T\,(\nu,\,\alpha)$}
\end{alignat*}
This concludes the definition of $\blank^T$.

\begin{mydefinition}
For an $\Omega : \Con_{\syn_j}$ signature, the corresponding \textbf{term
algebra} is given as $\Omega^T\,\id : \Omega^A_{j+1}$.
\end{mydefinition}

\emph{Remark.} If we start with a signature in $\syn_j$, then the underlying
sets in the term algebra are all in $\Set_{j+1}$. Hence, the term algebra for
$\NatSig : \Sig_0$ has an underlying set in $\Set_1$. This is perhaps
inconvenient, since normally we would have natural numbers in $\Set_0$. However,
we argue that this is no issue because we are free to specify $\Set_0$ as we
like. In particular, we can say that $\Set_0$ is an \emph{empty universe},
closed under no type formers at all (or explicitly isomorphic to $\bot$) in
which case $\Sig_0$ stands for \emph{closed} signatures (since $\Pie$ cannot be
constructed), and it is expected that any closed inductive type would be placed
in $\Set_1$. Alternatively, we could name the bottom-most universe
$\Set_{empty}$ or $\Set_{-1}$, and start counting non-empty universes from
$\Set_0$.

\subsection{Recursor Construction}

We continue with the construction of recursors. This is not necessary, strictly
speaking, since recursion is derivable from elimination, so it would suffice to
only construct eliminators. We still present recursors, for the sake of matching
the presentation in Chapter \ref{chap:simple-inductive-signatures}.

The goal is to construct a morphism from a term algebra to any other $\omega :
\Omega^A$ algebra. However, we have to handle universe levels as well. We want
to be able to eliminate from the term algebra, which was constructed at the
lowest possible level, to any other universe. So far we have not introduced a
``heterogeneous'' notion of morphism, between algebras at different levels. We
get this from cumulativity.
\begin{itemize}
  \item We assume $\Omega : \Sig_j$, for which we already have the term algebra $\Omega^T\,\id : \Omega^A_{j+1}$.
  \item We assume some $k \geq j + 1$, and an $\omega : \Omega^A_{k}$, the target of recursion.
  \item We implicitly lift $\Omega^T\,\id$ from level $j + 1$ to level $k$ by cumulativity, and construct
        a ``homogeneous'' morphism from the lifted term algebra to $\omega$.
\end{itemize}
This allows us to eliminate from $\Omega^T\,\id$ to any level. If we want to
eliminate to $k \geq j + 1$, we can lift the term algebra, and use a constructed
recursor. On the other hand, if we want to eliminate to $k < j + 1$, we can
instead lift the target $\omega : \Omega^A_{k}$ algebra to $j + 1$, and again
use a constructed recursor.

In general, for any $\omega : \Omega^A_{i}$ and $\omega' : \Omega^A_{j}$, the
notion of heterogeneous morphism between them arises by lifting both algebras to
$i \lub j$, and taking homogeneous morphisms between these.

\begin{myexample}
The $\ms{NatSig} : \Sig_0$ signature gives rise to $\ms{NatSig}^T\,\id :
\ms{NatSig}^A_1$. This consists of $\ms{Nat} : \Set_1$ together with $\zero$ and
$\suc$. Assuming a recursion principle as described above, and $\Bool : \Set_0$,
we may define an $\ms{isZero} : \Nat \to \Bool$ function by ``downwards''
elimination.  We have that $(\Bool,\,\true,\,\lambda\,\_.\,\false) :
\ms{NatSig}^A_0$, so by cumulativity we also have
$(\Bool,\,\true,\,\lambda\,\_.\,\false) : \ms{NatSig}^A_1$, hence by recursion
we get the desired morphism from $\ms{NatSig}^T\,\id$ to this model, which
contains the $\Nat \to \Bool$ function. We can also eliminate ``upwards'' by
lifting $\ms{NatSig}^T\,\id$ to any $\ms{NatSig}^A_i$ for $i > 1$.
\end{myexample}

We define $\blank^R$ by induction on $\syn_j$. From this, we will obtain the
recursor as $\Omega^R\,\id$.
\begin{alignat*}{3}
  &\blank^R &&: (\Gamma : \Con)&&(\nu : \Sub\,\Omega\,\Gamma) \to \Gamma^M\,(\nu^A\,(\Omega^T\,\id))\,(\nu^A\,\omega)\\
  &\blank^R &&: (\sigma : \Sub\,\Gamma\,\Delta)&&(\nu : \Sub\,\Omega\,\Gamma) \to \Delta^R\,(\sigma \circ \nu) \equiv \sigma^M\,(\Gamma^R\,\nu)\\
  &\blank^R &&: (A : \Ty\,\Gamma)&&(\nu : \Sub\,\Omega\,\Gamma)(t : \Tm\,\Omega\,(A[\nu]))
     \to A^M\,(t^A\,(\Omega^T\,\id))\,(t^A\,\omega)\,(\Gamma^R\,\nu)\\
  &\blank^R &&: (t : \Tm\,\Gamma\,A)&&(\nu : \Sub\,\Omega\,\Gamma) \to A^R\,\nu\,(t[\nu]) \equiv t^M\,(\Gamma^R\,\nu)
\end{alignat*}
Let us refresh some details about the involved operations. The reader may also
refer to Appendix \ref{app:fqii-amds} for definitions of the AMDS
interpretations.
\begin{itemize}
\item
  For $\nu : \Sub\,\Omega\,\Gamma$, we get $\nu^A : \Omega^A \to \Gamma^A$. In
  the semantics, $\nu$ is a functor, and $\nu^A$ is its action on
  objects. Analogously, for a term $t : \Tm\,\Omega\,A$, we have $t^A : (\gamma
  : \Omega^A) \to A^A\,\gamma$, also an action on objects.
\item
  $\Gamma^M$ is the set of $\Gamma$-morphisms. $A : \Ty\,\Gamma$ is a displayed
  flcwf in the semantics. $A^M$ yields sets of displayed morphisms,
  corresponding to the semantic $\Sub$ component. So we have
  \[ A^M : A^A\,\gamma_0 \to A^A\,\gamma_1 \to \Gamma^M\,\gamma_0\,\gamma_1 \to \Set_{k} \]
\item $t^M$ and $\sigma^M$ yield actions on morphisms. For $t : \Tm\,\Gamma\,A$ and $\sigma : \Sub\,\Gamma\,\Delta$, we have
  \begin{alignat*}{3}
    &t^M      &&: (\gamma^M : \Gamma^M\,\gamma_0\,\gamma_1) \to A^M\,(t^A\,\gamma_0)\,(t^A\,\gamma_1)\,\gamma^M\\
    &\sigma^M &&: (\gamma^M : \Gamma^M\,\gamma_0\,\gamma_1) \to \Delta^M\,(\sigma^A\,\gamma_0)\,(\sigma^A\,\gamma_1)
  \end{alignat*}
\end{itemize}

Again, we follow the ``sliced'' pattern that we have seen in the term model
construction. Another way to view this, is that getting term algebras or
recursors by direct induction on signatures is futile, since in the
construction we have to refer to the \emph{whole} $\Omega$ signature, but when
we recurse inside $\Omega$ we necessarily get \emph{smaller} signatures.

Hence, the sliced induction can be viewed as induction on arbitrary $\Gamma$
signatures which are smaller than $\Omega$, in the sense that there is a
$\Sub\,\Omega\,\Gamma$. Of course, $\Sub\,\Omega\,\Gamma$ includes ``being
smaller'', but it is more general.

We look at the interpretation of type formers. Again, term and substitution
equations are given by UIP, and we omit term and substitution formers.  For
contexts, we again simply recurse:
\begin{alignat*}{3}
  &\emptycon^R\,\nu           &&\defn \tt\\
  &(\Gamma \ext A)^R(\nu,\,t) &&\defn (\Gamma^R\,\nu,\,A^R\,\nu\,t)
\end{alignat*}
Type substitution with $\sigma : \Sub\,\Gamma\,\Delta$ also follows the same
pattern. The following is well-typed by $\sigma^R\,\nu :
\Delta^R\,(\sigma\circ\nu) \equiv \sigma^M\,(\Gamma^R\,\nu)$.
\[ (A[\sigma])^R\,\nu\,t \defn A^R\,(\sigma\circ\nu)\,t \]

\subsubsection{Universe}
We need to define
\begin{alignat*}{3}
  &\U^R : (\nu : \Sub\,\Omega\,\Gamma)(a : \Tm\,\Omega\,\U) \to \U^M\,(a^A\,(\Omega^T\,\id))\,(a^A\,\omega)\,(\Gamma^R\,\nu)
\end{alignat*}
Morphisms in the semantics of $\U$ are simply functions. Moreover, we have
$a^T\,\id : \Tm\,\Omega\,(\El\,a) \equiv a^A\,(\Omega^T\,\id)$.
\begin{alignat*}{3}
  &\U^R : (\nu : \Sub\,\Omega\,\Gamma)(a : \Tm\,\Omega\,\U) \to \Tm\,\Omega\,(\El\,a) \to a^A\,\omega\\
  &\U^R\,\nu\,a\,t \defn t^A\,\omega
\end{alignat*}
Thus, we evaluate $t$ in the $\omega$ algebra, the same way as we did in Chapter
\ref{chap:simple-inductive-signatures}.
\\\\
\noindent For $\El$, we need to show
\[
  (\El\,a)^R : (\nu : \Sub\,\Omega\,\Gamma)(t : \Tm\,\Gamma\,(\El\,(a[\nu]))) \to a^M\,(\Gamma^R\,\nu)\,(t^A\,(\Omega^T\,\id)) \equiv t^A\,\omega
\]
We have $a^R\,\nu : U^R\,\nu\,(a[\nu]) \equiv a^M\,(\Gamma^R\,\nu)$. Hence,
$U^R\,\nu\,(a[\nu])\,t \equiv a^M\,(\Gamma^R\,\nu)\,t$, and by computing $\U^R$
we have $t^A\,\omega \equiv a^M\,(\Gamma^R\,\nu)\,t$. The target equation then
follows by $t^T\,\id : t^A\,(\Omega^T\,\id) \equiv t$.

\subsubsection{Identity}

We need to show:
\[
(\Id\,t\,u)^R : (\nu : \Sub\,\Omega\,\Gamma)(e : \Tm\,\Gamma\,(\Id\,(t[\nu])\,(u[\nu])))
  \to t^M\,(\Gamma^R\,\nu) \equiv u^M\,(\Gamma^R\,\nu)
  \]
This follows from equality reflection on $e$, together with
\begin{alignat*}{3}
  & t^R\,\nu &&: A^R\,\nu\,(t[\nu]) \equiv t^M\,(\Gamma^R\,\nu)\\
  & u^R\,\nu &&: A^R\,\nu\,(u[\nu]) \equiv u^M\,(\Gamma^R\,\nu)
\end{alignat*}

\subsubsection{Internal product type}
We get the following target type after unfolding $(\Pi\,a\,B)^M$:
\begin{alignat*}{3}
 &(\Pi\,a\,B)^R : (\nu : \Sub\,\Omega\,\Gamma)(t : \Tm\,\Omega\,(\Pi\,(a[\nu])\,(B[\nu\circ\p,\,\q])))\\
 & \hspace{2em}\to (\alpha : a^A\,(\nu^A\,(\Omega^T\,\id))) \to B^M\,(t^A\,(\Omega^T\,\id)\,\alpha)\,(t^A\,\omega\,(a^M\,(\Gamma^R\,\nu)\,\alpha))\,(\Gamma^R\,\nu,\,\refl)
\end{alignat*}
We have
\begin{alignat*}{3}
  & \nu^T\,\id &&: \Gamma^T\,\nu \equiv \nu^A\,(\Omega^T\,\id) \\
  & a^T\,\nu   &&: a^A\,(\Gamma^T\,\id) \equiv \Tm\,\Omega\,(\El\,(a[\nu]))
\end{alignat*}
Hence, $a^A\,(\nu^A\,(\Omega^T\,\id)) \equiv \Tm\,\Omega\,(\El\,(a[\nu]))$. We
also have $a^R\,\nu : (\lambda\,\alpha.\,\alpha^A\,\omega) \equiv a^M\,(\Gamma^R\,\nu)$, therefore
$\alpha^A\,\omega \equiv a^M\,(\Gamma^R\,\nu)$. With this in mind, the goal type can be rewritten as
\begin{alignat*}{3}
 &(\Pi\,a\,B)^R : (\nu : \Sub\,\Omega\,\Gamma)(t : \Tm\,\Omega\,(\Pi\,(a[\nu])\,(B[\nu\circ\p,\,\q])))\\
 & \hspace{2em}\to (\alpha : \Tm\,\Omega\,(\El\,(a[\nu]))) \to B^M\,(t^A\,(\Omega^T\,\id)\,\alpha)\,(t^A\,\omega\,(\alpha^A\,\omega))\,(\Gamma^R\,\nu,\,\refl)
\end{alignat*}
We have the following typing now:
\[
  B^R\,(\nu,\,\alpha)\,(t\,\alpha) : B^M\,((t\,\alpha)^A\,(\Omega^T\,\id))\,((t\,\alpha)^A\,\omega)\,(\Gamma^T\,\nu,\,\refl)
\]
By the action of $\blank^A$ on internal application, we have
\[
   B^R\,(\nu,\,\alpha)\,(t\,\alpha) :
      B^M\,(t^A\,(\Omega^T\,\id)\,(\alpha^A\,(\Omega^T\,\id)))\,(t^A\,\omega\,(\alpha^A\,\omega))\,(\Gamma^T\,\nu,\,\refl)
\]
But since $\alpha^T\,\id : \alpha^A\,(\Omega^T\,\id) \equiv \alpha$, this is
exactly the target type. Therefore the definition is:
\[
  (\Pi\,a\,B)^R\,\nu\,t \defn \lambda\,\alpha.\,B^R\,(\nu,\,\alpha)\,(t\,\alpha)
\]

\subsubsection{External product type}
We again simply recurse through the indexing:
\begin{alignat*}{3}
  & (\Pie\,A\,B)^R : (\nu : \Sub\,\Omega\,\Gamma)(t : \Tm\,\Omega\,(\Pie\,A\,(\lambda\,\alpha.\,(B\,\alpha)[\nu])))\\
  & \hspace{2em}\to (\alpha : A) \to (B\,\alpha)^M\,(t^A\,(\Omega^T\,\id)\,\alpha)\,(t^A\,\omega\,\alpha)\,(\Gamma^R\,\nu)\\
  &(\Pie\,A\,B)^R\,\nu\,t \defn \lambda\,\alpha.\,(B\,\alpha)^R\,\nu\,(t\,\alpha)
\end{alignat*}
This concludes the definition of $\blank^R$.

\begin{mydefinition}[\textbf{Recursors}]
Assuming $\Omega : \Sig_j$, a $k$ level such that $k \geq j + 1$ and $\omega :
\Omega^A_{k}$, we have $\Omega^R\,\id : \Omega^M\,(\Omega^T\,\id)\,\omega$ as
the recursor for the term algebra.
\end{mydefinition}

\subsection{Eliminator Construction}
\label{sec:fqii-eliminator-construction}

We assume $\Omega : \Sig_j$ and $\omega^D : \Omega^D_{k}\,(\Omega^T\,\id)$,
where $k \geq j + 1$. Again we implicitly lift the term algebra from level $j+1$
to $k$. Here, $\omega^D$ is a displayed algebra over the term algebra. We seek
to construct an inhabitant of $\Omega^S\,(\Omega^T\,\id)\,\omega^D$. We define
$\blank^E$ by induction.

Constructing eliminators is on the whole quite similar to the recursor
construction. The switch from morphisms to sections is mechanical. We shall only
look at $\U$, $\El$ and $\Pi$ here.
\begin{alignat*}{3}
  &\blank^E &&: (\Gamma : \Con)&&(\nu : \Sub\,\Omega\,\Gamma) \to \Gamma^S\,(\nu^A\,(\Omega^T\,\id))\,(\nu^D\,\omega^D)\\
  &\blank^E &&: (\sigma : \Sub\,\Gamma\,\Delta)&&(\nu : \Sub\,\Omega\,\Gamma) \to \Delta^E\,(\sigma \circ \nu) \equiv \sigma^S\,(\Gamma^E\,\nu)\\
  &\blank^E &&: (A : \Ty\,\Gamma)&&(\nu : \Sub\,\Omega\,\Gamma)(t : \Tm\,\Omega\,(A[\nu]))
     \to A^S\,(t^A\,(\Omega^T\,\id))\,(t^D\,\omega^D)\,(\Gamma^E\,\nu)\\
  &\blank^E &&: (t : \Tm\,\Gamma\,A)&&(\nu : \Sub\,\Omega\,\Gamma) \to A^E\,\nu\,(t[\nu]) \equiv t^S\,(\Gamma^E\,\nu)
\end{alignat*}
For the \textbf{universe}, we have the following.
\begin{alignat*}{3}
  &\U^E : (\nu : \Sub\,\Omega\,\Gamma)(a : \Tm\,\Omega\,\U) \to (\alpha : a^A\,(\Omega^T\,\id)) \to a^D\,\omega^D\,\alpha
\end{alignat*}
By $a^T\,\id : a^A\,(\Omega^T\,\id) \equiv \Tm\,\Omega\,(\El\,a)$, we can give the following definition:
\begin{alignat*}{3}
  &\U^E : (\nu : \Sub\,\Omega\,\Gamma)(a : \Tm\,\Omega\,\U) \to (\alpha : \Tm\,\Omega\,(\El\,a)) \to a^D\,\omega^D\,\alpha\\
  &\U^E\,\nu\,a\,\alpha \defn \alpha^D\,\omega^D
\end{alignat*}
In other words, we evaluate $\alpha$ in the $\omega^D$ displayed algebra. Let us check that this is well-typed:
\begin{alignat*}{3}
  & \alpha^D &&: \{\omega : \Omega^A\}(\omega^D : \Omega^D\,\omega) \to a^D\,\omega^D\,(\alpha^A\,\omega)\\
  & \alpha^D\,\omega^D &&: a^D\,\omega^D\,(\alpha^A\,(\Omega^T\,\id) \\
  & \alpha^T\,\id      &&: \alpha^A\,(\Omega^T\,\id) \equiv \alpha
\end{alignat*}
Thus $\alpha^D\,\omega^D : a^D\,\omega^D\,\alpha$. Recall that $\alpha^D$ can
be viewed as the logical predicate interpretation of $\alpha$, which expresses
that $\alpha^A$ preserves $\blank^D$ predicates.
\\\\
\noindent For \textbf{$\El$}, we need to show
\[
  (\El\,a)^S : (\nu : \Sub\,\Omega\,\Gamma)(t : \Tm\,\Gamma\,(\El\,(a[\nu]))) \to a^S\,(\Gamma^E\,\nu)\,(t^A\,(\Omega^T\,\id)) \equiv t^D\,\omega^D
\]
This follows from $t^T\,\id : t^A\,(\Omega^T\,\id) \equiv t$ and $a^E\,\nu : (\lambda\,t.\,t^D\,\omega^D) \equiv a^S\,(\Gamma^E\,\nu)$.
\\\\
\noindent The \textbf{internal product} interpretation is defined similarly as before:
\begin{alignat*}{3}
 &(\Pi\,a\,B)^E : (\nu : \Sub\,\Omega\,\Gamma)(t : \Tm\,\Omega\,(\Pi\,(a[\nu])\,(B[\nu\circ\p,\,\q])))\\
 & \hspace{2em}\to (\alpha : \Tm\,\Omega\,(\El\,(a[\nu]))) \to B^S\,(t^A\,(\Omega^T\,\id)\,\alpha)\,(t^D\,\omega^D\,(\alpha^D\,\omega^D))\,(\Gamma^E\,\nu,\,\refl)\\
 &\rlap{$(\Pi\,a\,B)^E\,\nu\,t \defn \lambda\,\alpha.\,B^E\,(\nu,\,\alpha)\,(t\,\alpha)$}
\end{alignat*}
We make use of $\nu^T\,\id$, $u^T\,\id$, $a^E\,\nu$ and $a^T\,\nu$ to type-check the definition.

Interpretations for contexts and other type formers are also essentially the same as with recursors.

\begin{mydefinition}[\textbf{Eliminators}]
\label{def:fqiit-eliminator}
Assuming $\Omega : \Sig_j$, a $k$ level such that $k \geq j + 1$ and $\omega^D :
\Omega^D_{k}\,(\Omega^T\,\id)$, we have $\Omega^E\,\id : \Omega^S\,(\Omega^T\,\id)\,\omega^D$ as
the eliminator.
\end{mydefinition}

\begin{theorem}
  $\Omega^T\,\id$ is initial when lifted to any $k \geq j + 1$ level.
\end{theorem}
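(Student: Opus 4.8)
The plan is to deduce initiality from the eliminator construction by way of Theorem~\ref{thm:initiality-induction}, which reduces initiality to the induction principle in any cwf equipped with $\Id$ and weak $\K$.

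First I would record that the semantic setting is an flcwf. Interpreting the signature $\Omega : \Sig_j$ in $\bM_{k,j}$ produces an flcwf $\sem{\Omega}$ whose underlying category is the category of level-$k$ $\Omega$-algebras: its $\Con$ is $\Omega^A_k$, its $\Sub$ is $\Omega^M_k$, its $\Ty$ over an object is $\Omega^D_k$, and its $\Tm$ is $\Omega^S_k$, as identified in Section~\ref{sec:fqiit-semantics}. Since flcwfs carry an extensional identity type and strict (hence weak) constant families, Theorem~\ref{thm:initiality-induction} applies to $\sem{\Omega}$: an object is initial if and only if it supports induction in the sense of Definition~\ref{def:induction-predicate}, and both are mere properties. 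Hence it suffices to show that the term algebra $\Omega^T\,\id$, implicitly lifted from level $j+1$ to the given level $k \geq j+1$ by cumulativity of algebras, supports induction.

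Next I would unfold $\Inductive\,(\Omega^T\,\id)$ in this flcwf. By Definition~\ref{def:induction-predicate} this predicate is $(A : \Ty_{\sem{\Omega}}\,(\Omega^T\,\id)) \to \Tm_{\sem{\Omega}}\,(\Omega^T\,\id)\,A$, and under the identifications above it is exactly $(\omega^D : \Omega^D_k\,(\Omega^T\,\id)) \to \Omega^S_k\,(\Omega^T\,\id)\,\omega^D$. This is precisely the type inhabited by $\lambda\,\omega^D.\,\Omega^E\,\id$, where $\blank^E$ is the eliminator operation built in Section~\ref{sec:fqii-eliminator-construction}; the implicit lift of $\Omega^T\,\id$ to level $k$ is the one already used there, and the construction is valid for every $k \geq j+1$, which is exactly the range of the theorem. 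Therefore $\Omega^T\,\id$ supports induction, and Theorem~\ref{thm:initiality-induction} yields initiality at level $k$.

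The only laborious (but conceptually trivial) part is the universe-level bookkeeping: checking that interpreting $\Omega$ in $\bM_{k,j}$ rather than $\bM_{j+1,j}$ indeed yields the category of $\Omega$-algebras with carriers in $\Set_k$, that the implicit cumulative lift of the term algebra is compatible with that interpretation, and that the side conditions $k \geq j+1$ match precisely those under which $\blank^E$ (and the needed compatibility lemmas for $\blank^T$) were defined. Since cumulativity is treated informally throughout the chapter, none of this presents a genuine obstacle; all the real content has already been supplied by the eliminator construction together with the equivalence of induction and initiality.
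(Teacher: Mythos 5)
Your proposal is correct and takes essentially the same route as the paper: the paper's proof is exactly the two-step argument that $\Omega^T\,\id$ supports elimination by the eliminator construction (Definition \ref{def:fqiit-eliminator}) and that elimination is equivalent to initiality by Theorem \ref{thm:initiality-induction}. Your additional unpacking of the flcwf identifications and the cumulativity bookkeeping is consistent with how the paper treats these points informally.
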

\begin{proof}
  $\Omega^T\,\id : \Omega^A_{k}$ supports elimination by Definition \ref{def:fqiit-eliminator},
  and elimination is equivalent to initiality by Theorem \ref{thm:initiality-induction}.
\end{proof}

\section{Levitation and Bootstrapping for Closed Signatures}
\label{sec:closed-levitation}

When we previously introduced the ToS, we only specified the notion of model,
and simply assumed that there is an evident notion of model morphism and also a
notion of induction. For the theory of \emph{closed} signatures, we can do
better because ToS is itself a closed FQII theory. This is
\emph{levitation} \cite{chapman2010gentle}, i.e.\ the situation where a ToS contains a
signature for itself. Levitation is useful for bootstrapping: it shall be
sufficient to specify only the notion of model for ToS, and notions of
ToS-morphisms, initiality and induction can be computed from that. This
bootstrapping process eliminates the need for either
\begin{itemize}
  \item Assuming that the syntax of ToS already exists as a QIIT. Here, the assumed
        syntax is necessarily ad-hoc, since we are still in the process
        of building metatheory for QII theories.
  \item Bootstrapping the ToS syntax as ``raw'' syntax, using simple inductive
        types, typing/conversion relations and quotients. This is very tedious and
        should be avoided if possible. See Section \ref{sec:finitary-reductions} for
        a discussion of this approach, although used for slightly different
        purposes.
\end{itemize}

In this section we describe levitation for closed signatures. The theory of
closed signatures does not have $\Pie$, but is otherwise the same as before.  As
we have seen, the inclusion of $\Pie$ yields a ToS which is itself infinitary,
which breaks levitation. Moving to a theory of infinitary signatures will
restore levitation; we revisit this is Section \ref{sec:iqii-levitation}.

\subsection{Models \& Signatures}

Since we do not have $\Pie$, we only need a single universe level for indexing models.

\begin{mydefinition} For some $i$ level, we have $\ToS_i : \Set_{i+1}$ as the type of
models of ToS, where all underlying sets return in $\Set_i$.
\end{mydefinition}

\begin{mydefinition}[\textbf{Flcwf model}]
For $i$, we have $\bM_{i} : \ToS_{i + 2}$ as the model where contexts are flcwfs
of algebras, and algebras have underlying sets in $\Set_i$. To see how $i + 2$
checks out: if algebras contain $\Set_i$-s, the category of algebras has a
$\Set_{i + 1}$ for a set of objects, and $\bM_{i}$ itself includes a category of
these categories.
\end{mydefinition}
So far, this can be defined while only using the notion of model for ToS. What
about signatures though? Previously we had that signatures are contexts in ToS
\emph{syntax}, and to talk about syntax, we need to know at least the notion of
ToS model morphism.

Actually, if we only want to be able to write down signatures and interpret them
in the semantics, we do not need a ToS syntax. A functional encoding suffices.

\begin{mydefinition} A \textbf{bootstrap signature} is a function which for every ToS model
yields a context in that model. The type of bootstrap signatures is:
\[
  \ms{BootSig} \defn (i : \ms{Level}) \to (M : \ToS_i) \to \Con_{M}
\]
Note that this is a universe-polymorphic type. This is not an issue; universe
polymorphism is a sensible feature in type theories, or alternatively we may
assume that quantification over levels takes place in some outer theory.

We do not get induction on bootstrap signatures, nor do we automatically
get any naturality or parametricity property.

\end{mydefinition}

\begin{myexample}
For $\ms{NatSig}$, we define the expected signature, but we specify it in an arbitrary
$M$ model instead of the syntax.
\begin{alignat*}{3}
  & \NatSig : \ms{BootSig}\\
  & \NatSig \defn \lambda (i :\ms{Level})(M : \ToS_i).\\
  & \hspace{5.2em}(\emptycon_M\,\ext_M\, (N : \U_M) \,\ext_M\,(\mi{zero} : \El_M\,N)
      \ext_M (\mi{suc} : N\arri_M\El_M\,N))
\end{alignat*}
\end{myexample}
We might as well use the same notations for signatures as in Section
\ref{sec:fqiit-tos}, as every signature from before can be unambiguously
rewritten as a bootstrap signature.

With this, we can interpret each signature in an arbitrary ToS model, by
applying a signature to a model. $\ms{BootSig}_j$ can be viewed as a precursor
to a Böhm-Berarducci encoding \cite{boehm-berarducci} for the theory of
signatures, but we only need contexts encoded in this way, and not other ToS
components. In functional programming, this style of encoding is sometimes
called ``finally tagless'' \cite{carette2007finally}.

If we only want to build the 2LTT-based semantics of signatures, we are done
with bootstrapping right now. In the 2LTT semantics, we never needed induction
on ToS, we only needed to be able to write down signatures and interpret them in
models - which we can do. Going forward, we only need to assume an inner
$(\Ty_0,\,\Tm_0)$ layer with appropriate type formers, and define the flcwf model
the same way as before.

On the other hand, if we want to consider term models, we do need a notion of
induction on ToS.

\begin{mydefinition}[\textbf{Signature for ToS}]
We define $\ToSSig : \ms{BootSig}$ as the bootstrap signature for the theory of
signatures. We present an excerpt from $\ToSSig$ below using internal notation;
it should be clear that every component can be reproduced. We use $\ms{SigU}$
and $\ms{SigEl}$ to disambiguate components inside the signature from ToS
components.
\begingroup
\allowdisplaybreaks
\begin{alignat*}{3}
  & \Con       &&: \U\\
  & \Sub       &&: \Con \to \Con \to \U\\
  & \Ty        &&: \Con \to \U\\
  & \Tm        &&: (\Gamma : \Con) \to \Ty\,\Gamma \to \U\\
  & ...        &&\\
  & \ms{SigU}  &&: \{\Gamma : \Con\} \to \El\,(\Ty\,\Gamma)\\
  & \ms{SigEl} &&: \{\Gamma : \Con\} \to \Tm\,\Gamma\,\ms{SigU} \to \El\,(\Ty\,\Gamma)\\
  & \Pi        &&: \{\Gamma : \Con\}(a : \Tm\,\Gamma) \to \Ty\,(\Gamma\ext\,\ms{SigEl}\,a)
                   \to \El\,(\Ty\,\Gamma)\\
  & ...        &&
\end{alignat*}
\end{mydefinition}
\endgroup
For each $i$, the interpretation of $\ToSSig$ in $\bM_{i}$ yields an flcwf
$\bGamma$ such that $\Con_{\bGamma} \equiv \ToS_{i}$, that is, objects are
models of ToS at level $i$. This yields a model theory for ToS, which includes
the notion of induction at level $i$.

We also know by the definition of $\ToS_{i}$ that we have cumulativity,
i.e.\ $\ToS_{i} \leq \ToS_{i+1}$.\footnote{For concrete bootstrap signatures we
may conclude cumulativity of algebras, but we cannot conclude this universally
for all bootstrap signatures, since we cannot do induction on them, and we do not
even assume that they are parametric in levels.} Hence, we can make the
following definition:

\begin{mydefinition} $M : \ToS_0$ supports elimination into any universe if
it supports elimination when lifted by cumulativity to any $\ToS_i$.
\end{mydefinition}

This notion of (large) elimination is sufficient for the term algebra and
eliminator constructions in Section \ref{sec:fqiit-term-algebras}. Thus, we were
able to derive all required concepts just from the notion of model of ToS.

\section{Reductions to Basic Type Formers}
\label{sec:finitary-reductions}

From the construction of term algebras and eliminators, we get a reduction of
all QIITs to a single infinitary QIIT, namely the syntax of ToS. We spell this out:

\begin{theorem} If an extensional type theory supports syntax for $\ToS_{j+1,j}$, it supports
initial algebras for each signature in $\Sig_j$.
\end{theorem}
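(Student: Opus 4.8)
The plan is to recognise that the hypothesis of this theorem is precisely the standing assumption under which the term algebra, recursor and eliminator constructions of Section~\ref{sec:fqiit-term-algebras} were carried out, so that the theorem is essentially a packaging of the final (unlabelled) theorem of Section~\ref{sec:fqii-eliminator-construction}, namely that $\Omega^T\,\id$ is initial when lifted to any level $k \geq j+1$.

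First I would unpack ``supports syntax for $\ToS_{j+1,j}$''. The syntax of an algebraic theory is by definition its initial model, so the hypothesis provides a $\syn_j : \ToS_{j+1,j}$ that is initial in the category of ToS models and strict structure-preserving morphisms. To reach the exact form assumed in Section~\ref{sec:fqiit-term-algebras} --- that $\syn_j$ \emph{supports induction} --- I would organise ToS models into a cwf: objects are ToS models, morphisms are strict morphisms, types over a model are displayed ToS models, and terms are sections of displayed models. This cwf carries extensional $\Id$ (sectionwise equality) and weak $\K$ (democracy: a model is a displayed model over the terminal ToS model). Theorem~\ref{thm:initiality-induction} then upgrades initiality of $\syn_j$ to induction for $\syn_j$. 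If one reads ``syntax'' as already denoting the QIIT together with its induction principle, this step is immediate.

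Second, with $\syn_j$ supporting induction in hand, I would invoke the constructions already performed. For any $\Omega : \Sig_j \equiv \Con_{\syn_j}$, the term algebra construction of Section~\ref{sec:fqii-term-algebra-construction} yields $\Omega^T\,\id : \Omega^A_{j+1}$, and the eliminator construction of Section~\ref{sec:fqii-eliminator-construction} shows that, lifted to any level $k \geq j+1$, this algebra supports elimination, i.e.\ induction for the displayed algebra interpretation. Applying Theorem~\ref{thm:initiality-induction} once more --- now in the flcwf $\sem{\Omega}$, which has $\Id$ and $\K$ --- turns induction into initiality. Hence $\Omega^T\,\id$, lifted as needed across levels via cumulativity of algebras, is an initial $\Omega$-algebra, which is exactly what ``supporting initial algebras for $\Omega$'' means. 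Since $\Omega$ was arbitrary, this holds for every signature in $\Sig_j$.

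The conceptual content is therefore small; the real labour --- already spent in the preceding subsections --- lies in checking that the displayed models over $\syn_j$ that define $\blank^T$ and $\blank^E$ are well-typed in every case. The main obstacle I would expect, were it not already handled, is twofold: the interaction of universe levels with cumulativity of algebras (ensuring $\blank^T$ genuinely lands at level $j+1$, and that heterogeneous elimination across levels is coherent), and the equality-reflection-driven transports appearing in the $\Id$, $\El$ and $\Pi$ clauses of the construction --- the very clauses that make the term model respect the quotients present in a signature.
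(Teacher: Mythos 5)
Your proposal is correct and follows essentially the same route as the paper: the theorem is stated there as an immediate packaging of the term algebra construction, the eliminator construction, and Theorem~\ref{thm:initiality-induction}, with ``supports syntax'' understood (per the standing Assumption in Section~\ref{sec:ett-signatures}) as having $\syn_j$ together with its induction principle. Your additional remark on recovering induction from initiality via a cwf of ToS models matches the paper's own levitation/bootstrapping treatment and introduces no new gap.
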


Ideally, we would like to reduce QIITs to some collection of basic type
formers. The ToS syntax is far from being a basic type former, it is rather
large and complicated. Therefore, the remaining job is to construct the ToS
syntax from simpler types.

We do not attempt here to construct the entire ToS syntax as specified.
Lumsdaine and Shulman \cite[Section 9]{lumsdaineShulman} showed that infinitary
QIITs are not constructible from inductive types and simple quotienting with
relations. Recently, Fiore, Pitts and Steenkamp showed that a class of
infinitary quotient inductive types, called QWI-types, can be reduced to
inductive types, quotients and the axiom of weakly initial sets of covers (WISC)
\cite{DBLP:journals/corr/abs-2101-02994}. The setting additionally assumes
extensional equality and propositional extensionality for an impredicative
universe of propositions. The infinitary ToS syntax is not immediately a
QWI-type because it is inductive-inductive.  Nevertheless, it is a reasonable
conjecture that infinitary QIITs are also constructible from the WISC
principle. We leave this to future work.

In this section we show constructions of certain fragments of the full ToS
syntax. We first give a general description of QIIT constructions, then describe
two specific constructions, for a) finitary inductive-inductive signatures b)
closed QII signatures.

\subsection{Finitary QIIT Constructions}
\label{sec:fqiit-constructions}

The general recipe of constructing finitary QIITs from basic type formers is the
following. This is more or less adapted from Streicher \cite{streicher93habil}
and Brunerie et al.\ \cite{brunerie}.
\begin{enumerate}
  \item
    We define the \emph{raw} syntax, using at most inductive families, but
    no induction-induction. These definitions include all value constructors of
    the goal QIIT, but there is no indexing involved, constructors only store
    the raw inductive data. For example, the raw syntax of closed ToS would
    include the following:
    \begin{alignat*}{5}
      & \Con &&: \Set \hspace{3em}&& \emptycon             &&: \Con\\
      & \Sub &&: \Set && \blank\!\ext\!\blank  &&: \Con \to \Ty \to \Ty \\
      & \Ty  &&: \Set && \id                   &&: \Con \to \Sub \\
      & \Tm  &&: \Set && \blank\!\circ\!\blank &&: \Con \to \Con \to \Con \to \Sub \to \Sub \to \Sub\\
      & ... && &&
    \end{alignat*}
    This can be given by a simple mutual inductive definition, which can be
    represented as an indexed inductive family. Indexed families can be
    reduced to indexed W-types \cite{mutualinductive}, which can be
    reduced in turn to W-types and the identity type.
  \item
    We define typing and conversion relations on the raw syntax. For dependent
    type theories, the two are usually mutual: typing includes the rule which
    coerces terms along type conversion, and conversion is usually defined only
    on well-typed terms.  However, it is still possible to define everything
    using only indexed inductive families.
  \item
    The underlying sets are given as follows: we take raw syntactic objects
    which are \emph{merely} well-formed (i.e.\ proofs of well-formedness are
    propositionally truncated, or defined in a universe of irrelevant
    propositions to begin with), and quotient them by conversion.
  \item
    We show that these underlying sets support all constructors of the target
    QIIT: value constructors are defined using raw constructors, while
    equality constructors follow from conversion rules and quotienting.
  \item
    We construct a unique morphism from the above term model to an arbitrary
    model of the QII theory. This usually requires several steps. One approach
    is to first define by induction on raw syntax a family of partial
    functions into the assumed model, then separately show that these functions
    are total on well-typed input. The separation is necessary because the
    induction principle for the raw syntax is too weak: it cannot express
    the inductive-inductive indexing dependencies which would be required to
    construct the morphism in one go. For instance, if we have the QIIT syntax
    for ToS, and we have some displayed model $A$ over the syntax, the
    eliminator contains the following:
    \begin{alignat*}{3}
      &\Con^S &&: (\Gamma : \Con) \to \Con_A\,\Gamma \\
      &\Sub^S &&: (\Gamma\,\Delta : \Con)(\sigma : \Sub\,\Gamma\,\Delta) \to \Sub_{A}\,(\Con^S\,\Gamma)\,(\Con^S\,\Delta)\,\sigma
    \end{alignat*}
    But with the raw syntax, we can only eliminate using a displayed model of
    the raw syntax, and the eliminator contains the following:
    \begin{alignat*}{3}
      &\Con^S &&: (\Gamma : \Con) \to \Con_A\,\Gamma \\
      &\Sub^S &&: (\sigma : \Sub) \to \Sub_A\,\sigma
    \end{alignat*}
    Lastly, we show that the constructed morphism is unique. This is done by
    induction on raw syntax, and is generally possible in just one elimination.
\end{enumerate}

Note that the above recipe permits a large number of design variations. Some
examples:
\begin{itemize}
\item We may omit fields from raw syntax which are fully determined by type indices.
      This may make subsequent work easier or harder depending on particulars.
\item We may start from a \emph{well-scoped} raw syntax, if there is a notion of
      scoping in the goal QIIT. In general, we may start from some kind of partially
      raw syntax, which is well-typed to some extent. This extent is bounded by what
      is expressible only using indexed inductive families but not
      induction-induction.
\item We may move along a spectrum of ``paranoia'' in the specification of
      well-typing \cite[Section~9.2]{winterhalter-thesis}. A paranoid typing rule
      assumes the well-formedness of everything involved, for example assumes the
      well-formedness of a context $\Gamma$ before it assumes well-formedness of a
      type in $\Gamma$. In contrast, an ``economic'' specification tries to make the
      minimum necessary assumptions, relying on admissibility properties. It is
      possible that well-formedness of $\Gamma$ is derivable from the
      well-formedness of a type in $\Gamma$, so the assumption can be dropped.

      However, if we omit too much, then some other admissibility properties may
      break! Design decisions along the paranoia spectrum are often all tangled
      up like this; hence the name ``paranoid'', which probably stems from the
      anxiety of breaking things by making too many shortcuts.
    \item Instead of using partial maps from raw syntax to the the assumed model
      in step 5, we may define \emph{relations} between \emph{well-formed} raw
      syntax and the given model, and later show that these relations are
      functional.  This seems to be a technically easier approach. The reason is
      that we do not have decidable definedness of the partial maps, which makes
      them more complicated.  A decidably defined partial function has type $A
      \to \ms{Maybe}\,B$. For any $a : A$ we can look at whether the function is
      defined on it. A more general partial function has type $A \to ((P :
      \ms{Prop}) \times (P \to B))$.  If we forget about the $\ms{Prop}$-ness of
      $P$ for the time being, we can equivalently have a relation $A \to B \to
      \Set$ instead.  This is a more ``indexed'' definition compared to the
      ``fibered'' presentation with $P : \ms{Prop}$, and indexed presentations
      in type theory usually enjoy more definitional computation rules - this
      is also the reason why displayed algebras are better-behaved computationally
      than fibered algebras.
\end{itemize}

It should be apparent that constructing QIITs is tedious, and especially so for
large QIITs like type theories. Hence, it is best if we do it just once, for a
theory of signatures from which every other QIIT can be constructed.

\subsubsection{Connection to the initiality conjecture}

The initiality conjecture was made by Voevodsky \cite{voevodsky-initiality}, and
it is essentially the conjecture that the above construction (``initiality
construction'') can be carried out in sufficient formal detail for ``usual''
type theories.

There has been much debate about the merits of initiality constructions. See
\cite{initiality-project} for a hub of such discussions.  On one hand, some
people believed that the initiality construction is essential for reconciling
the usage of raw syntax and categorical notions of models. On the other hand,
some people dismissed the initiality construction as a pointless exercise,
considering the categorical syntax to be the actual syntax, and raw syntax as
merely notation for that. The author of this thesis is of a somewhat different
opinion than either of the above.

First, as a moral justification for the usage of raw syntax, the initiality
construction is indeed mostly pointless. That is because \emph{elaboration}
comprises the true justification for that. Elaboration is the effective
algorithm which converts raw syntax to ``core syntax'', i.e.\ typed categorical
syntax. Given a piece of raw syntax, even if we have done the initiality
construction, we have no effective way of learning which core syntactic object it
corresponds to!  The elaboration literature is mainly about practical
justifications for using certain raw syntaxes, and it comes with established
ways to talk about strength and correctness of elaboration algorithms.

Second, there is a different motivation for the initiality construction:
\emph{foundational minimalism}, the reduction of a complicated QIIT to basic
type formers. Elaboration merely assumes that a categorical core syntax already
exists, as the target of elaboration, but it is orthogonal to the construction of
the core syntax. If we have elaboration, we may still want to show a reduction
of the core syntax, but now we are free to perform this construction in whatever
way is the easiest. We do not have to construct the QIIT out of a raw syntax
which is intentionally close to the raw syntax that we use in practice! In the
author's opinion, a great deal of confusion arises from the conflation of the
two different motivations for the initiality construction.

As to which way of construction is easiest: the author does not know of any
truly easy way, but this thesis shows that we only have to do it once, for a
theory of signatures, and then we can construct all other QIITs from that in a
generic way. In particular, almost all type theories in the wild are finitary closed QII
theories (with the notable exceptions of our ToS-es), so if we can construct
closed signatures, we can construct initial models of almost all type theories.

What about generic ways to formalize elaboration algorithms? This seems to be a
lot more difficult. To the author's knowledge there has not been notable
research in this area. Decidability of conversion is already very hard to
analyze in a generic way, and the simplest possible bidirectional elaboration
algorithms rely on decidable conversion. To formalize practically realistic
elaboration (i.e.\ elaboration which includes unification) is yet more
difficult.

\subsection{Reduction of Finitary Inductive-Inductive Types}

This section is based on the author's joint work with Kaposi and Lafont
\cite{ind-ind-reduction}. The core idea is the following: a certain fragment of
ToS can be constructed in a far simpler way than what we described in Section
\ref{sec:fqiit-constructions}, with fewer assumptions in the ambient theory.  We
call this fragment the theory of \emph{finitary inductive-inductive} signatures.
This theory has the following type formers (on the top of the base cwf):
\begin{itemize}
  \item The $\U$ universe with $\El$.
  \item Inductive function type $\Pi$, but without $\lami$, and thus without $\beta\eta$-rules.
  \item External function type $\Pie$, but again without $\lame$.
\end{itemize}
This ToS is tuned so that
\begin{enumerate}
  \item No quotients are required in its construction.
  \item The generic term model construction still goes through for every signature in
        the ToS.
\end{enumerate}
We explain in the following. First, the equational theory of ToS only specifies
substitution, but it contains no computation rules for type formers. Thus, ToS
is a theory of \emph{neutral terms} and substitutions. This allows us to define
a raw syntax which only includes normal forms, and to define substitution as
recursive functions acting on normal forms. This trivializes the conversion
relation: conversion is simply propositional equality of raw terms. Thus, there
is no need to quotient by conversion. Note that our raw syntax is infinitary
because we have to represent the branching in $\Pie$. This is fine though: we
only run into the issue of the missing choice principle (presumably, the
WISC principle) if we try to mix quotients and infinite branching. Without
quotients, infinite branching is not an issue.

Second, we do not include an identity type in ToS. This blocks the other way for
quotients to enter the picture. With identity types, the generic term model
construction relies on equality reflection in ToS. But when we construct ToS
syntax, the only way to show equality reflection is to quotient raw syntax by
internally provable equalities.

Third, it remains to check that the generic term model construction works with
the pared-down ToS. We only need to check that the omission of $\lam$ and
$\lame$ does not mess things up. Looking at Sections
\ref{sec:fqii-term-algebra-construction} and
\ref{sec:fqii-eliminator-construction}, we see that it does not: the
interpretations of $\Pi$ and $\Pie$ only require applications in ToS, not
abstractions.

\emph{Remark.} Although we have not yet talked about infinitary signatures, we
can give a short summary why the current construction fails to work for their
ToS.  The generic term algebra construction in Section
\ref{sec:iqii-term-algebra-construction} for infinitary signatures relies on
there being both $\lam$ and $\app$ for ``infinitary'' function types, with
$\beta\eta$-rules. This makes the equational theory of ToS non-trivial, so
quotients are necessary in the construction of the syntax. However, this
requires mixing quotients and infinite branching, which we cannot yet handle.

We summarize the construction of the ToS syntax below. We refer the reader to
\cite{ind-ind-reduction} for details.

\begin{enumerate}
\item We define raw syntax by mutual induction. Substitutions are in normal form:
      they are just lists of raw terms. Variables are also normalized as de Bruijn indices.
      We define the action of substitution by recursion on raw syntax. In \cite{ind-ind-reduction},
      raw syntax is not well-scoped, and substitution is partial, but it would be also possible
      to start from well-scoped raw syntax.
\item We inductively define well-formedness relations for contexts,
      substitutions, types and terms, and show by induction on raw syntax that
      well-formedness is propositional, i.e.\ proof-irrelevant. Alternatively, we
      could have defined well-formedness by recursion on raw syntax.
\item We construct a term model of ToS from well-formed raw syntax. All equations in the model
      are provable from the properties of recursive substitution on raw terms.
\item We pick a ToS model, and inductively define a family of relations between
      the term model and the given model, which define the function graphs of the
      model morphism that we aim to define. Then we show in order:
      \begin{enumerate}
        \item Right-uniqueness of the relation, by induction on well-formedness derivations.
        \item Stability of the relation under substitution.
        \item Left-totality of the relation, by induction on well-formedness derivations.
      \end{enumerate}
      We then define the actual model morphism using the functionality of the relation.
\item For the uniqueness of the constructed morphism, we exploit
      right-uniqueness of the relation: it is enough to show that any other model
      morphism maps syntactic input to related semantic output.
\end{enumerate}

This construction is formalized in Agda; see \cite{ind-ind-reduction}. It uses
indexed inductive families, UIP, function extensionality, and equality
reflection in the form of Agda rewrite rules, although the latter could be in
principle omitted from the formalization. Thus, it follows that any model
of ETT with inductive families supports finitary inductive-inductive types.

\subsection{Reduction of Closed QIITs}

For closed QIITs, there is unfortunately no direct formalization which
constructs the ToS. There is one though which is \emph{close enough}, by Menno
de Boer and Guillaume Brunerie \cite{initiality-agda}; see also De Boer's thesis
\cite{deboer-initiality}. This constructs a type theory with the following
features:
\begin{itemize}
  \item A contextual category for base (instead of a cwf).
  \item Countable predicative universes.
  \item $\mathbb{N}$, $\Sigma$, $\Pi$, $\top$, $\bot$, $\blank\!+\!\blank$ and
        intensional $\Id$.
\end{itemize}
The construction follows the 1-5 steps that we described previously in Section \ref{sec:fqiit-constructions}. It makes the following assumptions:
\begin{itemize}
\item A universe of strict propositions $\ms{Prop}$. Every type in this universe
      enjoys definitional proof-irrelevance. This $\ms{Prop}$ is used to define
      partial functions and well-formedness relations.
\item Function extensionality.
\item Propositional extensionality for $\ms{Prop}$.
\item Quotients by relations valued in $\ms{Prop}$.
\item Indexed inductive families returning in $\Set$ or in $\ms{Prop}$.
\end{itemize}
UIP is not assumed, instead the irrelevant equality in $\ms{Prop}$ is used
everywhere. Although it is only possible to eliminate from such equalities to
$\ms{Prop}$, this issue is sidestepped by using an essentially algebraic
specification of models, which is fibered using $\ms{Prop}$ equations.

It is very plausible that this construction can be adapted to our theory of
closed QII signatures. De Boer and Brunerie construct a complicated open
finitary QIIT, while ours is a fairly similar closed QIIT, with fewer and more
restricted type formers. The openness comes from the use of contextual
categories, which involve indexing by external natural numbers. Contextuality
does not make much difference in the construction though, since raw syntax is
always contextual by the inductive nature of raw contexts.

Hence, it is safe to say that any model of an extensional type theory which
supports the assumptions of De Boer and Brunerie, also supports all closed
QIITs.

\section{Related Work}
\label{sec:fqii-related-work}

This chapter is based on the following publications, all coauthored by the
current thesis' author.
\begin{enumerate}
  \item ``Constructing Quotient Inductive-Inductive Types'' \cite{kaposi2019constructing}.
  \item ``Large and Infinitary Quotient Inductive-Inductive Types'' \cite{iqiit}.
  \item ``For Finitary Induction-Induction, Induction is Enough'' \cite{ind-ind-reduction}.
\end{enumerate}
We summarize the differences and enhancements in this chapter, in comparison to the above
(1)-(3) sources.

The theory of signatures is similar to that in (1), except (1) does not include
eliminators for $\Pi$ and $\Pie$, and it has $\Id : \Tm\,\Gamma\,(\El\,a) \to
\Tm\,\Gamma\,(\El\,a) \to \Ty\,\Gamma$, i.e.\ it cannot equate terms with arbitrary
types.

The usage of 2LTT is novel compared to (1)-(3). In (1), the semantics had a cwf
with $\Id$ and $\K$ for each signature; this was extended with $\Sigma$ in (2)
to get the notion of flcwf that we also use in this chapter.

The construction of left adjoints of substitutions is novel.

The current term algebra construction is the same as in (1), but universe levels
were not treated rigorously in (1); instead we adapt the more precise universe
treatment from (2). Notions of bootstrapping and levitation are also
``backported'' from (2) to closed finitary signatures.

(3) is summarized in the current chapter without any notable change.

\subsubsection{ToS-style presentations}

Carette and O'Connor \cite{presentation-combinators} presented algebraic
signatures as contexts in type theories. Altenkirch and Kaposi \cite{ttintt}
observed that induction methods and motives can be computed as logical predicate
translations on typing contexts.

\subsubsection{Generalized algebraic theories}

FQII signatures and Cartmell's generalized algebraic theories \cite{gat} are
close in expressive power, but they do not appear to be equivalent.

GATs may contain an infinite number of rules, while FQII signatures are finitely
long. On the other hand, FQII signatures have $\Pie$ and GATs do not. It appears
that infinite signatures are stronger than $\Pie$: it is possible to recover
$\Pie$ by adding a rule for every value of the external index, but it is not
possible to recover infinite signatures with $\Pie$. The reason is that in $\Pie
: (\mi{Ix} : \Ty_0) \to (\mi{Ix} \to \Ty\,\Gamma) \to \Ty\,\Gamma$, the $\Gamma$
context is fixed, so it is not possible to represent a family of signature entries
where each entry may refer to the previous entry within the same family. For example,
the following (pseudo)-GAT has no corresponding FQII signature:
\begin{alignat*}{3}
  &A_0 : \U\\
  &A_1 : A_0 \to \U\\
  &A_2 : (a_0 : A_0) \to A_1\,a_1 \to \U\\
  &A_3 : (a_0 : A_0)(a_1 : A_1\,a_0) \to A_2\,a_0\,a_1 \to \U\\
  &...
\end{alignat*}
Could we somehow include these? The most convenient way would be to define
signatures coinductively. However, that would cause a mismatch, that described
theories are inductive, while the ToS itself is coinductive, which rules out
levitation and bootstrapping. It is potential future work to investigate such
coinductive signatures.

This leads us to the main difference in formalization between GATs and FQII
signatures: the theory of GATs itself is not presented as a GAT, instead it has
a low-level presentation with raw syntax and well-formedness relations. As a
result, the immediate metatheory of GATs is roughly as tedious as we can expect
from raw syntaxes.

This is a motivation for formally getting away from GATs, by showing their
equivalence to contextual categories. Contextual categories are algebraic and
more convenient to handle than GATs. In \cite{cartmellthesis} one leg of this
equivalence is the construction of a classifying contextual category for each
GAT, which is essentially a term model construction from quotiented raw
syntax. A downside of this setup is that classifying contextual categories
cannot be easily written out by hand like GATs. Thus, GATs necessarily remain
the practical way for specifying the classifying categories.

In contrast, the theory of FQII signatures is itself algebraic, possesses a nice
model theory (as an infinitary QII theory), and it is only mildly more complex
than the theory of contextual categories. Since the immediate theory of
signatures is already quite nice, we do not feel as much pressure to look for
nicer presentations.

Nevertheless, compact alternative presentations would be still interesting to
research.
\begin{itemize}
  \item We could look for for an analogue of the
    GAT-contextual-category correspondence for our signatures. This would
    send each signature to its classifying category.
  \item We could also look for an analogue of the Gabriel-Ulmer duality
    \cite{gabriel2006lokal}. This would send each signature to its category of models in
    $\bs{\Set}$. In the other direction, we would need a way to restrict categories
    to those which are categories of algebras.
\end{itemize}

\subsubsection{Essentially algebraic theories}

Essentially algebraic theories (EATs) \cite{freyd1972aspects} are categories
with finite limits. This is a more semantic notion of an algebraic signature,
much like how contextual categories are a more semantic presentation of
``syntactic'' GATs. For EATs $\Gamma$ and $\Delta$, the $\Gamma$-algebras
internal to $\Delta$ are simply the finite limit preserving functors from
$\Gamma$ to $\Delta$, while algebra morphisms are natural transformations.

We have more syntactic notions of essentially algebraic signatures as well. For
example, the signatures of Adámek and Rosicky
\cite[Section~3.D]{adamek1994locally} or the Partial Horn theories of Palmgren
and Vickers \cite{partialhorn} are such. These signatures are also specified
using raw syntax, but they are significantly easier to formalize than GATs, as
the syntax of signatures admits fewer dependencies. However, the lack of
dependency also causes a significant encoding overhead on comparison to GATs or
FQII signatures. For a classic example, the theory of transitive directed graphs
is given with an FQII signature as
\begin{alignat*}{3}
  & \ms{V} &&: \U\\
  & \ms{E} &&: \ms{V} \to \ms{V} \to \U\\
  & \blank\circ\blank &&: (i\,j\,k : \ms{V}) \to \ms{E}\,i\,j \to \ms{E}\,j\,k \to \El\,(\ms{E}\,i\,k)
\end{alignat*}
The same in a pseudo-EA notation could be:
\begingroup
\allowdisplaybreaks
\begin{alignat*}{3}
  & \ms{V}            &&: \Set\\
  & \ms{E}            &&: \Set\\
  & \ms{src}          &&: \ms{E} \to \ms{V}\\
  & \ms{tgt}          &&: \ms{E} \to \ms{V}\\
  & \blank\circ\blank &&: (f\,g : \ms{E}) \to \ms{tgt}\,f = \ms{src}\,g \to (h : \ms{E}) \times (\ms{src}\,h = \ms{src}\,f) \times (\ms{tgt}\,h = \ms{tgt}\,g)
\end{alignat*}
\endgroup
In short, the FQII notation is ``indexed'', while the EA is ``fibered''. Also
recall Theorem \ref{thm:ty-slice}. While this example is not wildly different in
the two cases, if we move to more complex theories, such as type theories, the
encoding overhead of EA signatures is much greater. In informal mathematics,
this is still not an issue, but in mechanized mathematics, it is. Type
dependencies are a formal complication, but in proof assistants they enable more
compact definitions. They also often force indices to be particular values, which
enables inference and unification to fill in more details in surface syntaxes.

Sketches (see e.g.\ \cite[Section~4]{barr1985toposes}) are another way to
specify EATs. They lie somewhere between the syntactic/logical styles of
specification, and just taking EATs to be finitely complete categories. They
support an elegant metatheory, but they involve an encoding overhead which is
likely unworkable in mechanized settings.

All in all, there is a rich literature on EATs, sketches and related topics, and
it would be interesting to try to connect our signatures to any of these, or try
to reproduce the numerous related results in categorical universal algebra. This
remains future work for now.

\subsubsection{Prior work on (quotient) inductive types}

The current work grew out of a line of research in the field of type
theory. This involved working out more and more expressive classes of inductive
types.

Martin-Löf's W-types \cite{martinlof84sambin} are an early example for a scheme
for inductive types. In fact, it is better viewed as a single parameterized
inductive type, which allows construction of a remarkable range of inductive
types \cite{whynotw}, although with some encoding overheads.

Inductive families \cite{inductivefamilies} allow indexing the inductive sort
with external types. This directly supports only single-sorted signatures, but
some form of mutual induction can be easily modeled through the
indexing. Inductive families have become a core feature in all major
implementations of type theories, such as Coq, Idris, Lean or Agda.

Inductive-recursive types \cite{dybjer99finite} allow mutual definition of an
inductive sort and a function which acts on the sort. These types are absent
from this thesis, they are not representable with any of our theories of
signatures. Induction-recursion is notable for tremendously boosting the
proof-theoretic strength of a type theory; a primary motivation for it was to
explore the limits of predicative constructive mathematics. It is useful for
modeling a wide variety of universe features internally to a type theory
\cite{kovacs2021generalized}.

Induction-induction was described in \cite{altenkirch11ii} and in
\cite{forsberg-phd}. This notion allowed two inductive sorts, where the second
one may be indexed over the first. As we mentioned previously, this notion is
more restricted than what was covered in this chapter.

\cite{altenkirch18qiit} investigated QIITs. The notion of signature
here is more of a semantic nature than ours. Signatures are defined
simultaneously with their categories of algebras. A signature is a inductive
list of functors: at each signature entry, we extend the category of algebras
with a functor whose domain is the current category of algebras. This can be
viewed as a generalization of F-algebras as a form of specification. However,
there is no strict positivity restriction in signatures, hence no attempt at
constructing initial algebras either.

We will look at work related to infinitary QITs in Section
\ref{sec:iqii-related-work} and at work related to higher inductive types in
Section \ref{sec:hii-related-work}.

\chapter[Infinitary QII Signatures]{Infinitary Quotient Inductive-Inductive Signatures}
\label{chap:iqiit}

In this chapter we present another theory of signatures, for \emph{infinitary
quotient inductive-inductive} signatures. As we will see, the reason for
considering the finitary and infinitary cases separately is that they support
different semantics.

First, we specify signatures and define semantics in 2LTT. Then, like in the
previous chapter, we switch to a extensional TT setting and look at term
algebras and related constructions.

\section{Theory of Signatures}

\textbf{Metatheory.} We work in 2LTT. We assume the following type formers in
the inner theory: $\top$, $\Sigma$, extensional identity $\blank\!=\!\blank$ and
$\Pi$. Note that $\Pi$ is an extra assumption compared to what we had in the
finitary case.

\begin{mydefinition}
\label{def:iqiit-tos}
A \textbf{model of the theory of signatures} consists of the following.
  \begin{itemize}
    \item A \textbf{cwf} with underlying sets $\Con$, $\Sub$, $\Ty$ and $\Tm$, all returning in
      the outer $\Set$ universe of 2LTT.
    \item A \textbf{Tarski-style universe} $\U$ with decoding $\El$. $\U$ is closed under the following type formers:
      \begin{itemize}
        \item The \textbf{unit type} $\top$.
        \item \textbf{$\Sigma$-types} $\Sigma : (a : \Tm\,\Gamma\,\U) \to \Tm\,(\Gamma\,\ext\,\El\,a)\,\U \to \Tm\,\Gamma\,\U$, with specifying isomorphism
          \[(\proj,\,\blank\!,\!\blank) : \Tm\,\Gamma\,(\El\,(\Sigma\,a\,b))\simeq (t : \Tm\,\Gamma\,(\El\,a)) \times \Tm\,\Gamma\,(\El\,(b[\id,\,t]))\]
        \item \textbf{Extensional identity} $\Id : \Tm\,\Gamma\,(\El\,a) \to \Tm\,\Gamma\,(\El\,a) \to \Tm\,\Gamma\,\U$,
          specified by $(\reflect,\,\refl) : \Tm\,\Gamma\,(\El\,(\Id\,t\,u)) \simeq (t \equiv u)$.
        \item \textbf{Small external product type} $\Piinf : (\mi{Ix} : \Ty_0) \to (\mi{Ix} \to \Tm\,\Gamma\,\U) \to \Tm\,\Gamma\,\U$, specified by $(\appinf,\,\laminf) : \Tm\,\Gamma\,(\Piinf\,\mi{Ix}\,b) \simeq ((i : \mi{Ix}) \to \Tm\,\Gamma\,(\El\,(b\,i)))$.
      \end{itemize}
    \item \textbf{Internal product type} $\Pi : (a : \Tm\,\Gamma\,\U) \to
      \Ty\,(\Gamma\ext\El\,a) \to \Ty\,\Gamma$, specified by
      $(\app,\,\lam) : \Tm\,\Gamma\,(\Pi\,a\,B) \simeq \Tm\,(\Gamma \ext \El\,a)\,B$.
    \item \textbf{External product type} $\Pie : (\mi{Ix} : \Ty_0) \to (\mi{Ix} \to \Ty\,\Gamma) \to \Ty\,\Gamma$, specified by
      $(\appe,\,\lame) : \Tm\,\Gamma\,(\Pie\,\mi{Ix}\,B) \simeq ((i : \mi{Ix}) \to \Tm\,\Gamma\,(B\,i))$.
  \end{itemize}
\end{mydefinition}
Once again we assume that an initial model for ToS exists, and a signature is a
context in the initial model.

\begin{notation}
  We employ the same notations for signatures as in Section \ref{sec:fqiit-tos}. In addition to that,
  we have the usual internal notation for $\top$ and $\Sigma$, and we write $(x : A) \toinf B$ for $\Piinf$
  and $\lambdainf$ for $\laminf$.
\end{notation}

Let us do a comparison to the finitary case. First, the new signatures do not
support sort equations, since there is no identity type for arbitrary terms,
only for terms with types in $\U$. Second, the universe is not empty anymore, it
supports $\top$, $\Sigma$ and the small external product type $\Piinf$, which
can be viewed as an analogue of $\Pie$ inside $\U$. We look at example
signatures.

\begin{myexample}
Infinitary constructors can be given with $\Piinf$. A classic example is
W-types. Assuming $S : \Ty_0$ and $P : S \to \Ty_0$, we have the following
signature for $P$-branching well-founded trees:
\begin{alignat*}{3}
  &W &&: \U\\
  &\ms{sup} &&: (s : S) \toe (P\,s \toinf W) \to \El\,W
\end{alignat*}
Note that since $P\,s \toinf W$ is in $\U$, it can appear on the left side of
$\to$. If $P\,s$ is an infinite type, $\ms{sup}$ branches with an infinite
number of inductive subtrees. Of course, finitary branching can be also
expressed with $\Piinf$, but that use case was already possible with finitary
signatures, by iterating $\to$ finite times.
\end{myexample}

\begin{myexample} Equations can appear as assumptions now. The simplest
example is set-truncation for some $A : \Ty_0$:
\begin{alignat*}{3}
  &|A|_0      &&: \U\\
  &\ms{embed} &&: A \toe \El\,|A|_0 \\
  &\ms{trunc} &&: (x\,y : |A|_0)(p\,q : \Id\,x\,y) \to \El\,(\Id\,p\,q)
\end{alignat*}
However, this ends up being redundant in our semantics, since we assume UIP, and
every semantic underlying type will be a set. Does this mean that recursive
equations are useless? We do not think so. In the specification of cubical type
theories, there are \emph{boundary conditions} which can be given as $\Id$
assumptions \cite{cchm,angiuli2016computational,angiuli2018cartesian}. Also, it
seems that these conditions cannot be easily contracted away. For an example of
contraction, the signature
\[\emptycon \ext (A : \U) \ext (c_1 : \El\,A) \ext (c_2 : (x : A) \to \Id\,x\,c_1 \to \El\,A)\]
can be rewritten to the equivalent
\[\emptycon \ext (A : \U) \ext (c_1 : \El\,A) \ext (c_2 : \El\,A)\]
signature. However, we cannot mechanically eliminate the $\Id$ from the following signature.
\begin{alignat*}{3}
  &\ms{A  } &&: \U\\
  &\ms{B  } &&: \ms{A} \to \U\\
  &\ms{b_1} &&: \ms{A} \to \El\,\ms{B}\\
  &\ms{b_2} &&: \ms{A} \to \El\,\ms{B}\\
  &...&&\\
  &\ms{a}   &&: (x\,y : \ms{A}) \to \Id\,(\ms{b_1}\,x)\,(\ms{b_2}\,y) \to \El\,\ms{A}
\end{alignat*}
Whether we can reformulate $a$ without the $\Id$ condition depends on what kind
of equational theory we specify for $B$ in the omitted parts of the signature.

However, recursive equations can be always encoded by internalizing extensional
equality in signatures. For example:
\begin{alignat*}{3}
  &\ms{A}   &&: \U \\
  &\ms{EqA} &&: \ms{A} \to \ms{A} \to \U\\
  &\refl    &&: \El\,(\ms{EqA}\,a\,a)\\
  &\reflect &&: \ms{EqA}\,a_0\,a_1 \to \El\,(\Id\,a_0\,a_1)\\
  &\UIP     &&: (p\,q : \ms{EqA}\,a_0\,a_1) \to \El\,(\Id\,p\,q)
\end{alignat*}
Still, we keep recursive equalities around, since they are more ergonomic than
the above encoding, and they pose no extra difficulty in the semantics. The
current formulation of the $\Id$ type will be more useful in Chapter \ref{chap:hiit},
where higher equalities can be proof-relevant.

\end{myexample}

\begin{myexample}
All theories of signatures that we discussed so far, have (infinitary)
signatures.

For finitary signatures, the ToS is itself infinitary because of $\Pie$. We
assume an universe $\U_0$ in $\Ty_0$. In the signature, we have
\begin{alignat*}{3}
  &\ms{Con} &&: \U \\
  &\ms{Ty} &&: \ms{Con} \to \U\\
  &\Pie &&: \{\Gamma : \ms{Con}\} \to (A : \U_0) \toe (A \toinf \ms{Ty}\,\Gamma) \to \El\,(\ms{Ty}\,\Gamma)
\end{alignat*}
In the signature for infinitary ToS, we have
\begin{alignat*}{3}
  &\ms{Univ} &&: \{\Gamma : \ms{Con}\} \to \ms{Ty}\,\Gamma\\
  &\Piinf    &&: \{\Gamma : \ms{Con}\} \to (A : \U_0) \toe (A \toinf \ms{Tm}\,\Gamma\,\ms{Univ}) \to \ms{Tm}\,\Gamma\,\ms{Univ}
\end{alignat*}
\emph{Remark.}
When we will take the semantics of the above signature, we will not exactly get
back the theory of signatures that we are using right now. We have ToS in 2LTT
now, but the semantics is in the inner theory. What we can do though, is to
assume that the inner theory is also a 2LTT. Then we might assume that the inner
theory of \emph{that} is again a 2LTT, and so on. This is a possible (and quite
natural) generalization of 2LTT to n-level type theory. In this setting, one
round of self-description requires a bumping of levels in the sense of n-level TT. In
this thesis we do not explore this, instead we use a more conventional universe
hierarchy in an extensional TT, to investigate self-description.
\end{myexample}

\begin{myexample}
We have seen in Example \ref{ex:presheaf-sig} that $\Ty_0$-valued presheaves have finitary signatures.
With infinitary signatures, we can also cover monads on $\Ty_0$. We assume a
universe $\U_0 : \Ty_0$.
\begingroup
\allowdisplaybreaks
\begin{alignat*}{3}
  &\ms{M}         &&: \U_0 \to \U\\
  &\ms{map}       &&: (A \to B) \toe \ms{M}\,A \to \El\,(\ms{M}\,B)\\
  &\ms{map_{id}}   &&: \El\,(\Id\,(\ms{map}\,\id\,m)\,m)\\
  &\ms{map_{\circ}} &&: \El\,(\Id\,(\ms{map}\,(f \circ g)\,m)\,(\ms{map}\,f\,(\ms{map}\,g\,m)))\\
  &\ms{return}    &&: A \toe \El\,(\ms{M}\,A)\\
  &\ms{bind}      &&: M\,A \to (A \toinf \ms{M}\,B) \to \El\,(\ms{M}\,B)\\
  &\ms{return_r}  &&: \El\,(\Id\,(\ms{bind}\,m\,\ms{return})\,m)\\
  &\ms{return_l}  &&: \El\,(\Id\,(\ms{bind}\,(\ms{return}\,a)\,f)\,(f\,a))\\
  &\ms{assoc}     &&: \El\,(\Id\,(\ms{bind}\,(\ms{bind}\,m\,f)\,g)\,(\ms{bind}\,m\,(\lambda\,a.\,\ms{bind}\,(f\,a)\,g)))
\end{alignat*}
\endgroup

We rely on $\toinf$ to specify binding. The $\mi{join}$-based specification
would not work, since $\ms{M}\,(\ms{M}\,A)$ is not valid in signatures. The
above signature can be helpful for deriving some of the metatheory of Dijkstra
monads \cite[Section~5]{dijkstramonad}.

In the 2LTT-based semantics, we will get $\ms{M} : \U_0 \to \Ty_0$, which is not
quite an endofunctor. In the ETT-based semantics in Section
\ref{sec:inf-term-algebras} we will be able to pick universe levels more
precisely, so we can specify algebras where $\ms{M} : \Set_i \to \Set_i$.
However, we will not get free monads from the term algebra construction, because
the universe levels do not match up as needed. Recall from Section
\ref{sec:fqii-term-algebra-construction} that the level of sets of terms is $j +
1$ when $j$ is the level of external indices in a signatures. Hence, if the
parameter types to $\ms{M}$ are in $\Set_j$, then external indices are in
$\Set_{j+1}$, so we get $\ms{M} : \Set_j \to \Set_{j+2}$ in the term algebra for
monads.

\end{myexample}

\begin{myexample}
It is worth to note that every set-truncated higher inductive type from the
Homotopy Type Theory book \cite{hottbook} is covered. This includes
\begin{itemize}
\item The cumulative hierarchy of sets \cite[Section~10.5]{hottbook}.
\item Cauchy real numbers \cite[Section~11.3]{hottbook}.
\item Surreal numbers \cite[Section~11.6]{hottbook}.
\end{itemize}
\end{myexample}

\section{Semantics}

\subsection{Overview}

As we mentioned, we need a different semantics for infinitary signatures.
First, we look at why the previous semantics fails. We try to model signatures
again as flcwfs, and morphisms as strict flcwf-morphisms. The simplest point of
failure is the interpretation of the unit type $\bs{\top : \Tm\,\Gamma\,\U}$.

In the semantics, this is the same as defining $\bs{\top :
  \Sub\,\Gamma\,\Ty_0}$, where $\bs{\Ty_0}$ is the flcwf of inner types. The
only sensible definition here is the functor which is constantly $\top_0$. But
this does not strictly preserve context comprehension or the finite limit type formers.
If we have
\begin{alignat*}{3}
  &\bs{\top} : \Con_{\bGamma} \to \Ty_0\\
  &\bs{\top}\,\Gamma \defn \top_0
\end{alignat*}
then we have $\bs{\top}\,(\Gamma \ext_{\bGamma} A) \equiv \top_0$, but
$\bs{\top}\,\Gamma \ext_{\bs{\Ty_0}} \bs{\top}\,A \equiv \top_0 \times \top_0$.
Thus, $\top_0 \not\equiv \top_0 \times \top_0$, but of course $\top_0 \simeq
\top_0 \times \top_0$.

Let us look at $\bs{\Piinf} : (A : \bs{\Ty_0}) \to (A \to \bs{\Tm\,\Gamma\,\U}) \to
\bs{\Tm\,\Gamma\,\U}$ as well, since that is a more interesting new feature than the
unit type. The only viable definition is to take the $A$-indexed product of
$\bs{\Sub\,\Gamma\,\Ty_0}$ morphisms, so we map objects of $\bGamma$ to function types:
\begin{alignat*}{3}
  &\Con_{\bs{\Piinf\,A\,b}}\,\Gamma \defn (\alpha : A) \to \Con_{\bb\,\alpha}\,\Gamma
\end{alignat*}
But now we have
\begin{alignat*}{3}
  &(\bs{\Piinf\,A\,b})\,\emptycon_{\bGamma}
     \equiv (\alpha : A) \to \Con_{\bb\,\alpha}\,\emptycon_{\bGamma}
     \equiv A \to \top_0
\end{alignat*}
Also, $\bemptycon_{\bU} \equiv \top_0$. Hence,
$(\bs{\Piinf\,A\,b})\,\emptycon_{\bGamma} \not\equiv \top_0$, although
$(\bs{\Piinf\,A\,b})\,\emptycon_{\bGamma} \simeq \top_0$.

Intuitively, if $\bU$ has no type formers, the terms in $\bU$ are neutral,
i.e.\ variables applied to zero or more neutral terms. But variables in the
semantics simply project out components from iterated $\Sigma$-types. For
example, the action of $\bs{\q} : \bs{\Tm\,(\Gamma \ext A)\,(A[\p])}$ on
objects, types, morphisms and terms is given by taking second
projections. Since all structure in $\bs{\Gamma\,\ext\,A}$ is given by pairing
things, $\bs{\q}$ strictly preserves all structure, and the same goes
for all variables.

Substitutions and terms in the finitary ToS are only allowed to freely reshuffle
structure. We can forget, duplicate, or permute signature entries, or build
neutral expressions from assumptions. In contrast, the infinitary ToS allows us
to take small limits of assumptions, using $\bs{\top}$, $\bs{\Sigma}$, $\bs{\Id}$ and $\bs{\Piinf}$
to build new inhabitants of $\bU$. We summarize the process of getting the new
semantics:
\begin{enumerate}
\item Strict structure-preservation for type formers in $\bU$ generally fails, but they still
      preserve structure up to isomorphism.
\item Hence, we switch from strict flcwf-morphisms to weak ones, which preserve $\emptycon$, comprehension
      and fl-structure weakly.
\item However, in the finitary case we often relied on transporting along
      preservation equations. We need to recover transports along isomorphism.
\item Hence, we extend semantic types from displayed flcwfs to \emph{isofibrations}, which
      support the required transports.
\item However, this rules out sort equations because they are not stable under
      isomorphisms. For example, for sets $A$, $B$, $C$ such that $A \simeq B$ and
      $A \simeq C$, it is not necessarily the case that $B \equiv C$.
\end{enumerate}

\subsubsection{Univalent semantics}

The isofibrant semantics will turn out to be significantly more technical than
the strict semantics. Instead of working with isofibrations in an extensional
setting, could we work with univalent structures in homotopy type theory? In
other words, work with univalent categories of algebras, and univalent displayed
categories over them \cite{displayedcats}. A major benefit of the univalent
setting is that we would get a \emph{structure identity principle}
\cite{aczel2011voevodsky} out of the semantics, which says that for algebras,
isomorphism is the same as equality.

However, it appears that univalent \emph{cwfs} are overall yet more technical to
handle than isofibrations. In an univalent cwf, objects and types are generally
h-groupoids, so we would have groupoids of algebras instead of sets of
algebras. This implies that \emph{type equalities} are between groupoids, so
they need to be coherent, if we want them to be well-behaved. Hence, $\Ty$ is
not an 1-presheaf over contexts, but rather a $(2,1)$-presheaf.

Alternatively, we could simplify the task by only constructing univalent
categories of algebras, and skipping the family structure (and
fl-structure). This would be the minimum amount of effort that would yield the
structure identity principle.

Both of these would be interesting to check in future work. As a third
alternative, instead of stopping at set-truncated algebras in HoTT, we might as
well consider types at arbitrary h-levels, and construct
$(\omega,\,1)$-categories of algebras. This comprises a semantics of higher
inductive-inductive signatures. We do not present a full higher-categorical
semantics in this thesis; we only present a fragment of it in Chapter
\ref{chap:hiit}.

\subsection{Model of the Theory of Signatures}

In the following we present a model of ToS. We call it $\bM$, and like before,
we use \textbf{bold} font to refer to components of $\bM$.

\subsubsection{Contexts}

$\bGamma : \bCon$ is again an flcwf, but with a minor change: $\K$ is not strict
anymore, so we have $(\appK,\,\lamK) : \Tm\,\Gamma\,(\K\,\Delta) \simeq
\Sub\,\Gamma\,\Delta$. As we will see shortly, $\bs{A[\sigma]}$ does not
support strict displayed $\K$ anymore, hence the change.

\subsubsection{Substitutions}
\label{sec:iqiit-substitutions}

$\bs{\sigma : \Sub\,\Gamma\,\Delta}$ is a \emph{weak flcwf-morphism}, that is, a
functor between underlying categories, which also maps types to types and terms
to terms, and satisfies the following mere properties:
  \begin{enumerate}
    \item $\bsigma\,(A[\sigma]) \equiv (\bsigma\,A)\,[\bsigma\,\sigma]$
    \item $\bsigma\,(t[\sigma]) \equiv (\bsigma\,t)\,[\bsigma\,\sigma]$
    \item The unique map $\epsilon : \Sub\,(\bsigma\,\emptycon)\,\emptycon$ has a retraction.
    \item Each $(\bsigma\,\p,\,\bsigma\,\q) : \Sub\,(\bsigma\,(\Gamma\,\ext\,A))\,(\bsigma\,\Gamma\,\ext\,\bsigma\,A)$ has an inverse.
  \end{enumerate}

In short, $\bsigma$ preserves substitution strictly and preserves empty context
and context extension up to isomorphism. We notate the evident isomorphisms as
$\bsigma_{\emptycon} : \bsigma\,\emptycon \simeq \emptycon$ and $\bsigma_{\ext}
: \bsigma\,(\Gamma\,\ext\,A)\,\simeq\,\bsigma\,\Gamma\,\ext\,\bsigma\,A$. Our
notion of weak morphism is the same as in \cite{dependentrightadjoints}, when
restricted to cwfs.

\begin{theorem}\label{thm:flpres}
Every $\bs{\sigma : \Sub\,\Gamma\,\Delta}$ preserves fl-structure up to
type isomorphism. That is, we have
\begin{alignat*}{3}
  & \bsigma_{\Sigma} : \bsigma\,(\Sigma\,A\,B) \simeq \Sigma\,(\bsigma\,A)\,((\bsigma\,B)[\bsigma_{\ext}^{-1}]) \\
  & \bsigma_{\Id} : \bsigma\,(\Id\,t\,u) \simeq \Id\,(\bsigma\,t)\,(\bsigma\,u) \\
  & \bsigma_{\K} : \bsigma\,(\K\,\Delta) \simeq \K\,(\bsigma\,\Delta)
\end{alignat*}
These are all natural in the following sense: for $\sigma :
\Sub_{\bGamma}\,\Gamma\,\Delta$, if we have $\bsigma_{\Sigma}$ as a type
isomorphism in $\bsigma\,\Delta$, if we reindex it by $\sigma$, we get
$\bsigma_{\Sigma}$ as a type isomorphism in $\bsigma\,\Gamma$. The same holds
for $\bsigma_{\Id}$ and $\bsigma_{\K}$.

Moreover, $\bsigma$ preserves all term and substitution formers in the
fl-structure. For example, $\bsigma\,(\proj1\,t) \equiv \proj1\,
(\bsigma_{\Sigma}[\id, \bsigma\,t])$.
\end{theorem}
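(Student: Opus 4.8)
The plan is to derive each of the three type isomorphisms from the structural data that a weak flcwf-morphism already carries, namely: strict preservation of substitution, the isomorphisms $\bsigma_{\emptycon}$ and $\bsigma_{\ext}$, and the fact that the fl-structure ($\Sigma$, $\Id$, $\K$) is specified by universal properties (adjunctions / natural isomorphisms of hom-sets) rather than by extra primitive operations. The key general principle I would isolate first is: \emph{a functor that preserves a limit weakly (up to iso) and whose underlying category has that limit on both sides necessarily produces a canonical comparison iso}, and this comparison iso is natural because it is built from the universal maps. So the proof is really three instances of ``a right adjoint (or a representable) is preserved up to canonical isomorphism once the relevant base structure is preserved.''

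First I would handle $\bsigma_{\K}$, since $\K$ is the simplest: $\K\,\Delta$ is characterized by $\Tm\,\Gamma\,(\K\,\Delta) \simeq \Sub\,\Gamma\,\Delta$, naturally in $\Gamma$. Applying $\bsigma$ and using that $\bsigma$ preserves substitution strictly and $\emptycon$/comprehension up to iso (hence preserves all of $\Sub$ up to the induced isos), I get $\Tm\,(\bsigma\,\Gamma)\,(\bsigma\,(\K\,\Delta)) \simeq \Sub\,(\bsigma\,\Gamma)\,(\bsigma\,\Delta) \simeq \Tm\,(\bsigma\,\Gamma)\,(\K\,(\bsigma\,\Delta))$, naturally in $\Gamma$; by Yoneda in the type category (using that terms at each context determine a type up to iso, i.e.\ $\Tm\,\Gamma\,A \simeq \Tm\,\Gamma\,B$ naturally implies $A \simeq B$ — this is the standard fact that $A \mapsto \Tm\,\blank\,A$ is fully faithful into presheaves) this yields the type isomorphism $\bsigma_{\K}$. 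The term/substitution preservation statements (e.g.\ $\bsigma\,(\appK\,t) \equiv \appK\,\ldots$) then fall out because $\appK$ and $\lamK$ are the two directions of that very natural iso, and $\bsigma$ commutes with them by construction. Then $\bsigma_{\Sigma}$ goes the same way: $\Sigma\,A\,B$ is characterized by $\Tm\,\Gamma\,(\Sigma\,A\,B) \simeq (t : \Tm\,\Gamma\,A) \times \Tm\,\Gamma\,(B[\id,t])$; apply $\bsigma$, use strict substitution preservation to push $\bsigma$ through $B[\id,t]$ and through $[\bsigma_{\ext}^{-1}]$, and conclude by the same Yoneda-style argument. $\bsigma_{\Id}$ is identical, using $\Tm\,\Gamma\,(\Id\,t\,u) \simeq (t \equiv u)$ and that equality of terms is strictly preserved by any function (here $\bsigma$'s action on terms).

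For \emph{naturality}, I would observe that in each case the comparison iso is assembled entirely from (i) the universal maps of the fl-structure, which are natural in the base by definition of the fl-structure in a cwf, and (ii) the isomorphisms $\bsigma_{\emptycon}$, $\bsigma_{\ext}$, whose naturality is part of the weak-morphism data (they are specified via inverses to the canonical comparison substitutions, which are natural). Reindexing along $\sigma : \Sub_{\bGamma}\,\Gamma\,\Delta$ then commutes with every building block, so it commutes with the composite; this is a diagram chase with no real content once the pieces are in place. Finally, the preservation of term and substitution formers ($\bsigma\,(\proj_1\,t) \equiv \proj_1\,(\bsigma_{\Sigma}[\id,\bsigma\,t])$, and the analogues for $\proj_2$, pairing, $\reflect$, $\refl$, $\appK$, $\lamK$) is read off by unfolding both sides through the defining isomorphisms and using that $\bsigma$ already commutes strictly with substitution and with the underlying hom-set bijections.

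The main obstacle I expect is the bookkeeping around $\bsigma_{\ext}$: because comprehension is only preserved up to iso, the substitution $[\id,t]$ used to specify $\Sigma$ (and $\Id$) does not commute with $\bsigma$ on the nose, and one has to insert $\bsigma_{\ext}^{-1}$ in exactly the right places — this is precisely why the statement has $(\bsigma\,B)[\bsigma_{\ext}^{-1}]$ rather than just $\bsigma\,B$. Getting these coherence insertions right, and checking that the resulting isos are genuinely natural (not just pointwise), is the technically delicate part; everything else is a routine transport of universal properties across the functor. I would contain this by proving once, as a lemma, that $\bsigma$ together with $\bsigma_{\emptycon}$ and $\bsigma_{\ext}$ induces a preservation-up-to-iso of \emph{all} derived substitutions (single substitutions, weakenings, liftings), with naturality, and then invoking that lemma uniformly for $\Sigma$, $\Id$ and $\K$.
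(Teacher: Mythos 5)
Your overall strategy (derive each comparison isomorphism from the universal property of the corresponding type former) is reasonable, and your closing remarks about the $\bsigma_{\ext}$ bookkeeping correctly identify where the work is. But the pivotal step in your argument for $\K$ (and, analogously, for $\Sigma$) has a genuine gap: from $\Tm_{\bGamma}\,\Gamma\,(\K\,\Delta) \simeq \Sub_{\bGamma}\,\Gamma\,\Delta$ you cannot conclude $\Tm_{\bDelta}\,(\bsigma\,\Gamma)\,(\bsigma\,(\K\,\Delta)) \simeq \Sub_{\bDelta}\,(\bsigma\,\Gamma)\,(\bsigma\,\Delta)$ by ``applying $\bsigma$'', because a weak flcwf-morphism is neither full nor faithful: its actions on terms and on substitutions are mere functions, and the hom-sets of $\bDelta$ are not the images of those of $\bGamma$. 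Moreover, the naturality you obtain this way is indexed by contexts of $\bGamma$, whereas the Yoneda extraction in the type category of $\bDelta$ needs the isomorphism at the generic context $\bsigma\,\Gamma \ext \bsigma\,(\K\,\Delta)$, which is not of the form $\bsigma\,\Xi$ for any $\Xi$ in $\bGamma$ --- it is only isomorphic to $\bsigma\,(\Gamma\ext\K\,\Delta)$ via $\bsigma_{\ext}$. So the representability argument cannot be run purely on hom-sets of the source model.

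The paper's proof sidesteps this by working with context isomorphisms rather than hom-set isomorphisms: functors do preserve isomorphisms of objects, so applying $\bsigma$ to $\Gamma\ext\Sigma\,A\,B \simeq \Gamma\ext A\ext B$ (resp.\ to $\emptycon\ext\K\,\Delta\simeq\Delta$) yields a context isomorphism in $\bDelta$, which is then composed with $\bsigma_{\ext}$ and $\bsigma_{\emptycon}$ to reach $\bsigma\,\Gamma\ext\Sigma\,(\bsigma\,A)\,((\bsigma\,B)[\bsigma_{\ext}^{-1}])$ (resp.\ $\emptycon\ext\K\,(\bsigma\,\Delta)$, afterwards weakened along $\epsilon$); a context isomorphism that is the identity on the base is then converted into the desired type isomorphism. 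If you try to repair your Yoneda argument --- characterizing $\Tm_{\bDelta}\,\Xi\,((\bsigma\,(\K\,\Delta))[\rho])$ for arbitrary $\Xi$ in $\bDelta$ --- you will find yourself reconstructing exactly this chain, so I would take the context-isomorphism route directly; your proposed lemma on preservation of derived substitutions is a sensible companion to it, and your $\Id$ case (both directions by $\refl$ and equality reflection, inverses by UIP) is exactly what the paper does.
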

\begin{proof}
For $\bsigma_{\Sigma}$, we construct the following context isomorphism:
\begin{alignat*}{3}
& (\bsigma\,\Gamma\,\ext\,\bsigma\,(\Sigma\,A\,B)) \simeq
  (\bsigma\,\Gamma\,\ext\,\bsigma\,A\,\ext\,(\bsigma\,B)[\bsigma_{\ext}^{-1}]) \\
& \simeq (\bsigma\,\Gamma\,\ext\,\Sigma\,(\bsigma\,A)\,((\bsigma\,B)[\bsigma_{\ext}^{-1}]))
\end{alignat*}
This isomorphism is the identity on $\bsigma\,\Gamma$, hence we can extract the
desired $\bsigma_{\Sigma} : \bsigma\,(\Sigma\,A\,B) \simeq
\Sigma\,(\bsigma\,A)\,((\bsigma\,B)[\bsigma_{\ext}^{-1}])$ from it.

For $\bsigma_{\Id}$, both component morphisms can be constructed by $\refl$ and
equality reflection, and the morphisms are inverses by UIP. We omit here the
verification of naturality and that $\bsigma$ preserves term and substitution
formers in the fl-structure.

For $\bsigma_{\K}$, note the following:
\begin{alignat*}{3}
  & (\emptycon\,\ext\,\bsigma\,(\K\,\Delta)) \simeq
    (\bsigma\,\emptycon\,\ext\,\bsigma\,(\K\,\Delta)) \simeq
    \bsigma\,(\emptycon\,\ext\,\K\,\Delta)\\
  & \simeq \bsigma\,\Delta \simeq (\emptycon\,\ext\,K\,(\bsigma\,\Delta))
\end{alignat*}
This yields a type isomorphism $\bsigma\,(\K\,\Delta) \simeq
\K\,(\bsigma\,\Delta)$ in the empty context, and we can use the functorial action of
$\epsilon : \Sub\,\Gamma\,\emptycon$ to weaken it to any $\Gamma$ context.
\end{proof}

\subsubsection{Identity and composition}
\label{sec:idcomp}

$\bid : \bSub\,\bGamma\,\bGamma$ is defined in the obvious way, with identities for
underlying functions and for preservation morphisms.

For $\bs{\sigma \circ \delta}$, the underlying functions are given by
function composition, and the preservation morphisms are given as follows:
\begin{alignat*}{3}
  & (\bs{\sigma \circ \delta})_{\emptycon}^{-1} \defn
    \bsigma\,\bdelta_{\emptycon}^{-1} \circ \bdelta_{\emptycon}^{-1} \\
  & (\bs{\sigma \circ \delta})_{\ext}^{-1} \defn
    \bsigma\,\bdelta_{\ext}^{-1} \circ \bdelta_{\ext}^{-1}
\end{alignat*}

It is easy to verify the left and right identity laws and associativity for
$\bs{\blank\circ\blank}$.

\begin{mylemma}\label{lem:idcomppres}
The derived preservation isomorphisms for the fl-structure can be decomposed
analogously; all derived isomorphisms in $\bid$ are identities, and we have
\begin{alignat*}{3}
  & (\bs{\sigma \circ \delta})_{\Sigma} \equiv
  \bsigma\,\bdelta_{\Sigma} \circ \bdelta_{\Sigma}\\
  & (\bs{\sigma \circ \delta})_{\Id} \equiv
  \bsigma\,\bdelta_{\Id} \circ \bdelta_{\Id}\\
  & (\bs{\sigma \circ \delta})_{\K} \equiv
  \bsigma\,\bdelta_{\K} \circ \bdelta_{\K}
\end{alignat*}
On the right sides, $\blank\circ\blank$ refers to composition of type morphisms.
\end{mylemma}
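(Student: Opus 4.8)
The plan is to prove the lemma by unfolding the construction of the derived preservation isomorphisms from the proof of Theorem \ref{thm:flpres} and then feeding in the decompositions of the primitive data $(\bs{\sigma \circ \delta})_{\emptycon}$ and $(\bs{\sigma \circ \delta})_{\ext}$ that were already established in Section \ref{sec:idcomp}. Recall that in Theorem \ref{thm:flpres} each of $\bsigma_{\Sigma}$ and $\bsigma_{\K}$ was produced by first assembling a chain of \emph{context} isomorphisms over the relevant base --- built only from $\bsigma_{\emptycon}$, $\bsigma_{\ext}$ and canonical isomorphisms that belong to the fl-structure of the source and target flcwfs (the comprehension law $\Gamma \ext \Sigma\,A\,B \simeq \Gamma \ext A \ext B$ for $\Sigma$, and $\emptycon \ext \K\,\Delta \simeq \Delta$ for $\K$) --- and then extracting a type isomorphism via the operation ``restrict a context isomorphism that is the identity on $\Gamma$ to a type isomorphism'', which I will call un-extension. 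I will use two facts about un-extension, both routine consequences of the $\eta$-law for context extension and the weak-morphism laws: that it is functorial (un-extending a composite of such context isomorphisms gives the composite of the un-extensions), and that it is compatible with applying a weak flcwf-morphism $\bsigma$, i.e. the $\bsigma$-image of an un-extended type iso equals the un-extension of the $\bsigma$-image of the context iso, up to a coherent insertion of $\bsigma_{\ext}$ and $\bsigma_{\ext}^{-1}$, because $\bsigma$ sends isomorphisms to isomorphisms, preserves substitution strictly, and preserves $\p$ and $\q$ up to $\bsigma_{\ext}$.

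First I would dispatch $\bsigma_{\Id}$: as in Theorem \ref{thm:flpres} its two component morphisms are built from $\refl$ and equality reflection, and any two parallel morphisms of this form agree by UIP; hence $(\bs{\sigma \circ \delta})_{\Id}$, $\bdelta_{\Id}$, $\bsigma\,(\bdelta_{\Id})$ and the composite on the right-hand side are all forced, so the claimed equation holds automatically. For $\bsigma_{\Sigma}$ and $\bsigma_{\K}$ I would take the chain of context isomorphisms that computes $(\bs{\sigma \circ \delta})_{\Sigma}$ (resp. $(\bs{\sigma \circ \delta})_{\K}$) and rewrite each occurrence of $(\bs{\sigma \circ \delta})_{\emptycon}$ and $(\bs{\sigma \circ \delta})_{\ext}$ using Section \ref{sec:idcomp}. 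This splits the chain into a block consisting of the $\bsigma$-image of the $\bdelta$-chain followed by the $\bsigma$-chain, the two blocks being glued through the canonical fl-structure isos, which commute past $\bsigma$ because $\bsigma$ strictly preserves substitution and all the strict cwf and $\Sigma$ structure occurring in them. Functoriality of un-extension together with its compatibility with $\bsigma$ then collapses this to $\bsigma\,(\bdelta_{\Sigma}) \circ \bdelta_{\Sigma}$ (resp. the $\K$ version), including the substitution reindexings along $\bsigma_{\ext}^{-1}$ and $\bdelta_{\ext}^{-1}$ that the statement carries. The $\bid$ case is the degenerate instance: $\bid_{\emptycon}$ and $\bid_{\ext}$ are identities, so every chain reduces to the composite of a canonical fl-structure iso with its own inverse, hence to an identity, whence $\bid_{\Sigma}$, $\bid_{\Id}$ and $\bid_{\K}$ are all identities.

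The main obstacle is purely bookkeeping. One must keep precise track of which of the three flcwfs involved in $\bs{\sigma \circ \delta}$ each intermediate isomorphism lives in, and of the nested substitutions $B[\bsigma_{\ext}^{-1}]$, $(\bdelta\,B)[\bdelta_{\ext}^{-1}]$ and $\bsigma\,((\bdelta\,B)[\bdelta_{\ext}^{-1}])$ that accumulate, so that the reassociation of the chain is type-correct and the compatibility of un-extension with $\bsigma$ can be applied at the right spot. I also need to check the naturality clauses of Theorem \ref{thm:flpres}, but these fall out of the same reassociation, since every ingredient ($\bsigma_{\emptycon}$, $\bsigma_{\ext}$, and the fl-structure isos) is itself natural. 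None of these steps requires any idea beyond those already used for $\bsigma_{\emptycon}$ and $\bsigma_{\ext}$ in Section \ref{sec:idcomp}.
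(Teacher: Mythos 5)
Your proposal is correct and follows essentially the same route as the paper, whose entire proof reads: the $\Id$ case is immediate by UIP, and the $\Sigma$ and $\K$ cases follow by flcwf computation and straightforward unfolding of definitions. Your elaboration of that ``unfolding'' — rewriting the context-isomorphism chains from Theorem~\ref{thm:flpres} using the decompositions of $(\bs{\sigma\circ\delta})_{\emptycon}$ and $(\bs{\sigma\circ\delta})_{\ext}$ and then collapsing via functoriality of the extraction of type isomorphisms — is a faithful expansion of what the paper leaves implicit.
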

\begin{proof}
In the case of $\Id$, the equations hold immediately by UIP. For $\Sigma$ and
$\K$, we prove by flcwf computation and straightforward unfolding of
definitions.
\end{proof}

\subsubsection{Empty context}
The empty context $\bemptycon : \bCon$ is the same as before, i.e.\ the terminal
flcwf. Since the unique $\bs{\epsilon} : \bSub\,\bGamma\,\bemptycon$ morphism
strictly preserves all structure, it also a weak morphism.

\subsubsection{Types}
We define $\bs{\Ty\,\Gamma} : \Set$ as the type of split flcwf-isofibrations
over $\bGamma$. This consists of a displayed flcwf together with \emph{split iso-cleaving}
structure. For the displayed flcwf part, we reuse previous notation from Section
\ref{sec:fqiit-family}. For the iso-cleaving, we make some auxiliary definitions first.

\begin{mydefinition}[Displayed type categories]
For each $\Gamma : \Con_{\bA}\,\ulGamma$, there is a displayed category over the
type category $\Ty_{\bGamma}\,\ulGamma$, whose objects over $\ulA :
\Ty_{\bGamma}\,\ulGamma$ are elements of $\Ty_{\bA}\,\Gamma\,\ulA$, and
displayed morphisms over $\ult : \Tm_{\bGamma}\,(\ulGamma \ext
\ulA)\,(\ulB[\p])$ are elements of $\Tm_{\bA}\,(\Gamma \ext
A)\,(B[\p])\,\ult$. The identity morphism is given by $\q_{\bA}$, and the
composition of $t$ and $u$ is $t[\p_{\bA},u]$. Analogously to Definition
\ref{def:type_categories}, this extends to a displayed split indexed category.
\end{mydefinition}

\begin{mydefinition}[Displayed isomorphisms]
\label{def:displayed-iso}
A \emph{displayed context isomorphism} over $\ulsigma : \ulGamma \simeq
\ulDelta$, notated $\sigma : \Gamma \simeq_{\ulsigma} \Delta$, is an invertible
displayed morphism $\sigma : \Sub_{\bA}\,\Gamma\,\Delta\,\ulsigma$, with inverse
$\sigma^{-1} : \Sub_{\bA}\,\Delta\,\Gamma\,\ulsigma^{-1}$. A \emph{displayed
  type isomorphism} over $\ult : \ulA \simeq \ulB$, notated $t : A \simeq_{\ult}
B$, is an isomorphism in a displayed type category.
\end{mydefinition}

\begin{mydefinition}
A \emph{vertical morphism} lies over an identity morphism. We use this
definition for context morphisms (substitutions) and type morphisms as well.
\end{mydefinition}

\begin{mydefinition}[Split iso-cleaving for contexts] This lifts a base context isomorphism to a displayed one. It consists of
\begingroup
\allowdisplaybreaks
\begin{alignat*}{3}
  & \coe &&: \ulGamma \simeq \ulDelta \ra \Con_{\bA}\,\ulGamma \ra \Con_{\bA}\,\ulDelta\\
  & \coh &&: (\ulsigma : \ulGamma \simeq \ulDelta)(\Gamma : \Con_{\bA}\,\ulGamma)
           \ra \Gamma \simeq_{\ulsigma} \coe\,\ulsigma\,\Gamma\\
  & \coe^{\id} && : \coe\,\id\,\Gamma \equiv \Gamma\\
  & \coe^{\circ} && : \coe\,(\ulsigma\circ\uldelta)\,\Gamma \equiv \coe\,\ulsigma\,(\coe\,\uldelta\,\Gamma)\\
  & \coh^{\id} && : \coh\,\id\,\Gamma \equiv \id\\
  & \coh^{\circ} && : \coh\,(\ulsigma\circ\uldelta)\,\Gamma \equiv \coh\,\ulsigma\,(\coe\,\uldelta\,\Gamma)
          \circ \coh\,\uldelta\,\Gamma
\end{alignat*}
\endgroup
Here, $\coe$ and $\coh$ abbreviate ``coercion'' and ``coherence'' respectively.
\end{mydefinition}

\begin{mydefinition}[Split iso-cleaving for types] This consists of
\begingroup
\allowdisplaybreaks
\begin{alignat*}{3}
  & \coe &&: \ulA \simeq \ulB \ra \Ty_{\bA}\,\Gamma\,\ulA \ra \Ty_{\bA}\,\Gamma\,\ulB\\
  & \coh &&: (\ult : \ulA \simeq \ulB)(A : \Ty_{\bA}\,\Gamma\,\ulA)
           \ra A \simeq_{\ult} \coe\,\ult\,A\\
  & \coe^{\id} && : \coe\,\id\,A \equiv A\\
  & \coe^{\circ} && : \coe\,(\ult\circ\ulu)\,A \equiv \coe\,\ult\,(\coe\,\ulu\,A)\\
  & \coh^{\id} &&: \coh\,\id\,A \equiv \id\\
  & \coh^{\circ} &&: \coh\,(\ult\circ\ulu)\,A \equiv \coh\,\ult\,(\coe\,\ulu\,A)
          \circ \coh\,\ulu\,A
\end{alignat*}
\endgroup
Additionally, for $\sigma : \Sub_{\bA}\,\Gamma\,\Delta\,\ulsigma$, we have
\begin{alignat*}{3}
  & \coe[] &&: \coe\,(\ult[\ulsigma\circ \p,\q])\,(A[\sigma]) \equiv (\coe\,\ult\,A)[\sigma]\\
  & \coh[] &&: \coh\,(\ult[\ulsigma\circ \p,\q])\,(A[\sigma]) \equiv (\coh\,\ult\,A)[\sigma]
\end{alignat*}

\end{mydefinition}

\begin{mydefinition} A \emph{split flcwf isofibration} is a displayed flCwF equipped with split iso-cleaving for contexts and types.
\end{mydefinition}

\emph{Remark.} It is not possible to model types as fibrations or opfibrations
because we have no restriction on the variance of ToS types. For example, the
type which extends a pointed set signature to a natural number signature, is
neither a fibration nor an opfibration.

\subsubsection{Type substitution}
We aim to define $\bs{\blank[\blank] : \Ty\,\Delta \ra \Sub\,\Gamma\,\Delta \ra
  \Ty\,\Gamma}$, such that $\bs{A[\id]} \equiv \bA$ and $\bs{A[\sigma\circ\delta]} \equiv
\bs{A[\sigma][\delta]}$. As before, the underlying sets are given by simple
composition:
\begin{alignat*}{3}
  & \Con_{\bs{A[\sigma]}}\,\ulGamma && \defn \Con_{\bA}\,(\bsigma\,\ulGamma)\\
  & \Sub_{\bs{A[\sigma]}}\,\Gamma\,\Delta\,\ulsigma && \defn
    \Sub_{\bA}\,\Gamma\,\Delta\,(\bsigma\,\ulsigma)\\
  & \Ty_{\bs{A[\sigma]}}\,\Gamma\,\ulA && \defn
      \Ty_{\bA}\,\Gamma\,(\bsigma\,\ulA)\\
  & \Tm_{\bs{A[\sigma]}}\,\Gamma\,A\,\ult && \defn
      \Tm_{\bA}\,\Gamma\,A\,(\bsigma\,\ult)
\end{alignat*}
The difference from the finitary case is that instead of preservation equations,
we have isomorphisms, coercions and coherence. However, we can recover
essentially the same reasoning as before because all the previous transports
still work. Context and type formers are given by coercing $\bA$ structures
along preservation isomorphisms by $\bsigma$. For example:
\begin{alignat*}{3}
  &\emptycon_{\bs{A[\sigma]}} && \defn
    \coe\,\bsigma_{\emptycon}^{-1}\,\emptycon_{\bA}\\
  &\Gamma\ext_{\bs{A[\sigma]}}A && \defn
    \coe\,\bsigma_{\ext}^{-1}\,(\Gamma\ext_{\bA} A)\\
  &\Id_{\bs{A[\sigma]}}\,t\,u && \defn
    \coe\,\bsigma_{\Id}^{-1}\,(\Id_{\bA}\,t\,u)\\
  &\K_{\bs{A[\sigma]}}\,\Delta && \defn
    \coe\,\bsigma_{\K}^{-1}\,(\K_{\bA}\,\Delta)
\end{alignat*}
Term and substitution formers are given by composing $\coh$-lifted
isomorphisms with term and substitution formers from $\bA$. For example:
\begin{alignat*}{3}
  & \epsilon_{\bs{A[\sigma]}} && \defn
    \coh\,\bsigma_{\emptycon}^{-1}\,\emptycon_{\bA} \circ \epsilon_{\bA}\\
  & \p_{\bs{A[\sigma]}} && \defn
    \p_{\bA} \circ (\coh\,\bsigma_{\ext}^{-1}\,(\Gamma\ext A))^{-1}\\
  & (\sigma,_{\bs{A[\sigma]}}\,t) && \defn \coh\,\bsigma_{\ext}^{-1}\,(\Delta\ext A) \circ (\sigma,_{\bA}\,t)
\end{alignat*}
As we mentioned, only weak $\K$ is supported in $\bs{A[\sigma]}$. For strict $\K$
we would have to show:
\[
\Sub_{\bA}\,\Gamma\,\Delta\,(\bsigma\,\ulsigma)
\equiv \Tm_{\bA}\,\Gamma\,(\coe\,\bsigma_{\K}^{-1}\,(\K_{\bA}\,\Delta))\,(\bsigma\,\ulsigma)
\]
By strict $\K$ in $\bA$, it would be enough to show
\[
\Tm_{\bA}\,\Gamma\,(\K_{\bA}\,\Delta)\,(\bsigma\,\ulsigma)
\equiv \Tm_{\bA}\,\Gamma\,(\coe\,\bsigma_{\K}^{-1}\,(\K_{\bA}\,\Delta))\,(\bsigma\,\ulsigma)
\]
But there is no reason why these sets should be equal, so we instead produce an isomorphism.

Equations for term and type substitution follow from naturality of preservation
isomorphisms in $\bsigma$, $\coe[]$, $\coh[]$ and substitution equations in
$\bA$.


Iso-cleaving is given by iso-cleaving in $\bA$ and the action of $\bsigma$ on
isomorphisms, so that we have $\coe_{\bs{A[\sigma]}}\,\ulsigma\,\Gamma
\defn \coe_{\bA}\,(\bsigma\,\ulsigma)\,\Gamma$ and $\coh_{\bs{A[\sigma]}}\,\ulsigma\,\Gamma
\defn \coh_{\bA}\,(\bsigma\,\ulsigma)\,\Gamma$.

Functoriality of type substitution, i.e.\ $\bs{A[\id]} \equiv \bA$ and
$\bs{A[\sigma\circ\delta]} \equiv \bs{A[\sigma][\delta]}$, follows
from Lemma \ref{lem:idcomppres} and split cleaving given by $\coe^{\id}$,
$\coe^{\circ}$, $\coh^{\id}$ and $\coh^{\circ}$ laws in $\bA$.

\subsubsection{Terms}

$\bs{\Tm\,\Gamma\,A} : \Set$ is defined as the type of
\emph{weak flCwF sections} of $\bA$. The underlying functions of $\bt :
\bTm\,\bGamma\,\bA$ are as follows:
\begingroup
\allowdisplaybreaks
\begin{alignat*}{3}
  & \bt : (\ulGamma : \Con_{\bGamma}) \ra \Con_{\bA}\,\ulGamma\\
  & \bt : (\ulsigma : \Sub_{\bGamma}\,\ulGamma\,\ulDelta)
         \ra \Sub_{\bA}\,(\bt\,\ulGamma)\,(\bt\,\ulDelta)\,\ulsigma\\
  & \bt : (\ulA : \Ty_{\bGamma}) \ra \Ty_{\bA}\,(\bt\,\ulGamma)\,\ulA\\
  & \bt : (\ult : \Tm_{\bGamma}\,\ulGamma\,\ulA) \ra
          \Tm_{\bA}\,(\bt\,\ulGamma)\,(\bt\,\ulA)\,\ult
\end{alignat*}
\endgroup
Such that
\begin{enumerate}
  \item $\bt\,(\ulA[\ulsigma]) \equiv (\bt\,\ulA)\,[\bt\,\ulsigma]$
  \item $\bt\,(\ult[\ulsigma]) \equiv (\bt\,\ult)\,[\bt\,\ulsigma]$
  \item The unique $\epsilon_{\bA} : \Sub\,(\bt\,\emptycon)\,\emptycon\,\id$ has a vertical retraction.
  \item Each $(\bt\,\p,\,\bt\,\q) : \Sub\,(\bt\,(\ulGamma\,\ext\,\ulA))\,(\bt\,\ulGamma\,\ext\,\bt\,\ulA)\,\id$ has a vertical inverse.
\end{enumerate}

Similarly to what we had in $\bSub$, we denote the evident preservation
isomorphisms as $\bt_{\emptycon} : \bt\,\emptycon \simeq_{\id} \emptycon$ and
$\bt_{\ext} : \bt\,(\ulGamma\ext \ulA) \simeq_{\id} \bt\,\ulGamma \ext
\bt\,\ulA$. In short, weak sections are dependently typed analogues of weak
morphisms, with dependent underlying functions and displayed preservation
isomorphisms. We also have the derived fl-preservation isomorphisms.

\begin{theorem} A weak section $\bs{t : \Tm\,\Gamma\,A}$ preserves fl-structure up to vertical type isomorphisms, that is, the following are derivable:
\begin{alignat*}{3}
  & \bt_{\Sigma} : \bt\,(\Sigma\,\ulA\,\ulB) \simeq_{\id} \Sigma\,(\bt\,\ulA)\,((\bt\,\ulB)[\bt_{\ext}^{-1}]) \\
  & \bt_{\Id} : \bt\,(\Id\,\ult\,\ulu) \simeq_{\id} \Id\,(\bt\,\ult)\,(\bt\,\ulu)  \\
  & \bt_{\K} : \bt\,(\K\,\ulDelta) \simeq_{\id} \K\,(\bt\,\ulDelta)
\end{alignat*}
Also, the above isomorphisms are natural in the sense of Theorem
\ref{thm:flpres}, and $\bt$ preserves term and substitution formers in the
fl-structure.
\end{theorem}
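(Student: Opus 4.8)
The plan is to reproduce the proof of Theorem~\ref{thm:flpres} in the ``displayed'' setting: we replace context isomorphisms by displayed context isomorphisms lying over the identity, and extract \emph{vertical} type isomorphisms at the end. The only load-bearing inputs are the preservation isomorphisms $\bt_{\emptycon}$ and $\bt_{\ext}$ that come packaged with $\bt$, the fact that the underlying displayed functor of $\bt$ (a lift of the identity on $\bGamma$) preserves isomorphisms, and the $\Sigma$-, $\Id$- and $\K$-isomorphisms already available inside the displayed flcwf $\bA$; everything else is assembly.

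For $\bt_{\Sigma}$, I would build a chain of displayed context isomorphisms starting from $\bt\,\ulGamma \ext \bt\,(\Sigma\,\ulA\,\ulB)$: first $\bt\,\ulGamma \ext \bt\,(\Sigma\,\ulA\,\ulB) \simeq_{\id} \bt\,(\ulGamma \ext \Sigma\,\ulA\,\ulB)$ by $\bt_{\ext}^{-1}$; then $\bt\,(\ulGamma \ext \Sigma\,\ulA\,\ulB) \simeq \bt\,(\ulGamma \ext \ulA \ext \ulB)$ by applying the underlying functor of $\bt$ to the base $\Sigma$-isomorphism $s : \ulGamma \ext \Sigma\,\ulA\,\ulB \simeq \ulGamma \ext \ulA \ext \ulB$; then $\bt\,(\ulGamma \ext \ulA \ext \ulB) \simeq_{\id} (\bt\,\ulGamma \ext \bt\,\ulA) \ext (\bt\,\ulB)[\bt_{\ext}^{-1}]$ by two uses of $\bt_{\ext}$, the second weakened over the extra binder, which is exactly where the reindexing along $\bt_{\ext}^{-1}$ enters; and finally $(\bt\,\ulGamma \ext \bt\,\ulA) \ext (\bt\,\ulB)[\bt_{\ext}^{-1}] \simeq \bt\,\ulGamma \ext \Sigma_{\bA}\,(\bt\,\ulA)\,((\bt\,\ulB)[\bt_{\ext}^{-1}])$ by the $\Sigma_{\bA}$-isomorphism of $\bA$. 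The composite base substitution is $s^{-1} \circ s = \id$, so the whole displayed context isomorphism is vertical, and since each piece commutes with the first projection it is the identity on $\bt\,\ulGamma$; extracting the component over the last binder gives $\bt_{\Sigma}$. For $\bt_{\Id}$, exactly as in Theorem~\ref{thm:flpres}: both component morphisms are produced by $\refl$ and equality reflection in $\bA$, and they are mutually inverse by UIP. For $\bt_{\K}$: chain $\emptycon \ext \bt\,(\K\,\ulDelta) \simeq_{\id} \bt\,\emptycon \ext \bt\,(\K\,\ulDelta)$ by $\bt_{\emptycon}^{-1}$, then $\simeq \bt\,(\emptycon \ext \K\,\ulDelta) \simeq \bt\,\ulDelta$ by $\bt_{\ext}$ and functoriality of $\bt$ on the base $\K$-isomorphism, then $\simeq \emptycon_{\bA} \ext \K_{\bA}\,(\bt\,\ulDelta) \simeq_{\id} \emptycon \ext \K_{\bA}\,(\bt\,\ulDelta)$ by the $\K_{\bA}$-isomorphism of $\bA$ and $\bt_{\emptycon}$; this yields a vertical type isomorphism in the empty displayed context, which we weaken to an arbitrary $\ulGamma$ via the functorial action of $\epsilon$, just as in Theorem~\ref{thm:flpres}.

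Naturality of $\bt_{\Sigma}$, $\bt_{\Id}$ and $\bt_{\K}$ in the sense of Theorem~\ref{thm:flpres} follows because every building block ($\bt_{\emptycon}$, $\bt_{\ext}$, the functorial action of $\bt$, the $\bA$-internal fl-isomorphisms) is stable under reindexing; preservation of the term and substitution formers of the fl-structure by $\bt$ is then a matter of unfolding definitions, using UIP in the $\Id$ case and flcwf computation in the $\Sigma$ and $\K$ cases, which I would omit as routine. I expect the main obstacle to be purely bookkeeping: tracking which base isomorphism each displayed isomorphism lies over, threading the iso-cleaving data ($\coe$/$\coh$) through the points where displayed types must be transported, and correctly placing the reindexings along $\bt_{\ext}^{-1}$ and $\bt_{\emptycon}^{-1}$; there is no new conceptual ingredient beyond Theorem~\ref{thm:flpres}. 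As an alternative one could note that a weak section of $\bA$ is the same thing as a weak flcwf-morphism $\bGamma \to \bGamma \ext \bA$ sectioning the strict projection $\bs{\p}$, and read off $\bt_{\Sigma}$, $\bt_{\Id}$, $\bt_{\K}$ from Theorem~\ref{thm:flpres} applied to that morphism together with strict preservation by $\bs{\p}$; but the direct argument above is more self-contained.
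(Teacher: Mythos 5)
Your proposal is correct and takes essentially the same route as the paper: the paper's proof consists precisely of the observation that every step in the construction of Theorem \ref{thm:flpres} has a displayed counterpart (displayed context isomorphisms over identities, the displayed fl-structure of $\bA$, and the preservation data $\bt_{\emptycon}$, $\bt_{\ext}$), and your chains for $\bt_{\Sigma}$, $\bt_{\Id}$ and $\bt_{\K}$ spell out exactly that replay, extracting vertical isomorphisms because the composite base isomorphisms are identities. The paper leaves all of this implicit, so your version is just a more explicit rendering of the same argument.
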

\begin{proof}
The construction of isomorphisms is the same as in Theorem
\ref{thm:flpres}. Indeed, every construction there has a displayed counterpart
which we can use here.
\end{proof}

We note though that the move from Theorem \ref{thm:flpres} to here is not simply
a logical predicate translation because we are only lifting the codomain of a
weak morphism to a displayed version, and we leave the domain non-displayed. We
leave to future work the investigation of such asymmetrical logical predicate
translations.

\subsubsection{Term substitution}

$\bs{\blank[\blank] : \Tm\,\Delta\,A \ra (\sigma : \Sub\,\Gamma\,\Delta)
  \ra \Tm\,\Gamma\,(A[\sigma])}$ is given similarly to
$\bs{\blank\!\circ\!\blank}$ in Section \ref{sec:idcomp}. Underlying functions
are given by function composition, and preservation morphisms are also similar:
\begin{alignat*}{3}
  & (\bs{t[\sigma]})_{\emptycon}^{-1} \defn
    \bt\,\bsigma_{\emptycon}^{-1} \circ \bt_{\emptycon}^{-1} \\
  & (\bs{t[\sigma]})_{\ext}^{-1} \defn
    \bt\,\bsigma_{\ext}^{-1} \circ \bt_{\ext}^{-1}
\end{alignat*}
We also have the same decomposition of derived isomorphisms as in Lemma
\ref{lem:idcomppres}. We do not have to show functoriality of term substitution
here, since that is derivable in any cwf, see e.g. \cite{kaposi2019constructing}.

\subsubsection{Comprehension}

$\bs{\Gamma\,\ext A : \Con}$ is defined as the total flcwf of $\bA$, in exactly
the same way as in the finitary case, since the additional iso-cleaving
structure plays no role in the result. $\bs{\p : \Sub\,(\Gamma\ext A)\,\Gamma}$
and $\bs{\q : \Tm\,(\Gamma\ext A)\,(A[\p])}$ are likewise unchanged; they are
strict morphisms, so also automatically weak morphisms. Substitution extension
$\bs{(\sigma,\,t)}$ is given by pointwise combining $\bsigma$ and $\bt$,
e.g.\ $\Con_{\bs{(\sigma,t)}}\,\ulGamma \defn (\bsigma\,\ulGamma,\,\bt\,\ulGamma)$.

\subsubsection{Strict constant families}
We have the same definition for $\bs{\K\,\Delta : \Ty\,\Gamma}$ as in the
finitary case, although we need to define iso-cleaving in addition. Fortunately,
coercions and coherences are all trivial because $\bK\,\bDelta$ does not actually
depend on $\bGamma$.
\begin{alignat*}{3}
  &\coe_{\bs{\K\,\Delta}}\,\ulsigma\,\Gamma &&\defn \Gamma\\
  &\coe_{\bs{\K\,\Delta}}\,\ult\,A          &&\defn A
\end{alignat*}

\subsubsection{Universe}

$\bU : \bTy\,\bGamma$ is exactly the same as before. We define it as the type
which is constantly the flcwf of inner types, so it inherits the trivial
iso-cleaving from $\bK$.

$\bs{\El\,a : \Ty\,\Gamma}$ is again the displayed flcwf of
the elements of $\bs{a : \Tm\,\Gamma\,\U}$. The underlying sets are unchanged:
\begingroup
\allowdisplaybreaks
\begin{alignat*}{3}
  & \Con_{\bEl\,\ba}\,\ulGamma && \defn \Tm_0\,(\ba\,\ulGamma)\\
  & \Sub_{\bEl\,\ba}\,\Gamma\,\Delta\,\ulsigma && \defn \ba\,\ulsigma\,\Gamma \equiv \Delta\\
  & \Ty_{\bEl\,\ba}\,\Gamma\,\ulA && \defn \Tm_0\,(\ba\,\ulA\,\Gamma)\\
  & \Tm_{\bEl\,\ba}\,\Gamma\,A\,\ult && \defn \ba\,\ult\,\Gamma \equiv A
\end{alignat*}
\endgroup
We need to adjust definitions to show that $\bs{\El\,a}$ supports all required
structure. Previously, all context and type formers were inherited from $\bU$,
since $\ba$ strictly preserved them. Now, $\ba$ preserves structure up to
(definitional) isomorphism of inner types. Hence, the adjustments are quite
mechanical; they are like wrapping all definitions in ``unary record constructors''
given by preservation isomorphisms. For example:
\begin{alignat*}{3}
  & \emptycon_{\bEl\,\ba} && \defn \ba_{\emptycon}^{-1}\,\tt\\
  & (\Gamma\ext_{\bEl\,\ba} A) && \defn \ba_{\ext}^{-1}\,(\Gamma,\,A)
\end{alignat*}
We likewise use preservation isomorphisms to define $\K$, $\Id$ and $\Sigma$.
Context coercion is $\coe\,\ulsigma\,\Gamma \defn \ba\,\ulsigma\,\Gamma$. Type
coercion, for $A : \ba\,\ulA\,\Gamma$ is given as $\coe\,\ult\,A \defn
\ba\,\ult\,(\ba_{\ext}^{-1}\,(\Gamma,\,A))$.

\subsubsection{Unit type}

$\bs{\top : \Tm\,\Gamma\,\U}$ is the constantly $\top_0$ morphism, i.e.\ it maps
objects to $\top_0$ and types to $\lambda\,\_.\,\top_0$, and maps morphisms and
terms to the identity function. It clearly preserves $\emptycon$ and $\blank\!\ext\!\blank$
up to isomorphism.

\subsubsection{Sigma type}

For $\bs{a : \Tm\,\Gamma\,\U}$ and $\bs{b : \Tm\,(\Gamma\ext \El\,a)\,\U}$, we
define $\bs{\Sigma\,a\,b : \Tm\,\Gamma\,\U}$ as the component-wise $\Sigma$ of $\ba$
and $\bb$. For the action on $\ulGamma : \Con_{\bGamma}$, we have:
\begin{alignat*}{3}
  &(\bs{\Sigma\,a\,b})\,\ulGamma && : \Ty_0 \\
  &(\bs{\Sigma\,a\,b})\,\ulGamma &&\defn (\alpha : \ba\,\ulGamma) \times \bb\,(\ulGamma,\,\alpha)
\end{alignat*}
For the action on $\ulsigma : \Sub\,\ulGamma\,\ulDelta$, we have:
\begin{alignat*}{3}
  &(\bs{\Sigma\,a\,b})\,\ulsigma && : (\alpha : \ba\,\ulGamma) \times \bb\,(\ulGamma,\,\alpha)
    \to (\alpha : \ba\,\ulDelta) \times \bb\,(\ulDelta,\,\alpha)\\
  &(\bs{\Sigma\,a\,b})\,\ulsigma &&\defn \lambda\,(\alpha,\,\beta).\,(\ba\,\ulsigma\,\alpha,\,\bb\,(\ulsigma,\,\refl)\,\beta)
\end{alignat*}
Above, the second field should have type
$\bb\,(\ulDelta,\,\ba\,\ulsigma\,\alpha)$, while $\beta :
\bb\,(\ulGamma,\,\alpha)$. Therefore we need a morphism in $\bs{\Gamma \ext
  \El\,a}$ from $(\ulGamma,\,\alpha)$ to $(\ulDelta,\,\ba\,\ulsigma\,\alpha)$,
which is defined as $(\ulsigma,\,\refl)$, where $\refl : \ba\,\ulsigma\,\alpha
\equiv \ba\,\ulsigma\,\alpha$.
The action on $\ulA : \Ty\,\ulGamma$ is
\begin{alignat*}{3}
  &(\bs{\Sigma\,a\,b})\,\ulA &&: (\alpha : \ba\,\ulGamma) \times \bb\,(\ulGamma,\,\alpha)
    \to \Ty_0\\
  &(\bs{\Sigma\,a\,b})\,\ulA &&\defn \lambda\,(\alpha,\,\beta).\,(\alpha' : \ba\,\ulA\,\alpha) \times \bb\,(\ulA,\,\alpha')\,\beta
\end{alignat*}
Here we are somewhat running out of notation: we use $\alpha'$ to refer to a
\emph{type} over $\alpha : \ba\,\ulGamma$ in the displayed cwf of elements
$\bs{\El\,a}$. The action on terms is analogous:
\begin{alignat*}{3}
  &(\bs{\Sigma\,a\,b})\,\ult && : ((\alpha,\,\beta) : (\alpha : \ba\,\ulGamma) \times \bb\,(\ulGamma,\,\alpha))
    \to (\alpha' : \ba\,\ulA\,\alpha) \times \bb\,(\ulA,\,\alpha')\,\beta\\
  &(\bs{\Sigma\,a\,b})\,\ult &&\defn \lambda\,(\alpha,\,\beta).\,(\ba\,\ult\,\alpha,\,\bb\,(\ult,\,\refl)\,\beta)
\end{alignat*}
For the preservation of $\emptycon$, we need to show
$(\bs{\Sigma\,a\,b})\,\ulemptycon \simeq \top_0$. Unfolding the definition, we
get $((\alpha : \ba\,\ulemptycon) \times \bb\,(\ulemptycon,\,\alpha)) \simeq
\top_0$. This holds since $\ba\,\ulemptycon \simeq \top_0$, so
$\ba\,\ulemptycon$ is contractible, thus $(\ulemptycon,\,\alpha) \equiv
\ulemptycon_{\bs{\Gamma \ext \El\,a}}$, and we also know $\bb\,\ulemptycon
\simeq \top_0$. For the preservation $\blank\!\ext\!\blank$, we need
\[
(\bs{\Sigma\,a\,b})\,(\ulGamma \ext \ulA) \simeq (\gamma : (\bs{\Sigma\,a\,b})\,\ulGamma) \times
  (\bs{\Sigma\,a\,b})\,\ulA\,\gamma
\]
Unfolding definitions and reassociating $\Sigma$ on the right side:
\begin{alignat*}{3}
   &(\alpha : \ba\,(\ulGamma \ext \ulA)) \times \bb\,((\ulGamma \ext \ulA),\,\alpha) \\
   &\simeq \\
   &(\alpha : \ba\,\ulGamma)
           \times (\beta : \bb\,(\ulGamma,\,\alpha))
           \times (\alpha' : \ba\,\ulA\,\alpha)
           \times \bb\,(\ulA,\,\alpha')\,\beta
\end{alignat*}
Since $\ba_{\ext} : \ba\,(\ulGamma \ext \ulA)) \simeq (\alpha : \ba\,\ulGamma)
\times (\beta : \bb\,(\ulGamma,\,\alpha))$, we can rewrite the left side using
pattern matching notation as
\[
  (\ba_{\ext}^{-1}\,(\gamma,\,\alpha) : \ba\,(\ulGamma \ext \ulA))
     \times \bb\,((\ulGamma \ext \ulA),\,(\gamma,\,\alpha))
\]
Now, since $((\ulGamma \ext \ulA),\,(\gamma,\,\alpha)) \equiv (\ulGamma,\,\gamma)
\ext_{\bs{\Gamma\ext\El\,a}} (\ulA,\,\alpha)$, we know that $\bb\,((\ulGamma
\ext \ulA),\,(\gamma,\,\alpha))$ is also isomorphic to the evident $\Sigma$
type, and the preservation isomorphism follows.

Projections and pairing for $\bs{\Sigma\,a\,b}$ are defined in the obvious way by
component-wise projection and pairing.

\subsubsection{Identity}

For $\bt$ and $\bu$ in $\bTm\,\bGamma\,(\bEl\,\ba)$, we define $\bId\,\bt\,\bu
\boldsymbol{:} \bTm\,\bGamma\,\bU$ as expressing pointwise equality of weak
sections. We rely on the assumption that $\Ty_0$ has identity type.
\begin{alignat*}{3}
& (\bId\,\bt\,\bu)\,\ulGamma &&\defn (\bt\,\ulGamma = \bu\,\ulGamma)\\
& (\bId\,\bt\,\bu)\,\ulA     &&\defn \lambda\,e.\, (\bt\,\ulA = \bu\,\ulA)
\end{alignat*}
Above, $\bt\,\ulA = \bu\,\ulA$ is well-typed because of $e :
\bt\,\ulGamma = \bu\,\ulGamma$. For substitutions, we have to complete a square
of equalities:
\begin{alignat*}{3}
  (\bId\,\bt\,\bu)\,(\ulsigma : \Sub\,\ulGamma\,\ulDelta) : (\bt\,\ulGamma = \bu\,\ulGamma) \ra
       (\bt\,\ulDelta = \bu\,\ulDelta)
\end{alignat*}
This can be given by $\bt\,\ulsigma : \ba\,\ulsigma\,(\bt\,\ulGamma) =
\bt\,\ulDelta$ and $\bu\,\ulsigma : \ba\,\ulsigma\,(\bu\,\ulGamma) =
\bu\,\ulDelta$. The action on terms is analogous.

The $\emptycon$-preservation $(\bt\,\ulemptycon = \bu\,\ulemptycon) \simeq
\top_0$ follows from $\ba\,\ulemptycon \simeq \top_0$. For $\ext$-preservation,
we need to show
\[
 (\bt\,(\ulGamma \ext \ulA) = \bu\,(\ulGamma \ext \ulA)) \simeq
 ((e : \bt\,\ulGamma = \bu\,\ulGamma) \times (\bt\,\ulA = \bu\,\ulA))
\]
This follows from $\ext$-preservation by $\ba$. Equality reflection and
$\bs{\refl :} \bId\,\bt\,\bt$ are also evident.

\subsubsection{Small external product type}

For $\mi{Ix} : \Ty_0$ and $\bb : \mi{Ix} \ra \bTm\,\bGamma\,\bU$, we aim to define
$\bPiinf\,\mi{Ix}\,\bb \bs{:} \bTm\,\bGamma\,\bU$. The underlying functions
are:
\begin{alignat*}{3}
  & (\bPiinf\,\mi{Ix}\,\bb)\,\ulGamma    &&:= (i : \mi{Ix})\ra \bb\,i\,\ulGamma\\
  & (\bPiinf\,\mi{Ix}\,\bb)\,\ulsigma    &&:= \lambda\,f\,i.\, \bb\,i\,\ulsigma\,(f\,i)\\
  & (\bPiinf\,\mi{Ix}\,\bb)\,\ulA\       &&:= \lambda\,\Gamma.\,(i : \mi{Ix})\ra \bb\,i\,\ulA\,(\Gamma\, i)\\
  & (\bPiinf\,\mi{Ix}\,\bb)\,\ult        &&:= \lambda\,f\,i.\, \bb\,i\,\ult\,(f\,i)
\end{alignat*}
We rely on $\Pi$ in the inner theory. The preservation isomorphisms are
pointwise inherited from $\bb$. One direction of the isomorphisms is defined as
follows. Note that $\emptycon_{\bU} \equiv \top$ and $\ext_{\bU}$ is $\Sigma$.
\begin{alignat*}{3}
  &(\bPiinf\,\mi{Ix}\,\bb)_{\emptycon}^{-1} && : \top\ra (\bPiinf\,\mi{Ix}\,\bb)\,\emptycon\\
  &(\bPiinf\,\mi{Ix}\,\bb)_{\emptycon}^{-1} && \defn \lambda\,\_\,i.\,(\bb\,i)_{\emptycon}^{-1}\,\tt
\end{alignat*}
\begin{alignat*}{3}
  &(\bPiinf\,\mi{Ix}\,\bb)_{\ext}^{-1} && : (\Gamma : (\bPiinf\,\mi{Ix}\,\bb)\,\ulGamma)\times((\bPiinf\,\mi{Ix}\,\bb)\,\ulA\,\Gamma)\\
  & && \hspace{0.5em}\ra (\bPiinf\,\mi{Ix}\,\bb)\,(\ulGamma \ext \ulA)\\
  & (\bPiinf\,\mi{Ix}\,\bb)_{\ext}^{-1} && \defn \lambda\,(\Gamma,A)\,i.\,(\bb\,i)_{\ext}^{-1}(\Gamma\,i,\,A\,i)
\end{alignat*}

\subsubsection{Internal product type}

For $\boldsymbol{a : \Tm\,\Gamma\,\U}$ and $\boldsymbol{B :
  \Ty\,(\Gamma\ext\El\,a)}$, we aim to define $\boldsymbol{\Pi\,a\,B}
\boldsymbol{:} \bTy\,\bGamma$. The underlying sets are unchanged.
\begin{alignat*}{3}
  & \Con_{\bs{\Pi\,a\,B}}\,\ulGamma &&\defn (\gamma : \ba\,\ulGamma) \to \Con_{\bB}\,(\ulGamma,\,\gamma)\\
  & \Sub_{\bs{\Pi\,a\,B}}\,\Gamma\,\Delta\,\ulsigma &&\defn
    (\gamma : \ba\,\ulGamma)\to \Sub_{\bB}\,(\Gamma\,\gamma)\,(\Delta\,(\ba\,\ulsigma\,\gamma))\,(\ulsigma,\,\refl)\\
  & \Ty_{\bs{\Pi\,a\,B}}\,\Gamma\,\ulA &&\defn
  \{\gamma : \ba\,\ulGamma\}(\alpha : \ba\,\ulA\,\gamma)
  \to \Ty_{\bB}\,(\Gamma\,\gamma)\,(\ulA,\,\alpha)\\
  & \Tm_{\bs{\Pi\,a\,B}}\,\Gamma\,A\,\ult &&\defn
    (\gamma : \ba\,\ulGamma) \to \Tm_{\bB}\,(\Gamma\,\gamma)\,(\A\,(\ba\,\ult\,\gamma))\,(\ult,\,\refl)
\end{alignat*}
Likewise, all structure is defined pointwise using $\bB$ structure. Similarly to
the $\bEl$ case, we have to sometimes fall through the defining isomorphisms for
$\ba$ structure. For comparison, in the finitary case we had the following definition:
\[
  (\Gamma \ext_{\bs{\Pi\,a\,B}} A)\,(\gamma,\,\alpha) \defn (\Gamma\,\gamma \ext_{\bB} A\,\alpha)
\]
Here, $(\gamma,\,\alpha) : \ba\,(\ulGamma \ext \ulA)$, so also
$(\gamma,\,\alpha) : (\gamma : \ba\,\ulGamma) \times \ba\,\ulA\,\gamma$, so the
$\Sigma$ pattern-matching notation is justified in the definition. In the
current infinitary case, we have $(\ba_{\ext},\,\ba_{\ext}^{-1}) :
\ba\,(\ulGamma \ext \ulA) \simeq ((\gamma : \ba\,\ulGamma) \times
\ba\,\ulA\,\gamma)$ instead. But we can use the intuition that set isomorphisms
are like unary record types, so we can still give a pattern-matching definition:
\[
  (\Gamma \ext_{\bs{\Pi\,a\,B}} A)\,(\ba_{\ext}^{-1}(\gamma,\,\alpha)) \defn
   (\Gamma\,\gamma \ext_{\bB} A\,\alpha)
\]
For the definitions of other type and term formers, we likewise insert the
isomorphisms appropriately. It remains to define iso-cleaving
$\bs{\Pi}$. Coercion is given by mapping indices backwards in $\bEl\,\ba$ and
coercing outputs forwards in $\bB$.
\begin{alignat*}{3}
  & \coe\,\ulsigma\,\Gamma &&\defn
    \lambda\,\gamma.\,\coe_{\bB}\,(\ulsigma,\refl)\,(\Gamma\,(\ba\,(\ulsigma^{-1})\,\gamma))\\
  & \coe\,\ult\,A &&\defn
    \lambda\,\gamma\,a.\,\coe_{\bB}\,(\ult,\refl)\,(A\,(\ba\,(\ult^{-1})\,(\ba_{\ext}^{-1}(\gamma,a))))
\end{alignat*}
Likewise, $\coh$-s are given by backwards-forwards $\coh$-s. As before,
$\bs{\appi}$ and $\bs{\lami}$ are defined as currying and uncurrying the
underlying functions.

\subsubsection{External product type}

For $\mi{Ix} : \Set_j$ and $\bB : \mi{Ix} \ra \bTy\,\bGamma$, we define
$\bPie\,\mi{Ix}\,\bB \bs{:} \bTy\,\bGamma$ as the $\mi{Ix}$-indexed direct
product of $\bB$. Since the indexing is given by a metatheoretic function, every
component is given in the evident pointwise way, including iso-cleaving. This
concludes the definition of the $\bM$ model.

\section{Left Adjoints of Substitutions}
\label{sec:iqii-left-adjoints}

In the following we adapt Section \ref{sec:fqii-left-adjoint} to infinitary
signatures.

\begin{itemize}
  \item
  We again write $\llb\blank\rrb$ for the interpretation into the flcwf model $\bM$.
  \item
  We also add $\top : \Ty\,\Gamma$ and $\Sigma : (A : \Ty\,\Gamma) \to
  \Ty\,(\Gamma \ext A) \to \Ty\,\Gamma$ to the ToS. Again we do not elaborate
  much on their semantics; $\top$ is given as ${\bs{\ms{K}}\,\bemptycon}$ in
  the model and $\Sigma$ is given by component-wise $\Sigma$.
\end{itemize}
We again fix $\Omega : \Con$ and define heterogeneous morphisms. The types of
eliminators stay exactly the same:
\begin{alignat*}{3}
  &\blank^{HM} &&: (\Gamma : \Con) \to \Gamma^A \to \Sub\,\Omega\,\Gamma \to \Ty\,\Omega\\
  &\blank^{HM} &&: (\sigma : \Sub\,\Gamma\,\Delta) \to \Tm\,\Omega\,(\Gamma^{HM}\,\gamma_0\,\gamma_1) \to \Tm\,\Omega\,(\Delta^{HM}\,(\sigma^A\,\gamma_0)\,(\sigma \circ\,\gamma_1))\\
  &\blank^{HM} &&: (A : \Ty\,\Gamma) \to A^A\,\gamma_0 \to \Tm\,\Omega\,(A[\gamma_1])
  \to \Tm\,\Omega\,(\Gamma^{HM}\,\gamma_0\,\gamma_1) \to \Ty\,\Omega\\
  &\blank^{HM} &&: (t : \Tm\,\Gamma\,A)(\gamma^{HM} : \Tm\,\Omega\,(\Gamma^{HM}\,\gamma_0\,\gamma_1))
   \to \Tm\,\Omega\,(A^{HM}\,(t^A\,\gamma_0)\,(t[\gamma_1])\,\gamma^{HM})
\end{alignat*}
We only need to show that the new type formers in $\U$, namely $\top$, $\Sigma$,
$\Id$ and $\Piinf$, can be also covered in the definition of $\blank^{HM}$. The
new type formers turn out to work exactly as mechanically as the previous
ones. We have the following:
\begin{alignat*}{3}
  & \top^{HM}\,\gamma^{HM} : \Tm\,\Omega\,(\top \toe \El\,\top)\\
  & \top^{HM}\,\gamma^{HM} \defn \lambdae\,\_.\,\tt
\end{alignat*}
\begin{alignat*}{3}
  & (\Sigma\,a\,b)^{HM}\,\gamma^{HM} : \\
  & \hspace{2em} \Tm\,\Omega\,(((\alpha : a^A\,\gamma_0) \times (b^A\,(\gamma_0,\,\alpha))) \toe \El\,(\Sigma\,(\alpha : a[\gamma_1])\,(b[\gamma_1,\,\alpha])))\\
  & (\Sigma\,a\,b)^{HM}\,\gamma^{HM} \defn \lambdae\,(\alpha,\,\beta).\,(a^{HM}\,\gamma^{HM}\,\alpha,\,b^{HM}\,(\gamma^{HM},\,\refl)\,\beta)\\
  & \\
  & (\Piinf\,\ms{Ix}\,b)^{HM}\,\gamma^{HM} : \Tm\,\Omega\,(((i : \ms{Ix}) \to (b\,i)^A\,\gamma_0) \toe \El\,((i : \ms{Ix}) \toinf (b\,i)[\gamma_1]))\\
  & (\Piinf\,\ms{Ix}\,b)^{HM}\,\gamma^{HM} \defn \lambdae\,f.\,\lambdainf\,i.\,(f\,i)^{HM}\,\gamma^{HM}
\end{alignat*}
For $\Id$, we again have to complete a square.
\[
(\Id\,t\,u)\,\gamma^{HM} : \Tm\,\Omega\,((t^A\,\gamma_0 = u^A\,\gamma_0) \toe \El\,(\Id\,(t[\gamma_1])\,(u[\gamma_1])))
\]
This follows from $t^{HM}\,\gamma^{HM}$ and $u^{HM}\,\gamma^{HM}$, the same way
as in the flcwf semantics before.

\begin{theorem}
If every infinitary QII signature has an initial algebra, then for every $\nu :
\Sub\,\Omega\,\Delta$, there exists a left adjoint of $\sem{\nu} : \sem{\Omega}
\to \sem{\Delta}$.
\end{theorem}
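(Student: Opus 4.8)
The plan is to mirror the finitary argument from Section~\ref{sec:fqii-left-adjoint} almost verbatim, now using the heterogeneous morphism interpretation $\blank^{HM}$ just constructed for the infinitary ToS. Recall that the existence of a left adjoint of $\sem{\nu}$ is equivalent to exhibiting, for each $\delta : \Delta^A$, an initial object in the comma category $\delta/\sem{\nu}$ \cite[Section~IV]{maclane98categories}. So it suffices to produce for each $\delta$ a signature $\Psi$ such that $\sem{\Psi}$ is equivalent, as a category, to $\delta/\sem{\nu}$; by the assumption that every infinitary QII signature has an initial algebra, $\sem{\Psi}$ then has an initial object, which transports across the equivalence. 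The left adjoint $\ms{L} : \sem{\Delta} \to \sem{\Omega}$ will send each $\delta$ to the $\Omega$-component of that initial algebra.

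Concretely, I would set $\ms{Sig}_{\delta/\sem{\nu}} \defn \Omega \ext \Delta^{HM}\,\delta\,\nu$, exactly as in Definition~\ref{def:fqii-rep-signature}, working in the $\top$- and $\Sigma$-extended infinitary ToS and with the $\blank^{HM}$ interpretation above. Unfolding the flcwf semantics one computes
\begin{alignat*}{3}
  &(\ms{Sig}_{\delta/\sem{\nu}})^A &&\equiv (\omega : \Omega^A) \times ((\Delta^{HM}\,\delta\,\nu)^A\,\omega),\\
  &(\ms{Sig}_{\delta/\sem{\nu}})^M\,(\omega_0,\,\eta_0)\,(\omega_1,\,\eta_1) &&\equiv (\omega^M : \Omega^M\,\omega_0\,\omega_1) \times ((\Delta^{HM}\,\delta\,\nu)^M\,\eta_0\,\eta_1\,\omega^M).
\end{alignat*}
Since the objects of $\delta/\sem{\nu}$ are pairs $(\omega : \Omega^A) \times (\eta : \Delta^M\,\delta\,(\nu^A\,\omega))$ and its morphisms the evident commuting triangles, the remaining task is to establish, by induction on the ToS, isomorphisms
\begin{alignat*}{3}
  & A^{\simeq} &&: (\Delta^{HM}\,\delta\,\nu)^A\,\omega &&\simeq \Delta^M\,\delta\,(\nu^A\,\omega),\\
  & M^{\simeq} &&: (\Delta^{HM}\,\delta\,\nu)^M\,\eta_0\,\eta_1\,\omega^M &&\simeq (\nu^M\,\omega^M \circ A^{\simeq}\,\eta_0 \equiv A^{\simeq}\,\eta_1),
\end{alignat*}
natural enough that they assemble into an equivalence $\sem{\ms{Sig}_{\delta/\sem{\nu}}} \simeq \delta/\sem{\nu}$. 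For the ToS fragments shared with the finitary case ($\U$, $\El$, $\Pi$, $\Pie$, $\top$, $\Sigma$) the induction steps are precisely those sketched in Definition~\ref{def:fqii-rep-signature}; the only genuinely new cases are the extra type formers in $\U$, namely $\top$, $\Sigma$, $\Id$ and $\Piinf$. These fall out mechanically in the same way their $\blank^{HM}$ clauses did above: $\top$ and $\Sigma$ reduce to finite products of subcases, $\Piinf$ to an $\ms{Ix}$-indexed product (using $\Pi$ in the inner theory to transpose the quantifier), and $\Id$ to completing a square of inner equalities.

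The one point deserving care — and what I expect to be the main obstacle — is that in the infinitary semantics $\sem{\nu}$ is only a \emph{weak} flcwf-morphism and types are interpreted as isofibrations, so the naturality squares that glue the $A^{\simeq}$ and $M^{\simeq}$ isomorphisms together commute only up to the preservation isomorphisms $\bsigma_{\emptycon}$, $\bsigma_{\ext}$, $\bsigma_{\Sigma}$, $\bsigma_{\Id}$ of Theorem~\ref{thm:flpres}, rather than strictly. However, this affects neither the \emph{category} $\delta/\sem{\nu}$, which uses only the underlying functor of $\sem{\nu}$, nor the \emph{sets} $\Delta^M$ and $(\Delta^{HM}\,\delta\,\nu)^M$; and since we only need an equivalence of categories, not an isomorphism, the coherences can be absorbed into the $A^{\simeq}$ and $M^{\simeq}$ witnesses exactly as in the finitary proof. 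Once the equivalence is in hand, the theorem follows: for each $\delta : \Delta^A$ the comma category $\delta/\sem{\nu}$ has an initial object by the assumed initiality for $\ms{Sig}_{\delta/\sem{\nu}}$, and taking $\omega$-components defines $\ms{L}$ on objects, with its action on morphisms, the unit, the counit and the triangle identities all read off from the universal property of the initial algebras in the standard way.
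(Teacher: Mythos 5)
Your proposal follows the paper's own argument essentially verbatim: the paper likewise reuses the finitary construction, extends $\blank^{HM}$ to the new type formers $\top$, $\Sigma$, $\Id$ and $\Piinf$ in $\U$, and represents the comma category $\delta/\sem{\nu}$ by the signature $\Omega \ext \Delta^{HM}\,\delta\,\nu$ to extract the initial object. The additional care you take regarding weak morphisms and isofibrant types is a reasonable elaboration but does not change the route.
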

\begin{proof}
For each $\delta : \Delta^A$, the comma category $\delta/\sem{\nu}$ can be
specified (up to isomorphism) by the signature $\Omega \ext
\Delta^{HM}\,\delta\,\nu$, thus it has an initial object. Hence $\sem{\nu}$ has
a left adjoint.
\end{proof}

\section{Signature-Based Semantics of Signatures}
\label{sec:signature-semantics}

We have seen that the $\blank^{HM}$ interpretation yields a notion of algebra
morphism that is specified inside ToS. What else can we represent in ToS? For
example, can we internalize $\blank^D$, $\blank^M$ and $\blank^S$? In this
section we show that the full flcwf semantics can be expressed internally to
the ToS syntax.

This means that for each $\Gamma : \Con$, we get $\Gamma^A : \Ty\,\emptycon$ as
the notion of algebras, $\Gamma^M : \Ty\,(\emptycon\ext(\gamma_0 :
\Gamma^A)\ext(\gamma_1 : \Gamma^A))$ as the notion of morphisms, $\id :
\Tm\,(\emptycon\ext(\gamma : \Gamma^A))\,(\Gamma^M[\gamma_0 \mapsto
  \gamma,\,\gamma_1 \mapsto \gamma])$ for the identity morphisms, and likewise
we get the whole flcwf of algebras in such an internal manner.

As we will shortly see, capturing the full flcwf semantics is possible with the
infinitary ToS, but not with the finitary ToS because it lacks the necessary
type formers in $\U$.

It would be needlessly tedious and repetitive to redo the flcwf semantics while
explicitly working with ToS components. Instead, we repurpose 2LTT for this use
case. Recall that 2LTT allows to get semantics internally to any cwf with $\Pi$,
$\Sigma$, $\top$ and $\Id$. In the current section we aim to get semantics
internally to the ToS syntax. In short, this means that we work in a 2LTT where
the inner theory is the theory of signatures. The picture is a bit more
nuanced though.

First, since ToS lives inside 2LTT, and we want to get presheaves over ToS in
the presheaf model, the metatheoretic setting of the presheaf model must be also
a 2LTT. This might get a bit confusing, so let us expand:
\begin{itemize}
 \item The syntax of 2LTT internalizes the ToS syntax as an assumed type former.
 \item The presheaf model of 2LTT lives inside yet another 2LTT, let us call it 2LTT*,
       which embeds \emph{both} the 2LTT syntax and the ToS syntax separately.
 \item In the presheaf model, the base cwf is the cwf of the ToS syntax in 2LTT*.
 \item The $\Ty_0$ type former in 2LTT is interpreted in the presheaf model using the $\Ty_0$ type former
       in 2LTT*.
 \item We add $\Tys : \Set$ and $\Tms : \Tys \to \Set$ to 2LTT. $\Tys$ is
   interpreted as the presheaf of ToS types, and $\Tms$ is interpreted using the
   displayed presheaf of ToS terms, following Definition \ref{def:psh-ty0-tm0}.
 \item
   We close $\Tys$ under type formers which represent all type formers in ToS.
   Like in the previous section, we assume that ToS types are closed under
   $\top$ and $\Sigma$, so we have $\top$, $\Sigma$, inductive $\Pi$, $\Pie$ and
   $\U$ in $\Tys$.  The $\U$ in $\Tys$ has $\El : \Tms\,\U \to \Tys$, and it is
   closed under $\top$, $\Sigma$, $\Piinf$ and $\Id$. In the presheaf model, all
   structure in $\Tys$ is interpreted using ToS type formers in the evident way.
\end{itemize}

\begin{notation}
We shall omit $\Tms$ in the following, similarly to how we previously omitted
$\Tm_0$. We keep omitting $\Tm_0$ in the new setup as well. However, we
will still mark $\El : \U \to \Tys$ explicitly.
\end{notation}

For reference, we list type formers in $\Tys$ below.
\begingroup
\allowdisplaybreaks
\begin{alignat*}{3}
  & \U               &&: \Tys\\
  & \El              &&: \U \to \Tys\\
  & \top             &&: \U \\
  & \Sigma           &&: (a : \U) \to (\El\,a \to \U) \to \U \\
  & \Id              &&: \El\,a \to \El\,a \to \U \\
  & \Piinf           &&: (\ms{Ix} : \Ty_0) \to (\ms{Ix} \to \U) \to \U\\
  & \Pi              &&: (a : \U) \to (\El\,a \to \Tys) \to \Tys\\
  & \Pie             &&: (\ms{Ix} : \Ty_0) \to (\ms{Ix} \to \Tys) \to \Tys\\
  & \Sigma           &&: (A : \Tys) \to (A \to \Tys) \to \Tys\\
  & \top             &&: \Tys
\end{alignat*}
\endgroup

\begin{notation}
We will use the $\blank\!\toe\!\blank$ and $\blank\!\toinf\!\blank$ notations in
the following for $\Piinf$ and $\Pie$, but additionally we use
$\blank\!\toind\!\blank$ for internal products, to disambiguate them from outer
functions in 2LTT.
\end{notation}

We revisit now the flcwf semantics in the new setting. The goal is to produce
output by the signature-based semantics, such that if we use the original
$\blank^A$ interpretation on that, we get results that are equivalent to what we
get from the direct semantics. For the simplest example, for $\Gamma : \Con$, we
get $\Gamma^A_{\ms{sig}} : \Ty\,\emptycon$ from the signature-based semantics,
then we get $(\Gamma^A_{\ms{sig}})^A\,\tt : \Set$, which should be equivalent to
$\Gamma^A : \Set$.

In this section, we only describe the signature-based semantics, and we do not
formally check the round-trip property. The round-tripping seems very plausible
though, since as we will see, the signature-based semantics is exactly the same
as the direct semantics, modulo the change of universes and type formers.

We look at key parts of the model. In each case, we generally only check that we
have sufficient type formers. We again write components of the model in
\textbf{bold} font.

\subsubsection{Base cwf}
Contexts in the model are still flcwfs, but now $\Con$, $\Sub$, $\Ty$ and $\Tm$ in
flcwfs all return in $\Tys$. Hence, assuming $\bGamma : \bCon$, we have
\begin{alignat*}{3}
  &\Con_{\bGamma} &&: \Tys \\
  &\Sub_{\bGamma} &&: \Con_{\bGamma} \to \Con_{\bGamma} \to \Tys \\
  &\Ty_{\bGamma}  &&: \Con_{\bGamma} \to \Tys \\
  &\Tm_{\bGamma}  &&: (\Gamma : \Con_{\bGamma}) \to \Ty_{\bGamma}\,\Gamma \to \Tys
\end{alignat*}
We specify all equations using outer equality (since the $\Id$ types in $\Tys$
are extensional, this makes no difference). Similarly, components of $\bA :
\bTy\,\bGamma$ return in $\Tys$. Substitutions and terms in the model are
unchanged, they are weak morphisms and sections respectively. For $\bemptycon :
\bCon$, we use $\top : \Tys$ to define the components. Likewise, we use
the $\Sigma$ type in $\Tys$ to define $\blank\bs{\ext}\blank$.

If we write $\Tms$ explicitly, we have e.g.\ $\Sub_{\bGamma} :
\Tms\,\Con_{\bGamma} \to \Tms\,\Con_{\bGamma} \to \Tys$. Thus, we may use the
simplified interpretation of functions with inner domains, from Section
\ref{sec:functions-with-inner-domains}, and if we interpret the type of
$\Sub_{\bGamma}$ at the empty context in the presheaf model, we get
$\Ty\,(\emptycon\,\ext|\Con_{\bGamma}|\,\tt\,\ext |\Con_{\bGamma}|\,\tt)$.

\subsubsection{Universe}

$\bU : \bTy\,\bGamma$ is defined as $\bU : \bCon$, and we take the constant
displayed flcwf of the definition. Now, we have $\bU : \bCon$ as the flcwf of
types in $\U : \Tys$.
\begingroup
\allowdisplaybreaks
\begin{alignat*}{3}
  &\Con_{\bU} && \defn \U \\
  &\Sub_{\bU}\,\Gamma\,\Delta && \defn \Gamma \toind \El\,\Delta\\
  &\Ty_{\bU}\,\Gamma && \defn \Gamma \toind \U\\
  &\Tm_{\bU}\,\Gamma\,A && \defn (\gamma : \Gamma) \toind \El\,(A\,\gamma)
\end{alignat*}
\endgroup
$\emptycon_{\bU}$, $\blank\ext_{\bU}\blank$ and $\Id_{\bU}$ are defined using
the type formers in $\U$. As before, $\K_{\bU}$ is defined simply as a constant
function.  In $\bEl\,\ba : \bTy\,\bGamma$, we use the $\Id$ type in $\Tys$ in
morphisms and sections:
\begin{alignat*}{3}
  &\Con_{\bEl\,\ba}\,\ulGamma && \defn \El\,(\ba\,\ulGamma) \\
  &\Sub_{\bEl\,\ba}\,\Gamma\,\Delta\,\ulsigma && \defn \Id\,(\ba\,\ulsigma\,\Gamma)\,\Delta\\
  &\Ty_{\bEl\,\ba}\,\Gamma\,\ulA && \defn \El\,(\ba\,\ulA\,\Gamma)\\
  &\Tm_{\bEl\,\ba}\,\Gamma\,A\,\ult && \defn \Id\,(\ba\,\ult,\Gamma)\,A
\end{alignat*}

\subsubsection{Type formers in $\bU$}

For $\bs{\top}$, $\bs{\Sigma}$ and $\bId$ in $\bU$, we use $\top$, $\Sigma$ and
$\Id$ in $\U : \Tys$ in a straightforward way. For $\bPiinf\,\ms{Ix}\,\bb$, we
have the following:
\[
   \Con_{(\bPiinf\,\ms{Ix}\,\bb)}\,\ulGamma \defn (i : \ms{Ix}) \toinf (\bb\,i)\,\ulGamma
\]
Let us look at morphisms:
\begin{alignat*}{3}
  &\Sub_{(\bPiinf\,\ms{Ix}\,\bb)}\,\ulsigma : ((i : \ms{Ix}) \toinf (\bb\,i)\,\ulGamma)
    \toind \El\,((i : \ms{Ix}) \toinf (\bb\,i)\,\ulDelta)\\
  &\Sub_{(\bPiinf\,\ms{Ix}\,\bb)}\,\ulsigma \defn \lambda^{\ms{ind}}\,t.\,\lambdainf\,i.\,
    (\bb\,i)\,\ulsigma\,(t\,i)
\end{alignat*}
Here, we map an infinitary function to another one, which checks out just fine,
since $\toind$ allows such mapping. We have just enough higher-order functions
to complete this definition. The rest of $\bPiinf\,\ms{Ix}\,\bb$ follows evidently.

\subsubsection{$\bs{\Pi}$, $\bs{\Pie}$, $\bs{\top}$, $\bs{\Sigma}$}

In $\bs{\Pi\,a\,B}$, we use inductive functions in components:
\begin{alignat*}{3}
  &\Con_{(\bs{\Pi\,a\,B})}\,\ulGamma &&\defn (\gamma : \ba\,\ulGamma) \toind \Con_{\bB}\,(\ulGamma,\,\gamma)\\
  &\Sub_{(\bs{\Pi\,a\,B})}\,\Gamma\,\Delta\,\ulsigma &&\defn (\gamma : \ba\,\ulGamma) \toind \Sub_{\bB}\,(\Gamma\,\gamma)\,(\Delta\,(\ba\,\ulsigma\,\gamma))\,(\ulsigma,\,\refl)\\
  & ... &&
\end{alignat*}
In $\bs{\Pie}$, we use $\toe$. In $\bs{\top}$ and $\bs{\Sigma}$, we use $\top$ and $\Sigma$ in $\Tys$. This concludes the definition of the signature-based semantics.

\begin{mydefinition}[\textbf{Signature-based AMDS interpretation}]
For some $\Gamma : \Con$, we define the following by interpreting $\Gamma$ in
the signature-based flcwf model, then interpreting the result in the presheaf
model of 2LTT.
\begin{alignat*}{3}
  &\Gamma^A_{\ms{sig}} &&: \Ty\,\emptycon \\
  &\Gamma^M_{\ms{sig}} &&: \Ty\,(\emptycon\ext(\gamma_0 : \Gamma^A_{\ms{sig}})\ext(\gamma_1 : \Gamma^A_{\ms{sig}}))\\
  &\Gamma^D_{\ms{sig}} &&: \Ty\,(\emptycon\ext(\gamma : \Gamma^A_{\ms{sig}}))\\
  &\Gamma^S_{\ms{sig}} &&: \Ty\,(\emptycon\ext(\gamma : \Gamma^A_{\ms{sig}})\ext(\gamma^D : \Gamma^D_{\ms{sig}}))
\end{alignat*}
\end{mydefinition}

\subsubsection{Backporting to finitary signatures}

It is apparent from the previous section that the signature-based full flcwf
model requires at least $\top$, $\Sigma$ and $\Id$ in $\U$: in the definition of
$\bU$ in the model these are needed to define the family structure and the
finite limit structure.

Hence, if we want to only support structure in $\Tys$ corresponding to a theory
of finitary signatures, we need to drop all semantic components which rely on
the missing type formers. We have seen this kind of trimmed semantics in Section
\ref{sec:fqii-variations}. In particular, we still get a category of algebras
for each signature, since that can be modeled without $\top$, $\Sigma$ and
$\Id$.

\subsubsection{Application: colimits}

The signature-based semantics is often helpful when we want to construct new
signatures from old ones. We give an example application, in the construction of
colimits.

We would like to use left adjoints of substitutions to build colimits in
categories of algebras. For this, it is enough to build indexed coproducts and
binary coequalizers.

For some $\Gamma : \Con$, we get $\Gamma^A_{\ms{tos}} : \Ty\,\emptycon$. For
convenience we shall work with $\Gamma^A_{\ms{tos}}$ in the following, instead of $\Gamma$. First, we
construct $\ms{Ix}$-indexed coproducts in the category of $\Gamma$-algebras, by
taking the left adjoint of the following diagonal substitution:
\begin{alignat*}{3}
  &\ms{diag} : \Sub\,(\emptycon\ext(\gamma : \Gamma^A_{\ms{tos}}))\,(\emptycon\ext (f : (\ms{Ix} \toe \Gamma^A_{\ms{tos}})))\\
  &\ms{diag} \defn (f \mapsto \lambda^{\ms{ext}}\,i.\,\gamma)
\end{alignat*}
For coequalizers,we again take the left adjoint of a diagonal substitution, but
here we need to rely on internal morphisms in the signature:
\begin{alignat*}{3}
  &\ms{diag} : \Sub\,&&(\emptycon&&\ext(\gamma : \Gamma^A_{\ms{tos}}))\\
  &&&(\emptycon&&\ext (\gamma_0 : \Gamma^A_{\ms{tos}}) \ext (\gamma_1 : \Gamma^A_{\ms{tos}})\\
  &&&&&\ext (f : \Gamma^M_{\ms{tos}}[\gamma_0 \mapsto \gamma_0,\,\gamma_1 \mapsto \gamma_1])\\
  &&&&&\ext (g : \Gamma^M_{\ms{tos}}[\gamma_0 \mapsto \gamma_0,\,\gamma_1 \mapsto \gamma_1]))\\
  &\rlap{$\ms{diag} \defn (\gamma_0 \mapsto \gamma,\,\gamma_1 \mapsto \gamma,\,f \mapsto \id_{\ms{tos}}[\gamma \mapsto \gamma],\,
           g \mapsto \id_{\ms{tos}}[\gamma \mapsto \gamma]) $}
\end{alignat*}
Above, we use $\id_{\ms{tos}} : \Tm\,(\emptycon\ext(\gamma :
\Gamma^A_{\ms{tos}}))\,(\Gamma^M_{\ms{tos}}\,[\gamma_0 \mapsto \gamma,\,\gamma_1
  \mapsto \gamma])$, which also comes from the signature-based semantics.

Of course, if we want to be fully precise, we need to show that what we get is
equivalent to coproducts and coequalizers in the external sense. For this, we
would need the round-trip property of the signature-based semantics.

\section{Discussion of Semantics}

\subsubsection{Iso-fibrancy as a weak structure identity principle}

The flcwfs of algebras that we get from the infinitary semantics are exactly the
same as in the finitary case. However, semantic types are a bit more
interesting. The iso-fibrancy of types can be understood as a weaker version of
the \emph{structure identity principle} in homotopy type theory.

The structure identity principle says that isomorphism of algebras is equivalent
to equality of algebras. This is the same as saying that categories of algebras
are univalent \cite{univalent-categories}. Assuming a signature $\Gamma$ and
algebras $\gamma \simeq \gamma'$, we have $\gamma = \gamma'$. This equality is
respected by every construction in HoTT, which implies that for any HoTT type
family $F : \Gamma^A \to \ms{Type}$, we have a function $F\,\gamma \to
F\,\gamma'$.

We get a similar but weaker statement from the infinitary semantics: for $\sigma
: \gamma \simeq \gamma'$ and some ToS type $A : \Ty\,\Gamma$, we have a function
$\coe\,\sigma : A^A\,\gamma \to A^A\,\gamma'$. We also have
$\coh\,\sigma\,\alpha : \alpha \simeq_{\sigma} \coe\,\sigma\,\alpha$ for some
$\alpha : A^A\,\gamma$. So we can transport over isomorphisms, but not all
metatheoretic families can be transported, only those which arise as ToS types.

Of course, we can transport over multiple types, or telescopes of types too, by
iterated transport. For instance, given $A : \Ty\,\Gamma$, $B : \Ty\,(\Gamma
\ext A)$, $\alpha : A^A\,\gamma$ and $\beta : B^A\,(\gamma,\,\alpha)$, we can
transport $\alpha$ first, then transport $\beta$ by
$(\sigma,\,\coh\,\sigma\,\alpha)$.  Alternatively, if we have large $\Sigma$ types
in ToS, as $\Sigma : (A : \Ty\,\Gamma) \to \Ty\,(\Gamma \ext A) \to \Ty \Gamma$,
that makes iterated transport superfluous.

\subsubsection{Variations of semantics}

First, unlike in the finitary case, we have no opportunity to minimize
assumptions on the inner theory. Already when we compute algebras, we need inner
$\Pi$ for infinitary functions, inner $\top$ for $\top$, inner $\Sigma$ for
$\Sigma$ and inner $\blank\!=\!\blank$ for $\Id$. Note though that we still get
semantics in any LCCC (locally cartesian closed category), since we can build a
cwf with the required type formers from any LCCC
\cite{clairambault2014biequivalence}.

Second: can we add the ``large'' equality type, which includes sort equations,
back to infinitary signatures? We dropped sort equations in this chapter because
they are clearly not isofibrant. We can add them back into the mix though, at
the price of dropping components from the semantics of signatures. The reason for
having isofibrant types is that type formers in $\U$ preserve $\emptycon$ and
$\blank\!\ext\!\blank$ only up to isomorphism. If we drop all semantic
components which depend on $\emptycon$ and $\blank\!\ext\!\blank$, we can drop
isofibrancy too from the model, and everything works. In this case, we still get
a category of algebras, plus a notion of induction, but we cannot show that
initiality is equivalent to induction, as the proof of Theorem
\ref{thm:initiality-induction} depends on $\blank\!\ext\!\blank$.

\subsubsection{Model Constructions}

In this chapter we gain some expressive power in defining model constructions
using substitutions or terms. For starters, the construction of categories from
monoids works now:

\begin{myexample}
Let us have
$\ms{MonoidSig}$ as the signature for monoids, with $\ms{M} : \U$ as the carrier
set, $\blank\!\cdot\!\blank : \ms{M} \to \ms{M} \to \El\,\ms{M}$ as
multiplication and $\epsilon : \El\,\ms{M}$ as identity the element. We define
$\sigma : \Sub\,\ms{MonoidSig}\,\ms{CatSig}$ to contain $\ms{Obj} \defn \top$,
$\ms{Hom} \defn \lambda\,\_\,\_.\,\ms{M}$, $\id \defn \epsilon$ and
$\blank\!\circ\!\blank \defn \blank\!\cdot\!\blank$.
\end{myexample}

Many constructions in the literature which have been dubbed \emph{syntactic
models} \cite{next700} or \emph{syntactic translations} can be defined now in
the ToS, for the following reasons.
\begin{itemize}
\item
  Syntactic translations usually do not rely on models being actually syntactic:
  they do not use induction on \emph{target} theory syntax. A rare
  counterexample is our construction of recursors and eliminators for term
  models. These are perhaps syntactic in the sense that they prominently involve
  the syntax of some type theory, and they construct recursor/eliminator
  functions by induction on terms.
\item
  Syntactic translations rarely if ever involve higher-order constructions.
  Such would be interpreting $\Con$ with $(\Con \to \Con) \to \Con$, for a
  contrived example.
\end{itemize}

The gluing construction in Example \ref{ex:gluing} is already a fairly general
example that only requires the finitary ToS to define. That construction is more
in an ``indexed'' style, but now we can also do constructions in a more
``fibered'' style.

\begin{myexample}
\label{ex:param-translation}
We may consider a unary parametricity translation in the style of Bernardy,
Jansson, and Paterson \cite{bernardy12parametricity}, which makes use of the
small $\Sigma$-type in the theory of signature. We assume $\ms{TT} :
\Ty\,\emptycon$ as the signature for the theory, and $\ms{TT}^D :
\Ty\,(\emptycon\,\ext\,(M : \ms{TT}))$ as the signature for displayed models.
The translation can be typed as $\Tm\,(\emptycon\ext (M : \ms{TT}))\,\ms{TT}^D$:
we assume a model of the theory, and build a displayed model over the same
theory. Informally, when $M$ is initial, we get a translation which doubles each
context:
\[
  \sem{\Gamma \ext (a : A)} \equiv \sem{\Gamma} \ext (a : A) \ext (a^D : \sem{A}\,a)
\]
Formally, however, this is not well-typed because $A$ lives in $\Gamma$, not in
$\sem{\Gamma}$. Hence, in the definition of contexts in the displayed model, we
also include a substitution which projects out the ``base'' parts of contexts.
This can be used to weaken types in base contexts to total contexts.
\begin{alignat*}{3}
  &\Con : \Con_{M} \to \U\\
  &\Con\,\ulGamma \defn \Sigma\,(\ulGamma' : \Con_M)\,(\proj : \Sub_M\,\ulGamma'\,\ulGamma)
\end{alignat*}
This requires the small $\Sigma$-type in ToS. It is possible to rephrase the
construction without type formers in $\U$; again, Example \ref{ex:gluing} has unary
parametricity as a special case. However, the fibered version has the advantage
that contexts are translated to contexts, types to types, and terms to terms,
which makes it more convenient if we actually want to implement it as a program
translation. In contrast, the gluing definition of unary parametricity maps
contexts to types.
\end{myexample}

\section{Term Algebras}
\label{sec:inf-term-algebras}

We adapt now the previous term algebra construction to the infinitary
case. We again switch to the ETT setup with cumulative universes. We
assume Section \ref{sec:cumulative-ett} without any change. Also, we adapt
\ref{sec:ett-signatures} to infinitary signatures and semantics. All definitions
are the same, the only change is that the $\bM_{i,j}$ model is now the
isofibrant flcwf model, and we have the infinitary ToS.

\subsection{Term Algebra Construction}
\label{sec:iqii-term-algebra-construction}

The term algebra construction changes significantly. The reason is the
following. In the finitary case, the key property was that ``small types
evaluated in the term model are sets of terms''. Formally, we had for $a :
\Tm\,\Omega\,\U$ that $a^A\,(\Omega^T\,\id) \equiv \Tm\,\Omega\,(\El\,a)$.  This
is now weakened to an isomorphism, i.e.\ $a^A\,(\Omega^T\,\id) \simeq
\Tm\,\Omega\,(\El\,a)$.

This is again necessary because of the closure of $\U$ under type formers. For
example, $\top^A\,(\Omega^T\,\id) \equiv \top$, and $\Tm\,\Omega\,(\El\,\top)$
is merely isomorphic to $\top$. We assume $\Omega : \Sig_j$ for some $j$
level, and define $\blank^T$ by induction on $\syn_j$.
\begin{alignat*}{3}
  &\blank^T &&: (\Gamma : \Con)&&(\nu : \Sub\,\Omega\,\Gamma) \to \Gamma^A\\
  &\blank^T &&: (\sigma : \Sub\,\Gamma\,\Delta)&&(\nu : \Sub\,\Omega\,\Gamma) \to \Delta^T\,(\sigma \circ \nu) \simeq \sigma^A\,(\Gamma^T\,\nu)\\
  &\blank^T &&: (A : \Ty\,\Gamma)&&(\nu : \Sub\,\Omega\,\Gamma) \to \Tm\,\Omega\,(A[\nu])
  \to A^A\,(\Gamma^T\,\nu)\\
  &\blank^T &&: (t : \Tm\,\Gamma\,A)&&(\nu : \Sub\,\Omega\,\Gamma) \to A^T\,\nu\,(t[\nu]) \simeq_{\id} t^A\,(\Gamma^T\,\nu)
\end{alignat*}
In short, interpretations of substitutions and terms are weakened to
isomorphisms.  By $\simeq_{\id}$ we mean a displayed isomorphism of objects in
the semantic $A$ type (which is an flcwf isofibration); recall Definition
\ref{def:displayed-iso}.  The isomorphism is ``vertical'' since it lies over
$\id$.

The interpretation of the cwf is the same as before, but like in the isofibrant
semantics, we have to use explicit $\coe$ instead of silently transporting over
equalities. In the interpretations of substitutions and terms, we have to
explicitly compose isomorphisms and sometimes lift them using $\coh$. We give
some examples. The interpretation of context formers is the same as before:
\begin{alignat*}{3}
  &\emptycon^T\,\nu           &&\defn \tt\\
  &(\Gamma \ext A)^T(\nu,\,t) &&\defn (\Gamma^T\,\nu,\,A^T\,\nu\,t)
\end{alignat*}
Type substitution with $\sigma : \Sub\,\Gamma\,\Delta$ is
interpreted as coercion:
\begin{alignat*}{3}
  &(A[\sigma])^T : (\nu : \Sub\,\Omega\,\Gamma)(t : \Tm\,\Omega\,(A[\sigma][\nu]) \to A^A\,(\sigma^A\,(\Gamma^T\,\nu)) \\
  &(A[\sigma])^T\,\nu\,t \defn \coe\,(\sigma^T\,\nu)\,(A^T\,(\sigma\circ\nu)\,t)
\end{alignat*}
Composition of $\sigma : \Sub\,\Delta\,\Xi$ and $\delta : \Sub\,\Gamma\,\Delta$
is the following:
\begin{alignat*}{3}
  &(\sigma \circ \delta)^T : (\nu : \Sub\,\Omega\,\Gamma) \to \Delta^T\,(\sigma\circ\delta\circ\nu) \simeq \sigma^A\,(\delta^A\,(\Gamma^T\,\nu))\\
  &(\sigma \circ \delta)^T\,\nu \defn \sigma^M\,(\delta^T\,\nu) \circ \sigma^T\,(\delta\circ\nu)
\end{alignat*}
Above, we have
\begin{alignat*}{3}
& \delta^T\,\nu &&: \Xi^T\,(\delta\circ\nu) \simeq \delta^A\,(\Gamma^T\,\nu)\\
& \sigma^M\,(\delta^T\,\nu) &&: \sigma^A\,(\Xi^T\,(\delta\circ\nu)) \simeq \sigma^A\,(\delta^A\,(\Gamma^T\,\nu))\\
& \sigma^T\,(\delta\circ\nu) &&: \Delta^T\,(\sigma\circ\delta\circ\nu) \simeq \sigma^A\,(\Xi^T\,(\delta\circ \nu))
\end{alignat*}
Hence, the type of the composition in the definition checks out. We make use of
the fact that $\sigma^M$ sends an isomorphism in $\Gamma$ to an isomorphism in
$\Delta$.

Substitution extension is a somewhat more complicated case. We want to interpret
the extension of $\sigma : \Sub\,\Gamma\,\Delta$ with $t :
\Tm\,\Gamma\,(A[\sigma])$:
\[
      (\sigma,\,t)^T : (\nu : \Sub\,\Omega\,\Gamma)
  \to (\Delta \ext A)^T\,((\sigma,\,t)\circ\nu)\simeq(\sigma,\,t)^A\,(\Gamma^T\,\nu)
\]
The goal is an isomorphism in the semantic $\Gamma \ext A$ category, i.e.\ the
total category of $A$. Every isomorphism in $\Gamma \ext A$ arises as packing
together a $\Gamma$ isomorphism and a displayed $A$ isomorphism over it. We can compute
the type further:
\[
      (\sigma,\,t)^T : (\nu : \Sub\,\Omega\,\Gamma)
  \to (\Delta^T\,(\sigma \circ \nu),\,A^T\,(\sigma\circ\nu)\,(t[\nu])) \simeq (\sigma^A\,(\Gamma^T\,\nu),\,t^A\,(\Gamma^T\,\nu))
\]
We can exhibit $\sigma^T\,\nu : \Delta^T\,(\sigma \circ \nu) \simeq
\sigma^A\,(\Gamma^T\,\nu)$ as the base component of the goal isomorphism. Now we
need a displayed isomorphism over it. Following the pattern, we may try
$t^T\,\nu$:
\[
  t^T\,\nu : (A[\sigma])^T\,\nu\,(t[\nu]) \simeq_{\id} t^A\,(\Gamma^T\,\nu)
\]
Computing the type:
\[
  t^T\,\nu : \coe\,(\sigma^T\,\nu)\,(A^T\,(\sigma\circ\nu)\,(t[\nu])) \simeq_{\id} t^A\,(\Gamma^T\,\nu)
\]
So this is not quite what is needed; we want a displayed iso over $\sigma^T\,\nu$, but we have
something over $\id$. We can fix this using $\coh$:
\[
\coh\,(\sigma^T\,\nu)\,(A^T\,(\sigma\circ\nu)\,(t[\nu])) :
A^T\,(\sigma\circ\nu)\,(t[\nu]) \simeq_{\sigma^T\,\nu} \coe\,(\sigma^T\,\nu)\,(A^T\,(\sigma\circ\nu)\,(t[\nu]))
\]
The composition of $t^T\,\nu$ and the above now checks out:
\[
  (\sigma,\,t)^T\,\nu \defn (\sigma^T\,\nu,\,t^T\,\nu\circ \coh\,(\sigma^T\,\nu)\,(A^T\,(\sigma\circ\nu)\,(t[\nu])))
\]
We omit the rest of the cwf interpretation. It should be apparent that explicit
$\coe$ and $\coh$-handling is fairly technical. We note though that in a proof
assistant, the finitary and infinitary term model constructions would be of
similar difficulty, because there we cannot rely on equality reflection and
implicit transports to magically tidy up the formalization. In fact, even in the
finitary case it would be a good idea to structure the formalization around
coercions and coherences.

The high-level explanation for why the weakened constructions continue to work,
is the same as what we gave in the section on iso-fibrant semantics: we do
nothing which would violate stability under isomorphisms; additionally, because
our isofibrations are \emph{split}, coercion and coherence compute strictly on
identities and compositions, which ensures that conversion equations in the
syntax are respected. For example, functoriality of type substitution relies on
$\coe$ computation on identity and composition.

\subsubsection{Universe}
The universe is interpreted as follows.
\begin{alignat*}{3}
  &\U^T : (\nu : \Sub\,\Omega\,\Gamma) \to \Tm\,\Omega\,\U \to \Set_{j + 1} \\
  &\U^T\,\nu\,a \defn \Tm\,\Omega\,(\El\,a) \\
  &\\
  &(\El\,a)^T : (\nu : \Sub\,\Omega\,\Gamma)(t : \Tm\,\Omega\,(\El\,(a[\nu]))) \to a^A\,(\Gamma^T\,\nu)\\
  &(\El\,a)^T\,\nu\,t \defn (a^T\,\nu)\,t
\end{alignat*}
In the interpretation of $\El$, note that
\[
a^T\,\nu : \Tm\,\Omega\,(\El\,(a[\nu])) \simeq_{\id} a^A\,(\Gamma\,\nu)
\]
But this is an isomorphism in the semantic $\U$, which is the category of sets
in $\Set_{j + 1}$. So coercion along $a^T\,\nu$ is simply function application,
and we are justified in writing $(a^T\,\nu)\,t$.

For each type former in $\U$, we have to exhibit an isomorphism of sets in the
interpretation.

\subsubsection{$\bs{\top}$, $\bs{\Sigma}$}
We need
\[
  \top^T : (\nu : \Sub\,\Omega\,\Gamma) \to \U^T\,\nu\,(\top[\nu]) \simeq_{\id} \top^A\,(\Gamma^T\,\nu)
\]
The result type computes to $\Tm\,\Omega\,(\El\,\top) \simeq \top$, which is evident. For $\Sigma$, we
have to show
\[
   \Tm\,\Omega\,(\El\,(\Sigma\,(a[\nu])\,(b[\nu\circ\p,\,\q]))) \simeq ((\alpha : a^A\,(\Gamma^T\,\nu)) \times b^A\,(\Gamma^T\,\nu,\,\alpha))
\]
This follows from the induction hypotheses $a^T$ and $b^T$, which establish the
first and second components of the desired isomorphism.

\subsubsection{Identity}
For the identity type, we need
\[
  \Tm\,\Omega\,(\El\,(\Id\,(t[\nu])\,(u[\nu])) \simeq (t^A\,(\Gamma^T\,\nu) \equiv u^A\,(\Gamma^T\,\nu))
\]
This follows from $t^T\,\nu$, $u^T\,\nu$ and the specifying isomorphism of
$\Id$.

\subsubsection{Small external products}
This function type follows the same pattern. We define the isomorphism below
using induction hypotheses and the specifying isomorphism of $\Piinf$.
\[
\Tm\,\Omega\,(\El\,(\Piinf\,\mi{Ix}\,(\lambda\,i.\,(b\,i)[\nu])))
\simeq
((i : \mi{Ix}) \to (b\,i)^A\,(\Gamma\,\nu))
\]

\subsubsection{Internal products}
Inductive functions are interpreted using transport along $a^T\,\nu :
\Tm\,\Omega\,(\El\,(a[\nu])) \simeq a^A\,(\Gamma^T\,\nu)$:
\begin{alignat*}{3}
  &(\Pii\,a\,B)^T : (\nu : \Sub\,\Omega\,\Gamma)(t : \Tm\,\Omega\,(\Pii\,(a[\nu])\,(B[\nu\circ\p,\,\q])))\\
  & \hspace{3.5em}\to (\alpha : a^A\,(\Gamma^T\,\nu)) \to B^A\,(\Gamma^T\,\nu,\,\alpha)\\
  &(\Pii\,a\,B)^T\,\nu\,t \defn \lambda\,\alpha.\,
         B\,(\nu,\,(a^T\,\nu)^{-1}\,\alpha)\,(t\,((a^T\,\nu)^{-1}\,\alpha))
\end{alignat*}
\textbf{External products} are interpreted the same way as in the finitary case.

\subsection{Eliminator Construction}

We only present the eliminator construction in the following, since (unique)
recursors are derivable from this.

Compared to the finitary case, the eliminator construction does not change as
much as the term algebra construction. The reason is that although we have
weakened strict algebra equality to isomorphism, in the current construction we
only have to show equalities of substitutions and terms, which we do not need to
weaken (and they cannot be sensibly weakened anyway).

We assume $j$ and $k$ such that $ j + 1 \leq k$, and also $\Omega : \Sig_j$ and
$\omega^D : \Omega^D_k\,(\Omega^T\,\id)$. Hence, $\omega^D$ is a displayed
$\Omega$-algebra over the term algebra, and we aim to construct its section.
Note that we lift $\Omega^T\,\id : \Omega^A_{j+1}$ to level $k$ by cumulativity.
We define $\blank^E$ by induction on $\syn_j$.
\begin{alignat*}{3}
  &\blank^E &&: (\Gamma : \Con)&&(\nu : \Sub\,\Omega\,\Gamma) \to \Gamma^S\,(\nu^A\,(\Omega^T\,\id))\,(\nu^D\,\omega^D)\\
  &\blank^E &&: (\sigma : \Sub\,\Gamma\,\Delta)&&(\nu : \Sub\,\Omega\,\Gamma) \to \Delta^E\,(\sigma \circ \nu) \equiv \sigma^S\,(\Gamma^E\,\nu)\\
  &\blank^E &&: (A : \Ty\,\Gamma)&&(\nu : \Sub\,\Omega\,\Gamma)(t : \Tm\,\Omega\,(A[\nu]))
     \to A^S\,(t^A\,(\Omega^T\,\id))\,(t^D\,\omega^D)\,(\Gamma^E\,\nu)\\
  &\blank^E &&: (t : \Tm\,\Gamma\,A)&&(\nu : \Sub\,\Omega\,\Gamma) \to A^E\,\nu\,(t[\nu]) \equiv t^S\,(\Gamma^E\,\nu)
\end{alignat*}
This is so far exactly the same as in Section
\ref{sec:fqii-eliminator-construction}. The subsequent changes arise from
the need to transport along $\blank^T$ in definitions.

\subsubsection{Universe}
For the universe, we need
\begin{alignat*}{3}
  &\U^E : (\nu : \Sub\,\Omega\,\Gamma)(a : \Tm\,\Omega\,\U) \to (\alpha : a^A\,(\Omega^T\,\id)) \to a^D\,\omega^D\,\alpha
\end{alignat*}
Since we only have $a^T\,\id : a^A\,(\Omega^T\,\id) \simeq \Tm\,\Omega\,(\El\,a)$, the definition
becomes
\[
  \U^E\,\nu\,a\,t \defn (a^T\,\id\,t)^D\,\omega^D
\]
That this is well-typed, follows from
\begin{alignat*}{3}
  & ((a^T\,\id)\,t)^T\,\id &&: t \equiv ((a^T\,\id)\,t)^A\,(\Omega^T\,\id) \\
  & ((a^T\,\id)\,t)^D\,\omega^D &&: a^D\,\omega^D\,(((a^T\,\id)\,t)^A\,(\Omega^T\,\id))
\end{alignat*}
For $\El$, we need to show
\[
(\El\,a)^E : (\nu : \Sub\,\Omega\,\Gamma)(t : \Tm\,\Omega\,(\El\,(a[\nu]))) \to a^S\,(\Gamma^E\,\nu)\,(t^A\,(\Omega^T\,\id)) \equiv t^D\,\omega^D
\]
We have
\[
  t^T\,\id : (a[\nu])^T\,\id\,t \equiv t^A\,(\Omega^T\,\id)
\]
Moreover
\[
  a^E\,\nu : \U^E\,\nu\,(a[\nu]) \equiv a^S\,(\Gamma^E\,\nu)
\]
Hence
\[
  a^E\,\nu : (\lambda\,t.\,((a[\nu])^T\,\id\,t)^D\,\omega^D) \equiv a^S\,(\Gamma^E\,\nu)
\]
Applying both sides to $((a[\nu])^T\,\id)^{-1}\,t$, we have
\[
  ((a[\nu])^T\,\id\,(((a[\nu])^T\,\id)^{-1}\,t))^D\,\omega^D \equiv a^S\,(\Gamma^E\,\nu)\,(((a[\nu])^T\,\id)^{-1}\,t)
\]
This simplifies to
\[
  t^D\,\omega^D \equiv a^S\,(\Gamma^E\,\nu)\,(((a[\nu])^T\,\id)^{-1}\,t)
\]
By $(a^T\,\id\,t)^T\,\id : t \equiv (a^T\,\id\,t)^A\,(\Omega^T\,\id)$ this becomes:
\[
  t^D\,\omega^D \equiv a^S\,(\Gamma^E\,\nu)\,(((a[\nu])^T\,\id)^{-1}\,((a^T\,\id\,t)^A\,(\Omega^T\,\id)))
\]
Thus we have the required
\[
  t^D\,\omega^D \equiv a^S\,(\Gamma^E\,\nu)\,(\Omega^T\,\id)
  \]

\subsubsection{$\bs{\top}$, $\bs{\Sigma}$}

For $\top$, we need
\[
  \top^E : (\nu : \Sub\,\Omega\,\Gamma) \to \U^E\,\nu\,\top \equiv \top^S\,(\Gamma^E\,\nu)
\]
But this is clearly trivial, since $\top^S\,(\Gamma^E\,\nu) : \top \to \top$. Considering $\Sigma$:
\[
(\Sigma\,a\,b)^E : (\nu : \Sub\,\Omega\,\Gamma) \to
  \U^E\,\nu\,(\Sigma\,(a[\nu])\,(b[\nu\circ\p,\,\q])) \equiv (\Sigma\,a\,b)^S\,(\Gamma^E\,\nu)
\]
This case is a bit tedious. The sides above are functions, so appealing to function extensionality
we apply both sides to $(\alpha,\,\beta)$, where $\alpha : a^A\,(\nu^A\,(\Omega^T\,\id))$ and
$\beta : b^A\,(\nu^A\,(\Omega^T\,\id),\,\alpha)$. We also unfold some definitions:
\[
(((\Sigma\,(a[\nu])\,(b[\nu\circ\p,\,\q]))^T\,\id)^{-1}\,(\alpha,\,\beta))^D\,\omega^D \equiv
(a^S\,(\Gamma^E\,\nu)\,\alpha,\,b^S\,(\Gamma^E\,\nu,\,\refl)\,\beta)
\]
Unfolding the left side of this equation, we have
\[
  ((((a[\nu])^T\,\id)^{-1}\,\alpha)^D\,\omega^D,\,(((b[\nu\circ\p,\,\q])^T\,(\id,\,((a[\nu])^T\,\id)^{-1}\,\alpha))^{-1}\,\beta)^D\,\omega^D)
\]
Let us abbreviate $((a[\nu])^T\,\id)^{-1}\,\alpha : \Tm\,\Omega\,(\El\,(a[\nu]))$ as $\alpha'$:
\[
  (\alpha'^D\,\omega^D,\,(((b[\nu\circ\p,\,\q])^T\,(\id,\,\alpha'))^{-1}\,\beta)^D\,\omega^D)
\]
Hence, we need to show component-wise equality of pairs. The equality of first components follow
from the following:
\[
  a^E\,\nu : \U^E\,\nu\,(a[\nu]) \equiv a^S\,(\Gamma^E\,\nu)
\]
Unfolding definitions and applying both sides to $\alpha$, we get the equality of first components:
\[
  \alpha'^D\,\omega^D \equiv a^S\,(\Gamma^E\,\nu)\,\alpha
\]
Analogously, the equality of second components follows from
\[
b^E\,(\nu,\,\alpha') :
 (((b[\nu,\,\alpha'])^T\,\id)^{-1}\,\beta)^D\,\omega^D \equiv b^S\,(\Gamma^E\,\nu,\,\refl)\,\beta
\]
The right hand side is what we need, the left hand side though does not immediately match up.
Hence, it remains to show that
\[
(((b[\nu,\,\alpha'])^T\,\id)^{-1}\,\beta)^D\,\omega^D
\equiv
(((b[\nu\circ\p,\,\q])^T\,(\id,\,\alpha'))^{-1}\,\beta)^D\,\omega^D
\]
Thus, it suffices to show
\[
(b[\nu,\,\alpha'])^T\,\id \equiv (b[\nu\circ\p,\,\q])^T\,(\id,\,\alpha')
\]
This equation follows from a somewhat laborious unfolding of all involved definitions.
In particular, we use that for some $a : \Tm\,\Gamma\,\U$, we have
\[
  (a[\sigma])^T\,\nu \equiv a^M\,(\sigma^T\,\nu) \circ a^T\,(\sigma \circ \nu)
\]
which follows from the definition of $\blank^T$.

\subsubsection{Internal products}
In $\Pi$ we likewise transport along the domain isomorphism.
\begin{alignat*}{3}
 &(\Pi\,a\,B)^E : (\nu : \Sub\,\Omega\,\Gamma)(t : \Tm\,\Omega\,(\Pi\,(a[\nu])\,(B[\nu\circ\p,\,\q])))\\
 & \hspace{2em}\to (\alpha : \Tm\,\Omega\,(\El\,(a[\nu]))) \to B^S\,(t^A\,(\Omega^T\,\id)\,\alpha)\,(t^D\,\omega^D\,(\alpha^D\,\omega^D))\,(\Gamma^E\,\nu,\,\refl)\\
 &\rlap{$(\Pi\,a\,B)^E\,\nu\,t \defn \lambda\,\alpha.\,B^E\,(\nu,\,(a[\nu])^T\,\id\,\alpha)\,(t\,((a[\nu])^T\,\id\,\alpha))$}
\end{alignat*}
This is well-typed by the following:
\begin{alignat*}{3}
  & a^E\,\nu &&: ((a[\nu])^T\,\id\,\alpha)^D\,\omega^D \equiv a^S\,(\Gamma^E\,\nu)\,\alpha\\
  & ((a[\nu])^T\,\id\,\alpha)^T\,\id &&: \alpha \equiv ((a[\nu])^T\,\id\,\alpha)^A\,(\Omega^T\,\id)
\end{alignat*}

\subsubsection{$\bs{\Id}$, $\bs{\Piinf}$, $\bs{\Pie}$}

$\Id$ is trivial by UIP, and for $\Piinf$ and $\Pie$ we again do a straightforward recursion
under the indexing function.
\\\\
This concludes the definition of $\blank^E$. We again show the initiality of term algebras.

\begin{mydefinition}[\textbf{Eliminators}]
\label{def:iqiit-eliminator}
Assuming $\Omega : \Sig_j$, a $k$ level such that $k \geq j + 1$ and $\omega^D :
\Omega^D_{k}\,(\Omega^T\,\id)$, we have $\Omega^E\,\id : \Omega^S\,(\Omega^T\,\id)\,\omega^D$ as
the eliminator.
\end{mydefinition}

\begin{theorem}
  $\Omega^T\,\id : \Omega^A_{j+1}$ is initial when lifted to any $k \geq j + 1$ level.
\end{theorem}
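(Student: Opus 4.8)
The plan is to invoke the same argument that closed the finitary case. I have already constructed the eliminator $\Omega^E\,\id : \Omega^S\,(\Omega^T\,\id)\,\omega^D$ for an arbitrary displayed algebra $\omega^D : \Omega^D_k\,(\Omega^T\,\id)$ at any level $k \geq j+1$ (Definition \ref{def:iqiit-eliminator}). Bundled together with the carrier-level section $\ms{E}$, this says precisely that the lifted term algebra $\Omega^T\,\id : \Omega^A_k$ supports the induction predicate $\Inductive$ in the flcwf $\sem{\Omega}$: every displayed algebra over it has a section. So the first step is simply to note that induction holds.

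Next I would appeal to Theorem \ref{thm:initiality-induction}, which states that in any cwf equipped with $\Id$ and weak $\K$, an object supports induction if and only if it is initial. The semantic flcwf $\sem{\Omega}$ obtained from the infinitary $\bM$ model has $\Id$ and (weak, in fact even the weakened $\K$ described in Section \ref{sec:iqiit-substitutions}) constant families, so the hypothesis of that theorem is met. Hence the lifted term algebra is initial in $\sem{\Omega}$ at level $k$. Since $k \geq j+1$ was arbitrary, this gives the claim for all such levels.

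Concretely the proof body is two sentences: the term algebra lifted to level $k$ supports elimination by Definition \ref{def:iqiit-eliminator}, and elimination is equivalent to initiality by Theorem \ref{thm:initiality-induction}. This is verbatim the proof of the analogous finitary theorem at the end of Section \ref{sec:fqiit-term-algebras}; nothing about the switch to isofibrant semantics changes it, because all the real work — keeping $\coe$ and $\coh$ coherent through the eliminator construction — was already absorbed into Definition \ref{def:iqiit-eliminator} and the preceding development.

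The only thing to be slightly careful about, and the place I'd expect a referee to push back, is the implicit use of cumulativity of algebras: the eliminator was stated for $\omega^D$ at level $k$ over a term algebra that natively lives at level $j+1$, so one is silently lifting $\Omega^T\,\id$ along $\Omega^A_{j+1} \leq \Omega^A_k$. This is exactly the liberty flagged in the "Cumulativity of algebras" remark in Section \ref{sec:ett-signatures} — formally one would replace the subtyping by the explicit lifting isomorphism $\ms{Lift}\,\Omega^A_{j+1} \simeq \Omega^A_{j+1}$ and transport the initiality statement across it — but since the whole chapter already works up to implicit cumulativity, no extra argument is needed here. So the main (very mild) obstacle is purely presentational: making sure the statement "initial when lifted to any $k \geq j+1$" is read with the cumulative-lifting convention in force.
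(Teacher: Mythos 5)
Your proof is correct and is essentially verbatim the paper's own two-line argument: the lifted term algebra supports elimination by Definition \ref{def:iqiit-eliminator}, and elimination is equivalent to initiality by Theorem \ref{thm:initiality-induction}. The extra remarks on cumulative lifting match the conventions already in force in the chapter and require no further justification.
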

\begin{proof}
  $\Omega^T\,\id$ supports elimination by Definition
  \ref{def:iqiit-eliminator}, and elimination is equivalent to initiality by
  Theorem \ref{thm:initiality-induction}.
\end{proof}

\section{Levitation and Bootstrapping}
\label{sec:iqii-levitation}

In this section we adapt the bootstrapping procedure from Section
\ref{sec:closed-levitation} to infinitary signatures.

\subsubsection{Bootstrapping for 2LTT semantics}

If we only want to write down signatures and get their 2LTT-based semantics, a
simplified bootstrapping suffices, which is essentially the same as in Section
\ref{sec:closed-levitation}. We write $\ToS_i : \Set_{i+1}$ for the type of
models where underlying sets are in $\Set_i$ and external indexing is over
$\Ty_0$.  We also have $\bM_{i} : \ToS_{i + 2}$ for the flcwf models where
underlying sets in algebras are in $\Set_i$ and external indexing is over types
in $\Ty_0$.

\begin{mydefinition} The type of \textbf{bootstrap signatures} is defined as follows:
\[
  \ms{BootSig} \defn (i : \ms{Level}) \to (M : \ToS_{i}) \to \Con_{M}
\]
These bootstrap signatures only allow external indexing by types in $\Ty_0$.  We
can write bootstrap signatures and interpret them in $\bM_i$, by applying them
to $\bM_i$.
\end{mydefinition}

\subsubsection{Bootstrapping for term algebras}

Now we reuse the ETT setting from Section \ref{sec:ett-signatures}. We have
$\ToS_{i,j} : \Set_{i+1\lub j+1}$ for the type of models where underlying sets
are in $\Set_i$ and $\Piinf$ and $\Pie$ abstract over $\Set_j$. We also have
$\bM_{i,j} : \ToS_{(i+1\lub j)+1,j}$ as the flcwf models, again with underlying
sets of algebras in $\Set_i$ and external indexing types in $\Set_j$.

\begin{mydefinition} The type of \textbf{bootstrap signatures} at level $j$ is defined as follows.
These may contain external indexing by types in $\Set_j$.
\[
  \ms{BootSig}_j \defn (i\,j : \ms{Level}) \to (M : \ToS_{i,j}) \to \Con_{M}
\]
\end{mydefinition}

\begin{mydefinition}[\textbf{Signature for ToS}]
We define $\ToSSig_j : \ms{BootSig_{j+1}}$ as the bootstrap signature for ToS, where the
described signatures may be indexed by types in $\Set_j$. Like in Section
\ref{sec:closed-levitation}, we use an internal notation. We present an excerpt.
\begingroup
\allowdisplaybreaks
\begin{alignat*}{3}
  & \Con       &&: \U\\
  & \Sub       &&: \Con \to \Con \to \U\\
  & \Ty        &&: \Con \to \U\\
  & \Tm        &&: (\Gamma : \Con) \to \Ty\,\Gamma \to \U\\
  & ...        &&\\
  & \ms{SigU}  &&: \{\Gamma : \Con\} \to \El\,(\Ty\,\Gamma)\\
  & \ms{SigEl} &&: \{\Gamma : \Con\} \to \Tm\,\Gamma\,\ms{SigU} \to \El\,(\Ty\,\Gamma)\\
  & \Piinf     &&: \{\Gamma : \Con\}(A : \Set_j) \toe (A \toinf \Tm\,\Gamma\,\ms{SigU}) \to \El\,(\Tm\,\Gamma\,\ms{SigU})\\
  & ...        &&
\end{alignat*}
Now the interpretation of $\ToSSig_j$ in $\bM_{i,j+1}$ yields the flcwf where
objects are elements of $\ToS_{i,j}$. Note the level bump: $\ToSSig_j$ is in
$\ms{BootSig_{j+1}}$, so we expend one level at each round of self-description.
We get the notion of ToS-induction from $\bM_{i,j+1}$, and we have $\ToS_{i,j}
\leq \ToS_{i+1,j}$ (by definition of $\ToS$ and the rules of subtyping), which
allows us to specify what it means for a model to support elimination into any
universe. Thus we recover all concepts that are used in the term algebra and
eliminator constructions.
\end{mydefinition}
\endgroup

\section{Related Work}
\label{sec:iqii-related-work}

This chapter is based on the publication ``Large and Infinitary Quotient
Inductive-Inductive Types'' \cite{iqiit}. We make the following changes:
\begin{itemize}
  \item We use 2LTT for the flcwf semantics, while the paper only used the cumulative
        ETT setting.
  \item We add the construction of left adjoints and the signature-based semantics in
        Sections \ref{sec:iqii-left-adjoints}-\ref{sec:signature-semantics}.
  \item We add small $\top$ and $\Sigma$ to the ToS, and also their large counterparts
        in Sections \ref{sec:iqii-left-adjoints}-\ref{sec:signature-semantics}.
\end{itemize}

Recall that we show that arbitrary substitutions have left adjoints. Moeneclaey
\cite{semicubical} describes sufficient conditions to have right adjoints as
well: given $t : \Tm\,(\emptycon \ext
\Gamma^A_{\ms{sig}})\,\Gamma^D_{\ms{sig}}$, we have the context
$\emptycon\ext\Gamma^A_{\ms{tos}}\ext\Gamma^S_{\ms{tos}}[\id,\,t]$, and then the
forgetful substitution from this context to $\emptycon\ext\Gamma^A_{\ms{tos}}$
has a right adjoint. The construction that we gave in
Example \ref{ex:param-translation} was given with such a $t$ term as well.

Fiore, Pitts and Steenkamp investigated infinitary QITs in \cite{inf-qits} and
\cite{DBLP:journals/corr/abs-2101-02994}. They introduced two signatures for QW
and QWI types, which generalize W-types and indexed W-types respectively. In the
latter work, they show that these types can be constructed using the WISC axiom
(weakly initial sets of covers).

Essentially algebraic theories generalize to the infinitary cases in a
straightforward way \cite{adamek1994locally}.

Specific examples of infinitary QIITs were introduced in \cite{hottbook}, as
QIITs for Cauchy real numbers, surreal numbers, and cumulative set hierarchies.
In \cite{partiality}, a partiality monad is specified as an infinitary QIITs.

\chapter{Higher Inductive-Inductive Signatures}
\label{chap:hiit}

So far we only considered semantics of signatures where equality constructors
are interpreted as proof-irrelevant equalities, i.e.\ those satisfying UIP. This
inspires the naming of \emph{quotient} inductive-inductive signatures. In
contrast, \emph{higher inductive-inductive} signatures are characterized by
having possibly proof-relevant and iterated equalities in algebras. The natural
setting of HIITs is homotopy type theory (HoTT) \cite{hottbook}, where higher
equalities can be manipulated and constructed in non-trivial ways. We might
think of HIITs as generalizations of QIITs, or alternatively, view QIITs as
set-truncated HIITs.

The theory of HII signatures is fairly similar to the theory of infinitary QII
signatures. The main difference is that the internal $\Id$ type does not support
equality reflection, nor UIP. In fact, infinitary QII signatures already allow
iterated $\Id$, and most HIITs that occur in the literature can be already
expressed using QII signatures. In contrast, the semantics of signatures changes
markedly: the semantic inner theory is now intensional, and $\Id$ is interpreted
as intensional inner equality. This may not seem that dramatic, but note that so
far we have made very heavy use of UIP and inner equality reflection in the
semantics, and now these are not available.

The more general semantics introduces significant complications. As a result, in
the following we shall restrict ourselves to the AMDS fragment of the semantics.
This is sufficient to compute what we mean by induction and initiality (which
has been called ``homotopy initiality'' in the context of HoTT
\cite{sojakova}).

Why do not we go further? The main reason is that the natural semantics is
actually in $(\omega,1)$-categories: we want $(\omega,1)$-categories of
algebras. This requires a different approach and toolset. In particular, in
\cite[Section~9]{hiits} we gave an example that a naive attempt to extend the AMDS
semantics of signatures with the notion of identity morphisms already fails.
The author of this thesis is not versed enough in higher category theory, so we
leave the exposition of the full semantics to future work.

We do note that a higher semantics has been developed by Capriotti and
Sattler. See \cite{capriotti2020higher} for an abstract; the bulk of the work
remains unpublished as of now. In short, Capriotti and Sattler define the ToS in
2LTT, and also use 2LTT to give a model where signatures are higher categories,
specified as complete Segal types. They show that categories of algebras have
finite limits and that initiality is equivalent to induction. Additionally, the
setup yields a structure identity principle for each signature. However,
reductions to simpler type formers are not discussed, nor possible term algebra
constructions. Both of these appear to be far more difficult than in the
quotient setting, and to the author's knowledge there are no concrete proposals
how to approach them.

\subsubsection{The necessity of 2LTT}

2LTT is firmly necessary in the specification of HIITs, and the ToS must live in
the outer layer. The reason is that there is no known way to sensibly
internalize the metatheory of type theories purely inside HoTT. This is the
problem of ``HoTT eating itself'' \cite{hott-eat-itself}. It is also closely
related to the problem of representing semisimplicial types in HoTT. If we can
construct semisimplicial types in an embedded type theory, and interpret that
into non-truncated HoTT types, that would indeed solve the problem. But so far
it has not been solved, or proven impossible to solve. A key original motivation
for 2LTT was precisely to allow construction of semisimplicial types
\cite{twolevel}.

We give a short summary of the problem; see \cite[Section~4]{hiits} for more
discussion. The goal is to have a notion of model of a dependent type theory in
HoTT, such that we have a standard model where contexts are HoTT types.

We may define the notion of model naively using types and equalities, by having
$\Con : \U$, $\Ty : \Con \to \U$, etc.\ and $\ms{idl} : \sigma \circ \id =
\sigma$. However, this does not yield a well-behaved notion of \emph{syntax}. If
we define the syntax as HIIT for the above notion (i.e.\ the initial model),
nothing forces the underlying types to be sets; the HIIT definition freely adds
a large number of non-trivial higher paths. Since the underlying types are not
sets, this syntax does not have decidable equality, by Hedberg's theorem
\cite{hedberg}. This is regardless of what type formers we include.

Alternatively, we may define the notion of model as having homotopy sets for
underlying types. The corresponding HIIT will be in fact a QIIT, where every
inductive sort is set-truncated. While this is better-behaved as syntax, we do
not get a standard model. Contexts in a model cannot be arbitrary types because
in HoTT, types (of a universe) do not form a h-set. In fact, not even h-sets form
a h-set; they form a h-groupoid. So we do not get any reasonable notion of standard
interpretation.

2LTT solves this issue in the following way: the embedded syntax is an outer
QIIT, and equations in the syntax are given as strict (outer) equalities. The
standard inner type model is now possible because in that model all equations
hold strictly, up to inner definitional equality. However, this implies that we
can \emph{only} define strict models; this leads to the following consideration.

\subsubsection{Strict vs.\ weak signatures}

We have an important choice in the semantics: homomorphisms (and sections) can
preserve structure strictly, i.e.\ up to outer equality, or weakly, up to inner
equality. This choice has an impact on the supported ToS features.
\begin{itemize}
\item
  With strict preservation, the semantics does not support an elimination
  rule for $\Id$. The problem is that $\Id$ is necessarily modeled as inner
  equality, but we cannot eliminate from that to outer types, and strict
  equality is an outer type.
\item
  With weak preservation, we do have elimination for $\Id$. However, the
  semantics does not support strict $\beta\eta$ rules in $\Id$, $\Sigma$,
  $\Piinf$ and $\Pi$.  In short, the problem is that $(\El\,a)^M$ and
  $(\El\,a)^S$ are defined as inner equality types, so we need to use inner path
  induction in the semantics of eliminators. This implies that $\beta\eta$-rules
  also hold only up to inner paths, but not definitionally. Thus, in the
  ``weak'' case, we may have $\beta\eta$ only up to internal $\Id$.
\end{itemize}
It makes sense to develop both semantics. Weak morphisms and sections are
useful because they can be defined purely in the inner theory (or in HoTT). Strict
morphisms and sections are useful if we want to specify type formers, since type
theories usually assume strict $\beta$-rules for recursors and eliminators. In
this chapter, we specify theories of signatures and semantics for both cases.

\subsubsection{Metatheory}

We work in 2LTT. We assume that $\Ty_0$ is closed under $\Pi$, $\Sigma$, $\top$
and intensional identity $\blank\!=\!\blank$. We assume the ``based'' path induction principle
\cite[Section~1.12.1]{hottbook}. Assuming $A : \Ty_0$, $x : A$ and $P : (y : A) \to x = y \to \Ty_0$,
we have
\begin{alignat*}{3}
  & \J_P      &&: P\,x\,\refl \to \{y : A\}(p : x = y) \to P\,y\,p\\
  & \J_P\beta &&: \J_P\,\mi{pr}\,\refl \equiv \mi{pr}
\end{alignat*}
The following operations are defined in the standard way
\cite[Section~2]{hottbook}.
\begin{itemize}
  \item Path inversion $\blank^{-1} : x = y \to y = x$.
  \item Path composition $\blank\!\sqcdot\!\blank : x = y \to y = z \to x = z$.
  \item Assuming $P : A \to \Ty_0$, we have transport $\tr_P : x = y \to P\,x \to P\,y$.
  \item Path lifting $\ap : (f : A \to B) \to x = y \to f\,x = f\,y$.
  \item Dependent path lifting $\apd : (f : (x : A) \to B\,x) \to (p : x = y) \to \tr_B\,p\,(f\,x) = f\,y$.
\end{itemize}

\section{Strict Signatures}

\begin{mydefinition}
A \textbf{model of strict ToS} is the same as a model of the theory of
infinitary QII signatures, with the following change: the $\Id$ type former in
$\U$ only supports $\refl$, but no elimination rule or reflection rule.
\end{mydefinition}

We assume that the syntax of ToS exists, and a signature is a context in the
syntax. We could use bootstrap signatures as well, without loss of generality,
as we will not use actual induction on signatures in the following, and we will
also not discuss fine-grained sizing or cumulativity of algebras.

\begin{myexample} The circle is one of the simplest higher inductive types \cite[Section~6.4]{hottbook}. The signature is the following.
\begin{alignat*}{3}
  &\ms{S}^1  &&: \U\\
  &\ms{base} &&: \El\,\ms{S}^1\\
  &\ms{loop} &&: \El\,(\Id\,\ms{base}\,\ms{base})
\end{alignat*}
Note that the circle signature is expressible as a QII signature, but in the QII
semantics the $\ms{loop}$ entry is made trivial by UIP.
\end{myexample}

\subsubsection{Non-examples}
From the HoTT book, all higher-inductive types are supported, except
\begin{itemize}
\item
  The torus \cite[Section~6.6]{hottbook}, since the specification contains $\Id$
  composition, which requires $\Id$ elimination.
\item
  The ``hubs-and-spokes'' HITs \cite[Section~6.7]{hottbook}.  This involves
  abstracting over some external $x : \ms{S}^1$ (a point of the circle), then
  referring to a ToS term which is computed by elimination on $x$. This is also
  not permitted in our setup because signature terms live in the outer theory
  of 2LTT, and external parameters are in $\Ty_0$.

  If instead signatures and external parameters lived in the same theory (like in
  our ETT setup for term algebra constructions of QIITs), this elimination would
  be possible. For HIITs, we cannot do that, since the inner theory cannot
  reasonably internalize the ToS.
\end{itemize}

\subsection{Semantics}

For each signature $\Gamma$, we wish to compute
\begin{alignat*}{3}
  &\Gamma^A &&: \Set\\
  &\Gamma^M &&: \Gamma^A \to \Gamma^A \to \Set\\
  &\Gamma^D &&: \Gamma^A \to \Set\\
  &\Gamma^S &&: (\gamma : \Gamma^A) \to \Gamma^D\,\gamma \to \Set
\end{alignat*}
corresponding respectively to algebras, morphisms, displayed algebras and
sections. Note that all of these return in $\Set$. Morphisms and sections in
particular are forced to return in $\Set$ because they may contain strict
equalities.

The AMDS interpretations can be found in Appendix \ref{app:hii-amds} in a
tabular manner, together with a listing of ToS components. We discuss these
in the following.

In \textbf{algebras} and \textbf{displayed algebras} there is no
complication; all equations hold in these (displayed) models strictly, and we
do not use equations from induction hypotheses anywhere.

In \textbf{morphisms}, note that all term formers returning in $\El$ specify a
strict equation. We write $\refl$ in their definition for brevity, which is
technically correct (by equality reflection), but the definitions may involve
using the strict equalities from induction hypotheses.  $\top^M\,\gamma^M :
\tt_0 \equiv \tt_0$ is trivial, but
\[  (\proj_1\,t)^M\,\gamma^M : a^M\,\gamma^M\,((\proj_1\,t)^A\,\gamma_0) \equiv (\proj_1\,t)^A\,\gamma_1\]
requires us to use
\[t^M\,\gamma^M : (a^M\,\gamma^M\,((\proj_1\,t)^A\,\gamma_0),\,b^M\,(\gamma^M,\,\refl)\,((\proj_2\,t)^A\,\gamma_1))
\equiv t^A\,\gamma_1 \]
Likewise we use $t^M\,\gamma^M$ in the equation for $(\proj_2\,t)^M$.

Also note that the definition for $(\Id\,t\,u)^M\,\gamma^M$ relies on $t^M$ and
$u^M$ for well-typing. The goal is
\begin{alignat*}{3}
  &(\Id\,t\,u)^M\,\gamma^M : (\Id\,t\,u)^A\,\gamma_0 \to (\Id\,t\,u)^A\,\gamma_1\\
  &(\Id\,t\,u)^M\,\gamma^M : t^A\,\gamma_0 = u^A\,\gamma_0 \to t^A\,\gamma_0 = u^A\,\gamma_1
\end{alignat*}
Assuming $p : t^A\,\gamma_0 = u^A\,\gamma_0$, we have $\ap\,(a^M\,\gamma^M)\,p :
a^M\,\gamma^M\,(t^A\,\gamma_0) = a^M\,\gamma^M\,(u^A\,\gamma_0)$, so we rewrite
the sides along $t^M\,\gamma^M : a^M\,\gamma^M\,(t^A\,\gamma_0) \equiv
t^A\,\gamma_1$ and $u^M\,\gamma^M$. The $\ap$ application must stay explicit in
the definition, since inner equalities can be proof-relevant.

We also demonstrate the failure of $\Id$ elimination. It is enough to show that
$\Id$ inversion fails. This would entail the following in the ToS:
\[
  \blank^{-1} : \Tm\,\Gamma\,(\El\,(\Id\,t\,u)) \to \Tm\,\Gamma\,(\El\,(\Id\,u\,t))
\]
In the $\blank^M$ interpretation, we would need to show
\begin{alignat*}{3}
  &(p^{-1})^M\,\gamma^M : \ap\,(a^M\,\gamma^M)\,((p^{-1})^A\,\gamma_0) \equiv ((p^{-1})^A\,\gamma_1)\\
  &(p^{-1})^M\,\gamma^M : \ap\,(a^M\,\gamma^M)\, \big((p^A\,\gamma_0)^{-1}\big) \equiv (p^A\,\gamma_1)^{-1}
\end{alignat*}
We have $p^M\,\gamma^M : \ap\,(a^M\,\gamma^M)\,(p^A\,\gamma_0) \equiv p^A\,\gamma_1$,
so we would need to show
\[
  \ap\,(a^M\,\gamma^M)\, \big((p^A\,\gamma_0)^{-1}\big) \equiv (\ap\,(a^M\,\gamma^M)\,(p^A\,\gamma_0))^{-1}
\]
This is not provable in 2LTT; it is false as a universal statement in the
initial model (syntax) of the inner theory. It holds in the empty context, where
both sides are necessarily equal to $\refl$ by canonicity, but not in arbitrary
contexts. It does hold as an inner equality, by induction on $p^A\,\gamma_0$.

\textbf{Sections} are a mostly mechanical generalization of morphisms, where the
codomain depends on the domain. Note that the $(\Id\,t\,u)^D$ definition is a
path-over-path, and accordingly we have $\apd$ instead of $\ap$ in $(\Id\,t\,u)^S$.

\begin{mydefinition} For some $\Gamma$ signature, notions of \textbf{initiality} and \textbf{induction} are
as follows.
\begin{alignat*}{3}
  &\ms{Initial}\,&&(\gamma : \Gamma^A) &&\defn (\gamma' : \Gamma^A) \to \ms{isContr}\,(\Gamma^M\,\gamma\,\gamma')\\
  &\ms{Inductive}\,&&(\gamma : \Gamma^A) &&\defn (\gamma^D : \Gamma^D\,\gamma) \to \Gamma^S\,\gamma\,\gamma^D
\end{alignat*}
This is the same as Definition \ref{def:induction-predicate}, except we do not have an
flcwf of algebras, so do not have properties that are evident in an flcwf, such
as Theorems \ref{thm:initiality-induction} and \ref{thm:flcwf-term-uniqueness}.
\end{mydefinition}

\begin{myexample}
For the circle signature $\ms{S^1Sig}$, we have the following (disregarding
the leading $\top$ components):
\begin{alignat*}{3}
  &\ms{S^1Sig^A} \equiv (S^1 : \Ty_0)
  \times (\mi{base} : S^1) \times (\mi{loop} : \mi{base} = \mi{base})
\end{alignat*}
\begin{alignat*}{3}
  &\rlap{$\ms{S^1Sig^D}\,(S^1,\,\mi{loop},\,\mi{base}) \equiv$}\\
  &        &&(S^{1D}        &&: S^1 \to \Ty_0)\\
  & \times\,&&(\mi{base^D} &&: S^{1D}\,\mi{base})\\
  & \times\,&&(\mi{loop^D} &&: \tr_{S^{1D}}\,\mi{loop}\,\mi{base^D} = \mi{base^D})\\
  & &&\\
  &\rlap{$\ms{S^1Sig^S}\,(S^1,\,\mi{loop},\,\mi{base})\,(S^{1D},\,\mi{loop^D},\,\mi{base^D}) \equiv$}\\
  &         &&(\mi{S^{1S}}  &&: (s : S^1) \to S^{1D}\,s)\\
  & \times\,&&(\mi{base^S} &&: \mi{S^{1S}}\,\mi{base} \equiv \mi{base^D})\\
  & \times\,&&(\mi{loop^S} &&: \apd\,\mi{S^{1S}}\,\mi{loop} \equiv \mi{loop^D})
\end{alignat*}
\end{myexample}
The computed induction principles are close to what we find in
\cite{hottbook}. The difference is that $\beta$-rules for path constructors are
strict, while in ibid.\ they are up to propositional equality. One reason for
choosing weak $\beta$-rules for paths is that we have $\ap$ and $\apd$ applications on the
left sides of such rules, and it is unconventional to definitionally specify the
behavior of operations which are derived from $\J$. In cubical type theories,
path $\beta$-rules are specified in a more primitive way, so strict computation is
more organic.

Currently, we have semantics in intensional inner theories, but it would be
possible to do the same in cubical inner theories. Intensional TT is clearly
much simpler, and has a wider variety of known models. On the other hand,
cubical type theories support strictly computing transports, so it is possible
that they would support stricter ToS $\beta$-rules in the case of the ``weak''
semantics. We leave this to possible future work.

\section{Weak Signatures}

\subsubsection{Metatheory}

On top of what we had so far in this chapter, we assume \emph{strong function
extensionality} in the inner theory: this means that for each $f,\,g : (a : A)
\to B\,a$, the following function is an equivalence.
\begin{alignat*}{3}
  &\ms{happly} : (f = g) \to ((a : A) \to f\,a = g\,a)\\
  &\ms{happly}\,p\,a \defn \ap\,(\lambda\,f.\,f\,a)\,p
\end{alignat*}
$\ms{funext}$ is obtained as the inverse of $\ms{happly}$. This definition,
unlike the simple assumption of $\ms{funext}$, is well-behaved in intensional
settings \cite[Section~2.9]{hottbook}.

Moreover, we assume two universes $\U_0$ and $\U_1$, such that $\U_0 \leq \U_1
\leq \Ty_0$. We use this to develop semantics which is entirely in the inner
theory: if algebra sorts are in $\U_0$, we need an $\U_1$ on top of that to
accommodate types of algebras.


\begin{mydefinition}
A \textbf{model of weak ToS} consists of a base cwf (with $\Con$, $\Sub$, $\Ty$
and $\Tm$ returning in $\Set$) extended with certain type formers. We omit all
substitution rules in the following. As before, substitution rules are given
with strict equality. We list type formers below.
\begin{itemize}
\item
  A ``large'' identity type $\ID : \Tm\,\Gamma\,A \to \Tm\,\Gamma\,A \to
  \Ty\,\Gamma$, with the following rules:
  \begin{alignat*}{3}
  & \refl &&: \Tm\,\Gamma\,(\ID\,t\,t)\\
  & \J &&: \{t : \Tm\,\Gamma\,A\}
         (P : \Ty\,(\Gamma\ext(u : A)\ext(p : \ID\,t\,u)))\\
  & &&\to \Tm\,\Gamma\,(P[u \mapsto t,\,p \mapsto \refl])\\
  & &&\to \{u : \Tm\,\Gamma\,A\}
           (p : \Tm\,\Gamma\,(\ID\,t\,u))
  \to \Tm\,\Gamma\,(P[u \mapsto u,\,p \mapsto p])\\
  &\J\beta &&: \J\,b\,\mi{pr}\,\refl \equiv \mi{pr}
  \end{alignat*}

  \begin{notation}
    We may use a name binding notation in the induction motive for $\J$. For
    example, assuming $A : \Ty\,\Gamma$, $B : \Ty\,(\Gamma \ext A)$, $p :
    \Tm\,\Gamma\,(\ID\,t\,u)$ and $\mi{pt} : \Tm\,\Gamma\,(B[\id,\,t])$, we may
    define transport along $p$ as
    \[
      \J\,(x\,p.\,B[\id,\,x])\,\mi{pt}\,p : \Tm\,\Gamma\,(B[\id,\,u])
    \]
    where $x\,p.$ binds the term and path dependencies of the induction motive.
  \end{notation}

\item A universe $\U$ with decoding $\El$.
\item
  $\U$ is closed under a ``small'' identity type $\Id : \Tm\,\Gamma\,(\El\,a)
  \to \Tm\,\Gamma\,(\El\,a) \to \Tm\,\Gamma\,\U$, with elimination principle
  $\J$ targeting any type (not just types in $\U$!). The $\beta$-rule is
  specified with $\ID$.
  \begin{alignat*}{3}
  & \refl &&: \Tm\,\Gamma\,(\El\,(\Id\,t\,t))\\
  & \J &&: \{t : \Tm\,\Gamma\,(\El\,a)\}
         (P : \Ty\,(\Gamma\ext(u : \El\,a)\ext(p : \El\,(\Id\,t\,u))))\\
  & &&\to \Tm\,\Gamma\,(P[u \mapsto t,\,p \mapsto \refl])\\
  & &&\to \{u : \Tm\,\Gamma\,(\El\,a)\}
       (p : \Tm\,\Gamma\,(\El\,(\Id\,t\,u)))
  \to \Tm\,\Gamma\,(P[u \mapsto u,\,p \mapsto p])\\
  &\J\beta &&: \Tm\,\Gamma\,(\ID\,(\J\,b\,\mi{pr}\,\refl)\,\mi{pr})
  \end{alignat*}
\item
  $\U$ is also closed under $\top$, $\Sigma$, and $\Piinf$. All of these are
  specified with equivalences up to $\ID$. These are equivalences in the sense
  of HoTT \cite[Chapter~4]{hottbook}. There are several equivalent formulations
  of equivalence; we pick the bi-invertible definitions here. For $\top$, it is
  enough to have a simplified specification as $\top\!\eta :
  \Tm\,\Gamma\,(\ID\,t\,\tt)$. $\Sigma$ is specified as follows.
  \begin{alignat*}{3}
   &\ms{\blank\!,\!\blank} &&:
      (t : \Tm\,\Gamma\,(\El\,a)) \times \Tm\,\Gamma\,(\El\,(b[\id,\,t]))
    \to \Tm\,\Gamma\,(\El\,(\Sigma\,a\,b))\\
   &\ms{proj}         &&: \Tm\,\Gamma\,(\El\,(\Sigma\,a\,b))
    \to (t : \Tm\,\Gamma\,(\El\,a)) \times \Tm\,\Gamma\,(\El\,(b[\id,\,t]))\\
   &\ms{proj'}        &&: \Tm\,\Gamma\,(\El\,(\Sigma\,a\,b))
     \to (t : \Tm\,\Gamma\,(\El\,a)) \times \Tm\,\Gamma\,(\El\,(b[\id,\,t]))\\
   &\beta_1 &&: \Tm\,\Gamma\,(\ID\,(\proj_1\,(t,\,u))\,t)\\
   &\beta_2 &&: \Tm\,\Gamma\,(\ID\,((\J\,(x\,\_.\,(\El\,b)[\id,\,x])\,(\proj_2\,(t,\,u))\,\beta_1)\,u))\\
   &\eta    &&: \Tm\,\Gamma\,(\ID\,(\proj_1'\,t,\,\proj_2'\,t)\,t)
  \end{alignat*}
  We write $\proj_i$ and $\proj'_i$ for composing metatheoretic projections with
  ToS projections. The additional $\proj'$ component is required to get a
  bi-invertible equivalence. Also note that $\beta_2$ is only well-typed up to
  $\beta_1$, so we need to use a transport in the specification.

  $\Piinf : (\mi{Ix} : \U_0) \to (\mi{Ix} \to \Tm\,\Gamma\,\U) \to
  \Tm\,\Gamma\,\U$ is specified below.
  \begingroup
  \allowdisplaybreaks
  \begin{alignat*}{3}
    &\appinf &&: \Tm\,\Gamma\,(\El\,(\Piinf\,\mi{Ix}\,b)) \to ((i : \mi{Ix}) \to \Tm\,\Gamma\,(\El\,(b\,i)))\\
    &\laminf &&: ((i : \mi{Ix}) \to \Tm\,\Gamma\,(\El\,(b\,i))) \to \Tm\,\Gamma\,(\El\,(\Piinf\,\mi{Ix}\,b))\\
    &\laminfprime &&: ((i : \mi{Ix}) \to \Tm\,\Gamma\,(\El\,(b\,i))) \to \Tm\,\Gamma\,(\El\,(\Piinf\,\mi{Ix}\,b))\\
    &\ms{\beta} &&: \Tm\,\Gamma\,(\ID\,(\appinf\,(\laminf\,t)\,i)\,(t\,i))\\
    &\ms{\eta}  &&: \Tm\,\Gamma\,(\ID\,(\laminfprime\,(\appinf\,t))\,t)
  \end{alignat*}
  \endgroup

  Why have equivalences in the specification of models, would it be enough to
  have isomorphisms? We choose equivalences because they yield better-behaved
  models, and they do not make it any harder to construct models, since we can
  always construct the required equivalences from isomorphisms
  \cite[Chapter~4]{hottbook}.

  \item
  Internal product type $\Pi : (a : \Tm\,\Gamma\,\U) \to \Ty\,(\Gamma\ext
  \El\,a) \to \Ty\,\Gamma$, with the specifying equivalence given up to $\ID$,
  analogously as for $\Sigma$ and $\Piinf$:
  \[
    (\app,\,\lam,\,\lam') : \Tm\,\Gamma\,(\Pi\,a\,B) \simeq \Tm\,(\Gamma\ext \El\,a)\,B
  \]

  \item External product type $\Pie : (\mi{Ix} : \U_0) \to (\mi{Ix} \to
  \Ty\,\Gamma) \to \Ty\,\Gamma$, specified as a strict $\Set$ isomorphism:
  \[(\appe,\,\lame) : \Tm\,\Gamma\,(\Pie\,\mi{Ix}\,B) \simeq ((i : \mi{Ix}) \to \Tm\,\Gamma\,(B\,i))\]
\end{itemize}
\end{mydefinition}

\noindent To give a short summary of changes compared to strict signatures:
\begin{enumerate}
  \item Types are closed under an extra $\ID$ type former which has a strict $\beta$-rule.
  \item We can eliminate from $\Id$ to proper types, but with a weak $\beta$-rule.
  \item $\Sigma$ and $\Piinf$ support eliminators, but with weak $\beta$-rules.
\end{enumerate}

\begin{myexample}
The torus is now expressible thanks to path elimination in signatures. We define
$\blank\!\sqcdot\!\blank$ as path composition for $\Id$ in the evident way.
\begin{alignat*}{3}
  &\ms{T}^2  &&: \U\\
  &\ms{b}    &&: \El\,\ms{T}^2\\
  &\ms{p}    &&: \El\,(\Id\,\ms{b}\,\ms{b})\\
  &\ms{q}    &&: \El\,(\Id\,\ms{b}\,\ms{b})\\
  &\ms{t}    &&: \El\,(\Id\,(\p \sqcdot \q)\,(\q \sqcdot \p))
\end{alignat*}
We could also use $\ID$ instead of $\Id$ and get equivalent semantics.
\end{myexample}

\begin{myexample}
The $\ID$ type lets us express ``sort equivalences''. For example, a signature
for integers can be compactly written as follows \cite{hit-integers}:
\begin{alignat*}{3}
  &\ms{Int}  &&: \U\\
  &\zero     &&: \El\,\ms{Int}\\
  &\p        &&: \ID\,\ms{Int}\,\ms{Int}
\end{alignat*}
We get the $\suc$ constructor by coercing along $\p$, and predecessors by
coercing backwards.

Recall that in Chapter \ref{chap:iqiit} we dropped sort equations because of
their non-fibrancy in the semantics. In contrast, there is no issue with sort
equations here. Sort equations simply become inner paths between types in the
semantics; if we assume univalence in the inner theory, such paths are
equivalent to type equivalences. Hence, sort equations in HIITs can be viewed as
shorthands for sort equivalences. Without sort equations, it is still possible
to write equivalences in signatures, using any of the standard definitions
\cite[Chapter~4]{hottbook}.
\end{myexample}

\subsection{Semantics}

We do not repeat the tables for the strict ToS semantics in Appendix
\ref{app:hii-amds}, as much of it remains essentially the same in the weak
case. We consider the components of the model in order, highlighting relevant
changes and points of interest.

\begin{notation}
We may omit induction motives in $\tr$ and $\J$ in the following, as they will
often get excessively verbose. So we may write $\tr\,p\,\mi{px} : P\,y$ for $p :
x = y$ and $\mi{px} : P\,x$, and use $\J\,\mi{pr}\,p$ similarly.
\end{notation}

\subsubsection{Cwf}

A notable change here is that the entirety of the semantics is now in the inner
theory. This means that the interpretation functions of contexts and types all
return in $\U_1$, e.g.\ $\Gamma^A : \U_1$ and $\Gamma^M : \Gamma^A \to \Gamma^A
\to \U_1$. Accordingly, we use type formers in $\U_1$ to interpret structure in
the base cwf, e.g.\ $\top^A \defn \top$, where the $\top$ on the right is in
$\U_1$. The only change though is the move from $\Set$ to $\U_1$, all
definitions are essentially the same.

\subsubsection{$\bs{\ID}$}

The new $\ID$ type former is interpreted as pointwise equality of semantic
terms. We assume $t,\,u : \Tm\,\Gamma\,A$.
\begin{alignat*}{3}
  &(\ID\,t\,u)^A\,\gamma            &&\defn t^A\,\gamma = u^A\,\gamma \\
  &(\ID\,t\,u)^M\,p_0\,p_1\,\gamma^M &&\defn \tr\,p_1\,(\tr\,p_0\,(t^M\,\gamma^M)) = u^M\,\gamma^M\\
  &(\ID\,t\,u)^D\,p\,\gamma^D       &&\defn \tr_{(\lambda\,x.\,A^D\,x\,\gamma^D)}\,p\,(t^D\,\gamma^D) = u^D\,\gamma^D\\
  &(\ID\,t\,u)^S\,p\,p^D\,\gamma^S  &&\defn \tr\,p^D\,(\J\,(t^S\,\gamma^S)\,p) = u^S\,\gamma^S
\end{alignat*}
Above, we dropped induction motives in $\tr$ and $\J$ in $\blank^M$ and
$\blank^S$. For illustration, the more explicit definitions are:
\begin{alignat*}{3}
  &(\ID\,t\,u)^M\,p_0\,p_1\,\gamma^M \defn\\
  &\hspace{1em}\tr_{(\lambda\,x.\,A^M\,x\,(t^A\,\gamma_1)\,\gamma^M)}\,p_1\, (\tr_{(\lambda\,x.\,A^M\,(t^A\,\gamma_1)\,x\,\gamma^M)}\,p_0\,(t^M\,\gamma^M)) = u^M\,\gamma^M\\
  &(\ID\,t\,u)^S\,p\,p^D\,\gamma^S \defn\\
  &\hspace{1em}\tr_{(\lambda\,x.\,A^S\,x\,(u^A\,\gamma)\,\gamma^S)}\,p^D\\
  &\hspace{2em}(\J_{(\lambda\,y\,p.\,A^S\,y\,(\tr_{(\lambda\,x.\,A^D\,x\,\gamma^D)}\,p\,(t^D\,\gamma^D)))}\,(t^S\,\gamma^S)\,p) = u^S\,\gamma^S
\end{alignat*}
From now on, we shall generally avoid this amount of detail in motives.

$\refl$ is interpreted as pointwise $\refl$-s:
\begin{alignat*}{3}
  &\refl^A\,&&\_ \defn \refl\\
  &\refl^M\,&&\_ \defn \refl\\
  &\refl^D\,&&\_ \defn \refl\\
  &\refl^S\,&&\_ \defn \refl
\end{alignat*}
Let us look at $\J$ for $\ID$ now. It is helpful to temporarily consider a bundled
AMDS model instead of the four interpretation maps. Then, we have the following
equivalence up to $\blank\!=\!\blank$:
\[\Tm_{\ms{AMDS}}\,\Gamma\,(\ID_{\ms{AMDS}}\,t\,u) \simeq (t = u)\]
This follows from function extensionality and the characterization of
equivalence for inner $\Sigma$ \cite[Section~2.7]{hottbook}. Thus, semantic
$\ID$ is the same as equality of semantic terms. It follows that everything in
the inner theory respects $\ID$, so we can certainly define the semantic $\J$
for $\ID$.

The actual definition of $\J$ involves doing induction on all paths that
are available as induction hypotheses.
\begin{alignat*}{3}
  &(\J\,P\,\mi{pr}\,p)^A\,\gamma   &&\defn \J\,(\mi{pr}^A\,\gamma)\,(p^A\,\gamma)\\
  &(\J\,P\,\mi{pr}\,p)^M\,\gamma^M &&\defn
    \J\,(\J\,(\J\,(\mi{pr}^M\,\gamma^M)\,(p^A\,\gamma_1))\,(p^A\,\gamma_0))\,(p^M\,\gamma^M)\\
  &(\J\,P\,\mi{pr}\,p)^D\,\gamma^D &&\defn \J\,(\J\,(\mi{pr}^D\,\gamma^D)\,(p^A\,\gamma))\,(p^D\,\gamma^D)\\
  &(\J\,P\,\mi{pr}\,p)^S\,\gamma^S &&\defn
    \J\,(\J\,(\J\,(\mi{pr}^S\,\gamma^S)\,(p^A\,\gamma))\,(p^D\,\gamma^D))\,(p^S\,\gamma^S)
\end{alignat*}
The strict $\beta$-rule for $\J$ is supported, as the above definition computes
everywhere when $p$ is $\refl$.

\subsubsection{Universe}
We have the following changes. First, the interpretations of $\U$ now return in
$\U_0$:
\begin{alignat*}{3}
  & \U^A\,\gamma &&\defn \U_0\\
  & \U^D\,a\,\gamma^D &&\defn a \to \U_0
\end{alignat*}
Second, in $\El$, morphisms and sections are given by inner equality:
\begin{alignat*}{3}
  & (\El\,a)^M\,\alpha_0\,\alpha_1\,\gamma^M &&\defn a^M\,\gamma^M\,\alpha_0 = \alpha_1\\
  & (\El\,a)^M\,\alpha\,\alpha^D\,\gamma^D &&\defn a^S\,\gamma^S\,\alpha = \alpha^D
\end{alignat*}

\subsubsection{$\bs{\Id}$}

In this identity type, $\blank^A$ and $\blank^D$ are pointwise equality as
usual, and $\blank^M$ and $\blank^S$ complete squares of equalities. We assume
$t,\,u : \Tm\,\Gamma\,(\El\,a)$.
\begin{alignat*}{3}
  & (\Id\,t\,u)^A\,\gamma   &&\defn t^A\,\gamma = u^A\,\gamma\\
  & (\Id\,t\,u)^M\,\gamma^M &&\defn \lambda\,(p : t^A\,\gamma_0 = u^A\,\gamma_0).\,(t^M\,\gamma^M)^{-1} \sqcdot \ap\,(a^M\,\gamma^M)\,p \sqcdot u^M\,\gamma^M\\
  & (\Id\,t\,u)^D\,\gamma^D &&\defn \lambda\,(p : t^A\,\gamma = u^A\,\gamma).\,\tr_{(a^D\,\gamma^D)}\,(t^D\,\gamma^D) = u^D\,\gamma^D\\
  & (\Id\,t\,u)^S\,\gamma^S &&\defn \lambda\,(p : t^A\,\gamma = u^A\,\gamma).\,\ap\,(\tr_{(a^D\,\gamma^D)}\,p)\,(t^S\,\gamma^S)^{-1} \sqcdot \apd\,(a^S\,\gamma^S)\,p \sqcdot u^S\,\gamma^S
\end{alignat*}
We have $\refl^A\,\_ \defn \refl$ and $\refl^D\,\_ \defn \refl$. For $\refl^M\,\gamma^M$, the goal type is
\[
  (t^M\,\gamma^M)^{-1} \sqcdot t^M\,\gamma^M = \refl
\]
which is one of the groupoid laws for paths \cite[Section~2.1]{hottbook}. We
have a more dependent variant as goal type for $\refl^S\,\gamma^S$:
\[
  \ap\,(\lambda\,x.\,x)\,(t^S\,\gamma^S)^{-1} \sqcdot t^S\,\gamma^S = \refl
  \]
This again follows from groupoid laws and the functoriality of $\ap$.

It is still the case that
$\Tm_{\ms{AMDS}}\,\Gamma\,(\El_{\ms{AMDS}}\,(\Id_{\ms{AMDS}}\,t\,u)) \simeq (t =
u)$ up to $\blank\!=\!\blank$. Although $(\Id\,t\,u)^M$ and $(\Id\,t\,u)^S$ do
not express equality of $t$ and $u$, we do get the component-wise equalities if
we apply $\El$. We have that
\[
(\El\,(\Id\,t\,u))^M\,p_0\,p_1\,\gamma^M
\equiv ((t^M\,\gamma^M)^{-1} \sqcdot \ap\,(a^M\,\gamma^M)\,p_0 \sqcdot u^M\,\gamma^M = p_1)
\]
We can rearrange the definition to make it more apparent that this is an equality
of $t^M\,\gamma^M$ and $u^M\,\gamma^M$, which is well-typed up to $p_0$ and $p_1$.
\[
  \ap\,(a^M\,\gamma^M)\,p_0 \sqcdot u^M\,\gamma^M = t^M\,\gamma^M \sqcdot p_1
\]
Thus, we can again expect that $\J$ is definable for $\Id$. However, the actual
definitions get highly technical in the $\blank^M$ and $\blank^S$ cases, as we
have to repeatedly transport along higher paths to make certain eliminations
well-typed. We refer the reader to the Agda formalization \cite{ak-thesis-agda}
for these definitions. In the $\blank^A$ and $\blank^D$ cases, the definitions
are simple enough:
\begin{alignat*}{3}
  &(\J\,P\,\mi{pr}\,p)^A\,\gamma   &&\defn \J\,(\mi{pr}^A\,\gamma)\,(p^A\,\gamma)\\
  &(\J\,P\,\mi{pr}\,p)^D\,\gamma^D &&\defn \J\,(\J\,(\mi{pr}^D\,\gamma^D)\,(p^A\,\gamma))\,(p^D\,\gamma^D)
\end{alignat*}
Regarding the $\beta$-rule, note that $\refl^M$ and $\refl^S$ are not defined as
$\refl$, but rather by induction on $t^M\,\gamma^M$ and
$t^S\,\gamma^S$. Therefore, if we apply $\J$ to $\refl$, the $\blank^M$ and
$\blank^S$ components do not strictly compute.

\subsubsection{$\bs{\top}$}

$\top$ is unchanged. $\tt^M$ and $\tt^S$ could possibly change (since $\El$ has
changed, and $\tt : \Tm\,\Gamma\,(\El\,\top)$), but they are still definable
with $\refl$-s.

\subsubsection{$\bs{\Sigma}$}

Pairing and the projections change in $\Sigma$; now their $\blank^M$ and
$\blank^S$ cases return proof-relevant inner equalities. In pairing, we do path
induction on hypotheses:
\begin{alignat*}{3}
  &(t,\,u)^M\,\gamma^M &&\defn \J\,(\J\,\refl\,(t^M\,\gamma^M))\,(u^M\,\gamma^M)\\
  &(t,\,u)^S\,\gamma^S &&\defn \J\,(\J\,\refl\,(t^S\,\gamma^S))\,(u^S\,\gamma^S)
\end{alignat*}
In $\proj_1$, we use $\ap\,\proj_1$ on path hypotheses:
\begin{alignat*}{3}
  &(\proj_1\,t)^M\,\gamma^M &&\defn \ap\,\proj_1\,(t^M\,\gamma^M)\\
  &(\proj_1\,t)^S\,\gamma^S &&\defn \ap\,\proj_1\,(t^S\,\gamma^S)
\end{alignat*}
In $\proj_2$, the definitions could be given using $\apd\,\proj_2$, but the
result type does not immediately line up, so we can just do direct path induction.
\begin{alignat*}{3}
  &(\proj_2\,t)^M\,\gamma^M &&\defn \J\,\refl\,(t^M\,\gamma^M)\\
  &(\proj_2\,t)^S\,\gamma^S &&\defn \J\,\refl\,(t^S\,\gamma^S)
\end{alignat*}
$\proj_1'$ and $\proj_2'$ (required by the bi-invertible specification) are defined
the same way. We do not have strict $\beta\eta$-rules. For example:
\[ (\proj_1\,(t,\,u))^M\,\gamma^M \equiv \ap\,\proj_1\,(\J\,(\J\,\refl\,(t^M\,\gamma^M))\,(u^M\,\gamma^M)) \not\equiv t^M\,\gamma^M \]
We still get $(\proj_1\,(t,\,u))^M\,\gamma^M = t^M\,\gamma^M$ by path induction on
$t^M\,\gamma^M$ and $u^M\,\gamma^M$, and similarly in other cases, so $\Sigma$ in the ToS
does support the specifying equivalence.

\subsubsection{$\ms{\Piinf}$}
Again, the $\blank^M$ and $\blank^S$ cases change in term formers. Application
is given by $\happly$:
\begin{alignat*}{3}
  &(\appinf\,t\,i)^M\,\gamma^M &&\defn \happly\,(t^M\,\gamma^M)\,i\\
  &(\appinf\,t\,i)^S\,\gamma^S &&\defn \happly\,(t^S\,\gamma^S)\,i
\end{alignat*}
Abstraction is by $\funext$:
\begin{alignat*}{3}
  &(\laminf\,t)^M\,\gamma^M &&\defn \funext\,(\lambda\,i.\,(t\,i)^M\,\gamma^M)\\
  &(\laminf\,t)^S\,\gamma^S &&\defn \funext\,(\lambda\,i.\,(t\,i)^S\,\gamma^S)
\end{alignat*}
Thus, weak $\beta\eta$-rules for $\Piinf$ follow from strong function extensionality.

\subsubsection{$\ms{\Pi}$}
We need to use explicit path induction in $\app^M$ and $\app^S$:
\begin{alignat*}{5}
  & (\app\,t)^M\,(\gamma^M,\,\alpha^M) &&\defn \J\,(t^M\,\gamma^M\,\alpha_0)\,\alpha^M &&\hspace{1em}\text{where}\hspace{1em} \alpha^M &&: a^M\,\gamma^M\,\alpha_0 = \alpha_1\\
  & (\app\,t)^S\,(\gamma^S,\,\alpha^S) &&\defn \J\,(t^S\,\gamma^S\,\alpha)\,\alpha^S &&\hspace{1em}\text{where}\hspace{1em} \alpha^S &&: a^S\,\gamma^S\,\alpha = \alpha^D
\end{alignat*}
In contrast, $\lam$ does not change. $\beta\eta$-rules are given by replaying
the path inductions on $\app^M$ and $\app^S$.

\subsubsection{$\ms{\Pie}$}

The interpretation of $\ms{\Pie}$ is unchanged. This concludes the AMDS semantics of weak signatures.

\section{Discussion \& Related Work}

\subsection{Evaluation}

The main advantage of the signatures in the current chapter is their generality.
We cover almost every higher inductive definition in the literature, and do so
in a direct manner, with minimal encoding overhead.

It is also possible to mechanically check validity of signatures and compute
AMDS interpretations. The current author has written a Haskell program which
takes as input a weak HII signature, and outputs ADS interpretations as
well-formed Agda source code \cite{hiit-sig-program}. The syntax is a bit more
restricted than what we have in this chapter, and the program does not compute
morphisms; but it is clear that the deficiencies would be straightforward to
patch up.

On the other hand, we note that our semantics is in a minimal intensional
theory, a fragment of the ``book'' version of homotopy type theory. This setting
supports neither computational univalence nor computational higher inductive
types. If our goal is to add computationally adequate HIITs to a theory (and
eventually to its implementation), the current chapter is not immediately
applicable. As we mentioned in Section \ref{sec:implementation}, in a cubical
setting we would need to reformulate both signatures and semantics. However, the
current work should be still helpful as a guideline, and a provide a point of
comparison and validation.

\subsection{Related Work}
\label{sec:hii-related-work}

This chapter is based on ``A Syntax for Higher Inductive-Inductive Types''
\cite{hiit} and ``Signatures and Induction Principles for Higher
Inductive-Inductive Types'' \cite{hiits}, both by Ambrus Kaposi and the current
author. The latter is an extended journal version of the former. In this
chapter, we extend and refine these sources in the following ways.
\begin{itemize}
\item
    We use 2LTT. In the papers, we instead used a custom syntactic translation:
    the theory of signatures was an ad hoc mixture of the inner and outer
    theory, and the AMDS interpretations were syntactic translations targeting
    the inner theory. The setup turns out to be mostly the same as here; but
    2LTT brings a lot of clarity and convenience.
\item
    We add the strict/weak signature distinction. The papers only considered
    weak signatures and semantics.
\item
    We improve on the specification of signatures. The papers had a small $\Id$
    type with elimination only to $\U$, not to arbitrary types. The journal
    version also had a second identity type, but only for sort equations,
    i.e.\ it expressed only equality of inhabitants of $\U$.

    The small and large identity types in this chapter are more expressive; the
    weaker definitions in the paper were just oversights.

    The papers also omitted eliminators of type formers in weak signatures, and
    thus their $\beta\eta$ rules, and they did not have $\top$ or
    $\Sigma$. However, this was done mostly for the sake of brevity, as these
    extra features are not really used in any HIIT signature in the literature.
    It makes more sense to include the extras here, to match infinitary QII
    signatures as much as possible.
\end{itemize}

The homotopy type theory book \cite{hottbook} introduced numerous higher
inductive types and developed their use cases, but it did not give a theory of
signatures, nor discussed semantics.

Sojakova \cite{sojakova} specified a class of HITs called W-suspensions
(building on W-types), and proved the equivalence of induction and homotopy
initiality, working internally to an intensional type theory.

Lumsdaine and Shulman gave a general specification of models of type theories
supporting higher inductive types \cite{lumsdaineShulman}. They gave a more
semantic specification of algebras, as algebras of a cell monad, and
characterized the class of models which support initial algebras. They did not
cover indexed families or induction-induction.

Dybjer and Moeneclaey \cite{moeneclaey} gave signatures for class of finitary
HITs with up to 2-dimensional path constructors, and built semantics in
groupoids.

Coquand, Huber and Mörtberg \cite{cubicalhits} specified syntax for a cubical
type theory which supports several HITs (sphere, torus, suspensions,
truncations, pushouts) and built semantics in cubical sets.

Cavallo and Harper \cite{cubicalcomptt} specify HITs which support indexed
families and arbitrary higher paths, although not induction-induction. They
provide semantics in a PER (partial equivalence relation) realizability setting.

Cubical Agda \cite{cubicalagda} is the principal proof assistant which natively
supports computational univalence and HITs. Its implementation of pattern
matching, mutual inductive definitions, termination checking and strict
positivity checking yields of a large class of higher \emph{inductive-inductive}
types. However, there is no compact theory of signatures (valid specifications
fall out from positivity/termination checking) nor a categorical semantics.

\raggedbottom
\appendix

\chapter{AMDS interpretation of FQII signatures}
\label{app:fqii-amds}

This appendix supplements Chapter \ref{chap:fqiit}. It contains the AMDS
interpretation for finitary QII signatures. We omit substitution and
$\beta\eta$-rules. We also omit the $\Tm_0$ decoding operation of two-level type
theory.

\pagebreak

\subsubsection{Components of ToS (without substitution and $\beta\eta$-rules)}

\begin{alignat*}{3}
  &\Con &&: \Set\\
  &\Sub &&: \Con \to \Con \to \Set\\
  &\Ty  &&: \Con \to \Set\\
  &\Tm  &&: (\Gamma : \Con) \to \Ty\,\Gamma \to \Set \\
  &\emptycon &&: \Con\\
  &\epsilon &&: \Sub\,\Gamma\,\emptycon\\
  &\id &&: \Sub\,\Gamma\,\Gamma\\
  &\blank\!\circ\!\blank &&: \Sub\,\Delta\,\Xi \to \Sub\,\Gamma\,\Delta \to \Sub\,\Gamma\,\Xi\\
  &\blank\![\blank] &&: \Ty\,\Delta \to \Sub\,\Gamma\,\Delta \to \Ty\,\Gamma\\
  &\blank\![\blank] &&: \Tm\,\Delta\,A \to (\sigma : \Sub\,\Gamma\,\Delta) \to \Tm\,\Gamma\,(A[\sigma])\\
  &\p &&: \Sub\,(\Gamma\ext A)\,\Gamma\\
  &\q &&: \Tm\,(\Gamma\ext A)\,(A[\p])\\
  &(\blank\!,\!\blank) &&: (\sigma : \Sub\,\Gamma\,\Delta) \to \Tm\,\Gamma\,(A[\sigma]) \to \Sub\,\Gamma\,(\Delta\ext A)\\
  &\U &&: \Ty\,\Gamma\\
  &\El &&: \Tm\,\Gamma\,\U \to \Ty\,\Gamma\\
  &\Id &&: \Tm\,\Gamma\,A \to \Tm\,\Gamma\,A \to \Ty\,\Gamma\\
  &\refl &&: \Tm\,\Gamma\,(\Id\,t\,t)\\
  &\reflect &&: \Tm\,\Gamma\,(\Id\,t\,u) \to t \equiv u\\
  &\Pi &&: (a : \Tm\,\Gamma\,\U) \to \Ty\,(\Gamma\ext \El\,a) \to \Ty\,\Gamma\\
  &\app &&: \Tm\,\Gamma\,(\Pi\,a\,B) \to \Tm\,(\Gamma \ext \El\,a)\,B\\
  &\lam &&: \Tm\,(\Gamma \ext \El\,a)\,B \to \Tm\,\Gamma\,(\Pi\,a\,B)\\
  &\Pie &&: (\mi{Ix} : \Ty_0) \to (\mi{Ix} \to \Ty\,\Gamma) \to \Ty\,\Gamma\\
  &\appe &&: \Tm\,\Gamma\,(\Pie\,\mi{Ix}\,B) \to (i : \mi{Ix}) \to \Tm\,\Gamma\,(B\,i)\\
  &\lame &&: ((i : \mi{Ix}) \to \Tm\,\Gamma\,(B\,i)) \to \Tm\,\Gamma\,(\Pie\,\mi{Ix}\,B)
\end{alignat*}

\subsubsection{Algebras}

\begin{alignat*}{3}
  &\blank^A &&: \Con \to \Set\\
  &\blank^A &&: \Sub\,\Gamma\,\Delta \to \Gamma^A \to \Delta^A\\
  &\blank^A &&: \Ty\,\Gamma \to \Gamma^A \to \Set\\
  &\blank^A &&: \Tm\,\Gamma\,A \to (\gamma : \Gamma^A) \to A^A\,\gamma\\
  &\emptycon^A &&\defn \top\\
  &\epsilon^A\,\gamma &&\defn \tt\\
  &\id^A\,\gamma &&\defn \gamma\\
  &(\sigma \circ \delta)^A\,\gamma &&\defn \sigma^A\,(\delta^A\,\gamma)\\
  &(\Gamma \ext A)^A &&\defn (\gamma : \Gamma^A) \times A^A\,\gamma\\
  &(A[\sigma])^A\,\gamma &&\defn A^A\,(\sigma^A\,\gamma)\\
  &(t[\sigma])^A\,\gamma &&\defn t^A\,(\sigma^A\,\gamma)\\
  &\p^A\,(\gamma,\,\alpha) &&\defn \gamma\\
  &\q^A\,(\gamma,\,\alpha) &&\defn \alpha\\
  &(\sigma,\,t)^A\,\gamma &&\defn (\sigma^A\,\gamma,\,t^A\,\gamma)\\
  &\U^A\,\gamma &&\defn \Ty_0\\
  &(\El\,a)^A\,\gamma &&\defn a^A\,\gamma\\
  &(\Id\,t\,u)^A\,\gamma &&\defn t^A\,\gamma \equiv u^A\,\gamma\\
  &\refl^A\,\gamma &&\defn \refl : t^A\,\gamma \equiv t^A\,\gamma\\
  &(\reflect\,p)^A &&\defn \funext\,(\lambda\,\gamma.\,p^A\,\gamma)\\
  &(\Pi\,a\,B)^A\,\gamma &&\defn (\alpha : a^A\,\gamma) \to B^A\,(\gamma,\,\alpha)\\
  &(\app\,t)^A\,(\gamma,\,\alpha) &&\defn t^A\,\gamma\,\alpha\\
  &(\lam\,t)^A\,\gamma &&\defn \lambda\,\alpha.\,t^A\,(\gamma,\,\alpha)\\
  &(\Pie\,\mi{Ix}\,B)^A\,\gamma &&\defn (i : \mi{Ix}) \to (B\,i)^A\,\gamma\\
  &(\appe\,t\,i)^A\,\gamma &&\defn t^A\,\gamma\,i\\
  &(\lame\,t)^A\,\gamma &&\defn \lambda\,i.\,(t\,i)^A\,\gamma
\end{alignat*}

\subsubsection{Morphisms}

\begin{alignat*}{3}
  &\blank^M &&: (\Gamma : \Con) \to \Gamma^A \to \Gamma^A \to \Set\\
  &\blank^M &&: (\sigma : \Sub\,\Gamma\,\Delta) \to \Gamma^M\,\gamma_0\,\gamma_1 \to \Delta^M\,(\sigma^A\,\gamma_0)\,(\sigma^A\,\gamma_1)\\
  &\blank^M &&: (A : \Ty\,\Gamma) \to A^A\,\gamma_0 \to A^A\,\gamma_1 \to \Gamma^M\,\gamma_0\,\gamma_1 \to \Set\\
  &\blank^M &&: (t : \Tm\,\Gamma\,A) \to (\gamma^M : \Gamma^M\,\gamma_0\,\gamma_1) \to A^M\,(t^A\,\gamma_0)\,(t^A\,\gamma_1)\,\gamma^M\\
  &\emptycon^M\,\gamma_0\,\gamma_1 &&\defn \top \\
  &\epsilon^M\,\gamma^M &&\defn \tt\\
  &\id^M\,\gamma^M &&\defn \gamma^M\\
  &(\sigma \circ \delta)^M\,\gamma^M &&\defn \sigma^M\,(\delta^M\,\gamma^M)\\
  &(\Gamma \ext A)^M\,(\gamma_0,\,\alpha_0)\,(\gamma_1,\,\alpha_1) &&\defn
    (\gamma^M : \Gamma^M\,\gamma_0\,\gamma_1) \times A^M\,\alpha_0\,\alpha_1\,\gamma^M\\
  &(A[\sigma])^M\,\alpha_0\,\alpha_1\,\gamma^M &&\defn A^M\,\alpha_0\,\alpha_1\,(\sigma^M\,\gamma^M)\\
  &(t[\sigma])^M\,\gamma^M &&\defn t^M\,(\sigma^M\,\gamma^M)\\
  &\p^M\,(\gamma^M,\,\alpha^M) &&\defn \gamma^M\\
  &\q^M\,(\gamma^M,\,\alpha^M) &&\defn \alpha^M\\
  &(\sigma,\,t)^M\,\gamma^M &&\defn (\sigma^M\,\gamma^M,\,t^M\,\gamma^M)\\
  &\U^M\,a_0\,a_1\,\gamma^M &&\defn a_0 \to a_1\\
  &(\El\,a)^M\,\alpha_0\,\alpha_1\,\gamma^M &&\defn a^M\,\gamma^M\,\alpha_0 \equiv \alpha_1 \\
  &(\Id\,t\,u)^M\,p_0\,p_1\,\gamma^M &&\defn t^M\,\gamma^M \equiv u^M\,\gamma^M\\
  &\refl^M\,\gamma^M &&\defn \refl : t^M\,\gamma^M \equiv t^M\,\gamma^M\\
  &(\reflect\,p)^M &&\defn \funext\,(\lambda\,\gamma^M.\,p^M\,\gamma^M)\\
  &(\Pi\,a\,B)^M\,t_0\,t_1\,\gamma^M &&\defn (\alpha : a^A\,\gamma_0) \to B^M\,(t_0\,\alpha)\,(t_1\,(a^M\,\gamma^M\,\alpha))\,(\gamma^M,\,\refl)\\
  &(\app\,t)^M\,(\gamma^M,\,\alpha^M) &&\defn t^M\,\gamma^M\,\alpha_0\hspace{1em}\text{where}\hspace{1em} \alpha^M : a^M\,\gamma^M\,\alpha_0 \equiv \alpha_1\\
  &(\lam\,t)^M\,\gamma^M &&\defn \lambda\,\alpha.\,t^M\,(\gamma^M,\,\refl)\hspace{1em}\text{where}\hspace{1em} \refl : a^M\,\gamma^M\,\alpha \equiv a^M\,\gamma^M\,\alpha\\
  &(\Pie\,\mi{Ix}\,B)^M\,t_0\,t_1\,\gamma^M &&\defn (i : \mi{Ix}) \to (B\,i)^M\,(t_0\,i)\,(t_1\,i)\,\gamma^M\\
  &(\appe\,t\,i)^M\,\gamma^M &&\defn t^M\,\gamma^M\,i\\
  &(\lame\,t)^M\,\gamma^M &&\defn (t\,i)^M\,\gamma^M
\end{alignat*}

\subsubsection{Displayed algebras}

\begin{alignat*}{3}
  &\blank^D &&: (\Gamma : \Con) \to \Gamma^A \to \Set\\
  &\blank^D &&: (\sigma : \Sub\,\Gamma\,\Delta) \to \Gamma^D\,\gamma \to \Delta^D\,(\sigma^A\,\gamma) \\
  &\blank^D &&: (A : \Ty\,\Gamma) \to A^A\,\gamma \to \Gamma^D\,\gamma \to \Set\\
  &\blank^D &&: (t : \Tm\,\Gamma\,A) \to (\gamma^D : \Gamma^D\,\gamma) \to A^D\,(t^A\,\gamma)\,\gamma^D\\
  &\emptycon^D\,\gamma &&\defn \top\\
  &\epsilon^D\,\gamma^D &&\defn \tt\\
  &\id^D\,\gamma^D &&\defn \gamma^D\\
  &(\sigma \circ \delta)^D\,\gamma^D &&\defn \sigma^D\,(\delta^D\,\gamma^D)\\
  &(\Gamma \ext A)^D\,(\gamma,\,\alpha) &&\defn (\gamma^D : \Gamma^D\,\gamma) \times A^D\,\alpha\,\gamma^D\\
  &(A[\sigma])^D\,\alpha\,\gamma^D &&\defn A^D\,\alpha\,(\sigma^D\,\gamma^D)\\
  &(t[\sigma])^D\,\gamma^D &&\defn t^D\,(\sigma^D\,\gamma^D)\\
  &\p^D\,(\gamma^D,\,\alpha^D) &&\defn \gamma^D\\
  &\q^D\,(\gamma^D,\,\alpha^D) &&\defn \alpha^D\\
  &(\sigma,\,t)^D\,\gamma^D &&\defn (\sigma^D\,\gamma^D,\,t^D\,\gamma^D)\\
  &\U^D\,a\,\gamma^D &&\defn a \to \Ty_0\\
  &(\El\,a)^D\,t\,\gamma^D &&\defn a^D\,\gamma^D\,t\\
  &(\Id\,t\,u)^D\,\gamma^D &&\defn t^D\,\gamma^D \equiv u^D\,\gamma^D\\
  &\refl^D\,\gamma^D &&\defn \refl : t^D\,\gamma^D \equiv t^D\,\gamma^D\\
  &(\reflect\,p)^D &&\defn \funext\,(\lambda\,\gamma^D.\,p^D\,\gamma^D)\\
  &(\Pi\,a\,B)^D\,t\,\gamma^D &&\defn \{\alpha : a^A\,\gamma\}(\alpha^D : a^D\,\gamma^D\,\alpha)
    \to B^D\,(t\,\alpha)\,(\gamma^D,\,\alpha^D)\\
  &(\app\,t)^D\,(\gamma^D,\,\alpha^D) &&\defn t^D\,\gamma^D\,\alpha^D\\
  &(\lam\,t)^D\,\gamma^D &&\defn \lambda\,\{\alpha\}\,\alpha^D.\,t^D\,(\gamma^D,\,\alpha^D)\\
  &(\Pie\,\mi{Ix}\,B)^D\,t\,\gamma^D &&\defn (i : \mi{Ix}) \to (B\,i)^D\,(t\,i)\,\gamma^D\\
  &(\appe\,t\,i)^D\,\gamma^D &&\defn t^D\,\gamma^D\,i\\
  &(\lame\,t)^D\,\gamma &&\defn \lambda\,i.\,(t\,i)^D\,\gamma^D
\end{alignat*}

\subsubsection{Sections}

\begin{alignat*}{3}
  &\blank^S &&: (\Gamma : \Con) \to (\gamma : \Gamma^A) \to \Gamma^A\,\gamma \to \Set\\
  &\blank^S &&: (\sigma : \Sub\,\Gamma\,\Delta) \to \Gamma^S\,\gamma\,\gamma^D \to \Delta^S\,(\sigma^A\,\gamma)\,(\sigma^D\,\gamma^D)\\
  &\blank^S &&: (A : \Ty\,\Gamma) \to A^A\,\gamma \to A^D\,\gamma^D \to \Gamma^S\,\gamma\,\gamma^D \to \Set\\
  &\blank^S &&: (t : \Tm\,\Gamma\,A) \to (\gamma^S : \Gamma^S\,\gamma\,\gamma^D) \to A^S\,(t^A\,\gamma)\,(t^D\,\gamma^D)\,\gamma^S\\
  &\emptycon^S\,\gamma\,\gamma^D &&\defn \top \\
  &\epsilon^S\,\gamma^S &&\defn \tt\\
  &\id^S\,\gamma^S &&\defn \gamma^S\\
  &(\sigma \circ \delta)^S\,\gamma^S &&\defn \sigma^S\,(\delta^S\,\gamma^S)\\
  &(\Gamma \ext A)^S\,(\gamma,\,\alpha)\,(\gamma^D,\,\alpha^D) &&\defn
    (\gamma^S : \Gamma^S\,\gamma\,\gamma^D) \times A^S\,\alpha\,\alpha^D\,\gamma^S\\
  &(A[\sigma])^S\,\alpha\,\alpha^D\,\gamma^S &&\defn A^S\,\alpha\,\alpha^D\,(\sigma^S\,\gamma^S)\\
  &(t[\sigma])^S\,\gamma^S &&\defn t^S\,(\sigma^S\,\gamma^S)\\
  &\p^S\,(\gamma^S,\,\alpha^S) &&\defn \gamma^S\\
  &\q^S\,(\gamma^S,\,\alpha^S) &&\defn \alpha^S\\
  &(\sigma,\,t)^S\,\gamma^S &&\defn (\sigma^S\,\gamma^S,\,t^S\,\gamma^S)\\
  &\U^S\,a\,a^D\,\gamma^S &&\defn (\alpha : a) \to a^D\,\alpha\\
  &(\El\,a)^S\,\alpha\,\alpha^D\,\gamma^S &&\defn a^S\,\gamma^S\,\alpha \equiv \alpha^D \\
  &(\Id\,t\,u)^S\,p\,p^D\,\gamma^S &&\defn t^S\,\gamma^S \equiv u^S\,\gamma^S\\
  &\refl^S\,\gamma^S &&\defn \refl : t^S\,\gamma^S \equiv t^S\,\gamma^S\\
  &(\reflect\,p)^S &&\defn \funext\,(\lambda\,\gamma^S.\,p^S\,\gamma^S)\\
  &(\Pi\,a\,B)^S\,t\,t^D\,\gamma^S &&\defn (\alpha : a^A\,\gamma) \to B^S\,(t\,\alpha)\,(t^D\,(a^S\,\gamma^S\,\alpha))\,(\gamma^S,\,\refl)\\
  &(\app\,t)^S\,(\gamma^S,\,\alpha^S) &&\defn t^S\,\gamma^S\,\alpha\hspace{1em}\text{where}\hspace{1em} \alpha^S : a^S\,\gamma^S\,\alpha \equiv \alpha^D\\
  &(\lam\,t)^S\,\gamma^S &&\defn \lambda\,\alpha.\,t^S\,(\gamma^S,\,\refl)\hspace{1em}\text{where}\hspace{1em} \refl : a^S\,\gamma^S\,\alpha \equiv a^S\,\gamma^S\,\alpha\\
  &(\Pie\,\mi{Ix}\,B)^S\,t\,t^D\,\gamma^S &&\defn (i : \mi{Ix}) \to (B\,i)^S\,(t\,i)\,(t^D\,i)\,\gamma^S\\
  &(\appe\,t\,i)^S\,\gamma^S &&\defn t^S\,\gamma^S\,i\\
  &(\lame\,t)^S\,\gamma^S &&\defn (t\,i)^S\,\gamma^S
\end{alignat*}

\chapter{AMDS interpretation of strict HII signatures}
\label{app:hii-amds}

This appendix supplements Chapter \ref{chap:hiit}. It contains the AMDS
interpretation for strict HII signatures. We omit substitution and
$\beta\eta$-rules. We also omit the $\Tm_0$ decoding operation of two-level type
theory.

\pagebreak

\subsubsection{Components of ToS (without equations)}
\vspace{-0.5em}
\begin{alignat*}{3}
  &\Con &&: \Set\\
  &\Sub &&: \Con \to \Con \to \Set\\
  &\Ty  &&: \Con \to \Set\\
  &\Tm  &&: (\Gamma : \Con) \to \Ty\,\Gamma \to \Set \\
  &\emptycon &&: \Con\\
  &\epsilon &&: \Sub\,\Gamma\,\emptycon\\
  &\id &&: \Sub\,\Gamma\,\Gamma\\
  &\blank\!\circ\!\blank &&: \Sub\,\Delta\,\Xi \to \Sub\,\Gamma\,\Delta \to \Sub\,\Gamma\,\Xi\\
  &\blank\![\blank] &&: \Ty\,\Delta \to \Sub\,\Gamma\,\Delta \to \Ty\,\Gamma\\
  &\blank\![\blank] &&: \Tm\,\Delta\,A \to (\sigma : \Sub\,\Gamma\,\Delta) \to \Tm\,\Gamma\,(A[\sigma])\\
  &\p &&: \Sub\,(\Gamma\ext A)\,\Gamma\\
  &\q &&: \Tm\,(\Gamma\ext A)\,(A[\p])\\
  &(\blank\!,\!\blank) &&: (\sigma : \Sub\,\Gamma\,\Delta) \to \Tm\,\Gamma\,(A[\sigma]) \to \Sub\,\Gamma\,(\Delta\ext A)\\
  &\U &&: \Ty\,\Gamma\\
  &\El &&: \Tm\,\Gamma\,\U \to \Ty\,\Gamma\\
  &\top &&: \Tm\,\Gamma\,\U\\
  &\tt  &&: \Tm\,\Gamma\,(\El\,\top)\\
  &\Sigma &&: (a : \Tm\,\Gamma\,\U) \to \Tm\,(\Gamma \ext \El\,a)\,\U \to \Tm\,\Gamma\,\U\\
  &\proj_1 &&: \Tm\,\Gamma\,(\El\,(\Sigma\,a\,b)) \to \Tm\,\Gamma\,(\El\,a)\\
  &\proj_2 &&: (t : \Tm\,\Gamma\,(\El\,(\Sigma\,a\,b))) \to \Tm\,\Gamma\,(\El\,(b[\id,\,\proj_1\,t]))\\
  &(\blank\!,\!\blank) &&: (t : \Tm\,\Gamma\,(\El\,a)) \to \Tm\,\Gamma\,(\El\,(b[\id,\,t])) \to \Tm\,\Gamma\,(\El\,(\Sigma\,a\,b))\\
  &\Id &&: \Tm\,\Gamma\,(\El\,a) \to \Tm\,\Gamma\,(\El\,a) \to \Tm\,\Gamma\,\U\\
  &\refl &&: \Tm\,\Gamma\,(\El\,(\Id\,t\,t))\\
  &\Piinf &&: (\mi{Ix} : \Ty_0) \to (\mi{Ix} \to \Tm\,\Gamma\,\U) \to \Tm\,\Gamma\,\U\\
  &\appinf &&: \Tm\,\Gamma\,(\El\,(\Piinf\,\mi{Ix}\,b)) \to (i : \mi{Ix}) \to \Tm\,\Gamma\,(\El\,(b\,i))\\
  &\laminf &&: ((i : \mi{Ix}) \to \Tm\,\Gamma\,(\El\,(b\,i))) \to \Tm\,\Gamma\,(\El\,(\Piinf\,\mi{Ix}\,b))\\
  &\Pi &&: (a : \Tm\,\Gamma\,\U) \to \Ty\,(\Gamma\ext \El\,a) \to \Ty\,\Gamma\\
  &\app &&: \Tm\,\Gamma\,(\Pi\,a\,B) \to \Tm\,(\Gamma \ext \El\,a)\,B\\
  &\lam &&: \Tm\,(\Gamma \ext \El\,a)\,B \to \Tm\,\Gamma\,(\Pi\,a\,B)\\
  &\Pie &&: (\mi{Ix} : \Ty_0) \to (\mi{Ix} \to \Ty\,\Gamma) \to \Ty\,\Gamma\\
  &\appe &&: \Tm\,\Gamma\,(\Pie\,\mi{Ix}\,B) \to (i : \mi{Ix}) \to \Tm\,\Gamma\,(B\,i)\\
  &\lame &&: ((i : \mi{Ix}) \to \Tm\,\Gamma\,(B\,i)) \to \Tm\,\Gamma\,(\Pie\,\mi{Ix}\,B)
\end{alignat*}

\subsubsection{Algebras}
\vspace{-0.5em}
\begin{alignat*}{3}
  &\blank^A &&: \Con \to \Set\\
  &\blank^A &&: \Sub\,\Gamma\,\Delta \to \Gamma^A \to \Delta^A\\
  &\blank^A &&: \Ty\,\Gamma \to \Gamma^A \to \Set\\
  &\blank^A &&: \Tm\,\Gamma\,A \to (\gamma : \Gamma^A) \to A^A\,\gamma\\
  &\emptycon^A &&\defn \top\\
  &\epsilon^A\,\gamma &&\defn \tt\\
  &\id^A\,\gamma &&\defn \gamma\\
  &(\sigma \circ \delta)^A\,\gamma &&\defn \sigma^A\,(\delta^A\,\gamma)\\
  &(\Gamma \ext A)^A &&\defn (\gamma : \Gamma^A) \times A^A\,\gamma\\
  &(A[\sigma])^A\,\gamma &&\defn A^A\,(\sigma^A\,\gamma)\\
  &(t[\sigma])^A\,\gamma &&\defn t^A\,(\sigma^A\,\gamma)\\
  &\p^A\,(\gamma,\,\alpha) &&\defn \gamma\\
  &\q^A\,(\gamma,\,\alpha) &&\defn \alpha\\
  &(\sigma,\,t)^A\,\gamma &&\defn (\sigma^A\,\gamma,\,t^A\,\gamma)\\
  &\U^A\,\gamma &&\defn \Ty_0\\
  &(\El\,a)^A\,\gamma &&\defn a^A\,\gamma\\
  &\top^A\,\gamma &&\defn \top_0\\
  &\tt^A\,\gamma &&\defn \tt_0\\
  &(\Sigma\,a\,b)^A\,\gamma &&\defn (\alpha : a^A\,\gamma) \times_0 b^A\,(\gamma,\,\alpha)\\
  &(\proj_1\,t)^A\,\gamma &&\defn \proj_1\,(t^A\,\gamma)\\
  &(\proj_2\,t)^A\,\gamma &&\defn \proj_2\,(t^A\,\gamma)\\
  &(t,\,u)^A\,\gamma &&\defn (t^A\,\gamma,\,u^A\,\gamma)\\
  &(\Id\,t\,u)^A\,\gamma &&\defn t^A\,\gamma = u^A\,\gamma\\
  &\refl^A\,\gamma &&\defn \refl : t^A\,\gamma = t^A\,\gamma\\
  &(\Piinf\,\mi{Ix}\,b)^A\,\gamma &&\defn (i : \mi{Ix}) \to (b\,i)^A\,\gamma\\
  &(\appinf\,t\,i)^A\,\gamma &&\defn t^A\,\gamma\,i\\
  &(\laminf\,t)^A\,\gamma &&\defn \lambda\,i.\,(t\,i)^A\,\gamma\\
  &(\Pi\,a\,B)^A\,\gamma &&\defn (\alpha : a^A\,\gamma) \to B^A\,(\gamma,\,\alpha)\\
  &(\app\,t)^A\,(\gamma,\,\alpha) &&\defn t^A\,\gamma\,\alpha\\
  &(\lam\,t)^A\,\gamma &&\defn \lambda\,\alpha.\,t^A\,(\gamma,\,\alpha)\\
  &(\Pie\,\mi{Ix}\,B)^A\,\gamma &&\defn (i : \mi{Ix}) \to (B\,i)^A\,\gamma\\
  &(\appe\,t\,i)^A\,\gamma &&\defn t^A\,\gamma\,i\\
  &(\lame\,t)^A\,\gamma &&\defn \lambda\,i.\,(t\,i)^A\,\gamma
\end{alignat*}

\subsubsection{Morphisms}
\vspace{-0.5em}
\begin{alignat*}{3}
  &\blank^M &&: (\Gamma : \Con) \to \Gamma^A \to \Gamma^A \to \Set\\
  &\blank^M &&: (\sigma : \Sub\,\Gamma\,\Delta) \to \Gamma^M\,\gamma_0\,\gamma_1 \to \Delta^M\,(\sigma^A\,\gamma_0)\,(\sigma^A\,\gamma_1)\\
  &\blank^M &&: (A : \Ty\,\Gamma) \to A^A\,\gamma_0 \to A^A\,\gamma_1 \to \Gamma^M\,\gamma_0\,\gamma_1 \to \Set\\
  &\blank^M &&: (t : \Tm\,\Gamma\,A) \to (\gamma^M : \Gamma^M\,\gamma_0\,\gamma_1) \to A^M\,(t^A\,\gamma_0)\,(t^A\,\gamma_1)\,\gamma^M\\
  &\emptycon^M\,\gamma_0\,\gamma_1 &&\defn \top \\
  &\epsilon^M\,\gamma^M &&\defn \tt\\
  &\id^M\,\gamma^M &&\defn \gamma^M\\
  &(\sigma \circ \delta)^M\,\gamma^M &&\defn \sigma^M\,(\delta^M\,\gamma^M)\\
  &(\Gamma \ext A)^M\,(\gamma_0,\,\alpha_0)\,(\gamma_1,\,\alpha_1) &&\defn
    (\gamma^M : \Gamma^M\,\gamma_0\,\gamma_1) \times A^M\,\alpha_0\,\alpha_1\,\gamma^M\\
  &(A[\sigma])^M\,\alpha_0\,\alpha_1\,\gamma^M &&\defn A^M\,\alpha_0\,\alpha_1\,(\sigma^M\,\gamma^M)\\
  &(t[\sigma])^M\,\gamma^M &&\defn t^M\,(\sigma^M\,\gamma^M)\\
  &\p^M\,(\gamma^M,\,\alpha^M) &&\defn \gamma^M\\
  &\q^M\,(\gamma^M,\,\alpha^M) &&\defn \alpha^M\\
  &(\sigma,\,t)^M\,\gamma^M &&\defn (\sigma^M\,\gamma^M,\,t^M\,\gamma^M)\\
  &\U^M\,a_0\,a_1\,\gamma^M &&\defn a_0 \to a_1\\
  &(\El\,a)^M\,\alpha_0\,\alpha_1\,\gamma^M &&\defn a^M\,\gamma^M\,\alpha_0 \equiv \alpha_1 \\
  &\top^M\,\gamma^M &&\defn \lambda\,\_.\,\tt_0\\
  &\tt^M\,\gamma^M &&\defn \refl\\
  &(\Sigma\,a\,b)^M\,\gamma^M &&\defn \lambda\,(\alpha,\,\beta).\,(a^M\,\gamma^M\,\alpha,\,b^M\,(\gamma^M,\,\refl)\,\beta)\\
  &(\proj_1\,t)^M\,\gamma^M &&\defn \refl\\
  &(\proj_2\,t)^M\,\gamma^M &&\defn \refl\\
  &(t,\,u)^M\,\gamma^M &&\defn \refl \\
  &(\Id\,t\,u)^M\,\gamma^M &&\defn \lambda\,(p : t^A\,\gamma_0 = u^A\,\gamma_0).\,\ap\,(a^M\,\gamma^M)\,p\\
  &\refl^M\,\gamma^M &&\defn \refl\\
  &(\Piinf\,\mi{Ix}\,b)^M\,\gamma^M &&\defn \lambda\,t\,i.\,(b\,i)^M\,\gamma^M\,(t\,i)\\
  &(\appinf\,t\,i)^M\,\gamma^M &&\defn \refl \\
  &(\laminf\,t)^M\,\gamma^M &&\defn \refl \\
  &(\Pi\,a\,B)^M\,t_0\,t_1\,\gamma^M &&\defn (\alpha : a^A\,\gamma_0) \to B^M\,(t_0\,\alpha)\,(t_1\,(a^M\,\gamma^M\,\alpha))\,(\gamma^M,\,\refl)\\
  &(\app\,t)^M\,(\gamma^M,\,\alpha^M) &&\defn t^M\,\gamma^M\,\alpha_0\hspace{1em}\text{where}\hspace{1em} \alpha^M : a^M\,\gamma^M\,\alpha_0 \equiv \alpha_1\\
  &(\lam\,t)^M\,\gamma^M &&\defn \lambda\,\alpha.\,t^M\,(\gamma^M,\,\refl)\hspace{1em}\text{where}\hspace{1em} \refl : a^M\,\gamma^M\,\alpha \equiv a^M\,\gamma^M\,\alpha\\
  &(\Pie\,\mi{Ix}\,B)^M\,t_0\,t_1\,\gamma^M &&\defn (i : \mi{Ix}) \to (B\,i)^M\,(t_0\,i)\,(t_1\,i)\,\gamma^M\\
  &(\appe\,t\,i)^M\,\gamma^M &&\defn t^M\,\gamma^M\,i\\
  &(\lame\,t)^M\,\gamma^M &&\defn (t\,i)^M\,\gamma^M
\end{alignat*}

\subsubsection{Displayed algebras}
\vspace{-0.5em}
\begin{alignat*}{3}
  &\blank^D &&: (\Gamma : \Con) \to \Gamma^A \to \Set\\
  &\blank^D &&: (\sigma : \Sub\,\Gamma\,\Delta) \to \Gamma^D\,\gamma \to \Delta^D\,(\sigma^A\,\gamma) \\
  &\blank^D &&: (A : \Ty\,\Gamma) \to A^A\,\gamma \to \Gamma^D\,\gamma \to \Set\\
  &\blank^D &&: (t : \Tm\,\Gamma\,A) \to (\gamma^D : \Gamma^D\,\gamma) \to A^D\,(t^A\,\gamma)\,\gamma^D\\
  &\emptycon^D\,\gamma &&\defn \top\\
  &\epsilon^D\,\gamma^D &&\defn \tt\\
  &\id^D\,\gamma^D &&\defn \gamma^D\\
  &(\sigma \circ \delta)^D\,\gamma^D &&\defn \sigma^D\,(\delta^D\,\gamma^D)\\
  &(\Gamma \ext A)^D\,(\gamma,\,\alpha) &&\defn (\gamma^D : \Gamma^D\,\gamma) \times A^D\,\alpha\,\gamma^D\\
  &(A[\sigma])^D\,\alpha\,\gamma^D &&\defn A^D\,\alpha\,(\sigma^D\,\gamma^D)\\
  &(t[\sigma])^D\,\gamma^D &&\defn t^D\,(\sigma^D\,\gamma^D)\\
  &\p^D\,(\gamma^D,\,\alpha^D) &&\defn \gamma^D\\
  &\q^D\,(\gamma^D,\,\alpha^D) &&\defn \alpha^D\\
  &(\sigma,\,t)^D\,\gamma^D &&\defn (\sigma^D\,\gamma^D,\,t^D\,\gamma^D)\\
  &\U^D\,a\,\gamma^D &&\defn a \to \Ty_0\\
  &(\El\,a)^D\,t\,\gamma^D &&\defn a^D\,\gamma^D\,t\\
  &\top^D\,\gamma^D &&\defn \lambda\,\_.\,\top_0\\
  &\tt^D\,\gamma^D &&\defn \tt_0\\
  &(\Sigma\,a\,b)^D\,\gamma^D &&\defn \lambda\,(\alpha,\,\beta).\,(\alpha^D : a^D\,\gamma^D\,\alpha)\times_0 b^D\,(\gamma^D,\,\alpha^D)\,\beta\\
  &(\proj_1\,t)^D\,\gamma^D &&\defn \proj_1\,(t^D\,\gamma^D)\\
  &(\proj_2\,t)^D\,\gamma^D &&\defn \proj_2\,(t^D\,\gamma^D)\\
  &(t,\,u)^D\,\gamma^D &&\defn (t^D\,\gamma^D,\,u^D\,\gamma^D)\\
  &(\Id\,t\,u)^D\,\gamma^D &&\defn \lambda\,(p : t^A\,\gamma = u^A\,\gamma).\,\tr_{(a^D\,\gamma^D)}\,p\,(t^D\,\gamma^D) = u^D\,\gamma^D\\
  &\refl^D\,\gamma^D &&\defn \refl : t^D\,\gamma^D = t^D\,\gamma^D\\
  &(\Piinf\,\mi{Ix}\,b)^D\,\gamma^D &&\defn \lambda\,t.\,(i : \mi{Ix}) \to (b\,i)^D\,\gamma^D\,(t\,i)\\
  &(\appinf\,t\,i)^D\,\gamma^D &&\defn t^D\,\gamma^D\,i\\
  &(\laminf\,t)^D\,\gamma^D &&\defn \lambda\,i.\,(t\,i)^D\,\gamma^D\\
  &(\Pi\,a\,B)^D\,t\,\gamma^D &&\defn \{\alpha : a^A\,\gamma\}(\alpha^D : a^D\,\gamma^D\,\alpha)
    \to B^D\,(t\,\alpha)\,(\gamma^D,\,\alpha^D)\\
  &(\app\,t)^D\,(\gamma^D,\,\alpha^D) &&\defn t^D\,\gamma^D\,\alpha^D\\
  &(\lam\,t)^D\,\gamma^D &&\defn \lambda\,\{\alpha\}\,\alpha^D.\,t^D\,(\gamma^D,\,\alpha^D)\\
  &(\Pie\,\mi{Ix}\,B)^D\,t\,\gamma^D &&\defn (i : \mi{Ix}) \to (B\,i)^D\,(t\,i)\,\gamma^D\\
  &(\appe\,t\,i)^D\,\gamma^D &&\defn t^D\,\gamma^D\,i\\
  &(\lame\,t)^D\,\gamma &&\defn \lambda\,i.\,(t\,i)^D\,\gamma^D
\end{alignat*}

\subsubsection{Sections}
\vspace{-0.5em}
\begin{alignat*}{3}
  &\blank^S &&: (\Gamma : \Con) \to (\gamma : \Gamma^A) \to \Gamma^A\,\gamma \to \Set\\
  &\blank^S &&: (\sigma : \Sub\,\Gamma\,\Delta) \to \Gamma^S\,\gamma\,\gamma^D \to \Delta^S\,(\sigma^A\,\gamma)\,(\sigma^D\,\gamma^D)\\
  &\blank^S &&: (A : \Ty\,\Gamma) \to (\alpha : A^A\,\gamma) \to A^D\,\alpha\,\gamma^D \to \Gamma^S\,\gamma\,\gamma^D \to \Set\\
  &\blank^S &&: (t : \Tm\,\Gamma\,A) \to (\gamma^S : \Gamma^S\,\gamma\,\gamma^D) \to A^S\,(t^A\,\gamma)\,(t^D\,\gamma^D)\,\gamma^S\\
  &\emptycon^S\,\gamma\,\gamma^D &&\defn \top \\
  &\epsilon^S\,\gamma^S &&\defn \tt\\
  &\id^S\,\gamma^S &&\defn \gamma^S\\
  &(\sigma \circ \delta)^S\,\gamma^S &&\defn \sigma^S\,(\delta^S\,\gamma^S)\\
  &(\Gamma \ext A)^S\,(\gamma,\,\alpha)\,(\gamma^D,\,\alpha^D) &&\defn
    (\gamma^S : \Gamma^S\,\gamma\,\gamma^D) \times A^S\,\alpha\,\alpha^D\,\gamma^S\\
  &(A[\sigma])^S\,\alpha\,\alpha^D\,\gamma^S &&\defn A^S\,\alpha\,\alpha^D\,(\sigma^S\,\gamma^S)\\
  &(t[\sigma])^S\,\gamma^S &&\defn t^S\,(\sigma^S\,\gamma^S)\\
  &\p^S\,(\gamma^S,\,\alpha^S) &&\defn \gamma^S\\
  &\q^S\,(\gamma^S,\,\alpha^S) &&\defn \alpha^S\\
  &(\sigma,\,t)^S\,\gamma^S &&\defn (\sigma^S\,\gamma^S,\,t^S\,\gamma^S)\\
  &\U^S\,a\,a^D\,\gamma^S &&\defn (\alpha : a) \to a^D\,\alpha\\
  &(\El\,a)^S\,\alpha\,\alpha^D\,\gamma^S &&\defn a^S\,\gamma^S\,\alpha \equiv \alpha^D \\
  &\top^S\,\gamma^S &&\defn \lambda\,\_.\,\tt_0\\
  &\tt^S\,\gamma^S &&\defn \refl\\
  &(\Sigma\,a\,b)^S\,\gamma^S &&\defn \lambda\,(\alpha,\,\beta).\,(a^S\,\gamma^S\,\alpha,\,b^S\,(\gamma^S,\,\refl)\,\beta)\\
  &(\proj_1\,t)^S\,\gamma^S &&\defn \refl\\
  &(\proj_2\,t)^S\,\gamma^S &&\defn \refl\\
  &(t,\,u)^S\,\gamma^S &&\defn \refl \\
  &(\Id\,t\,u)^S\,\gamma^S &&\defn \lambda\,(p : t^A\,\gamma = u^A\,\gamma).\,\apd\,(a^S\,\gamma^S)\,p\\
  &\refl^S\,\gamma^S &&\defn \refl\\
  &(\Piinf\,\mi{Ix}\,b)^S\,\gamma^S &&\defn \lambda\,t\,i.\,(b\,i)^S\,\gamma^S\,(t\,i)\\
  &(\appinf\,t\,i)^S\,\gamma^S &&\defn \refl \\
  &(\laminf\,t)^S\,\gamma^S &&\defn \refl \\
  &(\Pi\,a\,B)^S\,t\,t^D\,\gamma^S &&\defn (\alpha : a^A\,\gamma) \to B^S\,(t\,\alpha)\,(t^D\,(a^S\,\gamma^S\,\alpha))\,(\gamma^S,\,\refl)\\
  &(\app\,t)^S\,(\gamma^S,\,\alpha^S) &&\defn t^S\,\gamma^S\,\alpha\hspace{1em}\text{where}\hspace{1em} \alpha^S : a^S\,\gamma^S\,\alpha \equiv \alpha^D\\
  &(\lam\,t)^S\,\gamma^S &&\defn \lambda\,\alpha.\,t^S\,(\gamma^S,\,\refl)\hspace{1em}\text{where}\hspace{1em} \refl : a^S\,\gamma^S\,\alpha \equiv a^S\,\gamma^S\,\alpha\\
  &(\Pie\,\mi{Ix}\,B)^S\,t\,t^D\,\gamma^S &&\defn (i : \mi{Ix}) \to (B\,i)^S\,(t\,i)\,(t^D\,i)\,\gamma^S\\
  &(\appe\,t\,i)^S\,\gamma^S &&\defn t^S\,\gamma^S\,i\\
  &(\lame\,t)^S\,\gamma^S &&\defn (t\,i)^S\,\gamma^S
\end{alignat*}

\backmatter
\bibliography{references}

\pagebreak
\section*{Summary}

This thesis develops the usage of certain type theories as specification
languages for algebraic theories and inductive types. We observe that the
expressive power of dependent type theories proves useful in the specification
of more complicated algebraic theories. In the thesis, we describe three type
theories where each typing context can be viewed as an algebraic signature,
specifying sorts, operations and equations. These signatures are useful in
broader mathematical contexts, but we are also concerned with potential
implementation in proof assistants.

In \textbf{Chapter \ref{chap:2ltt}}, we describe a way to use
two-level type theory \cite{twolevel} as a metalanguage for developing semantics
of algebraic signatures. This makes it possible to work in a concise internal
notation of a type theory, and at the same time build semantics internally to
arbitrary structured categories. For example, the signature for natural number
objects can be interpreted in any category with finite products.

In \textbf{Chapter \ref{chap:fqiit}}, we describe finitary quotient
inductive-inductive (FQII) signatures. Most type theories themselves can be
specified with FQII signatures. We build a structured category of algebras for
each signature, where equivalence of initiality and induction can be shown. We
additionally present term algebra constructions, constructions of left adjoint
functors of signature morphisms, and we describe a way to use self-describing
signatures to minimize necessary metatheoretic assumptions.

In \textbf{Chapter \ref{chap:iqiit}}, we describe infinitary quotient
inductive-inductive signatures. These allow specification of infinitely
branching trees as initial algebras. We adapt the semantics from the previous
chapter. We also revisit term models, left adjoints of signature morphisms and
self-description of signatures. We also describe how to build semantics of
signatures internally to the theory of signatures itself, which yields numerous
ways to build new signatures from existing ones.

In \textbf{Chapter \ref{chap:hiit}}, we describe higher inductive-inductive
signatures. These differ from previous semantics mostly in that their intended
semantics is in homotopy type theory \cite{hottbook}, and allows
higher-dimensional equalities. In this more general setting we only consider
enough semantics to compute notions of initiality and induction for each signature.

\pagebreak
\section*{Summary in Hungarian - Magyar összefoglaló}

A tézis fő célja az, hogy kidolgozza bizonyos típuselméletek használatát
algebrai elméletek és induktív típusok leírásához. Meglátásunk szerint a függő
típuselméletek kifejezőereje nagyban elősegíti a tömör és általános
specifikációkat. A tézisben három típuselméletet írunk le, amelyekben a
típuskörnyezeteket értelmezzük algebrai szignatúraként, ami felsorolja egy
algebrai elmélet szortjait, műveleteit és egyenleteit. A eredményeink
felhasználhatók általánosabb matematikai kontextusban, viszont az is célunk,
hogy elősegítsük az esetleges pratikus implementációt
tételbizonyító-rendszerekben.

A \textbf{harmadik fejezetben} kifejtjük, hogy a kétszintű típuselmélet \cite{twolevel}
hogyan használható metanyelvként az algebrai szignatúrák szemantikájához. Ez
lehetővé teszi, hogy a szemantikát általánosan adjuk meg, internálisan tetszőleges
strukturált kategóriákban, és ugyanakkor tömör típuselméleti nyelvben
dolgozzunk. Például a természetes szám objektumok szignatúrája értelmezhető
tetszőleges olyan kategóriában, ami rendelkezik véges szorzatokkal.

A \textbf{negyedik fejezetben} leírjuk a véges aritású kvóciens
induktív-induktív (FQII) szignatúrák elméletét. A legtöbb típuselmélet maga is
leírható FQII szignatúrával. Minden szignatúrához megadjuk az algebrák egy
strukturált kategóriáját, ahol az inicialitás és az indukció ekvivalenciája
belátható. Továbbá, bemutatunk term algebra konstrukciókat, bal adjungált
funktorok konstrukcióját szignatúra-morfizmusokhoz, és bemutatjuk, hogy az
önmaguk elméletét specifikáló szignatúrák segítségével hogyan minimalizálhatók a
szükséges metaelméleti feltételezések.

Az \textbf{ötödik fejezetben} leírjuk a végtelen aritású kvóciens
induktív-induktív szignatúrák elméletét, amivel végtelenül elágazó fa
struktúrákat is le tudjunk írni az iniciális algebrákban. Adaptáljuk a korábbi
term algebra konstrukciót, a bal adjungált funktorok konstrukcióját és az
önmaguk elméletét specifikáló szignatúrák használatát. Továbbá, megadjuk a
szignatúrák szemantikáját internálisan a szignatúrák elméletének a
szintaxisában, amelynek segítségével sokféleképpen építhetünk új szignatúrákat.

A \textbf{hatodik fejezetben} leírjuk a magasabb induktív-induktív
szignatúrákat. Ezek elsősorban a szemantikában különbözek a korábbi
szignatúráktól: a metanyelv most a homotópia típuselmélet \cite{hottbook}, és
lehetőség van magasabb dimenziós egyenlőségek megadására. Itt csak annyi
szemantikát adunk meg, amiből az inicialitás és indukció fogalmai kiszámolhatók
minden szignatúrához.

\end{document}